\def\bbA{\mathbb{A}} 
\def\bbP{\mathbb{P}} 
\def\bbS{\mathbb{S}} 
\def\bbV{\mathbb{V}} 
\def\bbW{\mathbb{W}} 
\def\mcL{\mathcal{L}} 
\def\mcO{\mathcal{O}} 
\def\mcI{\mathcal{I}} 
\def\mcF{\mathcal{F}} 
\def\mcD{\mathcal{D}} 
\def\mcE{\mathcal{E}} 
\def\mcP{\mathcal{P}} 
\def\mcR{\mathcal{R}} 
\def\mcC{\mathcal{C}} 
\def\dsN{\mathds{N}} 
\def\SP{\mathrm{SP}} 
\def\SWF{\mathrm{SF}} 
\def\SRF{\mathrm{SRF}} 
\def\S{\mathrm{S}} 
\def\pre{\mathrm{pre}} 
\def\alf{\mathrm{alph}} 
\newcommand{\abs}[1]{\left\lvert#1\right\rvert}
\def\I{\mathrm{I}} 
\def\II{\mathrm{II}} 
\def\III{\mathrm{III}} 
\def\IV{\mathrm{IV}} 
\def\V{\mathrm{V}} 
\newcommand{\dotcup}{\ensuremath{\mathaccent\cdot\cup}}
\definecolor{mybgcolor}{rgb}{0.95,0.95,0.95}
\definecolor{myframecolor}{rgb}{0.6,0.6,0.6}
\definecolor{darkgreen}{rgb}{0.15,0.6,0.15}
\definecolor{darkblue}{rgb}{0,0,0.54}
\definecolor{white}{rgb}{1,1,1}
\definecolor{blockgreen}{rgb}{0.9,0.95,0.9}
\tikzstyle{every state}=[fill=none,draw=black,text=black,minimum size=0.4cm]
\tikzstyle{init} = [pin edge={to-,thin,black}]
 \tikzstyle{surround} = [fill=blue!0,
  \tikzstyle{warn}= [starburst,fill=white,draw]
  \tikzstyle{transition-root}=[rectangle,thick,draw=black!75,
  \tikzstyle{transition-intern}=[rectangle,draw=black!75,
  \tikzstyle{transition-in}=[rectangle,draw=black!75,
  \tikzstyle{transition-out}=[rectangle,draw=black!75,
  \tikzstyle{clabel}=[]
\tikzstyle{cstate}= [state,circle,inner sep=3pt]
\tikzstyle{rstate}= [state,rectangle, rounded corners=8pt,inner sep=4pt]
\let\orgautoref\autoref
\renewcommand{\autoref}
        {\def\equationautorefname{Equation}%
         \def\figureautorefname{Figure}%
         \def\definitionautorefname{Definition}%
         \def\assumptionautorefname{Assumption}%
         \def\theoremautorefname{Theorem}%
         \def\subfigureautorefname{Figure}%
         \def\Itemautorefname{Item}%
         \def\tableautorefname{Table}%
         \def\algorithmautorefname{Algorithm}%
         \def\paragraphautorefname{Paragraph}%
         \def\sectionautorefname{Section}%
         \def\subsectionautorefname{Section}%
         \def\subsubsectionautorefname{Section}%
         \def\chapterautorefname{Section}%
         \def\partautorefname{Part}%
         \def\goalautorefname{Goal}%
         \def\reqautorefname{Req.}%
         \def\adviceautorefname{Rule}%
         \def\parameterautorefname{Param.}%
         \def\lstnumberautorefname{Line }%
         \orgautoref}
\DeclareSymbolFont{cyrletters}{OT2}{wncyr}{m}{n}
\DeclareMathSymbol{\Sha}{\mathalpha}{cyrletters}{"58}
\begin{document}


\title{Pairs of Languages Closed under Shuffle Projection \\
}

    \author{Peter Ochsenschl\"ager\inst{1} and Roland Rieke\inst{1}\inst{2}}

 \institute{Fraunhofer Institute for Secure Information Technology SIT, Darmstadt, Germany\\
\and
 Philipps-Universit\"at Marburg, Germany\\
      \email{peter-ochsenschlaeger@t-online.de, roland.rieke@sit.fraunhofer.de}}

\maketitle

\begin{abstract}
Shuffle projection is motivated by the verification 
of safety properties of special parameterized systems. 
Basic definitions and 
properties, especially related to alphabetic homomorphisms, are presented. 
The 
relation between iterated shuffle products and shuffle projections is shown. 
A special class 
of multi-counter automata is introduced, to formulate 
shuffle projection in terms 
of computations of these automata represented by transductions. 
This reformulation of shuffle projection 
leads 
to construction principles for pairs of languages closed under shuffle projection. 
Additionally, it is shown 
that under certain conditions these transductions are rational, 
which implies decidability of closure 
against shuffle projection. 
Decidability of these conditions is proven 
for regular languages. 
Finally, without additional conditions, 
decidability of the question, 
whether a pair of regular languages is closed under shuffle projection, is shown.
In an appendix the relation between shuffle projection and the shuffle product of two languages is discussed. 
Additionally, a kind of shuffle product for computations in S-automata is defined.
\keywords{abstractions of parameterised systems, self-similarity of system behaviour, iterated shuffle products, multicounter automata,
shuffled runs of computations in multicounter automata,  rational transductions,
decidability of shuffle projection, simulation by Petri net} 
\end{abstract}  

\section{Introduction and Motivation}
The definition of shuffle projection is motivated by our investigations of self-similarity
of scalable systems \cite{SysMea14}. Let us consider some examples:
\begin{example}\label{ex:sp-1}
A server answers requests of a family of clients. 
The actions of the server are considered in the following.
We assume w.r.t. each client that a request will be answered 
before a new request from this client is accepted.
If the family of clients consists of only one client, then the automaton in
Fig.~\ref{fig:4} describes the 
%
system behavior $S\subset \Sigma^*$, where $\Sigma =\{a,b\}$,
the label $a$ depicts the $request$, and $b$ depicts the $response$.

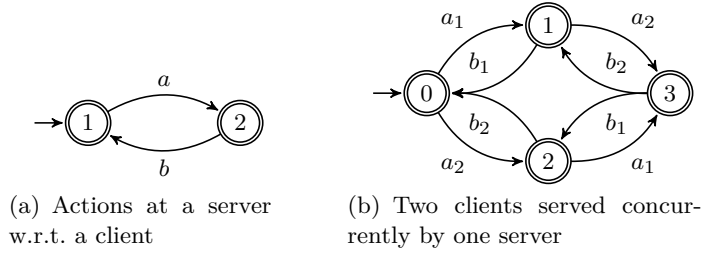
\begin{figure}[ht]
\centering
\subfigure[Actions at a server w.r.t. a client\label{fig:4}]{
\begin{tikzpicture}[->,>=stealth',shorten >=1pt,auto,node distance=1cm,semithick,initial text=]
   \node[cstate,initial,accepting] (k1) {$1$};
   \node[cstate,accepting] (k2) at (2.0,0){$2$};
   \path 
         (k1) edge [bend left] node {\small $a$} (k2)
         (k2) edge [bend left] node {\small $b$} (k1)
		;
\end{tikzpicture}
}\hspace{1cm}%
\subfigure[Two clients served concurrently by one server\label{fig:5}]
{
\begin{tikzpicture}[->,>=stealth',shorten >=1pt,auto,node distance=1cm,semithick,initial text=]
   \node[cstate,accepting,initial] (k0) {$0$};
   \node[cstate,accepting] (k1) at (1.6,0.9){$1$};
   \node[cstate,accepting] (k2) at (1.6,-0.9) {$2$};
   \node[cstate,accepting] (k3) at (3.2,0) {$3$};
   \path 
         (k0) edge [bend left] node {\small $a_1$} (k1)
         (k1) edge [bend left,swap] node {\small $b_1$} (k0)
         (k1) edge [bend left] node {\small $a_2$} (k3)
         (k3) edge [bend left,swap] node {\small $b_2$} (k1)
         (k0) edge [bend right,swap] node {\small $a_2$} (k2)
         (k2) edge [bend right] node {\small $b_2$} (k0)
         (k2) edge [bend right,swap] node {\small $a_1$} (k3)
         (k3) edge [bend right] node {\small $b_1$} (k2)
		;
\end{tikzpicture}
}
\caption{Scalable client-server system}
\end{figure}
\end{example}

\begin{example}\label{ex:sp-2}
Fig.~\ref{fig:5} now describes the system behavior $S_{\{1,2\}}\subset \Sigma_{\{1,2\}}^*$ for
two clients $1$ and $2$, under the
assumption that the server handles the requests of different clients
non-restricted concurrently.

\end{example}
\ \\
For $\emptyset\neq I$ and $i \in I$ let $\Sigma_{\{i\}}$ denote 
pairwise disjoint copies of $\Sigma$. 
The elements of $\Sigma_{\{i\}}$ are denoted by $a_{i}$ and 
$\Sigma_{I}:= \dot{\bigcup\limits_{i \in I}}\Sigma_{\{i\}}$.
Additionally let
$\Sigma_\emptyset:= \emptyset$, and $\Sigma_\emptyset ^*:= \{\varepsilon\}$.   
The index $i$ describes the bijection $a\leftrightarrow a_{i}$ for $a\in \Sigma$ 
and $a_{i} \in \Sigma_{\{i\}}$. 
\ \\

\begin{example}\label{ex:sp-3}
For $\emptyset\neq I\subset \dsN$ with finite $I$, let now $S_I\subset \Sigma_I^*$ denote the system behavior w.r.t. the client set $I$. 
For each $i\in \dsN$  $S_{\{i\}}$ is isomorphic to $S$, and
$S_I$ consists of the non-restricted concurrent run of all $S_{\{i\}}$ with $i\in I$. 
\ \\
Let $\mcI_1$ denote the set of all finite non-empty subsets of $\dsN$ (the set of all possible clients). 
Then, the family $(S_I)_{I\in\mcI_1}$ has the following properties:

\begin{itemize}
\item $I\subset K$ implies $S_I \subset S_K$ (monotony)
\item $I \approx K$ implies $S_I \approx S_K$ (uniform parameterization)
\end{itemize}
Such families are called \emph{scalable systems} \cite{SysMea14}.
\ \\
Here $\approx$ denotes isomorphic.
Notice, each bijection  $\iota:I\to K$ defines an isomorphism
$\iota_K^I:\Sigma_I^*\to\Sigma_K^*$ .
\end{example}
%


In section 2 the basic definitions and 
properties, especially related to alphabetic homomorphisms, are presented. Section 3 shows the 
relations between iterated shuffle products and shuffle projections. In section 4 a special class 
of multi-counter automata are introduced, to formulate in section 5 shuffle projection in terms 
of computations of these automata. This reformulation of shuffle projection 
leads in section 6 to construction principles for pairs of languages closed under shuffle projection. 
In section 7 the results of section 5 are represented by transductions. Additionally, it is shown 
that under certain conditions these transductions are rational, which imply decidability of closure 
against shuffle projection. In section 8 decidability of these conditions is proven 
for regular languages. Finally, without the restrictions of section 7, decidability of the question, 
whether a pair of regular languages is closed under shuffle projection, is shown.
In an appendix the relation between shuffle projection and the shuffle product of two languages is discussed. 
Additionally a kind of shuffle product for computations in S-automata is defined, which shows the results 
of section 5 from another point of view.    

\section{Basic Definitions and Homomorphic Properties}

\begin{definition}\label{def:tau}\ \\
For $I\subset N$ and $n\in N$ 
let $\tau_n^I:\Sigma_I^*\to\Sigma^*$ be the
homomorphisms defined by 
\[
\tau_n^I(a_i)
= \left\{ \begin{array}{r@{\ }cl}
a & |\ & a_{i} \in \Sigma_{I\cap\{n\}}\\
\varepsilon & |\ & a_{i} \in \Sigma_{I\setminus \{n\}}
\end{array} \right. .
\]
\end{definition}
For a singleton set \{n\},
$\tau_n^{\{n\}}:\Sigma_{\{n\}}^*\to \Sigma^*$ is an isomorphism.
\ \\
For $I\in\mcI_1$ holds
\[S_I=\bigcap\limits_{n\in I}(\tau_n^I)^{-1}(S) . \]  

\begin{definition}[\ \ $(\dot{\mcL}(L)_I)_{I\in\mcI_1}$\ \ ]\label{def:L-dot}
\ \\
Let $\emptyset\neq L\subset \Sigma^*$ be prefix closed and
\[\dot{\mcL}(L)_I := \bigcap\limits_{i\in I}(\tau_i^I)^{-1}(L) 
\]
for $I\in\mcI_1$.
\end{definition}
The systems $\dot{\mcL}(L)_I$ consist of the 
``non-restricted concurrent run'' of all systems 
$(\tau_i^{\{i\}})^{-1}(L)\subset \Sigma_{\{i\}}^*$ with $i\in I$. 
Because $\tau_i^{\{i\}}: \Sigma_{\{i\}}^*\to\Sigma^*$ are isomorphisms,
$(\tau_i^{\{i\}})^{-1}(L)$ are pairwise disjoint copies of $L$.

\begin{theorem}\label{thm:L-dot scalable}
\ \\
$(\dot{\mcL}(L)_I)_{I\in\mcI_1}$ is a scalable system  \cite{SysMea14}.
\end{theorem}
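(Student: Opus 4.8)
The plan is to verify the two defining properties of a scalable system directly from \autoref{def:L-dot}, namely monotony ($I\subset K \Rightarrow \dot{\mcL}(L)_I \subset \dot{\mcL}(L)_K$) and uniform parameterization ($I\approx K \Rightarrow \dot{\mcL}(L)_I \approx \dot{\mcL}(L)_K$). Both will follow from elementary manipulations of the homomorphisms $\tau_i^I$ and their interaction with the canonical isomorphisms $\iota_K^I:\Sigma_I^*\to\Sigma_K^*$ induced by bijections $\iota:I\to K$.

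For monotony, suppose $I\subset K$ with $I,K\in\mcI_1$. The key observation is the compatibility relation $\tau_i^I = \tau_i^K \circ \rho$, where $\rho:\Sigma_I^*\hookrightarrow\Sigma_K^*$ is the inclusion-induced homomorphism (identity on the common letters $\Sigma_{\{i\}}$, $i\in I$); more precisely I would note that for each $i\in I$, the homomorphism $\tau_i^K$ restricted along this inclusion agrees with $\tau_i^I$, so that $(\tau_i^I)^{-1}(L) = \rho^{-1}\bigl((\tau_i^K)^{-1}(L)\bigr)$ when intersected appropriately. Then $\dot{\mcL}(L)_I = \bigcap_{i\in I}(\tau_i^I)^{-1}(L)$ and I must show this sits inside $\dot{\mcL}(L)_K = \bigcap_{k\in K}(\tau_k^K)^{-1}(L)$. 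But here one has to be careful: $\dot{\mcL}(L)_I\subset\Sigma_I^*$ while $\dot{\mcL}(L)_K\subset\Sigma_K^*$, so ``$\subset$'' is meant up to the canonical identification of $\Sigma_I^*$ as a sub-monoid of $\Sigma_K^*$. Under that identification, a word $w\in\Sigma_I^*$ uses no letter from $\Sigma_{\{k\}}$ for $k\in K\setminus I$, hence $\tau_k^K(w)=\varepsilon\in L$ (using $L$ prefix closed and nonempty, so $\varepsilon\in L$) for all such $k$, and $\tau_k^K(w)=\tau_k^I(w)\in L$ for $k\in I$; thus $w\in\dot{\mcL}(L)_K$. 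This is exactly the argument already implicit in \autoref{def:L-dot} via $S_I=\bigcap_{n\in I}(\tau_n^I)^{-1}(S)$.

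For uniform parameterization, let $\iota:I\to K$ be a bijection with induced isomorphism $\iota_K^I:\Sigma_I^*\to\Sigma_K^*$. The decisive commutation is $\tau_{\iota(i)}^K \circ \iota_K^I = \tau_i^I$ for every $i\in I$: on a generator $a_j\in\Sigma_{\{j\}}$ with $j\in I$, the left side sends $a_j\mapsto a_{\iota(j)}\mapsto a$ if $\iota(j)=\iota(i)$, i.e. $j=i$, and $\mapsto\varepsilon$ otherwise, which matches $\tau_i^I(a_j)$. From this, $\iota_K^I\bigl((\tau_i^I)^{-1}(L)\bigr) = (\tau_{\iota(i)}^K)^{-1}(L)$, and since $\iota_K^I$ is a bijection it commutes with intersection, giving $\iota_K^I(\dot{\mcL}(L)_I) = \bigcap_{i\in I}(\tau_{\iota(i)}^K)^{-1}(L) = \bigcap_{k\in K}(\tau_k^K)^{-1}(L) = \dot{\mcL}(L)_K$. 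Hence $\iota_K^I$ restricts to an isomorphism $\dot{\mcL}(L)_I\to\dot{\mcL}(L)_K$, which is precisely $\dot{\mcL}(L)_I\approx\dot{\mcL}(L)_K$.

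I expect no genuine obstacle here; the proof is essentially bookkeeping with the homomorphisms $\tau$. The one point demanding care — and the place a referee would scrutinize — is the precise meaning of the inclusion $\dot{\mcL}(L)_I\subset\dot{\mcL}(L)_K$ in the monotony clause: one must fix the convention identifying $\Sigma_I^*$ with a subset of $\Sigma_K^*$ when $I\subset K$, and then check that the generators in $\Sigma_{K\setminus I}$ are killed by every relevant $\tau$, which is where prefix-closedness of $L$ (ensuring $\varepsilon\in L$) is used. Everything else is the standard fact that preimages under a bijective homomorphism commute with finite intersections.
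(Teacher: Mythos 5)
Your proof is correct. Note first that the paper itself offers no proof of this theorem; it simply cites \cite{SysMea14}, so there is no in-paper argument to compare against. Your direct verification of the two defining properties is the natural one and it goes through: for monotony, since $\Sigma_I = \dot{\bigcup}_{i\in I}\Sigma_{\{i\}}$, the inclusion $I\subset K$ gives $\Sigma_I\subset\Sigma_K$ \emph{literally}, so $\Sigma_I^*$ really is a submonoid of $\Sigma_K^*$ and no identification convention needs to be fixed — the caution you flag as the delicate point is in fact a non-issue in this setup. The rest is as you say: $\tau_k^K$ restricted to $\Sigma_I^*$ coincides with $\tau_k^I$ for $k\in I$, and $\tau_k^K(w)=\varepsilon\in L$ for $k\in K\setminus I$, where $\varepsilon\in L$ follows from $L\neq\emptyset$ and prefix-closedness (this is the only place that hypothesis is needed; any $L$ containing $\varepsilon$ would do). For uniform parameterization, the commutation $\tau_{\iota(i)}^K\circ\iota_K^I=\tau_i^I$ is verified correctly on generators, and since $\iota_K^I$ is an isomorphism its image commutes with intersections, yielding $\iota_K^I(\dot{\mcL}(L)_I)=\dot{\mcL}(L)_K$. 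This matches the convention stated after Example~\ref{ex:sp-3} that each bijection $\iota:I\to K$ induces the isomorphism $\iota_K^I$ witnessing $\approx$.
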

Now we show how to construct well-behaved systems by restricting concurrency in the behaviour-family $\dot{\mcL}$.
In Example~\ref{ex:sp-3} holds 
$S_I=\dot{\mcL}(S)_I$
for $I\in\mcI_1$.
%
If, in Example~\ref{ex:sp-3}, the server needs specific resources for the processing of 
a request, then - on account of restricted resources - an non-restricted concurrent
processing of requests is not possible. 
Thus, restrictions of concurrency in terms of synchronization conditions are necessary.
One possible but very strong restriction is the requirement that the server
handles the requests of different clients in the same way as it handles
the requests of a single client, namely, on the request follows the 
response and vice versa. This synchronization condition can be formalized with the help of $S$
and the homomorphisms $\Theta^I$.
\begin{definition}\label{def:theta}\ \\
For a set $I$ let the homomorphism
\[\Theta^I:\Sigma_I^*\to\Sigma^* \mbox{ be defined by } \Theta^I(a_i):= a , \]
$\mbox{for } i\in I \mbox{ and } a\in \Sigma$.
\end{definition}

\begin{example}\label{ex:sp-5}
Restriction of concurrency on account of restricted resources: one ``task'' after another.
All behaviors with respect to  $i\in I$ \emph{influence} each other.
Let
\[\bar{S}_I:=S_I\cap(\Theta^I)^{-1}(S)=\bigcap\limits_{i\in I}(\tau_i^I)^{-1}(S)\cap(\Theta^I)^{-1}(S)\]
for $I\in \mcI_1$. 
\end{example}
From the automaton in Fig.~\ref{fig:5} it is evident that $\bar{S}_{\{1,2\}}$ will be 
recognized by the automaton in Fig.~\ref{fig:6}.
\begin{figure}[ht]
\centering
\subfigure[Automaton recognizing $\bar{S}_{\{1,2\}}$ \label{fig:6}]{
\begin{tikzpicture}[->,>=stealth',shorten >=1pt,auto,node distance=1cm,semithick,initial text=]
  \useasboundingbox (-2.5,-0.8) rectangle (2.5,0.8);
   \node[cstate,initial,accepting] (k0) {$0$};
   \node[cstate,accepting] (k1) at (1.8,0){$1$};
   \node[cstate,accepting] (k2) at (-1.8,0){$2$};
   \path 
         (k0) edge [bend left] node {\small $a_1$} (k1)
         (k1) edge [bend left] node {\small $b_1$} (k0)
         (k0) edge [bend right,swap] node {\small $a_2$} (k2)
         (k2) edge [bend right,swap] node {\small $b_2$} (k0)
		;
\end{tikzpicture}
}\hspace{0.1cm}
\subfigure[\label{fig:7}]{
\begin{tikzpicture}[->,>=stealth',shorten >=1pt,auto,node distance=1cm,semithick,initial text=]
   \node[cstate,initial,accepting] (k1) {$0$};
   \node[cstate,accepting] (k2) at (1.8,0){$i$};
   \path 
         (k1) edge [bend left] node {\small $a_i$} (k2)
         (k2) edge [bend left] node {\small $b_i$} (k1)
		;
\end{tikzpicture}
}
\caption{}
\end{figure}
Given an arbitrary $I\in\mcI_1$, then $\bar{S}_I$ is recognized by an automaton with 
state set $\{0\}\cup I$ and state transition relation given by Fig.~\ref{fig:7} 
for each $i\in I$. 
From this automaton it is evident that $(\bar{S}_I)_{I\in\mcI_1}$ is a scalable system \cite{SysMea14}.
\begin{definition}[\ \ $(\bar{\mcL}(L,V)_I)_{I\in\mcI_1}$]\label{def:L-bar}
Let $\emptyset \neq L\subset V\subset \Sigma^*$ be prefix closed and
\[\bar{\mcL}(L,V)_I := 
\bigcap\limits_{n\in I}(\tau_n^I)^{-1}(L)\cap(\Theta^I)^{-1}(V) \mbox{ for } I\in\mcI_1 .\]
\end{definition}
In \cite{SysMea14} it is shown
\begin{theorem}\label{thm:L-bar scalable}
$(\bar{\mcL}(L,V)_I)_{I\in\mcI_1}$ is a scalable system.
\end{theorem}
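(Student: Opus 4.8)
Recall that a family $(S_I)_{I\in\mcI_1}$ is a scalable system if it is monotone --- $I\subset K$ implies $S_I\subset S_K$, where $\Sigma_I^*$ is viewed inside $\Sigma_K^*$ --- and uniformly parameterised --- for every bijection $\iota\colon I\to K$ the induced isomorphism $\iota_K^I\colon\Sigma_I^*\to\Sigma_K^*$ maps $S_I$ onto $S_K$. Since $\bar{\mcL}(L,V)_I=\dot{\mcL}(L)_I\cap(\Theta^I)^{-1}(V)$ and $(\dot{\mcL}(L)_I)_{I\in\mcI_1}$ is already scalable by Theorem~\ref{thm:L-dot scalable}, the plan is to reduce to that theorem through two observations: (i) the family $\bigl((\Theta^I)^{-1}(V)\bigr)_{I\in\mcI_1}$ is itself scalable, and (ii) the componentwise intersection of two scalable systems over the same alphabets is again scalable.

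The technical heart is a short list of compatibility identities for the homomorphisms $\tau_n^I$ and $\Theta^I$. For $I\subset K$ and $w\in\Sigma_I^*$ one has $\Theta^K(w)=\Theta^I(w)$ (the homomorphism $\Theta$ only erases indices), $\tau_n^K(w)=\tau_n^I(w)$ for $n\in I$, and $\tau_n^K(w)=\varepsilon$ for $n\in K\setminus I$ (no letter of $w$ carries the index $n$). For a bijection $\iota\colon I\to K$ one checks on the generators $a_i$ that $\Theta^K\circ\iota_K^I=\Theta^I$ and $\tau_{\iota(n)}^K\circ\iota_K^I=\tau_n^I$ for every $n\in I$. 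I expect this bookkeeping --- pinning down the identities and making sure nothing beyond them is needed --- to be the only genuine step; everything afterwards is formal manipulation of preimages, using that $\iota_K^I$ is a bijection $\Sigma_I^*\to\Sigma_K^*$ and that injective maps preserve finite intersections.

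Granting the identities, (i) is immediate: if $\Theta^I(w)\in V$ then $\Theta^K(w)=\Theta^I(w)\in V$, which gives monotony, and $\iota_K^I\bigl((\Theta^I)^{-1}(V)\bigr)=(\Theta^K)^{-1}(V)$ since $(\Theta^I)^{-1}(V)=(\iota_K^I)^{-1}\bigl((\Theta^K)^{-1}(V)\bigr)$ and $\iota_K^I$ is onto, which gives uniform parameterisation. For (ii), monotony of the intersection follows from $A_I\subset A_K$ and $B_I\subset B_K$, and uniform parameterisation from $\iota_K^I(A_I\cap B_I)=\iota_K^I(A_I)\cap\iota_K^I(B_I)=A_K\cap B_K$. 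Combining (i), (ii) and Theorem~\ref{thm:L-dot scalable} then yields the claim. Alternatively, one can argue directly from $\bar{\mcL}(L,V)_I=\bigcap_{n\in I}(\tau_n^I)^{-1}(L)\cap(\Theta^I)^{-1}(V)$: for monotony, a word $w\in\bar{\mcL}(L,V)_I\subset\Sigma_K^*$ satisfies $\tau_n^K(w)=\tau_n^I(w)\in L$ for $n\in I$, $\tau_n^K(w)=\varepsilon\in L$ for $n\in K\setminus I$ (using $\varepsilon\in L$, which holds since $L$ is nonempty and prefix closed), and $\Theta^K(w)=\Theta^I(w)\in V$; for uniform parameterisation, apply $\iota_K^I$ to the intersection, push it through each preimage via the compatibility identities, and reindex the resulting intersection over $K$ using the bijection $\iota$.
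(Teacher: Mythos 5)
Your proof is correct. The paper gives no proof of this theorem at all---it simply defers to the cited reference \cite{SysMea14}---so there is no in-paper argument to compare against; your self-contained verification is the natural one. Both routes you sketch are sound: the direct check of monotony (which correctly uses $\varepsilon\in L$, available because $L$ is nonempty and prefix closed) and of uniform parameterisation via the identities $\Theta^K\circ\iota_K^I=\Theta^I$ and $\tau_{\iota(n)}^K\circ\iota_K^I=\tau_n^I$, and the reduction to Theorem~\ref{thm:L-dot scalable} by intersecting with the scalable family $\bigl((\Theta^I)^{-1}(V)\bigr)_{I\in\mcI_1}$, noting that the same induced isomorphism $\iota_K^I$ serves both factors so the intersection argument goes through.
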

To consider arbitrary scalable systems $(\mcL_I)_{I\in\mcI}$ general parameter structures have to be defined:
\begin{definition}[parameter structure]\label{def:parameter-structure} 
Let $N$ be a countable (infinite) set and 
$\emptyset \neq \mcI \subset 2^N \setminus \{\emptyset\}$.
$\mcI$ is called a \emph{parameter structure} based on $N$.
\end{definition}

\begin{definition}[self-similar scalable system]\label{def:self-similar} 
For arbitrary sets $I' \subset I$ let
$\Pi_{I'}^{I}: \Sigma_{I}^* \to \Sigma_{I'}^*
\mbox{ with }$
\[\Pi_{I'}^{I}(a_{i})
= \left\{ \begin{array}{r@{\ }cl}
a_{i} & |\ & a_{i} \in \Sigma_{I'}\\
\varepsilon & |\ & a_{i} \in \Sigma_{I}\setminus \Sigma_{I'}.
\end{array} \right. \]
A scalable system $(\mcL_I)_{I\in\mcI}$ is called \emph{self-similar} iff 
\[\Pi_{I'}^I (\mcL_I)=\mcL_{I'} \mbox{ for each } I, I' \in \mcI \mbox{ with } I'\subset I .\]
\end{definition}
\ \\
Examples: In \cite{SysMea14} it is shown that $(S_I)_{I\in \mcI_1}$ and $(\bar{S}_I)_{I\in \mcI_1}$ 
are self-similar scalable systems.
\ \\
In \cite{OR2011} it is shown that for self-similar scalable systems a large class of safety properties 
(uniformly parameterized safety properties) can be verified by inspecting only one corresponding ``prototype system'' 
instead of inspecting the whole family of systems. This demonstrates the importance of self-similarity for scalable systems.
\ \\
The following example shows that not each $(\bar{\mcL}(L,V)_I)_{I\in\mcI_1}$ is self-similar.
\begin{example}\label{counter-example}
Let $G\subset \{a,b,c\}^*$ the prefix closed language that is 
recognized by the automaton in Fig.~\ref{fig:8}.
Let $H\subset \{a,b,c\}^*$ the prefix closed language that is 
recognized by the automaton in Fig.~\ref{fig:9}.
It holds $\emptyset\neq G\subset H$, however, 
$(\mcL(G,\bar{\mcE}_{\mcI_1},H)_I)_{I\in \mcI_1}$ is not self-similar, 
e.g., 
\[\Pi_{\{2,3\}}^{\{1,2,3\}}(\mcL(G,\bar{\mcE}_{\mcI_1},H)_{\{1,2,3\}}) \neq (\mcL(G,\bar{\mcE}_{\mcI_1},H)_{\{2,3\}} 
\mbox{ because }\]
$a_{1}b_{1}a_{2}a_{3}\in \mcL(G,\bar{\mcE}_{\mcI_1},H)_{\{1,2,3\}}$, and hence 
$a_{2}a_{3}\in \Pi_{\{2,3\}}^{\{1,2,3\}}(\mcL(G,\bar{\mcE}_{\mcI_1},H)_{\{1,2,3\}})$,
but $a_{2}a_{3}\notin (\mcL(G,\bar{\mcE}_{\mcI_1},H)_{\{2,3\}}$.

\begin{figure}[ht]
\centering
\subfigure[Automaton recognizing $G$\label{fig:8}]{
\scalebox{0.9} {
\begin{tikzpicture}[->,>=stealth',shorten >=1pt,auto,node distance=2cm,semithick,initial text=]
  \useasboundingbox (-2.5,-1.5) rectangle (2.5,0);
   \node[state,accepting,initial] (k1) {\small 1};
   \node[state,accepting] (k2) at (-0.6,-1.2) {\small 2};
   \node[state,accepting] (k3) at (0.6,-1.2) {\small 3};
   \path 
         (k1) edge  node [swap] {\small $a$} (k2)
         (k2) edge  node {\small $b$} (k3)
         (k3) edge  node [swap] {\small $c$} (k1)
		;
\end{tikzpicture}
}}\hspace{0.01cm}%
\subfigure[Automaton recognizing $H$\label{fig:9}]{
\scalebox{0.9} {
\begin{tikzpicture}[->,>=stealth',shorten >=1pt,auto,node distance=2cm,semithick,initial text=]
  \useasboundingbox (-3.5,-2.5) rectangle (3.5,0);
   \node[state,accepting,initial] (k1) {\small 1};
   \node[state,accepting] (k2) at (-1,-1) {\small 2};
   \node[state,accepting] (k3) at (1,-1) {\small 3};
   \node[state,accepting] (k4) at (0,-2) {\small 4};
   \node[state,accepting] (k5) [right of =k4,node distance=1.4cm] {\small 5};
   \node[state,accepting] (k6) at (2.6,-2) {\small 6};
   \node[state,accepting] (k7) [above of =k6,node distance=1cm] {\small 7};
   \node[state,accepting] (k9) [right of =k1,node distance=1.4cm] {\small 9};
   \node[state,accepting] (k8) at (2.6,0) {\small 8};
   \path 
         (k1) edge  node [swap] {\small $a$} (k2)
         (k2) edge  node {\small $b$} (k3)
         (k3) edge  node [swap] {\small $c$} (k1)
         (k3) edge  node {\small $a$} (k4)
         (k4) edge  node {\small $c$} (k2)
         (k4) edge  [swap] node {\small $a$} (k5)
         (k5) edge  [swap] node {\small $b$} (k6)
         (k6) edge  node [swap] {\small $b$} (k7)
         (k7) edge  node [swap] {\small $c$} (k8)
         (k8) edge  node [swap] {\small $c$} (k9)
         (k9) edge  node [swap] {\small $c$} (k1)
		;
\end{tikzpicture}
}}
\caption{Counterexample}
\end{figure}
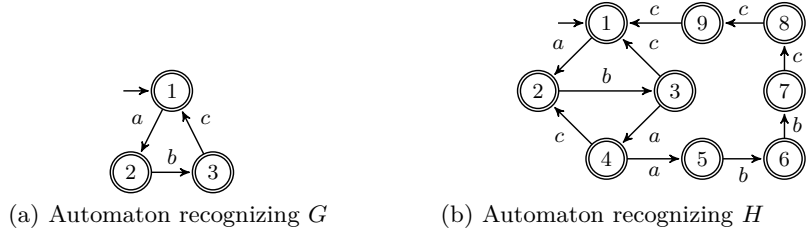
\end{example}

\begin{theorem}\label{thm:L-bar(L,V) and SP(L,V)}
Let $\emptyset \neq L\subset V\subset \Sigma^*$ be prefix closed and $(\bar{\mcL}(L,V)_I)_{I\in\mcI_1}$ self-similar. 
Then 
\[\Pi_K^\dsN[(\bigcap\limits_{n\in \dsN}(\tau_n^\dsN)^{-1}(L)) \cap (\Theta^\dsN)^{-1}(V)]\subset (\Theta^\dsN)^{-1}(V)\]
for each subset $K\subset \dsN$.
\end{theorem}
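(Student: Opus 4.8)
The plan is to reduce the assertion --- which concerns the infinite index set $\dsN$ --- to the self-similarity hypothesis, which only speaks of finite parameter sets in $\mcI_1$, by exploiting that every word uses only finitely many indices. Concretely, fix $w\in\bigl(\bigcap_{n\in\dsN}(\tau_n^\dsN)^{-1}(L)\bigr)\cap(\Theta^\dsN)^{-1}(V)$ and let $I\subset\dsN$ be the finite set of indices occurring in $w$, so that $w\in\Sigma_I^*$. First I would record the elementary facts that, restricted to $\Sigma_I^*$, the homomorphism $\Theta^I$ agrees with $\Theta^\dsN$, that $\tau_n^I$ agrees with $\tau_n^\dsN$ for every $n\in I$, and that $\Pi_K^\dsN$ agrees with $\Pi_{I\cap K}^I$, and furthermore that, restricted to $\Sigma_{I\cap K}^*$, the homomorphism $\Theta^\dsN$ agrees with $\Theta^{I\cap K}$. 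All of these are immediate from the letter-to-letter definitions in Definition~\ref{def:tau}, Definition~\ref{def:theta} and Definition~\ref{def:self-similar}.

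Then I would distinguish two cases. If $I\cap K=\emptyset$, then $\Pi_K^\dsN(w)=\varepsilon$; since $L$ is non-empty and prefix closed, $\varepsilon\in L\subset V$, hence $\Pi_K^\dsN(w)\in(\Theta^\dsN)^{-1}(V)$ and we are done. If $I\cap K\neq\emptyset$, then $I$ is a non-empty finite set, so $I\in\mcI_1$, $I\cap K\in\mcI_1$ and $I\cap K\subset I$. From the membership of $w$ together with the coincidences above I would read off that $\tau_n^I(w)\in L$ for every $n\in I$ and $\Theta^I(w)\in V$, i.e. $w\in\bar{\mcL}(L,V)_I$ in the sense of Definition~\ref{def:L-bar}. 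Self-similarity then gives $\Pi_{I\cap K}^I(w)\in\Pi_{I\cap K}^I(\bar{\mcL}(L,V)_I)=\bar{\mcL}(L,V)_{I\cap K}\subset(\Theta^{I\cap K})^{-1}(V)$, so $\Theta^{I\cap K}(\Pi_{I\cap K}^I(w))\in V$. Translating back through the coincidences, $\Theta^\dsN(\Pi_K^\dsN(w))=\Theta^{I\cap K}(\Pi_{I\cap K}^I(w))\in V$, that is $\Pi_K^\dsN(w)\in(\Theta^\dsN)^{-1}(V)$, as required.

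The bulk of the work is the bookkeeping that the three families of homomorphisms $\tau$, $\Theta$, $\Pi$ restrict consistently from $\dsN$ to the finite index sets $I$ and $I\cap K$; once this is set up, the argument is simply ``localize, apply self-similarity, delocalize.'' The only genuine subtlety --- and the reason why prefix-closedness and $L\neq\emptyset$ must be invoked --- is the degenerate case $I\cap K=\emptyset$ (in particular $w=\varepsilon$), which no instance of the self-similarity property over $\mcI_1$ covers, since $\mcI_1$ contains no empty parameter set.
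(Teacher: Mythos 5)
Your proposal is correct and follows essentially the same route as the paper's proof: pick a finite index set containing all indices of $w$ (your $I$, the paper's $J\in\mcI_1$), observe $w\in\bar{\mcL}(L,V)_I$ and $\Pi_K^\dsN(w)=\Pi_{I\cap K}^I(w)$, split on whether $I\cap K$ is empty, and in the non-empty case apply self-similarity and $\bar{\mcL}(L,V)_{I\cap K}\subset(\Theta^{I\cap K})^{-1}(V)\subset(\Theta^\dsN)^{-1}(V)$. The only cosmetic difference is that you spell out the restriction compatibilities of $\tau$, $\Theta$, $\Pi$ and the justification $\varepsilon\in V$ explicitly, which the paper leaves implicit.
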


\begin{proof}
Let $w\in\bigcap\limits_{n\in \dsN}(\tau_n^\dsN)^{-1}(L) \cap (\Theta^\dsN)^{-1}(V)$, then there exists
$J\in\mcI_1$  with $w \in \Sigma_{J}^*$ and therefore
\begin{equation}\label{eq:sp-1}
w \in \bar{\mcL}(L,V)_J .
\end{equation}
\ \\
Now
\begin{equation}\label{eq:sp-2}
\Pi_K^\dsN (w) = \Pi_{K \cap J}^J (w).
\end{equation}
\ \\
If $ K \cap J = \emptyset $, then
\begin{equation}\label{eq:sp-3}
\Pi_K^\dsN (w) = \varepsilon \in (\Theta^\dsN)^{-1}(V).
\end{equation}
\ \\
If $K \cap J \neq \emptyset $, then $K \cap J \in\mcI_1$.
Now \eqref{eq:sp-1}, \eqref{eq:sp-2} and self-similarity of $(\bar{\mcL}(L,V)_I)_{I\in\mcI_1}$ implies
\begin{equation}\label{eq:sp-4}
\Pi_K^\dsN (w) \in \bar{\mcL}(L,V)_{K \cap J} \subset (\Theta^{K \cap J})^{-1}(V) \subset (\Theta^\dsN)^{-1}(V).
\end{equation}
\ \\
\eqref{eq:sp-3} and \eqref{eq:sp-4} completes the proof of Theorem~\ref{thm:L-bar(L,V) and SP(L,V)}.

\end{proof}
%
%
%
In \cite{SysMea14} it is shown that 
\[\Pi_K^\dsN[(\bigcap\limits_{n\in \dsN}(\tau_n^\dsN)^{-1}(L)) \cap (\Theta^\dsN)^{-1}(V)]\subset (\Theta^\dsN)^{-1}(V)
\mbox{ for each subset } \emptyset \neq K\subset \dsN \]
is a sufficient condition for self-similarity of a large class of scalable systems including $(\bar{\mcL}(L,V)_I)_{I\in\mcI_1}$.
So we define:

\begin{definition}[{closed under shuffle projection}]\label{def:shuffle-projection}
Let $U, V \subset \Sigma^*$. $V$ is \emph{closed under shuffle projection
with respect to $U$}, iff 
\[\Pi_K^\dsN[(\bigcap\limits_{n\in \dsN}(\tau_n^\dsN)^{-1}(U)) \cap (\Theta^\dsN)^{-1}(V)]\subset (\Theta^\dsN)^{-1}(V)
\mbox{ for each subset } \emptyset \neq K\subset \dsN.\]
We abbreviate this by $\SP(U,V)$.
\end{definition}
Now it holds

\begin{corollary}\label{cor:sp-1}
Let $\emptyset \neq L\subset V\subset \Sigma^*$ be prefix closed. 
Then $\SP(L,V)$ is equivalent to self-similarity of $(\bar{\mcL}(L,V)_I)_{I\in\mcI_1}$. 
\end{corollary}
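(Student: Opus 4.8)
The plan is to prove Corollary~\ref{cor:sp-1} by combining Theorem~\ref{thm:L-bar(L,V) and SP(L,V)} (which gives one direction directly) with its converse, obtained by unwinding the definition of self-similarity for the specific family $(\bar{\mcL}(L,V)_I)_{I\in\mcI_1}$. Note first that $\SP(L,V)$ is by Definition~\ref{def:shuffle-projection} precisely the inclusion appearing in the conclusion of Theorem~\ref{thm:L-bar(L,V) and SP(L,V)} (taking $U=L$), so that theorem already says: self-similarity of $(\bar{\mcL}(L,V)_I)_{I\in\mcI_1}$ implies $\SP(L,V)$. It remains to show the reverse implication, i.e. that $\SP(L,V)$ implies $\Pi_{I'}^I(\bar{\mcL}(L,V)_I)=\bar{\mcL}(L,V)_{I'}$ for all $I'\subset I$ in $\mcI_1$.

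First I would record the elementary homomorphism identities that make the bookkeeping work: for $I'\subset I$ and $n\in I'$ one has $\tau_n^{I'}\circ\Pi_{I'}^I=\tau_n^I$ and $\Theta^{I'}\circ\Pi_{I'}^I=\Theta^I$ restricted appropriately (more precisely, $\Theta^I(w)=\Theta^{I'}(\Pi_{I'}^I(w))$ whenever the letters of $w$ outside $\Sigma_{I'}$ are ``erased'', which is exactly what happens on the left since we only care about membership after applying $(\Theta)^{-1}$). These give the ``easy'' inclusion $\Pi_{I'}^I(\bar{\mcL}(L,V)_I)\subset \bigcap_{n\in I'}(\tau_n^{I'})^{-1}(L)$ for free, and $\SP(L,V)$ (applied with $K=I'$, after embedding everything into the index set $\dsN$ via $I\subset\dsN$) upgrades the $(\Theta^{I'})^{-1}(V)$ part, yielding $\Pi_{I'}^I(\bar{\mcL}(L,V)_I)\subset\bar{\mcL}(L,V)_{I'}$. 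The subtlety here is passing between the finite index sets $I,I'$ and the ambient countable set $\dsN$: one chooses $J=I$ in Definition~\ref{def:shuffle-projection}, notes $\Pi_{I'}^{\dsN}$ restricted to $\Sigma_I^*$ is $\Pi_{I'\cap I}^I=\Pi_{I'}^I$, and checks that the intersection over all $n\in\dsN$ of $(\tau_n^\dsN)^{-1}(L)$ meets $\Sigma_I^*$ in exactly $\bigcap_{n\in I}(\tau_n^I)^{-1}(L)$ (the conditions for $n\notin I$ are vacuous on $\Sigma_I^*$ since $\tau_n^\dsN$ kills everything there and $\varepsilon\in L$ by prefix-closedness and $L\neq\emptyset$).

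For the reverse inclusion $\bar{\mcL}(L,V)_{I'}\subset\Pi_{I'}^I(\bar{\mcL}(L,V)_I)$, which holds unconditionally, I would take $v\in\bar{\mcL}(L,V)_{I'}$ and exhibit a preimage: $v$ itself lies in $\Sigma_{I'}^*\subset\Sigma_I^*$ and $\Pi_{I'}^I(v)=v$, so it suffices to check $v\in\bar{\mcL}(L,V)_I$, i.e. $\tau_n^I(v)\in L$ for all $n\in I$ and $\Theta^I(v)\in V$. For $n\in I'$ this is the hypothesis; for $n\in I\setminus I'$ we have $\tau_n^I(v)=\varepsilon\in L$ since $v$ has no letters from $\Sigma_{\{n\}}$; and $\Theta^I(v)=\Theta^{I'}(v)\in V$ since $\Theta$ acts the same way on the letters actually occurring in $v$. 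This direction needs no self-similarity and no $\SP$ hypothesis — it is just monotonicity built into the definitions.

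Assembling the pieces: $\SP(L,V)$ gives $\Pi_{I'}^I(\bar{\mcL}(L,V)_I)\subset\bar{\mcL}(L,V)_{I'}$, the unconditional argument gives the opposite inclusion, hence equality for every pair $I'\subset I$ in $\mcI_1$, which is exactly self-similarity of $(\bar{\mcL}(L,V)_I)_{I\in\mcI_1}$ (the family is already known to be a scalable system by Theorem~\ref{thm:L-bar scalable}, so only the self-similarity equations are at issue). Combined with Theorem~\ref{thm:L-bar(L,V) and SP(L,V)} for the other direction, this establishes the claimed equivalence. I expect the main obstacle to be purely notational: carefully handling the index-set gymnastics between $I$, $I'$, $I'\cap J$, and $\dsN$ in Definition~\ref{def:shuffle-projection}, and making sure that the various homomorphism composition identities $\tau_n^{I'}\circ\Pi_{I'}^I=\tau_n^I$ and $\Theta^{I'}\circ\Pi_{I'}^I=\Theta^I$ are stated with the correct domains — none of the steps is mathematically deep, but the equivalence hinges on getting these restrictions exactly right, in the same spirit as the case distinction $K\cap J=\emptyset$ versus $K\cap J\neq\emptyset$ already used in the proof of Theorem~\ref{thm:L-bar(L,V) and SP(L,V)}.
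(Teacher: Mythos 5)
Your proposal is correct, and it diverges from the paper in an interesting way on exactly one of the two implications. For the direction ``self-similarity implies $\SP(L,V)$'' you do the same thing the paper does: invoke Theorem~\ref{thm:L-bar(L,V) and SP(L,V)} and observe that its conclusion is literally the defining inclusion of $\SP(L,V)$. For the converse, however, the paper does not prove anything: it appeals to a result from \cite{SysMea14} stating that the $\SP$ inclusion is a sufficient condition for self-similarity of a large class of scalable systems including $(\bar{\mcL}(L,V)_I)_{I\in\mcI_1}$. You instead supply a self-contained elementary verification: embed $\bar{\mcL}(L,V)_I\subset\Sigma_I^*\subset\Sigma_\dsN^*$ (using $\varepsilon\in L$, which follows from $L\neq\emptyset$ prefix closed, to absorb the conditions for indices outside $I$), apply $\SP(L,V)$ with $K=I'$ together with $\tau_n^{I'}\circ\Pi_{I'}^I=\tau_n^I$ for $n\in I'$ to get $\Pi_{I'}^I(\bar{\mcL}(L,V)_I)\subset\bar{\mcL}(L,V)_{I'}$, and obtain the opposite inclusion unconditionally from monotony ($\Pi_{I'}^I$ is the identity on $\Sigma_{I'}^*$ and $\bar{\mcL}(L,V)_{I'}\subset\bar{\mcL}(L,V)_I$). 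All of these steps check out. What your route buys is self-containedness and transparency for this specific family; what the paper's citation buys is generality, since the result from \cite{SysMea14} covers a whole class of scalable systems rather than just $(\bar{\mcL}(L,V)_I)_{I\in\mcI_1}$. One cosmetic remark: the identity ``$\Theta^{I'}\circ\Pi_{I'}^I=\Theta^I$'' you state at the outset is not literally true as written (the left-hand side erases letters), but you never actually rely on it --- the membership $\Theta^{I'}(\Pi_{I'}^I(w))\in V$ comes from $\SP(L,V)$, not from that identity --- so this does not affect the argument.
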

\noindent \textbf{Remark.} It is easy to see that in Definition~\ref{def:shuffle-projection} $\dsN$ can be replaced by any 
set $N$ having the same cardinality as $\dsN$ \cite{SysMea14}.
\ \\
\ \\
In the last section of this paper decidability of $\SP(U,V)$ will be proven for regular languages $U$ and $V$. 
In preparation for this proof and supplementary to this result, first 
we investigate sufficient conditions for $\SP(U,V)$ and equivalent formulations of $\SP(U,V)$. \\
\ \\
By simple set theory the definition of $\SP(U,V)$ has some immediate consequences:
\begin{equation}\label{eq:sp-5}
\SP(U,V) \mbox{ implies } \SP(U',V) \mbox{ for each } U'\subset U.
\end{equation}
\ \\
Let $\emptyset \neq I$. Then $\SP(U,V_i)$ for each $i \in I$ implies
\begin{equation}\label{eq:sp-6}
\SP(U,\bigcap\limits_{i\in I}V_i) \mbox{ and } \SP(U,\bigcup\limits_{i\in I}V_i).
\end{equation}
\ \\
In \cite{SysMea14} the following theorem has been proven:
\begin{theorem}\label{thm:inverse abstraction}
Let $\varphi:\Sigma^*\to\Phi^*$ be an alphabetic homomorphism
and $W, X\subset\Phi^*$, then 
$\SP(W,X) \mbox{ implies } \SP(\varphi^{-1}(W),\varphi^{-1}(X))$.
\end{theorem}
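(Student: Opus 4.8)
The plan is to unwind both sides of Definition~\ref{def:shuffle-projection} and reduce the statement about $\varphi^{-1}(W)$ and $\varphi^{-1}(X)$ to the hypothesis $\SP(W,X)$ by pushing the alphabetic homomorphism $\varphi$ through the maps $\tau_n^\dsN$, $\Theta^\dsN$ and $\Pi_K^\dsN$. The key observation is that an alphabetic homomorphism $\varphi:\Sigma^*\to\Phi^*$ induces, in a canonical way, an alphabetic homomorphism $\varphi_\dsN:\Sigma_\dsN^*\to\Phi_\dsN^*$ acting copy-wise (i.e.\ $\varphi_\dsN(a_i)$ is the $i$-indexed copy of $\varphi(a)$, understood as $\varepsilon$ when $\varphi(a)=\varepsilon$), and this induced map commutes with all the structural homomorphisms: $\varphi\circ\tau_n^\dsN=\tau_n^\dsN\circ\varphi_\dsN$, $\varphi\circ\Theta^\dsN=\Theta^\dsN\circ\varphi_\dsN$, and $\Pi_K^\dsN\circ\varphi_\dsN=\varphi_\dsN\circ\Pi_K^\dsN$ (the last because $\Pi_K^\dsN$ just erases letters outside $\Sigma_K$ and $\varphi_\dsN$ respects the index). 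I would first state and check these three intertwining identities; they follow immediately by evaluating both sides on a generator $a_i$ and splitting on whether $i=n$ (resp.\ $i\in K$) and on whether $\varphi(a)=\varepsilon$.

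Granting the intertwining identities, the main computation is a chain of set-theoretic manipulations. Take $w\in\Sigma_\dsN^*$ with $w\in\bigl(\bigcap_{n\in\dsN}(\tau_n^\dsN)^{-1}(\varphi^{-1}(W))\bigr)\cap(\Theta^\dsN)^{-1}(\varphi^{-1}(X))$; I want to show $\Pi_K^\dsN(w)\in(\Theta^\dsN)^{-1}(\varphi^{-1}(X))$. From $(\tau_n^\dsN)^{-1}(\varphi^{-1}(W))=(\varphi\circ\tau_n^\dsN)^{-1}(W)=(\tau_n^\dsN\circ\varphi_\dsN)^{-1}(W)=\varphi_\dsN^{-1}\bigl((\tau_n^\dsN)^{-1}(W)\bigr)$ and likewise for $\Theta^\dsN$, we get that $\varphi_\dsN(w)\in\bigl(\bigcap_{n\in\dsN}(\tau_n^\dsN)^{-1}(W)\bigr)\cap(\Theta^\dsN)^{-1}(X)$. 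Applying $\SP(W,X)$ to $\varphi_\dsN(w)$ yields $\Pi_K^\dsN(\varphi_\dsN(w))\in(\Theta^\dsN)^{-1}(X)$, i.e.\ $\varphi_\dsN(\Pi_K^\dsN(w))\in(\Theta^\dsN)^{-1}(X)$ by the third intertwining identity. Unwinding once more, $\Theta^\dsN(\varphi_\dsN(\Pi_K^\dsN(w)))=\varphi(\Theta^\dsN(\Pi_K^\dsN(w)))\in X$, which says exactly $\Pi_K^\dsN(w)\in(\Theta^\dsN)^{-1}(\varphi^{-1}(X))$, as required. Since $K$ was an arbitrary nonempty subset of $\dsN$, this establishes $\SP(\varphi^{-1}(W),\varphi^{-1}(X))$.

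The only genuinely delicate point — and the one I would treat carefully rather than wave through — is the correct handling of letters killed by $\varphi$. An alphabetic homomorphism may send some $a\in\Sigma$ to $\varepsilon$, and one must make sure the induced $\varphi_\dsN$ is well-defined on the disjoint copies and that the identity $\Pi_K^\dsN\circ\varphi_\dsN=\varphi_\dsN\circ\Pi_K^\dsN$ still holds when $\varphi(a)=\varepsilon$ (both sides give $\varepsilon$, so it is fine, but it should be said). A second, more cosmetic subtlety is that $\varphi_\dsN$ targets $\Phi_\dsN^*$ over a possibly different alphabet $\Phi$, so the $\tau$, $\Theta$, $\Pi$ on the $W,X$ side live over $\Phi$; this is purely notational and does not affect the argument. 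Everything else is a routine diagram chase, so I expect the proof to be short once the intertwining lemma is in place.
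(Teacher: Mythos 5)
Your proof is correct. The paper does not actually include a proof of Theorem~\ref{thm:inverse abstraction} (it is cited from \cite{SysMea14}), but the machinery you build is exactly what the paper introduces immediately afterwards: your induced map $\varphi_\dsN$ is the $\varphi^N$ of \eqref{eq:pp-62}, and your three intertwining identities are precisely \eqref{eq:pp-63} and \eqref{eq:pp-63.2}, which the paper uses for the harder converse direction in Theorem~\ref{thm:inverse abstraction-2}. Your diagram chase is the natural argument and, as you note, needs only the trivial ``push forward'' direction (no analogue of the surjectivity proposition in Lemma~\ref{lemma:sp-1} is required here), so the proof stands as written.
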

Because of \eqref{eq:sp-5} and Theorem~\ref{thm:inverse abstraction}
\begin{equation}\label{eq:sp-6.1}
\SP(\varphi(U),V) \mbox{ implies } \SP(U,\varphi^{-1}(V)).\\
\end{equation}
\ \\
The inverse of implication \eqref{eq:sp-6.1} also holds.
For its proof additional notations and a lemma is needed:
\ \\
\ \\
Let $K$ be a non-empty set.
Each alphabetic homomorphism $\varphi:\Sigma^*\to\Phi^*$
defines a homomorphism
$\varphi^K:\Sigma_K^*\to\Phi_K^*$ by
\begin{equation}\label{eq:pp-62}
\varphi^K(a_n) := (\varphi(a))_n \mbox{ for } a_n\in\Sigma_K,
\mbox{ where } (\varepsilon)_n:=\varepsilon.
\end{equation}
If $\bar{\tau}_n^K:\Phi_K^*\to\Phi^*$ and $\bar{\Theta}^K:\Phi_K^*\to\Phi^*$ are defined
analogously to $\tau_n^K$ and $\Theta^K$, then
\begin{equation}\label{eq:pp-63}
\varphi\circ\tau_n^K=\bar{\tau}_n^K\circ\varphi^K, \mbox{ and } 
\varphi\circ\Theta^K=\bar{\Theta}^K\circ\varphi^K .
\end{equation}
Let $ K \subset N$ and
$\bar{\Pi}_K^N:\Phi_N^*\to \Phi_K^*$ be defined analogously to $\Pi_K^N$, then 
\begin{equation}\label{eq:pp-63.2}
\bar{\Pi}_K^N\circ\varphi^N=\varphi^K\circ\Pi_K^N.
\end{equation}
\begin{lemma} \label{lemma:sp-1} 
Let $\varphi:\Sigma^*\to\Phi^*$ be an alphabetic homomorphism,
$U\subset\Phi^*$ and $N$ be a non-empty set, then 
\[\varphi^N(\bigcap\limits_{t\in N}(\tau_t^N)^{-1}(U))=\bigcap\limits_{t\in N}(\bar{\tau}_t^N)^{-1}(\varphi(U)).\]
\end{lemma}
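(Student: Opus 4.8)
The plan is to prove the two inclusions separately, using the commutation identity \eqref{eq:pp-63} as the central tool. For the inclusion ``$\subseteq$'', I would take $w\in\bigcap_{t\in N}(\tau_t^N)^{-1}(U)$, so that $\tau_t^N(w)\in U$ for every $t\in N$. Applying \eqref{eq:pp-63} gives $\bar{\tau}_t^N(\varphi^N(w))=\varphi(\tau_t^N(w))\in\varphi(U)$ for every $t$, hence $\varphi^N(w)\in\bigcap_{t\in N}(\bar{\tau}_t^N)^{-1}(\varphi(U))$. This direction is completely routine and uses only that $\varphi^N$ is a function.

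The reverse inclusion ``$\supseteq$'' is the part that needs real work, because it is a statement about the image of $\varphi^N$: given $v\in\bigcap_{t\in N}(\bar{\tau}_t^N)^{-1}(\varphi(U))$, I must exhibit some $w\in\Sigma_N^*$ with $\varphi^N(w)=v$ and $\tau_t^N(w)\in U$ for all $t\in N$. The natural approach is to first choose, for the single letter shape of $v$, a preimage: write $v=b_{n_1}^{(1)}\cdots b_{n_k}^{(k)}$ with $b^{(j)}\in\Phi$ and indices $n_j\in N$; since $\varphi$ is alphabetic and surjective onto its letters appearing here (each $b^{(j)}$ that occurs in $v$ must occur in $\varphi(U)$, hence lies in $\varphi(\Sigma)$), pick $a^{(j)}\in\Sigma$ with $\varphi(a^{(j)})=b^{(j)}$ and set $w:=a_{n_1}^{(1)}\cdots a_{n_k}^{(k)}$. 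By construction $\varphi^N(w)=v$ and, crucially, $\tau_t^N$ and $\bar{\tau}_t^N$ select exactly the positions whose index equals $t$ and then strip the index, so the skeleton of positions kept by $\tau_t^N(w)$ is the same as that kept by $\bar{\tau}_t^N(v)$. Thus $\varphi(\tau_t^N(w))=\bar{\tau}_t^N(\varphi^N(w))=\bar{\tau}_t^N(v)\in\varphi(U)$.

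The genuine obstacle is now visible: knowing $\varphi(\tau_t^N(w))\in\varphi(U)$ does \emph{not} by itself give $\tau_t^N(w)\in U$, since $\varphi$ need not be injective and $\varphi^{-1}(\varphi(U))$ can be strictly larger than $U$. The way around this is to not pick the preimages $a^{(j)}$ arbitrarily, but to make the choice consistently across the different ``threads'': for each index value $n$ occurring in $v$, the subword of $v$ at positions carrying index $n$ is (after stripping) a word $\bar{\tau}_n^N(v)\in\varphi(U)$, so choose a \emph{single} word $u^{(n)}\in U$ with $\varphi(u^{(n)})=\bar{\tau}_n^N(v)$, and then define $w$ by interleaving the letters of $u^{(n)}$ (tagged with index $n$) in the positions dictated by the index-pattern of $v$. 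With this construction $\tau_n^N(w)=u^{(n)}\in U$ by design for every index $n$ that occurs, and for indices not occurring $\tau_n^N(w)=\varepsilon\in U$ (using that $U$ contains $\varepsilon$, or handling that case separately); and one checks $\varphi^N(w)=v$ position by position, since on the positions with index $n$ the letters of $w$ are those of $u^{(n)}$ and $\varphi$ sends them to the corresponding letters of $\bar{\tau}_n^N(v)$, which are exactly the letters of $v$ there. I would present the ``$\subseteq$'' direction in two lines and devote the bulk of the proof to making this interleaving construction precise, as it is the only step with any content.
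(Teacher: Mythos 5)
Your proposal is correct and follows essentially the same route as the paper: the forward inclusion via the commutation identity \eqref{eq:pp-63}, and the reverse inclusion by choosing, for each thread $t$ with $\bar{\tau}_t^N(y)\neq\varepsilon$, a whole word $u_t\in U$ with $\varphi(u_t)=\bar{\tau}_t^N(y)$ and then interleaving these into a preimage of $y$ under $\varphi^N$ --- which is exactly the paper's auxiliary proposition, proved there by induction on the length of $y$. The one point your sketch glosses over is that an alphabetic $\varphi$ may erase letters, so $|u_t|$ can exceed $|\bar{\tau}_t^N(y)|$ and the interleaving is not a strict position-for-position substitution; the paper's induction step absorbs this by factoring $u_s=u'_sv'_s$ with $\varphi(v'_s)$ equal to the last letter of $\bar{\tau}_s^N(y)$, and your write-up would need the analogous care.
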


\begin{proof}

Because of \eqref{eq:pp-63}
for $x\in \bigcap\limits_{t \in N} (\tau_t^N)^{-1}(U)$ and $t\in  N$ holds
\[\bar{\tau}_t^N (\varphi^N(x)) = \varphi(\tau_t^N(x))\in \varphi(U) ,\]
 and therefore
\[\varphi^N(\bigcap\limits_{t \in N} (\tau_t^N)^{-1}(U)) \subset \bigcap\limits_{t \in N} (\bar{\tau}_t^N)^{-1}(\varphi(U)) .\]
\ \\
The contrary inclusion will be
proven by the following \emph{proposition}:\\
\ \\
For $y\in \Phi_N^*$ let $T(y)$ be the finite set defined by $T(y) := \{t\in N \ |\ \bar{\tau}_t^N(y)\neq \varepsilon\}$.
Then for each $y\in \Phi_N^*$ and $(u_t)_{t\in N}$
with $\bar{\tau}_t^N(y)=\varphi(u_t)$, $u_t\in \Sigma^+$ for $t\in T(y)$
and $u_t =\varepsilon$ for $t\in N\setminus T(y)$ exists an 
$x\in \Sigma_N^{*}$ with $y=\varphi^N(x)$ and 
$\tau_t^{N}(x)=u_t$ for each $t\in N$.

\begin{proof} [Proof of the proposition by induction.] \ \\
\emph{Induction base.}  \\
For $y=\varepsilon$ holds $T(y)=\emptyset$, and $x=\varepsilon$ satisfies the proposition.\\
\emph{Induction step.}  \\
Let $y=y'a'_s\in\Phi^{*}_N$ with $a'_s\in \Phi_{\{s\}}$ and
$\bar{\tau}_t^N(y)=\varphi(u_t)$ with $u_t\in\Sigma^+$ for $t\in T(y)$
as well as $u_t = \varepsilon$ for $t\in N \setminus T(y)$.\\
Then holds $s \in T(y)$, because $\bar{\tau}_s^N(y)=\bar{\tau}_s^N(y')a'_s\neq\varepsilon.$ \\
Let now $u_s=u'_sv'_s$ with $v'_s\in\Sigma^+$,
$a'_s=\bar{\tau}_s^N(a'_s)=\varphi(v'_s)\neq\varepsilon$ and $u'_s=\varepsilon$
when $\bar{\tau}_s^N(y')=\varepsilon$.\\
For $t\in N\setminus \{s\}$ let $u'_t:= u_t$.\\
$y'\in\Phi^{*}_N$ and $(u'_t)_{t\in N}$ now satisfy the induction
hypothesis. Therefore exists $x'\in\Sigma^{*}_N$ with $y'=\varphi^N(x')$ and
$\tau_t^N(x')=u'_t$ for each $t \in N$.\\
Because of the injectivity of $\tau_s^N$ on $\Sigma^*_{\{s\}}$ exists now 
exactly one $\tilde{v}_s\in\Sigma^+_{\{s\}}$ with $\tau_s^N(\tilde{v}_s)=v'_s$.\\
According to the definition of $\varphi^N$ now for $\tilde{v}_s$ holds:\\
$\varphi^N(\tilde{v}_s)=a'_s$, hence 
 $\varphi^N(x'\tilde{v}_s)=\varphi^N(x')\varphi^N(\tilde{v}_s)=y'a'_s=y$.\\
Because $\tau_t^N(x'\tilde{v}_s)=\tau_t^N(x')=u'_t=u_t$ for
$t\in N\setminus\{s\}$ and 
$\tau_s^N(x'\tilde{v}_s)=\tau_s^N(x')\tau_s^N(\tilde{v}_s)=u'_sv'_s=u_s$ is then 
$x:=x'\tilde{v}_s$ a proper $x\in \Sigma^*_N$ for 
$y=y'a'_s\in \Phi_N^{*}$ for the induction step.
Therewith the proof of the proposition is completed.
\end{proof}
From the above proposition follows the inclusion 
\[\bigcap\limits_{t \in N} (\bar{\tau}_t^N)^{-1}(\varphi(U)) \subset \varphi^N(\bigcap\limits_{t \in N} (\tau_t^N)^{-1}(U)) ,\]
 which completes the proof of Lemma~\ref{lemma:sp-1}.
\end{proof}

\begin{theorem}\label{thm:inverse abstraction-2}
Let $\varphi:\Sigma^*\to\Phi^*$ be an alphabetic homomorphism, $U\subset\Sigma^*$  
and $V\subset\Phi^*$, then 
$\SP(\varphi(U),V) \mbox{ iff } \SP(U,\varphi^{-1}(V))$.
\end{theorem}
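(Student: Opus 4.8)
The plan is to prove the two directions of the equivalence separately. The forward implication $\SP(\varphi(U),V)\Rightarrow\SP(U,\varphi^{-1}(V))$ is already available: it is exactly \eqref{eq:sp-6.1}, obtained from the monotonicity property \eqref{eq:sp-5} together with Theorem~\ref{thm:inverse abstraction}. So the real work is the reverse implication $\SP(U,\varphi^{-1}(V))\Rightarrow\SP(\varphi(U),V)$, and this is where Lemma~\ref{lemma:sp-1} and the compatibility identities \eqref{eq:pp-63} and \eqref{eq:pp-63.2} come in.

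For the reverse direction I would unfold Definition~\ref{def:shuffle-projection} for $\SP(\varphi(U),V)$: fix $\emptyset\neq K\subset\dsN$ and take an arbitrary $y\in\bar\Pi_K^\dsN[(\bigcap_{n\in\dsN}(\bar\tau_n^\dsN)^{-1}(\varphi(U)))\cap(\bar\Theta^\dsN)^{-1}(V)]$; the goal is $y\in(\bar\Theta^\dsN)^{-1}(V)$. Write $y=\bar\Pi_K^\dsN(z)$ with $z\in\bigcap_{n\in\dsN}(\bar\tau_n^\dsN)^{-1}(\varphi(U))$ and $\bar\Theta^\dsN(z)\in V$. By Lemma~\ref{lemma:sp-1} (applied with $N=\dsN$), the set $\bigcap_{n\in\dsN}(\bar\tau_n^\dsN)^{-1}(\varphi(U))$ equals $\varphi^\dsN(\bigcap_{n\in\dsN}(\tau_n^\dsN)^{-1}(U))$, so there is $x\in\bigcap_{n\in\dsN}(\tau_n^\dsN)^{-1}(U)$ with $z=\varphi^\dsN(x)$. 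Then by \eqref{eq:pp-63}, $\bar\Theta^\dsN(z)=\bar\Theta^\dsN(\varphi^\dsN(x))=\varphi(\Theta^\dsN(x))\in V$, i.e. $\Theta^\dsN(x)\in\varphi^{-1}(V)$, so $x\in(\bigcap_{n\in\dsN}(\tau_n^\dsN)^{-1}(U))\cap(\Theta^\dsN)^{-1}(\varphi^{-1}(V))$. Applying the hypothesis $\SP(U,\varphi^{-1}(V))$ for this same $K$ gives $\Pi_K^\dsN(x)\in(\Theta^\dsN)^{-1}(\varphi^{-1}(V))$, i.e. $\Theta^\dsN(\Pi_K^\dsN(x))\in\varphi^{-1}(V)$. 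Finally, by \eqref{eq:pp-63.2} and \eqref{eq:pp-63}, $\bar\Theta^K(y)=\bar\Theta^K(\bar\Pi_K^\dsN(\varphi^\dsN(x)))=\bar\Theta^K(\varphi^K(\Pi_K^\dsN(x)))=\varphi(\Theta^K(\Pi_K^\dsN(x)))=\varphi(\Theta^\dsN(\Pi_K^\dsN(x)))\in V$, where the penultimate equality uses $\Theta^K\circ\Pi_K^\dsN=\Theta^\dsN$ restricted appropriately (more precisely, $\Theta^\dsN$ and $\Theta^K\circ\Pi_K^\dsN$ agree on $\Sigma_\dsN^*$ since both just erase the index of every letter, deleting those outside $K$). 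Hence $y\in(\bar\Theta^\dsN)^{-1}(V)$, as required.

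A point deserving a little care is the bookkeeping with $\bar\Theta^K$ versus $\bar\Theta^\dsN$: one must note $\bar\Theta^\dsN$ restricted to $\Phi_K^*$ coincides with $\bar\Theta^K$, and similarly $\Theta^\dsN|_{\Sigma_K^*}=\Theta^K$, so that the final membership $\bar\Theta^K(y)\in V$ is the same statement as $\bar\Theta^\dsN(y)\in V$ that Definition~\ref{def:shuffle-projection} asks for. This is routine but should be mentioned.

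The main obstacle is really just assembling the chain of identities \eqref{eq:pp-63}, \eqref{eq:pp-63.2} and Lemma~\ref{lemma:sp-1} in the correct order so that the witness $x\in\Sigma_\dsN^*$ produced by the lemma is simultaneously used (i) to transfer the $\Theta$-condition from $\varphi(U)$-world down to $U$-world, (ii) to invoke the hypothesis $\SP(U,\varphi^{-1}(V))$, and (iii) to push the conclusion back up through $\varphi$. Once the correct witness is identified, every step is a one-line application of a displayed identity, so no genuine calculation remains.
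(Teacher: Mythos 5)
Your proposal is correct and follows essentially the same route as the paper: the forward direction is dispatched by \eqref{eq:sp-6.1}, and the reverse direction combines Lemma~\ref{lemma:sp-1} (to produce a preimage $x$ under $\varphi^N$), \eqref{eq:pp-63} (to transfer the $\Theta$-condition down to $\varphi^{-1}(V)$), the hypothesis $\SP(U,\varphi^{-1}(V))$, and \eqref{eq:pp-63.2} (to push the conclusion back up). The only difference is presentational — you argue element-wise where the paper phrases the same chain as set identities via $f(A)\cap B=f(A\cap f^{-1}(B))$ — and your remark about $\bar{\Theta}^\dsN|_{\Phi_K^*}=\bar{\Theta}^K$ correctly handles the bookkeeping the paper absorbs into $\varphi^K(w)=\varphi^N(w)$ for $w\in\Sigma_K^*$.
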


\begin{proof} 
\ \\
On account of \eqref{eq:sp-6.1} it only has to be proven that   
$\SP(U,\varphi^{-1}(V))$ implies $\SP(\varphi(U),V)$.\\
\ \\
For each mapping $f:X \to Y$, $A \subset X$ and $B \subset Y$ holds

\begin{equation}\label{eq:sp-6.0}
f(A) \cap B = f(A \cap f^{-1}(B)).
\end{equation}
\ \\
Now Lemma~\ref{lemma:sp-1}, \eqref{eq:pp-63} and \eqref{eq:sp-6.0} imply

\begin{align}\label{eq:sp-63.1}
&\bigcap\limits_{t\in N}(\bar{\tau}_t^N)^{-1}(\varphi(u)) \cap (\bar{\Theta}^N)^{-1}(V) \nonumber \\
&=\varphi^N[\bigcap\limits_{t\in N}(\tau_t^N)^{-1}(U) \cap (\varphi^N)^{-1}((\bar{\Theta}^N)^{-1}(V))] \nonumber \\
&=\varphi^N[\bigcap\limits_{t\in N}(\tau_t^N)^{-1}(U) \cap (\Theta^N)^{-1}(\varphi^{-1}(V))] 
\end{align}
for each non-empty set $N$.\\
\ \\
Because of $\varphi^K(w)=\varphi^N(w)$ for $w\in\Sigma_K^*\subset\Sigma_N^*$
and $\emptyset\neq K\subset N$, 
\eqref{eq:pp-63}, \eqref{eq:pp-63.2}, \eqref{eq:sp-63.1} and $\SP(U,\varphi^{-1}(V))$ imply

\begin{align}\label{eq:sp-63.3}
&\bar{\Pi}_K^N[\bigcap\limits_{t\in N}(\bar{\tau}_t^N)^{-1}(\varphi(u)) \cap (\bar{\Theta}^N)^{-1}(V)] \nonumber \\
&=\varphi^N(\Pi_K^N[\bigcap\limits_{t\in N}(\tau_t^N)^{-1}(U) \cap (\Theta^N)^{-1}(\varphi^{-1}(V))]) \nonumber \\
&\subset \varphi^N((\Theta^N)^{-1}(\varphi^{-1}(V))) \nonumber \\
&=\varphi^N((\varphi^N)^{-1}((\bar{\Theta}^N)^{-1}(V))) \subset (\bar{\Theta}^N)^{-1}(V).
\end{align}
\ \\
\eqref{eq:sp-63.3} shows $\SP(\varphi(U),V)$, which completes the proof of Theorem~\ref{thm:inverse abstraction-2}.

\end{proof}
\ \\
$\SP(U,V)$ can be reduced to a simpler condition than Definition~\ref{def:shuffle-projection}. 
For that purpose an additional notion and lemma is needed.
\ \\
\ \\
Generally for a word $w \in \Sigma_N^*$ $\kappa(w)$ denotes the smallest subset of $N$
such that $w \in \Sigma_{\kappa(w)}^*$. More precisely
\begin{equation}\label{eq:def-kappa}
\kappa(\varepsilon) := \emptyset \mbox{ and } \kappa(wa) := \kappa(w) \cup \{i\}
\mbox{ for } w \in \Sigma_N^* \mbox{ and } a \in \Sigma_{\{i\}} \mbox{ with } i \in N. 
\end{equation} 
\begin{lemma} \label{lemma:struc-rep-proj} 
Let $N$ be an infinite set, $K \subset N$ and
$U \subset \Sigma^*$. Then 
\[\Pi_K^N(\bigcap\limits_{t\in N}(\tau_t^N)^{-1}(U)) \subset \bigcap\limits_{t\in N}(\tau_t^N)^{-1}(U).\]
\end{lemma}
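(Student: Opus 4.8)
The plan is to take an arbitrary word $w$ in the left-hand set, write it as $w = \Pi_K^N(v)$ with $v \in \bigcap_{t\in N}(\tau_t^N)^{-1}(U)$, and show directly that $\tau_t^N(w) \in U$ for every $t \in N$. First I would observe that $w \in \Sigma_{\kappa(v)\cap K}^*$, so that $\kappa(w) \subseteq \kappa(v) \cap K$; in particular $w \in \Sigma_K^*$. The key computation is to compare $\tau_t^N(w)$ with $\tau_t^N(v)$. For $t \in N$, the homomorphism $\tau_t^N$ keeps only the letters with index $t$ (mapping $a_t \mapsto a$) and erases all others. Applying $\Pi_K^N$ first erases exactly the letters whose index lies outside $K$. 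Hence for $t \in K$ the two erasing operations $\tau_t^N$ and $\tau_t^N \circ \Pi_K^N$ act identically on any word — erasing a letter that $\tau_t^N$ would erase anyway changes nothing — so $\tau_t^N(\Pi_K^N(v)) = \tau_t^N(v) \in U$. For $t \in N \setminus K$, every letter surviving $\Pi_K^N$ has index in $K \neq \{t\}$, so $\tau_t^N(w) = \varepsilon$; and since $U$ is prefix closed (it is a language over $\Sigma^*$ containing, via the standing hypotheses on the $\mcL$-families, the empty word — more carefully, one uses that $w \in \Sigma_{\kappa(w)}^*$ with $t \notin \kappa(w)$, so $\tau_t^N(w) = \varepsilon$, and $\varepsilon = \tau_s^N(w)$ for any index $s$ actually occurring, hence $\varepsilon \in U$ provided $U \neq \emptyset$ and prefix closed — alternatively the lemma is applied only in contexts where $\varepsilon \in U$).

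So the core identity to establish cleanly is: for $t \in K$, $\tau_t^N \circ \Pi_K^N = \tau_t^N$ as maps $\Sigma_N^* \to \Sigma^*$. I would prove this by induction on word length, or simply by noting both sides are homomorphisms and checking agreement on each generator $a_i$, $i \in N$: if $i = t$ then $i \in K$ so $\Pi_K^N(a_i) = a_i$ and both sides give $a$; if $i \neq t$ then $\tau_t^N(a_i) = \varepsilon$ and also $\tau_t^N(\Pi_K^N(a_i)) = \varepsilon$ (whether or not $i \in K$). This settles the case $t \in K$. Combining with the trivial case $t \notin K$ (where $\tau_t^N(w) = \varepsilon$ because no surviving letter carries index $t$) gives $\tau_t^N(w) \in U$ for all $t \in N$, i.e. $w \in \bigcap_{t\in N}(\tau_t^N)^{-1}(U)$.

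The only delicate point is the treatment of the empty word in the case $t \in N\setminus K$: one needs $\varepsilon \in U$, which holds because $w$ itself lies in $\bigcap_{t}(\tau_t^N)^{-1}(U)$ on the right only if that intersection is nonempty — but actually $w$ need not be nonempty and we are trying to prove membership, so the honest statement is that we need $\varepsilon \in U$ as a hypothesis. In the intended applications $U = L$ is prefix closed and nonempty, hence contains $\varepsilon$; I would either add "prefix closed" to the hypotheses (matching the surrounding definitions) or note that if $K \cap \kappa(v) = \emptyset$ then $w = \varepsilon$ and the claim $\varepsilon \in \bigcap_t(\tau_t^N)^{-1}(U)$ is exactly $\varepsilon \in U$. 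The main obstacle is thus not any real difficulty but bookkeeping: making sure the generator-wise check of $\tau_t^N \circ \Pi_K^N = \tau_t^N$ is stated for the right range of indices and that the empty-word edge case is handled by the prefix-closedness already in force throughout the paper.
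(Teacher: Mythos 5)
Your core computation is correct and is the same as the paper's: for $t \in K$ the homomorphisms $\tau_t^N \circ \Pi_K^N$ and $\tau_t^N$ agree on every generator $a_i$ (both erase $a_i$ for $i \neq t$, both send $a_t \mapsto a$), so $\tau_t^N(\Pi_K^N(w)) = \tau_t^N(w) \in U$; and for $t \in N \setminus K$ every surviving letter of $\Pi_K^N(w)$ carries an index in $K$, so $\tau_t^N(\Pi_K^N(w)) = \varepsilon$.

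The gap is in your handling of the case $t \in N \setminus K$, where you need $\varepsilon \in U$. You conclude that the lemma as stated is not quite provable and that one should add $\varepsilon \in U$ (or prefix-closedness of $U$) as a hypothesis, or fall back on the intended applications. In fact the lemma is provable as stated, and the hypothesis you never use is precisely that $N$ is an \emph{infinite} set. Every $w \in \Sigma_N^*$ is a finite word, so $\kappa(w)$ is finite and $\tau_t^N(w) = \varepsilon$ for every $t \in N \setminus \kappa(w)$ --- and there are infinitely many such $t$. Hence if $\varepsilon \notin U$, then already $\bigcap_{t \in N}(\tau_t^N)^{-1}(U) = \emptyset$, and the claimed inclusion holds vacuously. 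The paper's proof opens with exactly this two-line case split: if $\varepsilon \notin U$ the left-hand side is empty; if $\varepsilon \in U$ your argument goes through verbatim. With that observation inserted in place of your proposed change of hypotheses, your proof is complete and coincides with the paper's.
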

\begin{proof}\ \\
If $\varepsilon \notin U$, then $\bigcap\limits_{t\in N}(\tau_t^N)^{-1}(U) = \emptyset$, and
therefore 
\[ \Pi_K^N(\bigcap\limits_{t\in N}(\tau_t^N)^{-1}(U)) = \emptyset 
\subset \bigcap\limits_{t\in N}(\tau_t^N)^{-1}(U). \]
\ \\
Let now $\varepsilon \in U$, and 
$x \in \Pi_K^N(\bigcap\limits_{t\in N}(\tau_t^N)^{-1}(U))$, then 
$\tau_t^N(x) = \varepsilon \in U$ for $t \in N \setminus K$, and 
$\tau_t^N(x) = \tau_t^N(w) \in U$ for $w \in\bigcap\limits_{t\in N}(\tau_t^N)^{-1}(U)$  
with $\Pi_K^N(w) =x$ and $t \in K$, 
which implies $x \in \bigcap\limits_{t\in N}(\tau_t^N)^{-1}(U)$.
This completes the proof of the lemma.
\end{proof}
\begin{theorem}\label{thm:mod-shuffle-projection}
Let $U,V \in\Sigma^*$, then 
$\SP(U,V)$, iff there exists an infinite countable set $N$ such that 
\begin{equation}\label{eq:mod-shuffle-projection}
\Pi_{N \setminus \{n\}}^N [(\bigcap\limits_{t\in N}(\tau_t^N)^{-1}(U)) 
\cap (\Theta^N)^{-1}(V)] \subset (\Theta^N)^{-1}(V)
\end{equation}
for each $n \in N$.
\end{theorem}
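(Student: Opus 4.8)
The forward direction is essentially definitional: if $\SP(U,V)$ holds, then by Definition~\ref{def:shuffle-projection} the inclusion \eqref{eq:mod-shuffle-projection} holds for every nonempty $K\subset\dsN$, in particular for $K = \dsN\setminus\{n\}$; by the Remark following Corollary~\ref{cor:sp-1}, $\dsN$ may be replaced by any countably infinite set $N$. So the plan is to concentrate on the converse: assuming \eqref{eq:mod-shuffle-projection} for all $n\in N$, derive the full condition $\SP(U,V)$, i.e.\ the inclusion $\Pi_K^N[(\bigcap_{t\in N}(\tau_t^N)^{-1}(U)) \cap (\Theta^N)^{-1}(V)]\subset(\Theta^N)^{-1}(V)$ for an \emph{arbitrary} nonempty $K\subset N$.

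The key idea is that projecting onto $K$ can be realised as an iterated composition of ``remove one index'' projections $\Pi_{N\setminus\{n\}}^N$ (or rather their analogues), so that a single-step hypothesis bootstraps to the general case. First I would fix $w$ in the left-hand set; since $w\in\Sigma_N^*$ is a finite word, $\kappa(w)$ is finite (notation from \eqref{eq:def-kappa}), hence the set $D := \kappa(w)\setminus K$ of indices that must be deleted is finite, say $D=\{n_1,\dots,n_m\}$. Write $N_0 := N$ and $N_{j} := N\setminus\{n_1,\dots,n_j\}$, and note $\Pi_K^N(w) = \Pi_{N_m}^{N}(w)$ because the indices outside $\kappa(w)$ do not occur in $w$ and those in $K\cap\kappa(w)$ are retained; moreover $\Pi_{N_m}^N = \Pi_{N_m}^{N_{m-1}}\circ\cdots\circ\Pi_{N_1}^{N_0}$ on $\Sigma_N^*$. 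I would then prove by induction on $j$ that $w_j := \Pi_{N_j}^{N_{j-1}}\circ\cdots\circ\Pi_{N_1}^{N_0}(w)$ lies in $(\bigcap_{t\in N}(\tau_t^N)^{-1}(U)) \cap (\Theta^N)^{-1}(V)$: the $U$-part is preserved at each step by Lemma~\ref{lemma:struc-rep-proj} (applied after identifying $N_{j-1}$ with $N$ via a bijection, or directly since deleting one index from the intersection-of-preimages set keeps membership — $\tau_t^{N_{j-1}}$ of the projected word is either $\varepsilon\in U$ or unchanged), and the $V$-part together with being carried back into the intersection set is exactly what hypothesis \eqref{eq:mod-shuffle-projection} delivers for the single removed index $n_j$ in the ambient set $N_{j-1}$. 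After $m$ steps we get $\Pi_K^N(w) = w_m \in (\Theta^N)^{-1}(V)$, as desired; the case $K\cap\kappa(w)=\emptyset$ (so $\Pi_K^N(w)=\varepsilon$) is absorbed since then $\varepsilon\in V$ follows from $w\in(\Theta^N)^{-1}(V)$ being nonempty and $V$ prefix-closed in the intended applications — or more simply, $\varepsilon=\Theta^N(\varepsilon)$ and one checks $\varepsilon$ stays in whenever the set is nonempty.

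The main obstacle is the bookkeeping around changing the index set: hypothesis \eqref{eq:mod-shuffle-projection} is stated for \emph{one} fixed infinite countable $N$ and its single-index removals, but in the induction the ambient set shrinks to $N_{j-1}$, which is still countably infinite but not literally $N$. I would handle this by invoking the Remark after Corollary~\ref{cor:sp-1} (cardinality-invariance of the $\SP$ condition) at the right level of generality — or, cleaner, by first proving a lemma that \eqref{eq:mod-shuffle-projection} for all $n$ in \emph{one} countable $N$ implies the same for all $n$ in \emph{any} countable $N'$ via a bijection $N\to N'$ inducing isomorphisms $\Sigma_N^*\to\Sigma_{N'}^*$ that commute with $\tau$, $\Theta$, and $\Pi$ (as already used implicitly in Example~\ref{ex:sp-3} and the Remark). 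A secondary subtlety is confirming that removing a single index from the set $\bigcap_{t\in N}(\tau_t^N)^{-1}(U)$ keeps the projected word inside it — but this is precisely Lemma~\ref{lemma:struc-rep-proj} specialised to $K = N\setminus\{n\}$, so no new work is needed there.
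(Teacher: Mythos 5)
Your proposal is correct and follows essentially the same route as the paper: reduce $\Pi_K^N$ to $\Pi_{N\setminus R}^N$ for the finite set $R=\kappa(w)\setminus K$, then peel off one index at a time by induction, using Lemma~\ref{lemma:struc-rep-proj} to keep the projected word inside $\bigcap_{t\in N}(\tau_t^N)^{-1}(U)$ and hypothesis \eqref{eq:mod-shuffle-projection} to keep it inside $(\Theta^N)^{-1}(V)$. The ``main obstacle'' you flag is in fact a non-issue: since $\Sigma_{N_{j-1}}^*\subset\Sigma_N^*$ and $\Pi_{N_j}^{N_{j-1}}$ agrees with $\Pi_{N\setminus\{n_j\}}^N$ on $\Sigma_{N_{j-1}}^*$, the paper simply keeps the ambient index set fixed at $N$ throughout the induction (cf.\ \eqref{eq:mod-shuffle-projection-126'} and the identity $\Pi_{N\setminus R'}^N(L)=\Pi_{N\setminus R'}^{N\setminus\{n\}}(L)$ for $L\subset\Sigma_{N\setminus\{n\}}^*$), so no bijection or cardinality-invariance argument is needed.
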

\begin{proof}
Let $K \subset N$ and $w \in \Sigma_N^*$, then by \eqref{eq:def-kappa} holds 
\[\Pi_K^N(w) = \Pi_{(N \setminus \kappa(w)) \cup K}^N(w) = 
\Pi_{N \setminus (\kappa(w) \setminus K)}^N(w).\] 
Therefore $\SP(U,V)$ iff there exists an infinite countable set $N$ such that
\begin{equation}\label{eq:mod-shuffle-projection-126}
\Pi_{N \setminus R}^N [(\bigcap\limits_{t\in N}(\tau_t^N)^{-1}(U)) 
\cap (\Theta^N)^{-1}(V)] \subset (\Theta^N)^{-1}(V)
\end{equation}
for each finite subset $R \subset N$.\ \\
\ \\
Now it is sufficient to show that \eqref{eq:mod-shuffle-projection-126} 
follows from \eqref{eq:mod-shuffle-projection}. 
\begin{proof} [by induction on the cardinality of $R \subset N$] \ \\
\emph{Induction base.}  \\
\eqref{eq:mod-shuffle-projection-126} holds for $R = \emptyset$.\\
\emph{Induction step.}  \\
Let $R = R' \cup \{n\}$ with $n \in N \setminus R'$, then
\begin{equation}\label{eq:mod-shuffle-projection-126'}
\Pi_{N \setminus R}^N = \Pi_{N \setminus R'}^{N \setminus \{n\}} \circ \Pi_{N \setminus \{n\}}^N.
\end{equation}
On account of $\Pi_{N \setminus R'}^N(L) = \Pi_{N \setminus R'}^{N \setminus \{n\}}(L)$ 
for $L \subset \Sigma_{N \setminus \{n\}}^* \subset \Sigma_N^*$ 
\eqref{eq:mod-shuffle-projection-126'} implies 
\begin{align}\label{eq:mod-shuffle-projection-127}
&\Pi_{N \setminus R}^N [(\bigcap\limits_{t\in N}(\tau_t^N)^{-1}(U)) 
\cap (\Theta^N)^{-1}(V)] = \nonumber \\   
&\Pi_{N \setminus R'}^N[\Pi_{N \setminus \{n\}}^N[(\bigcap\limits_{t\in N}(\tau_t^N)^{-1}(U)) 
\cap (\Theta^N)^{-1}(V)]].
\end{align}
By Lemma~\ref{lemma:struc-rep-proj} holds
\[\Pi_K^N(\bigcap\limits_{t\in N}(\tau_t^N)^{-1}(U)) \subset \bigcap\limits_{t\in N}(\tau_t^N)^{-1}(U),\]
and therefore
 \begin{equation}\label{eq:mod-shuffle-projection-129}
\Pi_{N \setminus \{n\}}^N[(\bigcap\limits_{t\in N}(\tau_t^N)^{-1}(U)) 
\cap (\Theta^N)^{-1}(V)] 
\subset \bigcap\limits_{t\in N}(\tau_t^N)^{-1}(U).
\end{equation}
Now \eqref{eq:mod-shuffle-projection} and \eqref{eq:mod-shuffle-projection-129}
imply
 \begin{equation}\label{eq:mod-shuffle-projection-130}
\Pi_{N \setminus \{n\}}^N[(\bigcap\limits_{t\in N}(\tau_t^N)^{-1}(U)) 
\cap (\Theta^N)^{-1}(V)] 
\subset (\bigcap\limits_{t\in N}(\tau_t^N)^{-1}(U)) 
\cap (\Theta^N)^{-1}(V).
\end{equation}
From \eqref{eq:mod-shuffle-projection-127}, \eqref{eq:mod-shuffle-projection-130} 
and the induction hypothesis it follows
\begin{align}
&\Pi_{N \setminus R}^N[(\bigcap\limits_{t\in N}(\tau_t^N)^{-1}(U)) 
\cap (\Theta^N)^{-1}(V)] 
\subset \nonumber \\ 
&\Pi_{N \setminus R'}^N[(\bigcap\limits_{t\in N}(\tau_t^N)^{-1}(U)) 
\cap (\Theta^N)^{-1}(V)] 
\subset (\Theta^N)^{-1}(V), \nonumber
\end{align}
which completes the induction step and the proof of Theorem~\ref{thm:mod-shuffle-projection}.
\end{proof}
\end{proof}
\ \\
\ \\
Because of $\bigcap\limits_{t\in \dsN}(\tau_t^\dsN)^{-1}(U) = \emptyset$ for 
$U \subset \Sigma^+$, then trivially holds $\SP(U,V)$ for each $V \subset \Sigma^*$. 
Therefore in the following sections we consider $\SP(P \cup \{\varepsilon\},V)$ 
for $P,V \subset \Sigma^*$.
\section{Iterated Shuffle Products}
Definition~\ref{def:shuffle-projection} and the examples of scalable systems considered so far are related to iterated 
shuffle products.

\begin{definition}[iterated shuffle product $P^\shuffle$]\label{def:itshuff}
For $P\subset \Sigma^*$ let\\ 
\ \\
$P^\shuffle := \Theta^\dsN [\bigcap\limits_{t \in \dsN} (\tau_t^\dsN)^{-1}(P \cup \{\varepsilon\})]$.\\ 
$P^\shuffle$ is called the \emph{iterated shuffle product of $P$}.  
\end{definition}
An immediate consequence of this definition is
\begin{equation}\label{eq:rem0-itshuff}
\emptyset^\shuffle = \{\varepsilon\}^\shuffle = \{\varepsilon\}, \ 
P \cup \{\varepsilon\} \subset P^\shuffle \mbox{ and } 
P^\shuffle \subset L^\shuffle \mbox{ for } P \subset L\subset \Sigma^*.
\end{equation}
For an alphabetic homomorphism $\varphi:\Sigma^*\to\Phi^*$ and $L \subset \Sigma^*$  holds $xy \in \varphi(L)$ 
iff there exist  $u,v\in\Sigma^*$ with $x=\varphi(u)$, $y=\varphi(v)$ and $uv \in L$. This implies
\begin{equation}\label{eq:sp-98'}
\varphi(\pre(L)) = \pre(\varphi(L) \mbox{ for each } L \subset \Sigma^*.
\end{equation}
where $\pre(M)$ denotes the set of all prefixes of words $w \in M$.
\ \\
\ \\
As $\Theta^\dsN$ and $\tau_t^\dsN$ are alphabetic homomorphisms, \eqref{eq:sp-98'} implies  
\begin{equation}\label{eq:rem-itshuff}
\pre(P^\shuffle)=(\pre(P))^\shuffle.
\end{equation}
\begin{example}\label{ex:itshuff}
Let $P = \{ab\}$, then
$aabb \in P^\shuffle$, because $aabb=\Theta^\dsN(a_1a_2b_2b_1)$,
$\tau_1^\dsN(a_1a_2b_2b_1)=\tau_2^\dsN(a_1a_2b_2b_1)=ab\in P$ and 
$\tau_t^\dsN(a_1a_2b_2b_1)=\varepsilon$ for $t\in \dsN \setminus \{1,2\}$.\\
\ \\
$a_1a_2b_2b_1$ is called a \emph{structured representation} of $aabb$.\\
In this term $\SP(P\cup \{\varepsilon\},V)$ is a property of a certain set of structured representations, which implies
\end{example}
\begin{theorem}\label{thm:SP(itshuff)}
Let $P, V\subset\Sigma^*$, then 
$\SP(P\cup \{\varepsilon\},V) \mbox{ implies } \SP(P^\shuffle,V)$.
\end{theorem}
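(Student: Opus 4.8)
The inclusion $P\cup\{\varepsilon\}\subset P^\shuffle$ from \eqref{eq:rem0-itshuff} combined with \eqref{eq:sp-5} only gives the \emph{converse} implication, so the work is to derive $\SP(P^\shuffle,V)$ from $\SP(P\cup\{\varepsilon\},V)$. My plan is to use a form of associativity of the iterated shuffle: a structured representation over $\dsN$ of a word in $P^\shuffle$ refines into a structured representation over the product index set $\dsN\times\dsN$ (still countably infinite) of a word in $P\cup\{\varepsilon\}$, and vice versa. This will let me pull $\SP(P\cup\{\varepsilon\},V)$ --- legitimately read over the relabelled index set $\dsN\times\dsN$, by the Remark after Definition~\ref{def:shuffle-projection} --- down to $\SP(P^\shuffle,V)$.

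Concretely, I would set $N:=\dsN\times\dsN$ and, for $t\in\dsN$, $N_t:=\{t\}\times\dsN$, so that $N$ is the disjoint union of the $N_t$, each bijective with $\dsN$. Let $\rho:\Sigma_N^*\to\Sigma_\dsN^*$ be the alphabetic homomorphism with $\rho(a_{(t,s)}):=a_t$. A letter-by-letter check gives $\Theta^\dsN\circ\rho=\Theta^N$ and $\Pi_K^\dsN(\rho(x))=\rho(\Pi_{K\times\dsN}^N(x))$ for all $K\subset\dsN$ and $x\in\Sigma_N^*$ (with $\rho$ on the right denoting the analogous collapse $\Sigma_{K\times\dsN}^*\to\Sigma_K^*$), while $\tau_t^\dsN(\rho(x))$ is $x$ with all letters of first coordinate $\neq t$ deleted and the remaining indices forgotten.

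The heart of the argument is the identity
\[\bigcap_{t\in\dsN}(\tau_t^\dsN)^{-1}(P^\shuffle)=\rho\Big(\bigcap_{p\in N}(\tau_p^N)^{-1}(P\cup\{\varepsilon\})\Big),\]
and only ``$\subset$'' is actually needed. To prove it I would take $w$ in the left-hand set; for each $t\in\kappa(w)$, Definition~\ref{def:itshuff} supplies a structured representation $z_t\in\Sigma_{N_t}^*$ with $\Theta^{N_t}(z_t)=\tau_t^\dsN(w)$ and $\tau_p^{N_t}(z_t)\in P\cup\{\varepsilon\}$ for all $p\in N_t$, and I set $z_t:=\varepsilon$ otherwise. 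Because $\Theta^{N_t}(z_t)=\tau_t^\dsN(w)$, the word $z_t$ carries exactly the underlying letter sequence of the index-$t$ subword of $w$ (in particular has the right length), so there is a well-defined $x\in\Sigma_N^*$ obtained from $w$ by replacing, for each $t$, the $j$-th occurrence of an index-$t$ letter of $w$ by the $j$-th letter of $z_t$. Then $\kappa(x)=\bigcup_{t\in\kappa(w)}\kappa(z_t)$ is finite, $\rho(x)=w$, $\Theta^N(x)=\Theta^\dsN(w)$, and $\tau_p^N(x)=\tau_p^{N_t}(z_t)\in P\cup\{\varepsilon\}$ for $p\in N_t$ (and $=\varepsilon$ for the unused indices), so $x$ lies in the right-hand set.

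To finish I would fix $\emptyset\neq K\subset\dsN$ and $w\in\bigcap_{t}(\tau_t^\dsN)^{-1}(P^\shuffle)\cap(\Theta^\dsN)^{-1}(V)$, lift $w$ to such an $x$ (so $\Theta^N(x)=\Theta^\dsN(w)\in V$), apply $\SP(P\cup\{\varepsilon\},V)$ over $N$ with the nonempty set $K\times\dsN$ to get $\Theta^N(\Pi_{K\times\dsN}^N(x))\in V$, and then read off $\Theta^\dsN(\Pi_K^\dsN(w))=\Theta^\dsN(\rho(\Pi_{K\times\dsN}^N(x)))=\Theta^N(\Pi_{K\times\dsN}^N(x))\in V$ from the commutation identities; since $K$ and $w$ are arbitrary this is $\SP(P^\shuffle,V)$. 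The only genuinely fiddly step is the ``$\subset$'' inclusion, namely assembling the fine representation $x$ --- but once the $z_t$ have been chosen the interleaving of $x$ is forced to copy that of $w$, so this is bookkeeping rather than a real difficulty; everything else reduces to the letter-wise homomorphism identities and the cardinality remark.
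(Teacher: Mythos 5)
Your argument is essentially the paper's own proof: the paper also passes to the product index set $S\times T$, where your "heart" identity is exactly Shuffle-lemma~1 (with $\rho=\Theta_S^{S\times T}$) and your commutation identity $\Pi_K^{\dsN}\circ\rho=\rho\circ\Pi_{K\times\dsN}^{N}$ plays the role of Shuffle-lemma~3, after which $\SP(P\cup\{\varepsilon\},V)$ is applied over $K\times T$ in the same way. The only difference is that you prove the refinement/lifting step inline (correctly), whereas the paper imports it from \cite{OR2010t}.
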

For the proof of Theorem~\ref{thm:SP(itshuff)} additional notions and three lemmas from \cite{OR2010t} are needed.
Let $S$ and $T$ be non-empty sets.
For each $\emptyset\neq S'\subset S$ and $\emptyset\neq T'\subset T$ let
\[\Theta_{S'}^{S'\times T'}: 
\Sigma_{S'\times T'}^* \to \Sigma_{S'}^* \mbox{ with } 
\Theta_{S'}^{S'\times T'}(a_{(s,t)}):=a_s \mbox{ for each } a_{(s,t)}\in \Sigma_{S'\times T'} \mbox{ and}\]
\[\Theta_{T'}^{S'\times T'}: 
\Sigma_{S'\times T'}^* \to \Sigma_{T'}^* \mbox{ with }
\Theta_{T'}^{S'\times T'}(a_{(s,t)}):=a_t \mbox{ for each } a_{(s,t)}\in \Sigma_{S'\times T'}.\]

\begin{lemma} [Shuffle-lemma 1] \label{lemma:shuffle-lemma1}\ \\
Let $S$, $T$ be non-empty sets and $M\subset \Sigma^*$, then\\
$\bigcap\limits_{s\in S}(\tau_s^S)^{-1}[\Theta^T(\bigcap\limits_{t\in T}(\tau_t^T)^{-1}(M))] =
\Theta_S^{S\times T}[\bigcap\limits_{(s,t)\in S \times T}(\tau_{(s,t)}^{S \times T})^{-1}(M)]$,\\ 
which implies\\
$\Theta^{S} [\bigcap\limits_{s\in S}(\tau_s^S)^{-1}[\Theta^T(\bigcap\limits_{t\in T}(\tau_t^T)^{-1}(M))]] =
\Theta^{S\times T}[\bigcap\limits_{(s,t)\in S \times T}(\tau_{(s,t)}^{S \times T})^{-1}(M)]$,\\
because of\\
$\Theta^{S\times T}= \Theta^{S} \circ \Theta_S^{S\times T}$.
\end{lemma}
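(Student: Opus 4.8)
The plan is to prove the first identity
\[
\bigcap_{s\in S}(\tau_s^S)^{-1}\bigl[\Theta^T(\bigcap_{t\in T}(\tau_t^T)^{-1}(M))\bigr]
= \Theta_S^{S\times T}\bigl[\bigcap_{(s,t)\in S\times T}(\tau_{(s,t)}^{S\times T})^{-1}(M)\bigr]
\]
by a direct double inclusion on words, tracking carefully which ``slot'' each letter of a word occupies; the two corollary statements then follow formally. A word in $\Sigma_{S\times T}^*$ carries, in each letter $a_{(s,t)}$, both an $S$-index and a $T$-index; the homomorphism $\Theta_S^{S\times T}$ forgets the $T$-index (collapsing letters to $\Sigma_S^*$), and $\Theta_T^{S\times T}$ forgets the $S$-index. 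The key observation I would record first is the commutation identity
\[
\tau_s^S\circ\Theta_S^{S\times T} \;=\; \Theta^{T}\circ \tau^{S\times T}_{\{s\}\times T} \qquad\text{(suitably read off on generators)},
\]
i.e.\ restricting a word over $S\times T$ to the letters whose $S$-index is $s$ and then collapsing to $\Sigma^*$ is the same as first mapping to $\Sigma_S^*$, then picking out the $s$-component; and likewise for fixed $t$. All of these are alphabetic homomorphisms, so each identity only has to be checked on single letters $a_{(s,t)}$, which is immediate from the definitions of $\tau$, $\Theta$ and $\Theta_S^{S\times T}$.

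For the inclusion ``$\supseteq$'': take $w\in\bigcap_{(s,t)}(\tau_{(s,t)}^{S\times T})^{-1}(M)$ and put $u:=\Theta_S^{S\times T}(w)\in\Sigma_S^*$. For each fixed $s\in S$ I must show $\tau_s^S(u)\in\Theta^T(\bigcap_{t\in T}(\tau_t^T)^{-1}(M))$. The candidate preimage is $w_s:=\Theta_{\{s\}\times T\to T}(w)$, the word obtained from $w$ by keeping only letters with $S$-index $s$ and relabelling the $T$-index as the sole index: then $\Theta^T(w_s)=\tau_s^S(u)$ by the commutation identity, and for each $t$ one has $\tau_t^T(w_s)=\tau_{(s,t)}^{S\times T}(w)\in M$, so $w_s\in\bigcap_{t\in T}(\tau_t^T)^{-1}(M)$, as needed. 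For ``$\subseteq$'': take $u$ in the left-hand set; for each $s$ there is $y_s\in\Sigma_T^*$ with $\tau_t^T(y_s)\in M$ for all $t$ and $\Theta^T(y_s)=\tau_s^S(u)$. The task is to glue the family $(y_s)_{s\in S}$ back into a single word $w\in\Sigma_{S\times T}^*$ whose letters with $S$-index $s$ ``are'' $y_s$ and which collapses under $\Theta_S^{S\times T}$ to $u$. Since $\Theta^T(y_s)=\tau_s^S(u)$, the word $y_s$ has exactly as many letters as $\tau_s^S(u)$ does, matched in order, so one reads through $u$ left to right, and for the $k$-th letter of $u$ that has $S$-index $s$ one emits the $k$-th letter of $y_s$, tagged with its $T$-index, as a letter of $\Sigma_{S\times T}$. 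A short induction on $|u|$ (or on $\sum_s|y_s|$) makes this precise and shows simultaneously $\Theta_S^{S\times T}(w)=u$ and $\tau_{(s,t)}^{S\times T}(w)=\tau_t^T(y_s)\in M$.

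The second displayed identity follows by applying $\Theta^S$ to both sides of the first and using $\Theta^{S\times T}=\Theta^S\circ\Theta_S^{S\times T}$, which itself is checked on generators: $\Theta^S(\Theta_S^{S\times T}(a_{(s,t)}))=\Theta^S(a_s)=a=\Theta^{S\times T}(a_{(s,t)})$. The main obstacle is the gluing step in the ``$\subseteq$'' direction: one has a whole family $(y_s)_{s\in S}$ (with $S$ possibly infinite, though each individual $y_s$ and all but finitely many $y_s=\varepsilon$, since $u$ uses only finitely many $S$-indices), and must interleave them into one word consistently with the order already fixed by $u$. This is precisely the same bookkeeping as in the proof of Lemma~\ref{lemma:sp-1}: the alignment $\Theta^T(y_s)=\tau_s^S(u)$ is what guarantees the induction goes through, because it lets one peel off the last letter of $u$ and the corresponding last letter of the relevant $y_s$ in lockstep.
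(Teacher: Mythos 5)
The paper does not actually prove Shuffle-lemma~1; it is imported without proof from the reference \cite{OR2010t}, so there is no in-paper argument to compare against. Judged on its own, your proof is correct: the ``$\supseteq$'' direction is a clean check on generators (the word $w_s$ obtained by keeping the letters of $w$ with $S$-index $s$ and re-indexing by $T$ satisfies $\Theta^T(w_s)=\tau_s^S(\Theta_S^{S\times T}(w))$ and $\tau_t^T(w_s)=\tau_{(s,t)}^{S\times T}(w)$), and the ``$\subseteq$'' direction correctly reduces to the gluing/interleaving induction, which is legitimate because $\Theta^T$ and $\tau_s^S$ are strictly alphabetic on the relevant letters, so $|y_s|$ equals the number of letters of $u$ with index $s$ and the letters match position by position; only finitely many $y_s$ are nonempty, so the construction terminates. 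This is exactly the same bookkeeping as the proposition inside the paper's proof of Lemma~\ref{lemma:sp-1}, as you note, so the argument is fully in the spirit of the paper's own techniques. The only cosmetic quibble is the notation $\tau^{S\times T}_{\{s\}\times T}$, which is not defined in the paper (the paper would write this via $\Pi_{\{s\}\times T}^{S\times T}$ followed by a re-indexing, or simply define the map on generators as you do in words); spelling that map out explicitly on generators would make the commutation identity airtight.
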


\begin{lemma} [Shuffle-lemma 2] \label{lemma:shuffle-lemma2}\ \\
Let $S$, $T$ be non-empty sets and $M\subset \Sigma^*$. If a bijection between $S$ and $T$ exists, then 
$\Theta^S[\bigcap\limits_{s\in S}(\tau_s^S)^{-1}(M)] =\Theta^T[\bigcap\limits_{t\in T}(\tau_t^T)^{-1}(M)]$.
\end{lemma}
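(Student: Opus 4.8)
The statement to prove is that for non-empty sets $S$, $T$ admitting a bijection $\beta:S\to T$, and any $M\subset\Sigma^*$, one has $\Theta^S[\bigcap_{s\in S}(\tau_s^S)^{-1}(M)] = \Theta^T[\bigcap_{t\in T}(\tau_t^T)^{-1}(M)]$. The natural plan is to lift the bijection $\beta$ to an isomorphism of free monoids and then transport everything along it. Concretely, $\beta$ induces an isomorphism $\beta_*:\Sigma_S^*\to\Sigma_T^*$ by $\beta_*(a_s):=a_{\beta(s)}$ for $a\in\Sigma$, $s\in S$; this is the map called $\iota_K^I$ in the introduction of the paper (Example~\ref{ex:sp-3}) applied to $\iota=\beta$. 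Since $\beta_*$ is a bijection, it suffices to show $\beta_*$ maps $\bigcap_{s\in S}(\tau_s^S)^{-1}(M)$ onto $\bigcap_{t\in T}(\tau_t^T)^{-1}(M)$ and commutes with the $\Theta$-homomorphisms in the sense $\Theta^T\circ\beta_* = \Theta^S$.

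The second compatibility is immediate from the definitions: for $a_s\in\Sigma_S$, $\Theta^T(\beta_*(a_s)) = \Theta^T(a_{\beta(s)}) = a = \Theta^S(a_s)$, and both sides are homomorphisms, so $\Theta^T\circ\beta_*=\Theta^S$ on all of $\Sigma_S^*$. For the first claim, the key intertwining relation is $\tau_{\beta(s)}^T\circ\beta_* = \tau_s^S$ for each $s\in S$: indeed, for $a_r\in\Sigma_S$ we have $\beta_*(a_r)=a_{\beta(r)}$, and $\tau_{\beta(s)}^T(a_{\beta(r)})$ equals $a$ if $\beta(r)=\beta(s)$ (i.e.\ $r=s$, by injectivity of $\beta$) and $\varepsilon$ otherwise, which is exactly $\tau_s^S(a_r)$; again both sides are homomorphisms. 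From this, for $x\in\Sigma_S^*$, $\beta_*(x)\in(\tau_{\beta(s)}^T)^{-1}(M)$ iff $\tau_s^S(x)\in M$, and as $s$ ranges over $S$, $\beta(s)$ ranges over all of $T$, so $\beta_*(x)\in\bigcap_{t\in T}(\tau_t^T)^{-1}(M)$ iff $x\in\bigcap_{s\in S}(\tau_s^S)^{-1}(M)$. Hence $\beta_*$ restricts to a bijection between the two intersections.

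Putting the pieces together: $\Theta^T[\bigcap_{t\in T}(\tau_t^T)^{-1}(M)] = \Theta^T[\beta_*(\bigcap_{s\in S}(\tau_s^S)^{-1}(M))] = (\Theta^T\circ\beta_*)[\bigcap_{s\in S}(\tau_s^S)^{-1}(M)] = \Theta^S[\bigcap_{s\in S}(\tau_s^S)^{-1}(M)]$, which is the desired equality. There is no real obstacle here; the only point requiring a little care is that the bijectivity of $\beta$ (not just its being a map $S\to T$) is used twice — once for surjectivity, to guarantee that the index $\beta(s)$ sweeps out all of $T$ so that no constraint $(\tau_t^T)^{-1}(M)$ is missed, and once for injectivity, to get the clean intertwining $\tau_{\beta(s)}^T\circ\beta_*=\tau_s^S$ rather than something coarser. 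Everything else is a routine "homomorphisms agreeing on generators agree everywhere" argument.
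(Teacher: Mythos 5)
Your proof is correct. Note that the paper does not actually prove this lemma itself; it imports it from \cite{OR2010t} without proof. Your argument — lifting the bijection to an isomorphism $\beta_*:\Sigma_S^*\to\Sigma_T^*$ of the structured alphabets, checking $\Theta^T\circ\beta_*=\Theta^S$ and $\tau_{\beta(s)}^T\circ\beta_*=\tau_s^S$ on generators, and transporting the intersection — is exactly the transport mechanism the paper itself invokes elsewhere (the isomorphism $\iota_K^I$ of Example~\ref{ex:sp-3} and the transport property \eqref{eq:shuffle-fac-3} in the appendix), so it is the intended route and fills the omitted proof cleanly.
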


\begin{definition}[structured representation]\label{def:strucrep}\ \\
Let $S$ be a non-empty set and $M\subset \Sigma^*$.
For each $x\in \Theta^S [\bigcap\limits_{s\in S}(\tau_s^S)^{-1}(M)]$ there exists
$u\in \bigcap\limits_{s\in S}(\tau_s^S)^{-1}(M)$ such that $x=\Theta^S(u)$.
We call $u$ a \emph{structured representation of $x$ w.r.t. $S$ and $M$}.\\
\ \\
For $x\in \Sigma^*$ let $SR_M^S(x):=(\Theta^S)^{-1}(x)\cap [\bigcap\limits_{s\in S}(\tau_s^S)^{-1}(M)]$.
It is the set of all structured representations of $x$ w.r.t. $S$ and $M$.
\end{definition}
\noindent \textbf{Remark.} Now $x\in P^\shuffle$ iff there exists an infinite countable set $S$ with
$SR^S_{(P\cup\{\epsilon\})}(x)\neq \emptyset$. Therefore in Definition~\ref{def:itshuff} $\dsN$ 
can be replaced by any infinite countable set $N$.

\begin{lemma} [Shuffle-lemma 3] \label{lemma:shuffle-lemma3}\ \\
Let $S$, $T$ be non-empty sets, $M\subset \Sigma^*$, and 
$y\in \Sigma_{S\times T}^*$ with 
$\tau_{(s,t)}^{S\times T}(y)\in M$ for each\\
$(s,t)\in S\times T$ and $x=\Theta_S^{S\times T}(y)\in \Sigma_S^*$, then\\ 
$\Pi_{S'\times T}^{S\times T}(y)\in SR_M^{S'\times T}(\Theta^{S'}(\Pi_{S'}^S(x)))$
for each $\emptyset\neq S'\subset S$.
\end{lemma}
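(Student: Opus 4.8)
The plan is to unwind all the homomorphisms involved and check, symbol by symbol along the word $y$, that the claimed membership holds. Recall what must be shown: given $y\in\Sigma_{S\times T}^*$ with $\tau_{(s,t)}^{S\times T}(y)\in M$ for every $(s,t)\in S\times T$, and writing $x=\Theta_S^{S\times T}(y)$, we must produce for each $\emptyset\neq S'\subset S$ a structured representation of $\Theta^{S'}(\Pi_{S'}^S(x))$ w.r.t.\ $S'\times T$ and $M$, and the claim is that $\Pi_{S'\times T}^{S\times T}(y)$ is such a representation. By Definition~\ref{def:strucrep} this means verifying two things about $z:=\Pi_{S'\times T}^{S\times T}(y)$: first, that $\tau_{(s,t)}^{S'\times T}(z)\in M$ for each $(s,t)\in S'\times T$; and second, that $\Theta^{S'\times T}(z)=\Theta^{S'}(\Pi_{S'}^S(x))$.

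For the first part, I would use the commutation of projections and the $\tau$-homomorphisms. For $(s,t)\in S'\times T$, the letters of $z=\Pi_{S'\times T}^{S\times T}(y)$ that survive $\tau_{(s,t)}^{S'\times T}$ are exactly the letters $a_{(s,t)}$ occurring in $y$ — because $\Pi_{S'\times T}^{S\times T}$ only deletes letters indexed outside $S'\times T$, and $(s,t)\in S'\times T$. Hence $\tau_{(s,t)}^{S'\times T}(\Pi_{S'\times T}^{S\times T}(y))=\tau_{(s,t)}^{S\times T}(y)$, which lies in $M$ by hypothesis. More formally this is the identity $\tau_{(s,t)}^{S'\times T}\circ\Pi_{S'\times T}^{S\times T}=\tau_{(s,t)}^{S\times T}$ on $\Sigma_{S\times T}^*$ for $(s,t)\in S'\times T$, an instance of the kind of commutation relation already used in \eqref{eq:pp-63}; I would state it and note it follows letter-by-letter from the definitions.

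For the second part, I would chase the equality $\Theta^{S'\times T}\circ\Pi_{S'\times T}^{S\times T}=\Theta^{S'}\circ\Pi_{S'}^S\circ\Theta_S^{S\times T}$ as homomorphisms $\Sigma_{S\times T}^*\to\Sigma^*$, which suffices since $x=\Theta_S^{S\times T}(y)$. Both sides are alphabetic homomorphisms, so it is enough to compare their values on a generator $a_{(s,t)}$ with $(s,t)\in S\times T$ and $a\in\Sigma$. If $s\in S'$: the left side sends $a_{(s,t)}\mapsto a_{(s,t)}\mapsto a$, while the right side sends $a_{(s,t)}\mapsto a_s\mapsto a_s\mapsto a$. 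If $s\notin S'$: the left side sends $a_{(s,t)}\mapsto\varepsilon\mapsto\varepsilon$, while the right side sends $a_{(s,t)}\mapsto a_s\mapsto\varepsilon\mapsto\varepsilon$. The two agree in both cases, so the homomorphisms coincide, giving $\Theta^{S'\times T}(z)=\Theta^{S'}(\Pi_{S'}^S(x))$.

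The two parts together show $z\in(\Theta^{S'\times T})^{-1}(\Theta^{S'}(\Pi_{S'}^S(x)))\cap\bigcap_{(s,t)\in S'\times T}(\tau_{(s,t)}^{S'\times T})^{-1}(M)=SR_M^{S'\times T}(\Theta^{S'}(\Pi_{S'}^S(x)))$, which is the assertion. The only mildly delicate point is bookkeeping: making sure that in the first part the generators outside $S'\times T$ are handled (they are deleted by $\Pi_{S'\times T}^{S\times T}$, hence contribute nothing to $\tau_{(s,t)}^{S'\times T}$ for $(s,t)\in S'\times T$ anyway), and that in the second part the index set over which $\Theta^{S'}$ and $\Pi_{S'}^S$ are defined matches up — but there is no real obstacle here, just careful composition of the defining case splits. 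I expect no step to be genuinely hard; the work is purely a verification that several alphabetic homomorphisms agree on generators.
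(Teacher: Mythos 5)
Your proof is correct: the paper itself gives no proof of Shuffle-lemma 3 (it is imported from the cited reference \cite{OR2010t}), and your argument is the natural direct verification, checking the two defining conditions of $SR_M^{S'\times T}(\cdot)$ via the generator-level identities $\tau_{(s,t)}^{S'\times T}\circ\Pi_{S'\times T}^{S\times T}=\tau_{(s,t)}^{S\times T}$ for $(s,t)\in S'\times T$ and $\Theta^{S'\times T}\circ\Pi_{S'\times T}^{S\times T}=\Theta^{S'}\circ\Pi_{S'}^{S}\circ\Theta_S^{S\times T}$. Both case splits are handled correctly, so nothing is missing.
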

\noindent \textbf{Remark.} The hypotheses of this lemma are given by
lemma~\ref{lemma:shuffle-lemma1}.
\begin{proof} [Proof of Theorem~\ref{thm:SP(itshuff)}]\ \\
\ \\
Let $x\in\bigcap\limits_{s\in S}(\tau_s^S)^{-1}(P^\shuffle) \cap (\Theta^S)^{-1}(V)$, where $S$ is a countable infinite set, then
$x\in\bigcap\limits_{s\in S}(\tau_s^S)^{-1}[\Theta^T(\bigcap\limits_{t\in T}(\tau_t^T)^{-1}(P\cup \{\varepsilon\}))]$, 
where $T$ is a countable infinite set.
By Lemma~\ref{lemma:shuffle-lemma1} there exists 
$y \in \bigcap\limits_{(s,t)\in S \times T}(\tau_{(s,t)}^{S \times T})^{-1}(P\cup \{\varepsilon\})$
with $x = \Theta_S^{S \times T}(y)$.
This implies 
$y \in \bigcap\limits_{(s,t)\in S \times T}(\tau_{(s,t)}^{S \times T})^{-1}(P\cup \{\varepsilon\}) \cap (\Theta^{S \times T})^{-1}(V)$ 
because of $\Theta^S(x) \in V$ and $\Theta^{S\times T}= \Theta^{S} \circ \Theta_S^{S\times T}$. 
Now, by the assumption $\SP(P\cup \{\varepsilon\},V)$ holds
\begin{equation}\label{eq:sp-214}
\Pi_{S'\times T}^{S\times T}(y)\in (\Theta^{S \times T})^{-1}(V)
\mbox{ for each } \emptyset\neq S'\subset S.
\end{equation}
As now $x$ and $y$ fulfill the assumptions of Lemma~\ref{lemma:shuffle-lemma3}, it follows 
\begin{equation}\label{eq:sp-215}
\Theta^{S'\times T}(\Pi_{S'\times T}^{S\times T}(y)) = \Theta^{S'}(\Pi_{S'}^{S}(x)).
\end{equation}
Because of \[\Theta^{S'}(\Pi_{S'}^{S}(x)) = \Theta^{S}(\Pi_{S'}^{S}(x))\] and 
\[\Theta^{S'\times T}(\Pi_{S'\times T}^{S\times T}(y)) = \Theta^{S\times T}(\Pi_{S'\times T}^{S\times T}(y))\]
\eqref{eq:sp-214} and \eqref{eq:sp-215} imply $\Pi_{S'}^{S}(x) \in (\Theta^S)^{-1}(V)$
for each $\emptyset\neq S'\subset S$, which completes the proof of Theorem~\ref{thm:SP(itshuff)}.
\end{proof}
\eqref{eq:sp-5}, \eqref{eq:sp-6}, Theorem~\ref{thm:inverse abstraction} and Theorem~\ref{thm:SP(itshuff)} 
show that in many cases it is sufficient to prove $SP(U,V)$ for very simple $U$.
On account of our focus on system behavior, we are especially interested in
$SP(U,V)$ for prefix closed languages $U$ and $V$. 
\ \\
\ \\
In Definition~\ref{def:itshuff} the iterated shuffle product is represented by the homomorphic image of a set of 
structured representations. To get a deeper insight into the property $\SP(P\cup \{\varepsilon\},V)$, in the next section we will 
represent $P^\shuffle$ by an homomorphic image of a set of computations of a certain automaton. 
For this purpose we need a ``bracketed coding'' of words.
\begin{definition}\label{def:sp-92}\ \\
Together with an alphabet $\Sigma$ we consider 
four pairwise disjoint copies of $\Sigma$, namely
$\tilde{\Sigma}$, $\mathring{\Sigma}$, $\bar{\Sigma}$, $\tilde{\bar{\Sigma}}$,
and a homomorphism 
$\wedge : \hat{\Sigma}^*\to \Sigma^*$ with
$\hat{\Sigma} := \tilde{\Sigma} \ \dotcup\  \mathring{\Sigma}\  \dotcup\ \bar{\Sigma}\ \dotcup\ \tilde{\bar{\Sigma}}$ and 
$\wedge(\tilde{a}) := \wedge(\mathring{a}):= \wedge(\bar{a}) := \wedge(\tilde{\bar{a}}) := a$ for each $a \in \Sigma$, 
where $\tilde{a}$, $\mathring{a}$, $\bar{a}$ and $\tilde{\bar{a}}$
are the corresponding copies of a letter $a\in \Sigma$.
\end{definition}
For words $u \in P \subset \Sigma^+$ the four alphabets are used to characterize start-, inner-, end-, or start-end letters of $u$. 
\begin{definition}\label{def:length}\ \\
Let $|x| \in \dsN_0$ denotes the \emph{length} of a word $x \in \Sigma^*$, defined by $|\varepsilon|:=0$ and
$|xa|:=|x|+1$ for $a \in \Sigma$ and $x \in \Sigma^*$.
\end{definition}
The following definition depends on the fact that each 
$u \in \Sigma^*$ with $|u|>1$ can be uniquely represented by $u=awb$ with $a,b \in \Sigma$ and $w \in \Sigma^*$.  
\begin{definition}\label{def:sp-95}\ \\
Let $\langle \rangle: \Sigma^* \to \{\varepsilon\} \cup \tilde{\bar{\Sigma}} \cup \tilde{\Sigma}\mathring{\Sigma}^*\bar{\Sigma}$
be the mapping defined by $\langle \rangle(\varepsilon):= \varepsilon, \langle \rangle(a):= \tilde{\bar{a}}$ for $a \in \Sigma$ 
and $\langle \rangle(awb):= \tilde{a}\mathring{w}\bar{b}$ for $a,b \in \Sigma$ and $w \in \Sigma^*$, 
where $\mathring{w}$ is defined by $\mathring{w} \in \mathring{\Sigma}^*$ and $\wedge(\mathring{w}) = w$ for each $w \in \Sigma^*$.\\
For short we write $\langle u \rangle$ instead of $\langle \rangle(u)$ for each $u \in \Sigma^*$.
\end{definition}
For each $y \in \{\varepsilon\} \cup \tilde{\bar{\Sigma}} \cup \tilde{\Sigma}\mathring{\Sigma}^*\bar{\Sigma}$
holds $\langle \wedge(y) \rangle = y$, and for each $x \in \Sigma^*$ holds $\wedge(\langle x \rangle) = x$.
Therefore $\langle \rangle$ is a bijection with
\begin{equation}\label{eq:sp-96,97,98}
\langle \rangle^{-1} = \wedge_{|\{\varepsilon\} \cup \tilde{\bar{\Sigma}} \cup \tilde{\Sigma}\mathring{\Sigma}^*\bar{\Sigma}}
\mbox{ and } |\langle w \rangle| = w \mbox{ for each } w \in \Sigma^*. 
\end{equation}
The bijection $\langle \rangle$ formalizes the ``bracketed coding'' of words.
\ \\
\\
By \eqref{eq:sp-96,97,98} and \eqref{eq:sp-98'} holds $\wedge(\langle U \rangle) = U$
and $\wedge(\pre(\langle U \rangle)) = \pre(U)$ for each $U \subset \Sigma^*$.
Therefore Theorem~\ref{thm:inverse abstraction-2} implies
\begin{corollary}\label{cor:sp-4'}\ \\
For each $U,V\subset\Sigma^*$ holds    
$\SP(U,V)$ iff $\SP(\langle U \rangle,\wedge^{-1}(V))$, and
$\SP(\pre(U),V)$ iff $\SP(\pre(\langle U \rangle),\wedge^{-1}(V))$.
\end{corollary}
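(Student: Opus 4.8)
The plan is to apply Theorem~\ref{thm:inverse abstraction-2} with the alphabetic homomorphism $\varphi := \wedge : \hat{\Sigma}^* \to \Sigma^*$. That theorem states, for an alphabetic homomorphism $\varphi:\Sigma^*\to\Phi^*$ (here with the roles of the alphabets swapped: source $\hat{\Sigma}$, target $\Sigma$), that $\SP(\varphi(U'),V) \iff \SP(U',\varphi^{-1}(V))$ for $U'\subset\hat{\Sigma}^*$ and $V\subset\Sigma^*$. So the whole corollary will follow as soon as we can exhibit, for the given $U\subset\Sigma^*$, a set $U'\subset\hat{\Sigma}^*$ with $\wedge(U') = U$; then $\SP(U,V) = \SP(\wedge(U'),V) \iff \SP(U',\wedge^{-1}(V))$, and the same for $\pre(U)$.

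The key observation — already recorded in the line immediately preceding the corollary — is that $\langle U \rangle$ is exactly such a set: by \eqref{eq:sp-96,97,98} we have $\wedge(\langle x \rangle) = x$ for every $x\in\Sigma^*$, hence $\wedge(\langle U \rangle) = U$. Thus, taking $U' := \langle U \rangle$, Theorem~\ref{thm:inverse abstraction-2} gives directly $\SP(U,V) \iff \SP(\langle U \rangle, \wedge^{-1}(V))$. For the prefix version we need $\wedge(\pre(\langle U \rangle)) = \pre(U)$: this is the content of \eqref{eq:sp-98'} (namely $\varphi(\pre(L)) = \pre(\varphi(L))$ applied to the alphabetic homomorphism $\wedge$ and $L = \langle U \rangle$), combined again with $\wedge(\langle U \rangle) = U$. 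So taking $U' := \pre(\langle U \rangle)$ in Theorem~\ref{thm:inverse abstraction-2} yields $\SP(\pre(U),V) = \SP(\wedge(\pre(\langle U\rangle)),V) \iff \SP(\pre(\langle U \rangle), \wedge^{-1}(V))$, which is the second claim.

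There is essentially no obstacle here: the corollary is a one-line instantiation of Theorem~\ref{thm:inverse abstraction-2} together with the two already-established identities $\wedge(\langle U \rangle)=U$ and $\wedge(\pre(\langle U \rangle))=\pre(U)$. The only point to be slightly careful about is that Theorem~\ref{thm:inverse abstraction-2} is stated with a fixed source alphabet $\Sigma$ and target $\Phi$; applying it with source $\hat{\Sigma}$ and target $\Sigma$ is legitimate since $\wedge$ is an alphabetic homomorphism (it maps each generator of $\hat{\Sigma}$ to a generator of $\Sigma$, never to $\varepsilon$), so all hypotheses of the theorem are met. No induction, no case analysis, and no further lemmas are required.
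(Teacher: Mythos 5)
Your proposal is correct and is essentially identical to the paper's own argument: the corollary is stated there as an immediate consequence of Theorem~\ref{thm:inverse abstraction-2} applied to the alphabetic homomorphism $\wedge$, using exactly the identities $\wedge(\langle U \rangle) = U$ and $\wedge(\pre(\langle U \rangle)) = \pre(U)$ obtained from \eqref{eq:sp-96,97,98} and \eqref{eq:sp-98'}. Your remark about swapping the roles of the source and target alphabets is a fine (if unstated in the paper) point of care, and nothing further is needed.
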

The following theorem together with its corollary prepares the automata representations 
of iterated shuffle products.
\begin{theorem} \label{thm:SP(hom-shuffle)} \ \\
Let $\varphi:\Sigma^*\to\Phi^*$ be an alphabetic homomorphism and $P\subset\Sigma^*$,
then holds $\varphi(P^\shuffle)=(\varphi(P))^\shuffle$.
\end{theorem}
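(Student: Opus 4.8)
The plan is to prove $\varphi(P^\shuffle) = (\varphi(P))^\shuffle$ by expanding both sides via the definition of iterated shuffle product and comparing them directly, using the interplay between $\varphi$, $\tau_t^\dsN$, and $\Theta^\dsN$ already established in the excerpt. Recall that by Definition~\ref{def:itshuff}, $P^\shuffle = \Theta^\dsN[\bigcap_{t\in\dsN}(\tau_t^\dsN)^{-1}(P\cup\{\varepsilon\})]$, and likewise $(\varphi(P))^\shuffle = \bar\Theta^\dsN[\bigcap_{t\in\dsN}(\bar\tau_t^\dsN)^{-1}(\varphi(P)\cup\{\varepsilon\})]$, where $\bar\tau_t^\dsN$ and $\bar\Theta^\dsN$ are the analogues over $\Phi_\dsN^*$. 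First I would apply $\varphi$ to the former and use $\varphi\circ\Theta^\dsN = \bar\Theta^\dsN\circ\varphi^\dsN$ from \eqref{eq:pp-63} to move $\varphi$ past $\Theta^\dsN$, turning the left-hand side into $\bar\Theta^\dsN[\varphi^\dsN(\bigcap_{t\in\dsN}(\tau_t^\dsN)^{-1}(P\cup\{\varepsilon\}))]$.

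The crucial step is then to invoke Lemma~\ref{lemma:sp-1} with $N = \dsN$ and $U = P\cup\{\varepsilon\}$, which gives exactly $\varphi^\dsN(\bigcap_{t\in\dsN}(\tau_t^\dsN)^{-1}(P\cup\{\varepsilon\})) = \bigcap_{t\in\dsN}(\bar\tau_t^\dsN)^{-1}(\varphi(P\cup\{\varepsilon\}))$. Since $\varphi$ is a homomorphism with $\varphi(\varepsilon) = \varepsilon$, we have $\varphi(P\cup\{\varepsilon\}) = \varphi(P)\cup\{\varepsilon\}$, so the right-hand side matches the inner set in the definition of $(\varphi(P))^\shuffle$. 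Applying $\bar\Theta^\dsN$ to both sides then yields $\varphi(P^\shuffle) = \bar\Theta^\dsN[\bigcap_{t\in\dsN}(\bar\tau_t^\dsN)^{-1}(\varphi(P)\cup\{\varepsilon\})] = (\varphi(P))^\shuffle$, completing the argument.

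The main obstacle, such as it is, is mostly bookkeeping: one must be careful that the ``barred'' operators $\bar\tau_t^\dsN$ and $\bar\Theta^\dsN$ over $\Phi_\dsN^*$ are precisely the ones appearing implicitly when the definition of iterated shuffle product is applied to the language $\varphi(P) \subset \Phi^*$, and that the index set $\dsN$ is the same on both sides (which is legitimate, both because $\dsN$ is a fixed infinite countable set and, as the Remark after Definition~\ref{def:strucrep} notes, any infinite countable set would do). There is no real inductive or combinatorial difficulty here — the heavy lifting was done in Lemma~\ref{lemma:sp-1}, whose inductive proof on structured representations already handled the subtle direction (that every element of the intersection over barred $\bar\tau_t^\dsN$ of $\varphi(U)$ lifts to the image under $\varphi^\dsN$). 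So the proof of this theorem is essentially a three-line chain of equalities: push $\varphi$ through $\Theta^\dsN$, apply Lemma~\ref{lemma:sp-1}, and pull $\bar\Theta^\dsN$ back out.
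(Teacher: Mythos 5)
Your proposal is correct and follows essentially the same route as the paper: push $\varphi$ through $\Theta^{\dsN}$ via \eqref{eq:pp-63}, reduce to the inner equality $\varphi^{\dsN}(\bigcap_{t}(\tau_t^{\dsN})^{-1}(P\cup\{\varepsilon\}))=\bigcap_{t}(\bar\tau_t^{\dsN})^{-1}(\varphi(P)\cup\{\varepsilon\})$, and settle that with Lemma~\ref{lemma:sp-1} together with $\varphi(P\cup\{\varepsilon\})=\varphi(P)\cup\{\varepsilon\}$. The only cosmetic difference is that the paper re-derives the easy inclusion of that inner equality by hand and cites Lemma~\ref{lemma:sp-1} only for the converse, whereas you invoke the lemma's stated equality directly, which is equally valid.
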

\begin{proof} \ \\
Let $N$ be an infinite countable set. 
Let $\varphi^N:\Sigma_N^*\to\Phi_N^*$,
$\bar{\tau}_n^N:\Phi_N^*\to\Phi^*$ and $\bar{\Theta}^N:\Phi_N^*\to\Phi^*$
be defined as in context of Lemma~\ref{lemma:sp-1}.
Because of \eqref{eq:pp-63} holds 
\begin{equation}\label{eq:*x12'}
\mu (P^\shuffle) = \bar{\Theta}^N [\varphi^N(\bigcap\limits_{t \in N} (\tau_t^N)^{-1}(P \cup \{\varepsilon\}))].
\end{equation}
From this it follows that $\mu(P^\shuffle)=(\mu(P))^\shuffle$ if 
the following equation holds:
\begin{equation} \label{eq:*x13}
\varphi^N(\bigcap\limits_{t \in N} (\tau_t^N)^{-1}(P \cup \{\varepsilon\})) = 
\bigcap\limits_{t \in N} (\bar{\tau}_t^N)^{-1}(\varphi(P) \cup \{\varepsilon\})
\end{equation}
\begin{proof} Proof of equation~\eqref{eq:*x13}:\\
Because of \eqref{eq:pp-63} holds
\[\bar{\tau}_t^N (\varphi^N(x)) = \varphi(\tau_t^N(x))\in
\varphi(P\cup \{\varepsilon\})= \varphi(P)\cup \{\varepsilon\} \]
for each 
$x\in \bigcap\limits_{t \in N} (\tau_t^N)^{-1}(P \cup \{\varepsilon\})$ and $t\in N$,
which implies
\[\varphi^N(\bigcap\limits_{t \in N} (\tau_t^N)^{-1}(P \cup \{\varepsilon\})) 
\subset \bigcap\limits_{t \in N} (\bar{\tau}_t^N)^{-1}(\mu(P) \cup \{\varepsilon\}) .\]
The other inclusion of equation~\eqref{eq:*x13}follows from Lemma~\ref{lemma:sp-1}, 
which completes the proof of equation~\eqref{eq:*x13} and of Theorem~\ref{thm:SP(hom-shuffle)}.
\end{proof}
\end{proof}
Because of $P = \wedge(\langle P \rangle)$ and $\pre(P) = \wedge(\pre(\langle P \rangle))$
Theorem~\ref{thm:SP(hom-shuffle)} implies
\begin{corollary}\label{cor:SP(hom-shuffle)}\ \\
Let $P\subset\Sigma^*$,
then $P^\shuffle = \wedge(\langle P \rangle^\shuffle)$, and
$(\pre(P))^\shuffle = \wedge((\pre(\langle P \rangle))^\shuffle)$.
\end{corollary}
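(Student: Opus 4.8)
The plan is to apply Theorem~\ref{thm:SP(hom-shuffle)} directly, taking the alphabetic homomorphism there to be $\wedge:\hat{\Sigma}^*\to\Sigma^*$ from Definition~\ref{def:sp-92}, so that the rôles of $\Sigma$ and $\Phi$ in that theorem are played here by $\hat{\Sigma}$ and $\Sigma$. First I would note that $\wedge$ is indeed alphabetic, since by construction it maps each letter of $\hat{\Sigma}=\tilde{\Sigma}\dotcup\mathring{\Sigma}\dotcup\bar{\Sigma}\dotcup\tilde{\bar{\Sigma}}$ to a single letter of $\Sigma$.

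Next I would instantiate Theorem~\ref{thm:SP(hom-shuffle)} with the set $\langle P\rangle\subset\hat{\Sigma}^*$ in place of its $P$. This yields $\wedge(\langle P\rangle^\shuffle)=(\wedge(\langle P\rangle))^\shuffle$. It then remains only to observe that $\wedge(\langle P\rangle)=P$, which is exactly the identity $\wedge(\langle U\rangle)=U$ recorded just after \eqref{eq:sp-96,97,98} (a consequence of \eqref{eq:sp-96,97,98} and \eqref{eq:sp-98'}), applied with $U=P$. Combining the two equalities gives $\wedge(\langle P\rangle^\shuffle)=P^\shuffle$, which is the first assertion.

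For the second assertion I would repeat the argument verbatim with $\pre(\langle P\rangle)$ in place of $\langle P\rangle$: Theorem~\ref{thm:SP(hom-shuffle)} gives $\wedge((\pre(\langle P\rangle))^\shuffle)=(\wedge(\pre(\langle P\rangle)))^\shuffle$, and the companion identity $\wedge(\pre(\langle U\rangle))=\pre(U)$, also recorded after \eqref{eq:sp-96,97,98}, applied with $U=P$ shows $\wedge(\pre(\langle P\rangle))=\pre(P)$. Hence the right-hand side equals $(\pre(P))^\shuffle$, as claimed.

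I expect essentially no obstacle: the entire content is carried by Theorem~\ref{thm:SP(hom-shuffle)} together with the two bookkeeping identities for $\wedge\circ\langle\,\rangle$ already established in this section. The only point requiring a little care is the (harmless) relabelling of alphabets when quoting Theorem~\ref{thm:SP(hom-shuffle)}, since there the homomorphism $\varphi$ leaves $\Sigma^*$, whereas here the relevant sets $\langle P\rangle$ and $\pre(\langle P\rangle)$ live in $\hat{\Sigma}^*$ and $\wedge$ leaves $\hat{\Sigma}^*$; after this substitution the theorem applies unchanged.
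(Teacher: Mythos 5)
Your proof is correct and is essentially the paper's own argument: the paper likewise derives the corollary by applying Theorem~\ref{thm:SP(hom-shuffle)} to the alphabetic homomorphism $\wedge$ together with the identities $P = \wedge(\langle P \rangle)$ and $\pre(P) = \wedge(\pre(\langle P \rangle))$. Your explicit remark about relabelling the alphabets when invoking the theorem is a harmless elaboration of the same step.
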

Therefore Corollary~\ref{cor:SP(hom-shuffle)} reduces automata representations of 
$P^\shuffle$ rsp. $(\pre(P))^\shuffle$ to automata representations of
$\langle P \rangle^\shuffle$ rsp. $(\pre(\langle P \rangle))^\shuffle$.
\section{Automata Representations of Iterated Shuffle Products}\label{sec:automata}
Automata representations of iterated shuffle products are well known. See for example 
\cite{BjorklundB07} and \cite{Jedrzejowicz99}, where multicounter automata are considered.
Therefore the purpose of this section is not to introduce a new automaton concept, but
to establish notions for further investigations of $\SP(P,V)$ based on computations of 
these automata.
On account of Corollary~\ref{cor:SP(hom-shuffle)} we start with an automaton representation
for $(\pre(\langle P \rangle))^\shuffle$.  
\ \\
\ \\
Let $P\subset \Sigma^*$ and $\bbP=(\Sigma,Q,\delta,q_0,F)$ be
a (not necessarily finite) deterministic
automaton recognizing $P$, where
$\delta: Q \times \Sigma \to Q$ is a partial function,
$q_0 \in Q$ and $F\subset Q$.
As usual, $\delta$ is extended to a partial function $\delta: Q \times \Sigma^* \to Q$.
For simplicity we assume 
$P \neq \emptyset$ and $\delta(q_0,\pre(P))=Q$.
\ \\
\ \\
Moreover, we take this set of conditions as a general assumption for the rest of the paper.
\ \\
\ \\
The idea to define a semiautomaton (automaton without final states \cite{berstel79}) 
$\hat{\bbP}_\shuffle$ recognizing $(\pre(\langle P \rangle))^\shuffle$ is the following:
Each computation in $\hat{\bbP}_\shuffle$ ``correspond'' to a ``shuffled run'' of several 
not necessarily recognizing computations in $\bbP$, which we call \emph{``elementary computations''}. 
For each $q \in Q$ the states of $\hat{\bbP}_\shuffle$ store the number of ``elementary computations'' 
which just have reached the state $q$ in such a ``shuffled run'' of ``elementary computations''.
\ \\
\ \\
Formally, the state set of $\hat{\bbP}_\shuffle$ is $\dsN_0^Q$,
the set of all functions $f : Q \to \dsN_0$.
\ \\
\ \\
Let $0 \in \dsN_0^Q$ be defined by $0(q) := 0$ for each $q \in Q$.
For $q \in Q$ and $k\in \dsN$ let $k_q \in \dsN_0^Q$ be defined by
$k_q(x) :=  \left\{ \begin{array}{r@{\ }cl}
k & |\ & x =q\\
0 & |\ & x \in Q \setminus \{q\}
\end{array} \right. .$
\ \\
\ \\
For $f, g \in \dsN_0^Q$ let
\begin{itemize}
\item $f \geqslant g$ iff $f(x) \geqslant g(x)$ for each $x \in Q$,
\item $f + g \in \dsN_0^Q$ with  $(f + g)(x) := f(x) + g(x)$ for each $x \in Q$, and
\item for $f \geqslant g$,
  $f - g \in \dsN_0^Q$ with  $(f - g)(x) := f(x) - g(x)$ for each $x \in Q$.
\end{itemize}
The state transition relation $\hat{\shuffle}_\bbP$ of $\hat{\bbP}_\shuffle$ is composed of 
four disjunct subsets whose elements describe 
\begin{itemize}
\item the ``entry into a new elementary computation'',
\item the ``transition within an open elementary computation'',
\item the ``completion of an open elementary computation'',
\item the ``entry into a new elementary computation with simultaneous completion of this elementary computation''.
\end{itemize}
\begin{definition} [$\hat{S}$-automaton $\hat{\bbP}_\shuffle$]\label{def:shuffle-automaton^}
\ \\
$\hat{\bbP}_\shuffle = (\hat{\Sigma},\dsN_0^Q,\hat{\shuffle}_\bbP,0)$ w.r.t. $\bbP$ is a semiautomaton with
an infinite state set $\dsN_0^Q$, the initial state $0$ and a state transition relation 
$ \hat{\shuffle}_\bbP \subset \dsN_0^Q \times \hat{\Sigma} \times \dsN_0^Q $ defined by
\begin{align*}
\hat{\shuffle}_\bbP :=& \tilde{\shuffle}_\bbP \ \dotcup \ \mathring{\shuffle}_\bbP \ \dotcup 
\ \bar{\shuffle}_\bbP \ \dotcup \ \tilde{\bar{\shuffle}}_\bbP 
\mbox{ with }\\
\tilde{\shuffle}_\bbP :=& \{(f,x,f+1_p) \in \dsN_0^Q \times \tilde{\Sigma} \times \dsN_0^Q \ | \ \delta(q_0,\wedge(x))=p 
\mbox{ and it exists } b\in \Sigma \mbox{ such } \\
&\mbox{ that } \delta(p,b) \mbox{ is defined} \},\\
\mathring{\shuffle}_\bbP :=& \{(f,x,f+1_p-1_q) \in \dsN_0^Q \times \mathring{\Sigma} \times \dsN_0^Q \ | \ f \geqslant 1_q, 
\delta(q,\wedge(x))=p \mbox{ and it exists } \\
&b\in \Sigma \mbox{ such that } \delta(p,b) \mbox{ is defined} \},\\
\bar{\shuffle}_\bbP :=& \{(f,x,f-1_q) \in \dsN_0^Q \times \bar{\Sigma} \times \dsN_0^Q \ | \ f \geqslant 1_q \mbox{ and } 
\delta(q,\wedge(x)) \in F \} \mbox{ and }\\
\tilde{\bar{\shuffle}}_\bbP :=& \{(f,x,f) \in \dsN_0^Q \times \tilde{\bar{\Sigma}} \times \dsN_0^Q \ |\ \delta(q_0,\wedge(x)) \in F \}.
\end{align*}
\end{definition}
Generally $\hat{\bbP}_\shuffle$ is an infinite nondeterministic semiautomaton.
\begin{example}\label{ex:example7}
$P = \{abc, abbc\}$
\begin{figure}[h]
\centering
\begin{tikzpicture}[->,>=stealth',shorten >=1pt,auto,node distance=2cm,semithick,initial text=]
   \node[state,initial,inner sep=1pt,minimum size=6mm] (k1)  {\small $\I$};
   \node[state,inner sep=1pt,minimum size=6mm] (k2) at (1.5,0) {\small $\II$};
   \node[state,inner sep=1pt,minimum size=6mm] (k3) at (3,0) {\small $\III$};
   \node[state,accepting,inner sep=1pt,minimum size=6mm] (k4) at (4.5,0) {\small $\IV$};
   \node[state,inner sep=1pt,minimum size=6mm] (k5) at (3,-1.5) {\small $\V$};
   \path 
         (k1) edge node {\small $a$} (k2)
         (k2) edge node {\small $b$} (k3)
         (k3) edge node {\small $c$} (k4)
         (k3) edge node {\small $b$} (k5)
         (k5) edge [swap] node {\small $c$} (k4)
		;
\end{tikzpicture}
\caption{Automaton $\bbP$ recognizing $P$}\label{fig:ex7}
\end{figure}
Two computations in $\hat{\bbP}_\shuffle$:
\[0 \stackrel{\tilde{a}}{\longrightarrow} 1_{\small \II} 
\stackrel{\mathring{b}}{\longrightarrow} 1_{\small \III}
\stackrel{\tilde{a}}{\longrightarrow} 1_{\small \III}+1_{\small \II}
\stackrel{\mathring{b}}{\longrightarrow} 1_{\small \V}+1_{\small \II} \ldots{}\]
\[0 \stackrel{\tilde{a}}{\longrightarrow} 1_{\small \II} 
\stackrel{\mathring{b}}{\longrightarrow} 1_{\small \III}
\stackrel{\tilde{a}}{\longrightarrow} 1_{\small \III}+1_{\small \II}
\stackrel{\mathring{b}}{\longrightarrow} 2_{\small \III} \ldots{}\]
\end{example}
$\hat{A}_\bbP \subset \hat{\shuffle}_\bbP^*$ denotes the set of all paths in $\hat{\bbP}_\shuffle$ 
starting with the initial state $0$ and including the empty path $\varepsilon$. 
For $w\in \hat{A}_\bbP$, $\hat{Z}_\bbP(w)$ denotes the final state of the path $w$ and 
$\hat{Z}_\bbP(\varepsilon):=0$. 
Formally the prefix closed language $\hat{A}_\bbP$ and the function
$\hat{Z}_\bbP : \hat{A}_\bbP \to \dsN_0^Q$ are defined inductively by 
\begin{equation} \label{eq:def A^,Z^}
\varepsilon\in \hat{A}_\bbP, \ \hat{Z}_\bbP(\varepsilon):=0, 
\ w(f,x,g)\in \hat{A}_\bbP \mbox{ and } 
\hat{Z}_\bbP(w(f,x,g)):=g 
\end{equation}
for $w\in \hat{A}_\bbP$, $\hat{Z}_\bbP(w)=f$ and $(f,x,g)\in \hat{\shuffle}_\bbP$.
\ \\
\ \\
Let the function $\hat{\alpha}_\bbP :\hat{A}_\bbP \to \hat{\Sigma}^*$ be inductively defined by
\begin{equation} \label{eq:def alpha^}
\hat{\alpha}_\bbP(\varepsilon):=\varepsilon \mbox{ and }
\hat{\alpha}_\bbP(w(f,x,g)):=\hat{\alpha}_\bbP(w)x
\end{equation}
for $w(f,x,g)\in \hat{A}_\bbP$ and $(f,x,g)\in \hat{\shuffle}_\bbP$.
$\hat{\alpha}_\bbP(u)$ is called the \emph{label of a path u}.
\begin{definition}\label{def:sp-93,94}\ \\
Let $N$ be an infinite countable set. 
For $I' \subset I \subset N$ and $t\in N$ let  
$\hat{\Sigma}_{\{t\}} := \tilde{\Sigma}_{\{t\}} \ \dotcup\  \mathring{\Sigma}_{\{t\}}\  
\dotcup\ \bar{\Sigma}_{\{t\}}\ \dotcup\ \tilde{\bar{\Sigma}}_{\{t\}}$ , 
$\hat{\tau}_t^I: \hat{\Sigma}_I^* \to \hat{\Sigma}^*$,  
$\hat{\Theta}^I: \hat{\Sigma}_I^* \to \hat{\Sigma}^*$ and 
$\hat{\Pi}_{I'}^{I}: \hat{\Sigma}_{I}^* \to \hat{\Sigma}_{I'}^*$
be defined according to the definitions of $\hat{\Sigma}$, $\Sigma_{\{t\}}$, $\tau_t^I$, $\Theta^I$ and $\Pi_{I'}^{I}$,
where
$\hat{\Sigma}_I := \dot{\bigcup\limits_{s \in I}} \hat{\Sigma}_{\{s\}}$. 
\end{definition}
The key to prove that $\hat{\bbP}_\shuffle$ recognizes 
$(\pre(\langle P \rangle))^\shuffle = 
\hat{\Theta}^N[\bigcap\limits_{t \in N} (\hat{\tau}_t^N)^{-1}(\pre(\langle P \rangle))]$ is 
to define an appropriate function
$\hat{c}_\bbP: \bigcap\limits_{t \in N} (\hat{\tau}_t^N)^{-1}(\pre(\langle P \rangle)) \to \hat{A}_\bbP$.
For that purpose we first consider the function
$\hat{n}_\bbP: \bigcap\limits_{t \in N} (\hat{\tau}_t^N)^{-1}(\pre(\langle P \rangle)) \to \dsN_0^Q$, 
defined by
\begin{equation}\label{eq:def-np^}
\hat{n}_\bbP(x)(q) := \#(\{t\in N \ |\ \delta(q_0,\wedge(\hat{\tau}_t^{N}(x)))=q \mbox{ and } 
\hat{\tau}_t^{N}(x) \notin \langle P \rangle \cup \{\varepsilon\}\})
\end{equation}
for each $x \in  \bigcap\limits_{t \in N} (\hat{\tau}_t^{N})^{-1}(\pre(\langle P \rangle))$ 
and $q \in Q$, where $\#(M)$ denotes the cardinality of a set $M$.
\ \\
\ \\
As in \eqref{eq:rem-itshuff} it holds 
\begin{equation}\label{eq:pre-itshuff^}
\pre(\bigcap\limits_{t \in N} (\hat{\tau}_t^N)^{-1}(\langle P \cup \{\varepsilon\} \rangle)) =
\bigcap\limits_{t \in N} (\hat{\tau}_t^N)^{-1}(\pre(\langle P \rangle)).
\end{equation}
This shows that $\bigcap\limits_{t \in N} (\hat{\tau}t^N)^{-1}(\pre(\langle P \rangle))$ is a prefix closed language.
\ \\
\ \\
The following property of $\hat{n}_\bbP$ is the key for the definition of $\hat{c}_\bbP$.
\begin{lemma} \label{lemma:def-cp^}\ \\ 
Let $x\hat{a} \in  \bigcap\limits_{t \in N} (\hat{\tau}_t^{N})^{-1}(\pre(\langle P \rangle))$ 
with $\hat{a} \in \hat{\Sigma}_N$, then 
$(\hat{n}_\bbP(x),\hat{\Theta}^N(\hat{a}),\hat{n}_\bbP(x\hat{a})) \in \hat{\shuffle}_\bbP$.
\end{lemma}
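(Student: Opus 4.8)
The plan is to reduce everything to tracking a single ``track''. As $\hat{a}\in\hat{\Sigma}_N$ is one letter, there are a unique $s\in N$ and a unique $\hat{b}\in\hat{\Sigma}$ with $\hat{a}=\hat{b}_s\in\hat{\Sigma}_{\{s\}}$, so $\hat{\Theta}^N(\hat{a})=\hat{b}$. For every $t\in N\setminus\{s\}$ one has $\hat{\tau}_t^N(x\hat{a})=\hat{\tau}_t^N(x)$, hence by the defining formula \eqref{eq:def-np^} the functions $\hat{n}_\bbP(x)$ and $\hat{n}_\bbP(x\hat{a})$ can differ only through the contribution of the single track $s$. Set $y:=\hat{\tau}_s^N(x)$; since $x\hat{a}$ lies in $\bigcap_{t\in N}(\hat{\tau}_t^N)^{-1}(\pre(\langle P\rangle))$ we have $y\in\pre(\langle P\rangle)$ and $y\hat{b}=\hat{\tau}_s^N(x\hat{a})\in\pre(\langle P\rangle)$; using $\wedge(\pre(\langle P\rangle))=\pre(P)$ and the general assumption $\delta(q_0,\pre(P))=Q$, both $\delta(q_0,\wedge(y))$ and $\delta(q_0,\wedge(y\hat{b}))$ are then defined.

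The second ingredient is the shape of $\pre(\langle P\rangle)$, read off from $\langle P\rangle\subseteq\{\varepsilon\}\cup\tilde{\bar{\Sigma}}\cup\tilde{\Sigma}\mathring{\Sigma}^*\bar{\Sigma}$: a word of $\pre(\langle P\rangle)$ whose last letter lies in $\tilde{\Sigma}$ is a single letter $\tilde{c}$; whose last letter lies in $\tilde{\bar{\Sigma}}$ is a single letter $\tilde{\bar{c}}\in\langle P\rangle$; whose last letter lies in $\mathring{\Sigma}$ has the form $\tilde{c}\mathring{w}$ with $w\neq\varepsilon$; and whose last letter lies in $\bar{\Sigma}$ is a full word of $\langle P\rangle$ of the form $\tilde{\Sigma}\mathring{\Sigma}^*\bar{\Sigma}$. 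Applying this to $y\hat{b}$ according to which of the four copies contains $\hat{b}$ pins down $y$: respectively $y=\varepsilon$ (for $\hat{b}\in\tilde{\Sigma}$); $y\in\tilde{\Sigma}\mathring{\Sigma}^*$ (for $\hat{b}\in\mathring{\Sigma}$); $y\in\tilde{\Sigma}\mathring{\Sigma}^*$ with $y\hat{b}\in\langle P\rangle$ (for $\hat{b}\in\bar{\Sigma}$); and $y=\varepsilon$ with $\hat{b}=\tilde{\bar{a}}$, $\tilde{\bar{a}}\in\langle P\rangle$ (for $\hat{b}\in\tilde{\bar{\Sigma}}$).

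Then I carry out the case analysis, in each case recording how track $s$ contributes to $\hat{n}_\bbP$ before and after appending $\hat{b}$, and matching this with the four clauses of Definition~\ref{def:shuffle-automaton^}. For $\hat{b}\in\tilde{\Sigma}$: $y=\varepsilon$ contributes nothing, while $y\hat{b}=\tilde{a}$ contributes $1$ at $p:=\delta(q_0,\wedge(\tilde{a}))$; since $\wedge(\tilde{a})$ is a \emph{proper} prefix of a word of $P$, some $b\in\Sigma$ has $\delta(p,b)$ defined, so $(\hat{n}_\bbP(x),\tilde{a},\hat{n}_\bbP(x)+1_p)\in\tilde{\shuffle}_\bbP$. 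For $\hat{b}\in\mathring{\Sigma}$: $y$ contributes $1$ at $q:=\delta(q_0,\wedge(y))$ and $y\hat{b}$ contributes $1$ at $p:=\delta(q,\wedge(\hat{b}))$, again with $\wedge(y\hat{b})$ a proper prefix of a word of $P$; hence the triple lies in $\mathring{\shuffle}_\bbP$ and $\hat{n}_\bbP(x\hat{a})=\hat{n}_\bbP(x)-1_q+1_p$. For $\hat{b}\in\bar{\Sigma}$: $y$ contributes $1$ at $q:=\delta(q_0,\wedge(y))$, while $y\hat{b}\in\langle P\rangle$ contributes nothing, and $\wedge(y\hat{b})\in P$ gives $\delta(q,\wedge(\hat{b}))\in F$; hence the triple lies in $\bar{\shuffle}_\bbP$ and $\hat{n}_\bbP(x\hat{a})=\hat{n}_\bbP(x)-1_q$. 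For $\hat{b}\in\tilde{\bar{\Sigma}}$: both $y=\varepsilon$ and $y\hat{b}=\tilde{\bar{a}}\in\langle P\rangle$ contribute nothing, and $\tilde{\bar{a}}\in\langle P\rangle$ gives $\delta(q_0,\wedge(\tilde{\bar{a}}))\in F$; hence the triple lies in $\tilde{\bar{\shuffle}}_\bbP$ and $\hat{n}_\bbP(x\hat{a})=\hat{n}_\bbP(x)$. In every case $(\hat{n}_\bbP(x),\hat{\Theta}^N(\hat{a}),\hat{n}_\bbP(x\hat{a}))\in\hat{\shuffle}_\bbP$, as required.

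The routine but slightly fiddly part, and the only place that needs real care, is this bookkeeping: aligning the four copies containing $\hat{b}$ with the four sub-relations and handling the degenerate sub-cases cleanly — an empty middle word (so $y=\tilde{c}$), one-letter words of $P$ (subsumed by the $\tilde{\bar{\Sigma}}$ case), and a possible self-loop $\delta(q,\wedge(\hat{b}))=q$, where $1_p-1_q=0$ and $\hat{n}_\bbP$ is unchanged, still consistent with $(\hat{n}_\bbP(x),\hat{b},\hat{n}_\bbP(x))\in\mathring{\shuffle}_\bbP$. No appeal to the shuffle lemmas is needed; everything follows from the definition of $\hat{n}_\bbP$, the structure of $\pre(\langle P\rangle)$, and Definition~\ref{def:shuffle-automaton^}.
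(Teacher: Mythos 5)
Your proof is correct and follows essentially the same route as the paper's: reduce to the single track $s$ on which $\hat{a}$ lives (the paper formalizes this reduction via the additivity of $\hat{n}_\bbP$ over projections, \eqref{eq:def-cp^2}, and the translation-invariance \eqref{eq:def-cp^3} of $\hat{\shuffle}_\bbP$, whereas you argue it directly from the counting definition \eqref{eq:def-np^}), and then observe that on that track $\hat{n}_\bbP$ contributes $0$ or $1_q$ according to whether $\hat{\tau}_s^N(\cdot)$ lies in $\langle P\rangle\cup\{\varepsilon\}$ or not, matching the four copies of $\Sigma$ against the four sub-relations of Definition~\ref{def:shuffle-automaton^}. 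The paper leaves that final four-way case analysis implicit where you spell it out; the content is the same.
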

\begin{proof}\ \\
For $I \subset N$ an immediate consequence of Lemma~\ref{lemma:struc-rep-proj} 
and the definitions of $\hat{n}_\bbP$ and $\hat{\shuffle}_\bbP$ is\\
\ \\
\begin{equation}\label{eq:def-cp^1}
\hat{\Pi}_{I}^N[\bigcap\limits_{t \in N} (\hat{\tau}_t^N)^{-1}(\pre(\langle P \rangle))]
\subset \bigcap\limits_{t \in N} (\hat{\tau}_t^N)^{-1}(\pre(\langle P \rangle)) \cap \hat{\Sigma}_I^*, 
\end{equation}
\begin{equation}\label{eq:def-cp^2}
\hat{n}_\bbP (x) =  \hat{n}_\bbP (\hat{\Pi}_{I}^N(x)) + \hat{n}_\bbP (\hat{\Pi}_{N \setminus I}^N(x))
\mbox{ for } x \in \bigcap\limits_{t \in N} (\hat{\tau}_t^N)^{-1}(\pre(\langle P \rangle)) \mbox{ and} 
\end{equation}
\begin{equation}\label{eq:def-cp^3}
(f,b,g) \in \hat{\shuffle}_\bbP
\mbox{ implies } (f+h,b,g+h) \in \hat{\shuffle}_\bbP \mbox{ for } h \in \dsN_0^Q.
\end{equation}
\ \\
\ \\
For $x\hat{a} \in  \bigcap\limits_{t \in N} (\hat{\tau}_t^{N})^{-1}(\pre(\langle P \rangle))$ 
there exists $s \in N$ with $\hat{a} \in \hat{\Sigma}_{\{s\}}$, and therefore 
$\hat{\Pi}_{N \setminus \{s\}}^N(x\hat{a}) = \hat{\Pi}_{N \setminus \{s\}}^N(x)$. 
Now by \eqref{eq:def-cp^1} - \eqref{eq:def-cp^3} it is sufficient to prove the lemma for
$x\hat{a} \in  \bigcap\limits_{t \in N} (\hat{\tau}_t^{N})^{-1}(\pre(\langle P \rangle)) \cap \hat{\Sigma}_{\{s\}}^* =
(\hat{\tau}_s^{\{s\}})^{-1}(\pre(\langle P \rangle)) $, where 
$\hat{\tau}_s^{\{s\}}:\hat{\Sigma}_{\{s\}}^* \to \hat{\Sigma}^*$ is a bijection.
\ \\
\ \\
For $w \in (\hat{\tau}_s^{\{s\}})^{-1}(\pre(\langle P \rangle) \setminus \langle P \rangle \cup \{\varepsilon\}) $ 
holds $\hat{n}_\bbP(w) = 1_q$, with $\delta(q_0,\wedge(\hat{\tau}_s^{\{s\}}(w)))=q$ and 
for $w \in (\hat{\tau}_s^{\{s\}})^{-1}(\langle P \rangle \cup \{\varepsilon\})$ 
holds $\hat{n}_\bbP(w) = 0$. Therefore the definition of $\hat{\shuffle}_\bbP$ immediately implies 
$(\hat{n}_\bbP(x),\hat{\Theta}^N(\hat{a}),\hat{n}_\bbP(x\hat{a})) = 
(\hat{n}_\bbP(x),\hat{\tau}_s^{\{s\}}(\hat{a}),\hat{n}_\bbP(x\hat{a}))
\in \hat{\shuffle}_\bbP$ 
for $x\hat{a} \in (\hat{\tau}_s^{\{s\}})^{-1}(\pre(\langle P \rangle))$, 
which completes the proof of the lemma. 
\end{proof}
Lemma~\ref{lemma:def-cp^} makes the following definition sound:
\begin{definition}\label{def:def-cp^}\ \\
Let the function
$\hat{c}_\bbP: \bigcap\limits_{t \in N} (\hat{\tau}_t^N)^{-1}(\pre(\langle P \rangle)) \to \hat{A}_\bbP$ 
be inductively defined by $\hat{c}_\bbP(\varepsilon) := \varepsilon $ and 
$\hat{c}_\bbP(x\hat{a}) := \hat{c}_\bbP(x)(\hat{n}_\bbP(x),\hat{\Theta}^N(\hat{a}),\hat{n}_\bbP(x\hat{a})) $ 
for 
$x\hat{a} \in  \bigcap\limits_{t \in N} (\hat{\tau}_t^{N})^{-1}(\pre(\langle P \rangle))$ 
with $\hat{a} \in \hat{\Sigma}_N$.
\end{definition}
This definition immediately implies
\begin{theorem}\label{thm:def-cp^}
Let $x \in  \bigcap\limits_{t \in N} (\hat{\tau}_t^{N})^{-1}(\pre(\langle P \rangle))$ then
\begin{subequations}\label{subeq:def-cp^}
\begin{align}
&  \hat{Z}_\bbP(\hat{c}_\bbP(x)) = \hat{n}_\bbP(x) , \label{subeq:def-cp^-1}\\
&  \hat{\alpha}_\bbP(\hat{c}_\bbP(x)) = \hat{\Theta}^N(x) , \label{subeq:def-cp^-2}\\ 
&  |\hat{c}_\bbP(x)| = |x| , \mbox{ and} \label{subeq:def-cp^-3}\\ 
&  \pre(\hat{c}_\bbP(x)) = \hat{c}_\bbP(\pre(x)) . \label{subeq:def-cp^-4}
\end{align}
\end{subequations}
\end{theorem}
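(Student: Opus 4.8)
The plan is to prove all four equations \eqref{subeq:def-cp^-1}--\eqref{subeq:def-cp^-4} simultaneously by induction on $|x|$, since $\hat{c}_\bbP$, $\hat{n}_\bbP$, $\hat{\alpha}_\bbP$ and $\hat{Z}_\bbP$ are all defined inductively on the length of the word (the relevant domain $\bigcap_{t\in N}(\hat\tau_t^N)^{-1}(\pre(\langle P\rangle))$ being prefix closed by \eqref{eq:pre-itshuff^}). For the base case $x=\varepsilon$ everything is immediate from the definitions: $\hat{c}_\bbP(\varepsilon)=\varepsilon$, $\hat{Z}_\bbP(\varepsilon)=0=\hat{n}_\bbP(\varepsilon)$ (the defining set in \eqref{eq:def-np^} is empty since $\hat\tau_t^N(\varepsilon)=\varepsilon\in\langle P\rangle\cup\{\varepsilon\}$ for all $t$), $\hat\alpha_\bbP(\varepsilon)=\varepsilon=\hat\Theta^N(\varepsilon)$, $|\varepsilon|=0=|\varepsilon|$, and $\pre(\varepsilon)=\{\varepsilon\}$.

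For the induction step, write $x\hat{a}$ with $\hat{a}\in\hat\Sigma_N$ and assume the four equations hold for $x$. By Definition~\ref{def:def-cp^}, $\hat{c}_\bbP(x\hat{a})=\hat{c}_\bbP(x)\,(\hat{n}_\bbP(x),\hat\Theta^N(\hat{a}),\hat{n}_\bbP(x\hat{a}))$, and by Lemma~\ref{lemma:def-cp^} the triple $(\hat{n}_\bbP(x),\hat\Theta^N(\hat{a}),\hat{n}_\bbP(x\hat{a}))$ lies in $\hat{\shuffle}_\bbP$; moreover by induction $\hat{Z}_\bbP(\hat{c}_\bbP(x))=\hat{n}_\bbP(x)$, so this triple is indeed a legal extension of the path $\hat{c}_\bbP(x)$, i.e. $\hat{c}_\bbP(x\hat{a})\in\hat{A}_\bbP$. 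Then \eqref{subeq:def-cp^-1} follows from the inductive clause of \eqref{eq:def A^,Z^}: $\hat{Z}_\bbP(\hat{c}_\bbP(x\hat{a}))=\hat{n}_\bbP(x\hat{a})$. For \eqref{subeq:def-cp^-2}, use \eqref{eq:def alpha^}: $\hat\alpha_\bbP(\hat{c}_\bbP(x\hat{a}))=\hat\alpha_\bbP(\hat{c}_\bbP(x))\,\hat\Theta^N(\hat{a})=\hat\Theta^N(x)\,\hat\Theta^N(\hat{a})=\hat\Theta^N(x\hat{a})$, the last equality because $\hat\Theta^N$ is a homomorphism. Equation \eqref{subeq:def-cp^-3} is the length equation: appending one transition to a path increases its length by one, and $|x\hat{a}|=|x|+1$, so $|\hat{c}_\bbP(x\hat{a})|=|\hat{c}_\bbP(x)|+1=|x|+1=|x\hat{a}|$. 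Finally \eqref{subeq:def-cp^-4}: $\pre(x\hat{a})=\pre(x)\cup\{x\hat{a}\}$, and since $\hat{c}_\bbP(x\hat{a})$ extends $\hat{c}_\bbP(x)$ by exactly one letter, $\pre(\hat{c}_\bbP(x\hat{a}))=\pre(\hat{c}_\bbP(x))\cup\{\hat{c}_\bbP(x\hat{a})\}$, which by the induction hypothesis equals $\hat{c}_\bbP(\pre(x))\cup\{\hat{c}_\bbP(x\hat{a})\}=\hat{c}_\bbP(\pre(x\hat{a}))$ (using that $\hat{c}_\bbP$ is well-defined on the whole prefix-closed domain, so $\hat{c}_\bbP$ applied to a set means the image).

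I expect this proof to be essentially routine; the theorem statement itself says the claims "immediately" follow from Definition~\ref{def:def-cp^}, and indeed all the real work has already been done in Lemma~\ref{lemma:def-cp^} (showing the triples lie in $\hat{\shuffle}_\bbP$, which guarantees soundness of the recursion). The only point requiring a modicum of care is keeping the simultaneous induction honest: one must verify \eqref{subeq:def-cp^-1} before invoking it to conclude that $\hat{c}_\bbP(x)(\hat{n}_\bbP(x),\hat\Theta^N(\hat{a}),\hat{n}_\bbP(x\hat{a}))$ is a genuine path in $\hat{A}_\bbP$ (the final state of $\hat{c}_\bbP(x)$ must match the source component $\hat{n}_\bbP(x)$ of the new triple), so \eqref{subeq:def-cp^-1} should be treated as the "driving" invariant and the other three as consequences carried along. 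There is no substantive obstacle here — the statement is a bookkeeping corollary of the construction.
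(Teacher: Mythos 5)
Your induction is correct and is precisely the argument the paper intends: the paper offers no explicit proof, stating only that Definition~\ref{def:def-cp^} (together with Lemma~\ref{lemma:def-cp^}) "immediately implies" the theorem, and your simultaneous induction with \eqref{subeq:def-cp^-1} as the driving invariant is the natural way to make that precise. No gaps.
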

To prove surjectivity of $\hat{c}_\bbP$ we need a counterpart of Lemma~\ref{lemma:def-cp^}:
\begin{lemma} \label{lemma:cp^-surj} 
Let $c(f,\hat{b},g) \in \hat{A}_\bbP$ with $(f,\hat{b},g) \in \hat{\shuffle}_\bbP$, 
and $w \in  \bigcap\limits_{t \in N} (\hat{\tau}_t^{N})^{-1}(\pre(\langle P \rangle))$ 
with $\hat{c}_\bbP(w) = c$.\\
\ \\
If $\hat{b} \in \tilde{\Sigma} \dotcup \tilde{\bar{\Sigma}}$, then for each 
$\hat{a} \in \hat{\Sigma}_{N \setminus \kappa(w)}$ with $\hat{\Theta}^N(\hat{a}) = \hat{b}$
holds $w\hat{a} \in  \bigcap\limits_{t \in N} (\hat{\tau}_t^{N})^{-1}(\pre(\langle P \rangle))$ 
and $\hat{c}_\bbP(w\hat{a}) = c(f,\hat{b},g)$.
\ \\
\ \\
If $\hat{b} \in \mathring{\Sigma} \dotcup \bar{\Sigma}$, then 
there exists $\hat{a} \in \hat{\Sigma}_{\kappa(w)}$ with $\hat{\Theta}^N(\hat{a}) = \hat{b}$
such that $w\hat{a} \in  \bigcap\limits_{t \in N} (\hat{\tau}_t^{N})^{-1}(\pre(\langle P \rangle))$ 
and $\hat{c}_\bbP(w\hat{a}) = c(f,\hat{b},g)$.
\end{lemma}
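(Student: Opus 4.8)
The plan is to analyze the single transition $(f,\hat{b},g)\in\hat{\shuffle}_\bbP$ according to which of the four disjoint parts $\tilde\shuffle_\bbP$, $\mathring\shuffle_\bbP$, $\bar\shuffle_\bbP$, $\tilde{\bar\shuffle}_\bbP$ it belongs to, since the statement's case split ($\hat{b}\in\tilde\Sigma\dotcup\tilde{\bar\Sigma}$ versus $\hat{b}\in\mathring\Sigma\dotcup\bar\Sigma$) mirrors that partition. In each case I have a word $w$ with $\hat{c}_\bbP(w)=c$, so by \eqref{subeq:def-cp^-1} we know $\hat{Z}_\bbP(c)=\hat{n}_\bbP(w)=f$. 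The goal is to append a suitable letter $\hat{a}$ (a copy, indexed by some $t\in N$, of the letter $\hat{b}\in\hat\Sigma$) to $w$ so that the extended word still lies in $\bigcap_{t\in N}(\hat\tau_t^N)^{-1}(\pre(\langle P\rangle))$ and, by Definition~\ref{def:def-cp^}, $\hat{c}_\bbP(w\hat{a})=\hat{c}_\bbP(w)(\hat{n}_\bbP(w),\hat\Theta^N(\hat{a}),\hat{n}_\bbP(w\hat{a}))=c(f,\hat{b},g)$. So it suffices to show (i) $w\hat{a}\in\bigcap_{t\in N}(\hat\tau_t^N)^{-1}(\pre(\langle P\rangle))$ and (ii) $\hat{n}_\bbP(w\hat{a})=g$.

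\textbf{Free-index cases ($\hat{b}\in\tilde\Sigma\dotcup\tilde{\bar\Sigma}$).} Here the transition type is $\tilde\shuffle_\bbP$ or $\tilde{\bar\shuffle}_\bbP$, i.e. $\hat{b}$ starts a brand-new elementary computation; correspondingly we must use a fresh index $t\in N\setminus\kappa(w)$ (every such $t$ works, matching the ``for each $\hat{a}$'' in the claim). For such an $\hat{a}\in\hat\Sigma_{\{t\}}$ with $t\notin\kappa(w)$, the coordinate $\hat\tau_t^N(w\hat{a})=\hat\tau_t^N(\hat{a})=\hat{b}$, which is a one-letter word lying in $\pre(\langle P\rangle)$ precisely because of the side conditions in the definition of $\tilde\shuffle_\bbP$ (there exists $b$ with $\delta(\delta(q_0,\wedge\hat{b}),b)$ defined, so $\wedge\hat{b}$ is a proper prefix, i.e. $\hat{b}=\tilde{a}$ for a suitable letter $a$) resp. $\tilde{\bar\shuffle}_\bbP$ ($\delta(q_0,\wedge\hat{b})\in F$, giving $\hat{b}=\tilde{\bar a}$); all other coordinates are unchanged, so (i) holds. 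For (ii): $\hat{n}_\bbP(w\hat{a})$ differs from $\hat{n}_\bbP(w)$ only at the coordinate affected by index $t$. In the $\tilde\shuffle_\bbP$ case, coordinate $t$ now contributes a $+1$ at state $p=\delta(q_0,\wedge\hat{b})$ (the computation on index $t$ is open, not yet in $\langle P\rangle\cup\{\varepsilon\}$), giving $\hat{n}_\bbP(w\hat{a})=f+1_p=g$. In the $\tilde{\bar\shuffle}_\bbP$ case the index-$t$ computation already sits in $\langle P\rangle$, so it contributes nothing, giving $\hat{n}_\bbP(w\hat{a})=f=g$. Either way (ii) holds.

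\textbf{Bound-index cases ($\hat{b}\in\mathring\Sigma\dotcup\bar\Sigma$).} Now the transition type is $\mathring\shuffle_\bbP$ or $\bar\shuffle_\bbP$, continuing resp. completing an already-open elementary computation. The defining conditions say there is a state $q$ with $f\geqslant 1_q$ and $\delta(q,\wedge\hat{b})$ defined (and either $=p$ with a further outgoing edge, or $\in F$). Since $f=\hat{n}_\bbP(w)$ and $\hat{n}_\bbP(w)(q)\geqslant 1$, by the definition \eqref{eq:def-np^} of $\hat{n}_\bbP$ there is some index $t\in\kappa(w)\subset N$ with $\delta(q_0,\wedge(\hat\tau_t^N(w)))=q$ and $\hat\tau_t^N(w)\notin\langle P\rangle\cup\{\varepsilon\}$ — so the $t$-th coordinate is a proper, non-$\varepsilon$ prefix, hence of the form $\tilde{a'}\mathring{u}$ with $\delta(q_0,a'\wedge u)=q$. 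Pick the copy $\hat{a}\in\hat\Sigma_{\{t\}}$ of $\hat{b}$. Then $\hat\tau_t^N(w\hat{a})=\hat\tau_t^N(w)\hat{b}$, and because $\delta(q,\wedge\hat{b})$ is defined (and lands either at a state with a further outgoing letter, matching $\mathring\Sigma$-shape $\tilde{a'}\mathring{u}\mathring{(\wedge\hat b)}$ being a proper prefix, or in $F$, matching $\bar\Sigma$-shape $\tilde{a'}\mathring{u}\bar{(\wedge\hat b)}\in\langle P\rangle$), this is again a prefix in $\pre(\langle P\rangle)$; other coordinates unchanged, so (i) holds. For (ii), recompute $\hat{n}_\bbP$ at coordinate $t$: in the $\mathring\shuffle_\bbP$ case the open computation moves from $q$ to $p$, so the count loses $1_q$ and gains $1_p$, giving $f+1_p-1_q=g$; in the $\bar\shuffle_\bbP$ case the computation on $t$ reaches $F$, i.e. $\hat\tau_t^N(w\hat{a})\in\langle P\rangle$, so it stops contributing, giving $f-1_q=g$. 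Hence (ii) holds.

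\textbf{Main obstacle.} The only real subtlety is the bound-index argument: one must be careful that the \emph{existence} of a suitable $t$ follows purely from $f(q)=\hat{n}_\bbP(w)(q)\geqslant 1$ and the bracketed-coding shape of the coordinate words — i.e. that "a state-$q$ open computation exists in $w$'' is exactly what $\hat{n}_\bbP(w)\geqslant 1_q$ records, and that appending $\hat{b}$ to that one coordinate neither breaks the $\pre(\langle P\rangle)$ membership of that coordinate (this is where the side conditions "$\exists b:\delta(p,b)$ defined'' versus "$\delta(q,\wedge\hat{b})\in F$'' do their job of matching the $\mathring\Sigma$ vs. $\bar\Sigma$ letter type to an inner vs. end position in the bracketed code) nor disturbs any other coordinate. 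The routine bookkeeping of $\hat{n}_\bbP$ under a one-letter, one-coordinate extension I would state via the same additivity/locality facts \eqref{eq:def-cp^1}–\eqref{eq:def-cp^3} used in Lemma~\ref{lemma:def-cp^}, so that little new computation is needed. $\qed$
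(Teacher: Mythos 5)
Your proof is correct and follows essentially the same route as the paper's: the same split into free-index letters (any fresh $t\in N\setminus\kappa(w)$ works) versus bound-index letters (an index $s\in\kappa(w)$ with an open computation at state $q$ exists precisely because $\hat{n}_\bbP(w)(q)=f(q)\geqslant 1$), with the same additivity facts \eqref{eq:def-cp^1}--\eqref{eq:def-cp^3} doing the bookkeeping for $\hat{n}_\bbP(w\hat{a})=g$. The paper merely packages the increment of the state count into the partial functions $\tilde{\bar{\delta}}_\bbP$ and $\mathring{\bar{\delta}}_\bbP$ where you compute it directly case by case; this is a cosmetic difference.
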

\begin{proof}
\ \\
By the definition of $\hat{\shuffle}_\bbP$ each $(f,\hat{b},g) \in \tilde{\shuffle}_\bbP \dotcup \tilde{\bar{\shuffle}}_\bbP$ 
can be represented by
\begin{equation}\label{eq:cp^-surj-0}
(f,\hat{b},g) = (f,\hat{b},f+h) \mbox{ with } 
(0,\hat{b},h) \in \tilde{\shuffle}_\bbP \dotcup \tilde{\bar{\shuffle}}_\bbP, 
\end{equation}
and each $(f,\hat{b},g) \in \mathring{\shuffle}_\bbP \dotcup \bar{\shuffle}_\bbP$ 
can be represented by
\begin{equation}\label{eq:cp^-surj-1}
(f,\hat{b},g) = (f'+1_q,\hat{b},f'+k) \mbox{ with } q \in Q \mbox{ and }
(1_q,\hat{b},k) \in \mathring{\shuffle}_\bbP \dotcup \bar{\shuffle}_\bbP. 
\end{equation} 
In these representations $h$ is uniquely determined by $\hat{b}$, and 
$k$ is uniquely determined by $q$ and $\hat{b}$. More precisely: 
There exist partial functions $\tilde{\bar{\delta}}_\bbP: \tilde{\Sigma} \dotcup \tilde{\bar{\Sigma}} \to \dsN_0^Q$ 
and $\mathring{\bar{\delta}}_\bbP: Q \times (\mathring{\Sigma} \dotcup \bar{\Sigma}) \to \dsN_0^Q$ such that
\begin{equation}\label{eq:cp^-surj-2}
(0,\hat{b},h) \in \tilde{\shuffle}_\bbP \dotcup \tilde{\bar{\shuffle}}_\bbP \mbox{ iff }
h = \tilde{\bar{\delta}}_\bbP(\hat{b}), \mbox{ and }
(1_q,\hat{b},k) \in \mathring{\shuffle}_\bbP \dotcup \bar{\shuffle}_\bbP \mbox{ iff }
k = \mathring{\bar{\delta}}_\bbP(q,\hat{b}).
\end{equation}
Let $w\hat{a} \in  \bigcap\limits_{t \in N} (\hat{\tau}_t^{N})^{-1}(\pre(\langle P \rangle))$ with 
$\hat{\Theta}^N(\hat{a}) = \hat{b} \in \tilde{\Sigma} \dotcup \tilde{\bar{\Sigma}}$, then 
$\hat{a} \in \hat{\Sigma}_{N \setminus \kappa(w)}$.
Now \eqref{eq:def-cp^2} and the definition of $\tilde{\bar{\delta}}_\bbP$ imply
\begin{equation}\label{eq:cp^-surj-3}
\hat{n}_\bbP(w\hat{a}) = \hat{n}_\bbP(w) + \tilde{\bar{\delta}}_\bbP(\hat{b}).
\end{equation}
Let $w\hat{a} \in  \bigcap\limits_{t \in N} (\hat{\tau}_t^{N})^{-1}(\pre(\langle P \rangle))$ with 
$\hat{\Theta}^N(\hat{a}) = \hat{b} \in \mathring{\Sigma} \dotcup \bar{\Sigma}$, then 
there exists $s \in \kappa(w)$ such that $\hat{\tau}_s^{N}(w) \neq \varepsilon$ and 
$\hat{\tau}_s^{N}(w)\hat{b} \in \pre(\langle P \rangle)$. 
Now \eqref{eq:def-cp^2} and the definition of $\mathring{\bar{\delta}}_\bbP$ imply
\begin{equation}\label{eq:cp^-surj-4}
\hat{n}_\bbP(w\hat{a}) = \hat{n}_\bbP (\hat{\Pi}_{N \setminus \{s\}}^N(w) + 
\mathring{\bar{\delta}}_\bbP(\delta(q_0,\wedge(\hat{\tau}_s^{N}(w))),\hat{b}).
\end{equation}
Let $\hat{b} \in \tilde{\Sigma} \dotcup \tilde{\bar{\Sigma}}$, then $(f,\hat{b},g) = 
(f,\hat{b},f+\tilde{\bar{\delta}}_\bbP(\hat{b})) 
\in \tilde{\shuffle}_\bbP \dotcup \tilde{\bar{\shuffle}}_\bbP$ implies 
$\hat{b} \in \pre(\langle P \rangle)$, and therefore
$w \in  \bigcap\limits_{t \in N} (\hat{\tau}_t^{N})^{-1}(\pre(\langle P \rangle))$ implies
\begin{equation}\label{eq:cp^-surj-5}
w\hat{a} \in  \bigcap\limits_{t \in N} (\hat{\tau}_t^{N})^{-1}(\pre(\langle P \rangle)) 
\mbox{ for each } \hat{a} \in \hat{\Sigma}_{N \setminus \kappa(w)}
\mbox{ with } \hat{\Theta}^N(\hat{a}) = \hat{b}.
\end{equation}
Now \eqref{eq:cp^-surj-0}, \eqref{eq:cp^-surj-2}, \eqref{eq:cp^-surj-3} and \eqref{eq:cp^-surj-5}
prove the first part of Lemma~\ref{lemma:cp^-surj}.
\ \\
\ \\
Let $\hat{b} \in \mathring{\Sigma} \dotcup \bar{\Sigma}$, and let
$(f,\hat{b},g) \in \mathring{\shuffle}_\bbP \dotcup \bar{\shuffle}_\bbP$ 
be represented by 
$(f,\hat{b},g) = (f'+1_q,\hat{b},f' + \mathring{\bar{\delta}}_\bbP(q,\hat{b}))$ 
with $q \in Q$.
On account of $f'+1_q = \hat{n}_\bbP(w)$, there exists $s \in \kappa(w)$ such that
$f' = \hat{n}_\bbP (\hat{\Pi}_{N \setminus \{s\}}^N(w))$, 
$\hat{\tau}_s^{N}(w) \notin \langle P \rangle \cup \{\varepsilon\}$, and 
$\delta(q_0,\wedge(\hat{\tau}_s^{N}(w)))=q$. 
Therefore by $(f'+1_q,\hat{b},f' + \mathring{\bar{\delta}}_\bbP(q,\hat{b})) 
\in \mathring{\shuffle}_\bbP \dotcup \bar{\shuffle}_\bbP$ holds 
$\hat{\tau}_s^{N}(w) \hat{b} \in \pre(\langle P \rangle)$, which implies
\begin{equation}\label{eq:cp^-surj-6}
w\hat{a} \in  \bigcap\limits_{t \in N} (\hat{\tau}_t^{N})^{-1}(\pre(\langle P \rangle)) 
\mbox{ for } \hat{a} \in \hat{\Sigma}_{\{s\}}
\mbox{ with } \hat{\Theta}^N(\hat{a}) = \hat{b}.
\end{equation}
Now \eqref{eq:cp^-surj-1}, \eqref{eq:cp^-surj-2}, \eqref{eq:cp^-surj-4} and \eqref{eq:cp^-surj-6}
prove the second part of Lemma~\ref{lemma:cp^-surj}.
\end{proof}
\ \\
Generally, for $L \subset \Sigma^*$ and $x \in \Sigma^*$ the \emph{left quotient} $x^{-1}(L)$ 
is defined by 
\begin{equation}\label{eq:left quotient}
x^{-1}(L) := \{y \in \Sigma^* | xy \in L \}.
\end{equation}
By induction on the length of $c \in \hat{A}_\bbP$  Lemma~\ref{lemma:cp^-surj} implies
\begin{theorem}\label{thm:cp^-surj}
\begin{subequations}\label{subeq:cp^-surj}
\begin{align}
& \hat{c}_\bbP [\bigcap\limits_{t \in N} (\hat{\tau}_t^{N})^{-1}(\pre(\langle P \rangle))] = 
       \hat{A}_\bbP . \label{subeq:cp^-surj-1}\ \ \  \mbox{  Moreover } \\
& \hat{c}_\bbP [x(x^{-1}[\bigcap\limits_{t \in N} (\hat{\tau}_t^{N})^{-1}(\pre(\langle P \rangle))])] = 
        \hat{c}_\bbP (x)[(\hat{c}_\bbP (x))^{-1}(\hat{A}_\bbP)] \label{subeq:cp^-surj-2} 
\end{align}
\end{subequations}
for each $x \in  \bigcap\limits_{t \in N} (\hat{\tau}_t^{N})^{-1}(\pre(\langle P \rangle))$.
\end{theorem}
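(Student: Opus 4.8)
The plan is to derive the ``moreover'' identity \eqref{subeq:cp^-surj-2} first and obtain \eqref{subeq:cp^-surj-1} as the special case $x:=\varepsilon$: since $\hat{c}_\bbP(\varepsilon)=\varepsilon$ and $\varepsilon^{-1}(L)=L$, and since $P\neq\emptyset$ forces $\varepsilon\in\pre(\langle P\rangle)$ and hence $\varepsilon\in\bigcap_{t\in N}(\hat{\tau}_t^N)^{-1}(\pre(\langle P\rangle))$, \eqref{subeq:cp^-surj-2} at $x=\varepsilon$ reads exactly \eqref{subeq:cp^-surj-1}. For \eqref{subeq:cp^-surj-2} it suffices to prove the two inclusions; both sides are ``subtrees'' of prefix closed languages, namely the domain $\bigcap_{t\in N}(\hat{\tau}_t^N)^{-1}(\pre(\langle P\rangle))$ (prefix closed by \eqref{eq:pre-itshuff^}) and $\hat{A}_\bbP$ (prefix closed by construction). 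The inclusion ``$\subseteq$'' is immediate: by Definition~\ref{def:def-cp^} each step of $\hat{c}_\bbP$ appends one letter, so a one-line induction on $|y|$ (the iterated form of \eqref{subeq:def-cp^-4}) shows that $\hat{c}_\bbP(x)$ is a prefix of $\hat{c}_\bbP(xy)$ whenever $xy$ lies in the domain; together with $\hat{c}_\bbP(xy)\in\hat{A}_\bbP$ this gives $\hat{c}_\bbP(xy)\in\hat{c}_\bbP(x)[(\hat{c}_\bbP(x))^{-1}(\hat{A}_\bbP)]$.

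The real content is the inclusion ``$\supseteq$'', which is the surjectivity statement and is where Lemma~\ref{lemma:cp^-surj} enters. Fix $x$ in the domain and let $c'\in\hat{A}_\bbP$ with $c'=\hat{c}_\bbP(x)\,d$. I proceed by induction on $|d|$ — equivalently, using $|\hat{c}_\bbP(x)|=|x|$ from \eqref{subeq:def-cp^-3}, on $|c'|$, which matches the ``induction on the length of $c\in\hat{A}_\bbP$'' mentioned above. If $d=\varepsilon$ take $y:=\varepsilon$. Otherwise write $c'=c(f,\hat{b},g)$ with $c\in\hat{A}_\bbP$ and $(f,\hat{b},g)\in\hat{\shuffle}_\bbP$; since $\hat{c}_\bbP(x)$ is a prefix of $c'$ of length $\le|c|$, it is also a prefix of $c$, say $c=\hat{c}_\bbP(x)\,d'$ with $|d'|=|d|-1$, so the induction hypothesis supplies $w$ in the domain with $\hat{c}_\bbP(w)=c$ and $x$ a prefix of $w$. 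Because $c(f,\hat{b},g)\in\hat{A}_\bbP$ we have $\hat{Z}_\bbP(c)=f$, whence $\hat{n}_\bbP(w)=\hat{Z}_\bbP(\hat{c}_\bbP(w))=f$ by \eqref{subeq:def-cp^-1}; thus the hypotheses of Lemma~\ref{lemma:cp^-surj} are satisfied for $c(f,\hat{b},g)$ and $w$. Applying that lemma — picking a fresh index from $N\setminus\kappa(w)$ (nonempty since $N$ is infinite and $\kappa(w)$ finite) when $\hat{b}\in\tilde{\Sigma}\dotcup\tilde{\bar{\Sigma}}$, and taking the index it produces when $\hat{b}\in\mathring{\Sigma}\dotcup\bar{\Sigma}$ — yields $\hat{a}\in\hat{\Sigma}_N$ with $w\hat{a}$ in the domain and $\hat{c}_\bbP(w\hat{a})=c(f,\hat{b},g)=c'$. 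As $x$ is a prefix of $w$, hence of $w\hat{a}$, writing $w\hat{a}=xy$ closes the induction step.

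The step I expect to require the most care is the bookkeeping that makes Lemma~\ref{lemma:cp^-surj} applicable at each stage: one must check that the source state $f$ of the transition being appended is exactly the terminal state $\hat{Z}_\bbP(\hat{c}_\bbP(w))$ of the path already built — this is precisely where Theorem~\ref{thm:def-cp^}\,\eqref{subeq:def-cp^-1} combined with $\hat{Z}_\bbP(c)=f$ is used — and one must keep the case split on the type of $\hat{b}$ straight, infinitude of $N$ being needed exactly to furnish fresh indices for the ``opening'' letters in $\tilde{\Sigma}\dotcup\tilde{\bar{\Sigma}}$. Everything else is routine manipulation of prefixes and the length identity \eqref{subeq:def-cp^-3}.
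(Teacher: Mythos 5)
Your proposal is correct and follows essentially the same route as the paper, which proves the theorem ``by induction on the length of $c \in \hat{A}_\bbP$'' from Lemma~\ref{lemma:cp^-surj}; your induction on $|d|$ with $c'=\hat{c}_\bbP(x)\,d$ is the same induction, and the bookkeeping you flag (matching $f=\hat{Z}_\bbP(\hat{c}_\bbP(w))$ via \eqref{subeq:def-cp^-1}, and infinitude of $N$ supplying fresh indices for letters in $\tilde{\Sigma}\dotcup\tilde{\bar{\Sigma}}$) is exactly what the paper's terse proof leaves implicit. Deriving \eqref{subeq:cp^-surj-1} as the case $x=\varepsilon$ of \eqref{subeq:cp^-surj-2} is a harmless reorganization.
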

On account of \eqref{subeq:def-cp^-2} now from \eqref{subeq:cp^-surj-1} it follows
\begin{corollary}\label{cor:cp^-surj}\ \\
$(\pre(\langle P \rangle))^\shuffle = \hat{\alpha}_\bbP(\hat{A}_\bbP)$, 
which states that the semiautomaton 
$\hat{\bbP}_\shuffle$ recognizes the prefix closed language $(\pre(\langle P \rangle))^\shuffle$.
\end{corollary}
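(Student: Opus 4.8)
The plan is to assemble the corollary directly from the two properties of the coding map $\hat{c}_\bbP$ already established, namely \eqref{subeq:def-cp^-2} and \eqref{subeq:cp^-surj-1}, together with the definition of the iterated shuffle product. First I would record the identity
\[
(\pre(\langle P \rangle))^\shuffle = \hat{\Theta}^N\Bigl[\bigcap_{t\in N}(\hat{\tau}_t^N)^{-1}(\pre(\langle P \rangle))\Bigr],
\]
which is the instance of Definition~\ref{def:itshuff} for the alphabet $\hat{\Sigma}$ and the language $\pre(\langle P\rangle)$: since $P\neq\emptyset$ forces $\langle P\rangle\neq\emptyset$, we have $\varepsilon\in\pre(\langle P\rangle)$ and hence $\pre(\langle P\rangle)\cup\{\varepsilon\}=\pre(\langle P\rangle)$, while $\dsN$ may be replaced by the fixed infinite countable set $N$ used throughout this section (as noted in the Remark following Definition~\ref{def:strucrep}).

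Next I would rewrite the right-hand side as the set of values $\{\hat{\Theta}^N(x)\}$ with $x$ ranging over $\bigcap_{t\in N}(\hat{\tau}_t^N)^{-1}(\pre(\langle P\rangle))$ and substitute \eqref{subeq:def-cp^-2}, i.e. $\hat{\Theta}^N(x)=\hat{\alpha}_\bbP(\hat{c}_\bbP(x))$, obtaining
\[
(\pre(\langle P \rangle))^\shuffle = \hat{\alpha}_\bbP\Bigl[\hat{c}_\bbP\Bigl[\bigcap_{t\in N}(\hat{\tau}_t^N)^{-1}(\pre(\langle P \rangle))\Bigr]\Bigr].
\]
Applying the surjectivity statement \eqref{subeq:cp^-surj-1} to the inner image turns it into $\hat{A}_\bbP$, which yields $(\pre(\langle P \rangle))^\shuffle=\hat{\alpha}_\bbP(\hat{A}_\bbP)$. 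The closing assertion that $\hat{\bbP}_\shuffle$ recognizes this language is then immediate, since by \eqref{eq:def alpha^} the set $\hat{\alpha}_\bbP(\hat{A}_\bbP)$ is exactly the set of labels of all paths of $\hat{\bbP}_\shuffle$ issuing from the initial state $0$, and this set is prefix closed because $\hat{A}_\bbP$ is, by \eqref{subeq:def-cp^-4}.

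Since both ingredients are already in hand, there is no substantive obstacle left; the proof is a short chain of equalities. The only two steps that call for a word of justification — and which I would spell out — are the passage from $\dsN$ in Definition~\ref{def:itshuff} to the arbitrary infinite countable $N$ of this section, and the harmless absorption $\pre(\langle P\rangle)\cup\{\varepsilon\}=\pre(\langle P\rangle)$; after those, everything is substitution of \eqref{subeq:def-cp^-2} followed by \eqref{subeq:cp^-surj-1}.
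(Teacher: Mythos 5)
Your proposal is correct and follows essentially the same route as the paper: the corollary is obtained by combining \eqref{subeq:def-cp^-2} with the surjectivity statement \eqref{subeq:cp^-surj-1}, applied to the identity $(\pre(\langle P \rangle))^\shuffle = \hat{\Theta}^N[\bigcap_{t \in N} (\hat{\tau}_t^N)^{-1}(\pre(\langle P \rangle))]$ that the paper records just before defining $\hat{c}_\bbP$. Your two explicit side remarks (absorption of $\{\varepsilon\}$ and replacing $\dsN$ by $N$) are both sound and merely make explicit what the paper leaves implicit.
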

Because of
\begin{align*}
&  \bigcap\limits_{t \in N} (\hat{\tau}_t^N)^{-1}(\langle P \rangle \cup \{\varepsilon\}) =\\ 
&  \{x\in \bigcap\limits_{t \in N} (\hat{\tau}_t^{N})^{-1}(\pre(\langle P \rangle)) \ |
   \ \hat{\tau}_t^{N}(x) \in \langle P \rangle \cup \{\varepsilon\} \mbox{ for each } t \in N\} =\\ 
&  \{x\in \bigcap\limits_{t \in N} (\hat{\tau}_t^{N})^{-1}(\pre(\langle P \rangle)) \ |
   \ \hat{n}_\bbP(x) = 0\},
\end{align*}
it holds
\begin{equation}\label{eq:np^0}
\bigcap\limits_{t \in N} (\hat{\tau}_t^N)^{-1}(\langle P \rangle \cup \{\varepsilon\}) = 
\hat{n}_\bbP^{-1}(0).
\end{equation}
Therefore \eqref{subeq:def-cp^-1} and \eqref{subeq:cp^-surj-1} imply
\begin{equation}\label{eq:cp^-surj+}
\hat{c}_\bbP [\bigcap\limits_{t \in N} (\hat{\tau}_t^N)^{-1}(\langle P \rangle \cup \{\varepsilon\})] = 
\hat{Z}_\bbP^{-1}(0).
\end{equation}
Now, from \eqref{eq:cp^-surj+} and \eqref{subeq:def-cp^-2} it follows
\begin{corollary}\label{cor:cp^-surj+}\ \\
$\langle P \rangle^\shuffle = \hat{\alpha}_\bbP(\hat{Z}_\bbP^{-1}(0))$, 
which states that the semiautomaton 
$\hat{\bbP}_\shuffle$ enriched by the final state $0 \in \dsN_0^Q$ recognizes $\langle P \rangle^\shuffle$.
\end{corollary}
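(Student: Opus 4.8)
The plan is to derive Corollary~\ref{cor:cp^-surj+} by chaining together the two facts the text has just assembled just above it, namely equation~\eqref{eq:cp^-surj+} and the equality~\eqref{subeq:def-cp^-2}, and then to pass from the language-level statement to the automata reformulation. First I would recall that, by Definition~\ref{def:itshuff} together with the Remark following Definition~\ref{def:strucrep}, the iterated shuffle product of $\langle P\rangle$ may be written as $\langle P\rangle^\shuffle=\hat\Theta^N[\bigcap_{t\in N}(\hat\tau_t^N)^{-1}(\langle P\rangle\cup\{\varepsilon\})]$ for any infinite countable $N$ (using the ``hatted'' homomorphisms of Definition~\ref{def:sp-93,94} in place of $\Theta^N$, $\tau_t^N$ — this is legitimate because $\hat\Sigma$ is just an alphabet like $\Sigma$). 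So $\langle P\rangle^\shuffle$ is exactly the set of $\hat\Theta^N$-images of words in $\bigcap_{t\in N}(\hat\tau_t^N)^{-1}(\langle P\rangle\cup\{\varepsilon\})$.

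Next I would apply $\hat\alpha_\bbP$ to both sides of~\eqref{eq:cp^-surj+}. The left-hand side becomes $\hat\alpha_\bbP(\hat c_\bbP[\bigcap_{t\in N}(\hat\tau_t^N)^{-1}(\langle P\rangle\cup\{\varepsilon\})])$, and by~\eqref{subeq:def-cp^-2} we have $\hat\alpha_\bbP(\hat c_\bbP(x))=\hat\Theta^N(x)$ for every $x$ in the domain, so this set equals $\hat\Theta^N[\bigcap_{t\in N}(\hat\tau_t^N)^{-1}(\langle P\rangle\cup\{\varepsilon\})]=\langle P\rangle^\shuffle$. The right-hand side of~\eqref{eq:cp^-surj+} becomes $\hat\alpha_\bbP(\hat Z_\bbP^{-1}(0))$. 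Equating the two yields $\langle P\rangle^\shuffle=\hat\alpha_\bbP(\hat Z_\bbP^{-1}(0))$, which is the displayed identity of the corollary.

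Finally I would spell out the ``which states that'' clause. The set $\hat Z_\bbP^{-1}(0)=\{w\in\hat A_\bbP\mid\hat Z_\bbP(w)=0\}$ is precisely the set of paths in $\hat\bbP_\shuffle$ that start and end in the state $0$; declaring $0$ to be the (unique) final state turns the semiautomaton $\hat\bbP_\shuffle$ into an automaton whose accepted paths are exactly those in $\hat Z_\bbP^{-1}(0)$, and whose recognized language — obtained by reading off the path labels via $\hat\alpha_\bbP$ — is therefore $\hat\alpha_\bbP(\hat Z_\bbP^{-1}(0))=\langle P\rangle^\shuffle$. I would also note that $\hat c_\bbP$ is well-defined on the relevant domain by Lemma~\ref{lemma:def-cp^} (via Definition~\ref{def:def-cp^}) and surjective onto $\hat A_\bbP$ by Theorem~\ref{thm:cp^-surj}, so that~\eqref{eq:cp^-surj+} is exactly the restriction of that surjection to the fibre $\hat n_\bbP^{-1}(0)$, which by~\eqref{eq:np^0} is $\bigcap_{t\in N}(\hat\tau_t^N)^{-1}(\langle P\rangle\cup\{\varepsilon\})$.

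There is essentially no obstacle here: the corollary is a one-line consequence of~\eqref{eq:cp^-surj+} and~\eqref{subeq:def-cp^-2}, both already established, plus the bookkeeping identity~\eqref{eq:np^0} that identifies the fibre $\hat n_\bbP^{-1}(0)$ with the set of structured representations of words in $\langle P\rangle^\shuffle$. If anything needs care, it is only the purely notational point that the shuffle product of $\langle P\rangle$ is computed over the hatted alphabet $\hat\Sigma$ with the hatted homomorphisms, so that the instance of Definition~\ref{def:itshuff} being invoked is the one for $\hat\Sigma$ rather than $\Sigma$; since $\hat\Sigma$ is an ordinary finite alphabet, every earlier result applies verbatim.
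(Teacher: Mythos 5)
Your proposal is correct and follows essentially the same route as the paper: the paper likewise obtains the corollary by combining \eqref{eq:cp^-surj+} with \eqref{subeq:def-cp^-2} (i.e.\ applying $\hat{\alpha}_\bbP$ to both sides and using $\hat{\alpha}_\bbP \circ \hat{c}_\bbP = \hat{\Theta}^N$), with \eqref{eq:np^0} identifying the fibre $\hat{n}_\bbP^{-1}(0)$. Your extra remark that $\langle P \rangle^\shuffle$ is the iterated shuffle over the alphabet $\hat{\Sigma}$ with the hatted homomorphisms is a sound clarification of what the paper leaves implicit.
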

\ \\
Let $\bbA$ be an automaton recognizing $L \subset \Phi^*$ and let $\varphi : \Phi^* \to \Gamma^*$  
be a \emph{strictly} alphabetic homomorphism, where strictly is defined by 
$|\varphi(w)| = |w|$ for each $w \in \Phi^*$.
Then it is easy and well known to construct an automaton $\bbA'$ recognizing $\varphi(L) \subset \Gamma^*$. 
Now this construction will be realized for the semiautomaton $\hat{\bbP}_\shuffle$ and the 
strictly alphabetic homomorphism $\wedge : \hat{\Sigma}^* \to \Sigma^*$. 
Additionally this construction will be extended to a modification of the function $\hat{c}_\bbP$.
\begin{definition} [$S$-automaton $\bbP_\shuffle$]\label{def:shuffle-automaton}
\ \\
$\bbP_\shuffle = (\Sigma,\dsN_0^Q,\shuffle_\bbP,0)$ w.r.t. $\bbP$ is a semiautomaton with
an infinite state set $\dsN_0^Q$, the initial state $0$ and a state transition relation 
$ \shuffle_\bbP \subset \dsN_0^Q \times \Sigma \times \dsN_0^Q $ defined by
$\shuffle_\bbP := \{(f,\wedge(\hat{a}),g) \in \dsN_0^Q \times \Sigma \times \dsN_0^Q \ | 
\ (f,\hat{a},g) \in \hat{\shuffle}_\bbP \}$.
\end{definition}
Adopting the notions of $\hat{S}$-automata, 
$A_\bbP \subset \shuffle_\bbP^*$ denotes the set of all paths in $\bbP_\shuffle$ 
starting with the initial state $0$ and including the empty path $\varepsilon$. 
For $w\in A_\bbP$, $Z_\bbP(w)$ denotes the final state of the path $w$ and 
$Z_\bbP(\varepsilon):=0$. 
Formally the prefix closed language $A_\bbP$ and the function
$Z_\bbP : A_\bbP \to \dsN_0^Q$ are defined inductively by 
\begin{equation} \label{eq:def A,Z}
\varepsilon \in A_\bbP, \ Z_\bbP(\varepsilon):=0, 
\ w(f,a,g)\in A_\bbP \mbox{ and } 
Z_\bbP(w(f,a,g)):=g 
\end{equation}
for $w\in A_\bbP$, $Z_\bbP(w)=f$ and $(f,a,g)\in \shuffle_\bbP$.
\ \\
\ \\
Let the function $\alpha_\bbP : A_\bbP \to \Sigma^*$ be inductively defined by
\begin{equation} \label{eq:def alpha}
\alpha_\bbP(\varepsilon):=\varepsilon \mbox{ and }
\alpha_\bbP(w(f,a,g)):=\alpha_\bbP(w)a
\end{equation}
for $w(f,a,g)\in A_\bbP$ and $(f,a,g)\in \shuffle_\bbP$.
$\alpha_\bbP(u)$ is called the \emph{label of a path u}.
\ \\
\ \\
To formally capture the relation between $\hat{\bbP}_\shuffle$ and $\bbP_\shuffle$, 
we consider the homomorphism
\begin{equation} \label{eq:def-wedge-P}
\wedge_\bbP : \hat{\shuffle}_\bbP^* \to \shuffle_\bbP^* 
\mbox{ with } \wedge_\bbP((f,\hat{a},g)) := (f,\wedge(\hat{a}),g) 
\mbox{ for } (f,\hat{a},g) \in \hat{\shuffle}_\bbP.
\end{equation}
This definition implies  
\begin{subequations}\label{subeq:def-wedge-P}
\begin{align}
& \wedge_\bbP \mbox{ is strictly alphabetic and surjective.} \label{subeq:def-wedge-P-1} \\
& \wedge_\bbP(y) \in A_\bbP \mbox{ iff } y \in \hat{A}_\bbP \mbox{ for } y \in \hat{\shuffle}_\bbP^*.  \label{subeq:def-wedge-P-2} \\
& \hat{Z}_\bbP(x) = Z_\bbP(\wedge_\bbP(x)) \mbox{ for } x \in \hat{A}_\bbP. \label{subeq:def-wedge-P-3} \\
& \wedge(\hat{\alpha}_\bbP(x)) = \alpha_\bbP(\wedge_\bbP(x)) \mbox{ for } x \in \hat{A}_\bbP. \label{subeq:def-wedge-P-4}
\end{align}
\end{subequations}
Now the composition of $\hat{c}_\bbP$ with $\wedge_\bbP$ attunes $\hat{c}_\bbP$ to $\bbP_\shuffle$.  
\begin{definition}\label{def:def-cp}\ \\
Let the function
$c_\bbP : \bigcap\limits_{t \in N} (\hat{\tau}_t^N)^{-1}(\pre(\langle P \rangle)) \to A_\bbP$ 
be defined by $c_\bbP := \wedge_\bbP \circ \hat{c}_\bbP$.
\end{definition}
By Corollary~\ref{cor:SP(hom-shuffle)} and \eqref{subeq:def-wedge-P-1} - \eqref{subeq:def-wedge-P-4}, 
Corollary~\ref{cor:cp^-surj} and Corollary~\ref{cor:cp^-surj+} imply the following automata representations:
\begin{corollary}\label{cor:cp-surj}\ \ 
$(\pre(P))^\shuffle = \alpha_\bbP(A_\bbP)$ and $P^\shuffle = \alpha_\bbP(Z_\bbP^{-1}(0))$. 
\end{corollary}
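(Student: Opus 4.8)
The plan is to derive Corollary~\ref{cor:cp-surj} by transporting the two identities of Corollary~\ref{cor:cp^-surj} and Corollary~\ref{cor:cp^-surj+} — namely $(\pre(\langle P \rangle))^\shuffle = \hat{\alpha}_\bbP(\hat{A}_\bbP)$ and $\langle P \rangle^\shuffle = \hat{\alpha}_\bbP(\hat{Z}_\bbP^{-1}(0))$ — along the strictly alphabetic homomorphism $\wedge$, using Corollary~\ref{cor:SP(hom-shuffle)} on the left-hand side and the compatibility relations \eqref{subeq:def-wedge-P-1}--\eqref{subeq:def-wedge-P-4} on the right-hand side. In other words, everything is already in place; this corollary is just the bookkeeping step that pushes the ``bracketed'' picture down to the original alphabet.

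First I would handle $(\pre(P))^\shuffle = \alpha_\bbP(A_\bbP)$. On the language side, Corollary~\ref{cor:SP(hom-shuffle)} gives $(\pre(P))^\shuffle = \wedge((\pre(\langle P \rangle))^\shuffle)$, and by Corollary~\ref{cor:cp^-surj} this equals $\wedge(\hat{\alpha}_\bbP(\hat{A}_\bbP))$. On the automaton side, I compute
\[
\wedge(\hat{\alpha}_\bbP(\hat{A}_\bbP)) = \alpha_\bbP(\wedge_\bbP(\hat{A}_\bbP)) = \alpha_\bbP(A_\bbP),
\]
where the first equality is \eqref{subeq:def-wedge-P-4} applied pointwise over all $x \in \hat{A}_\bbP$, and the second uses that $\wedge_\bbP(\hat{A}_\bbP) = A_\bbP$, which follows from \eqref{subeq:def-wedge-P-2} together with surjectivity of $\wedge_\bbP$ from \eqref{subeq:def-wedge-P-1}. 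Chaining these gives the first identity.

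For the second identity $P^\shuffle = \alpha_\bbP(Z_\bbP^{-1}(0))$ I would argue analogously: Corollary~\ref{cor:SP(hom-shuffle)} yields $P^\shuffle = \wedge(\langle P \rangle^\shuffle) = \wedge(\hat{\alpha}_\bbP(\hat{Z}_\bbP^{-1}(0)))$ by Corollary~\ref{cor:cp^-surj+}, and then \eqref{subeq:def-wedge-P-4} turns this into $\alpha_\bbP(\wedge_\bbP(\hat{Z}_\bbP^{-1}(0)))$. It remains to check $\wedge_\bbP(\hat{Z}_\bbP^{-1}(0)) = Z_\bbP^{-1}(0)$. The inclusion ``$\subseteq$'' is immediate from \eqref{subeq:def-wedge-P-2} and \eqref{subeq:def-wedge-P-3}. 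For ``$\supseteq$'', given $y \in Z_\bbP^{-1}(0) \subset A_\bbP$, surjectivity of $\wedge_\bbP$ plus \eqref{subeq:def-wedge-P-2} give some $x \in \hat{A}_\bbP$ with $\wedge_\bbP(x) = y$, and then $\hat{Z}_\bbP(x) = Z_\bbP(y) = 0$ by \eqref{subeq:def-wedge-P-3}, so $x \in \hat{Z}_\bbP^{-1}(0)$.

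I do not expect a genuine obstacle here — the only point requiring the slightest care is the set-equality $\wedge_\bbP(\hat{Z}_\bbP^{-1}(0)) = Z_\bbP^{-1}(0)$, since a priori $\wedge_\bbP$ restricted to $\hat{Z}_\bbP^{-1}(0)$ need not be surjective onto $Z_\bbP^{-1}(0)$; but the standard lift-of-paths argument above (available because $\wedge_\bbP$ is strictly alphabetic, hence length-preserving, and surjective onto $A_\bbP$ with $\hat{Z}_\bbP = Z_\bbP \circ \wedge_\bbP$) closes that gap. Everything else is a direct substitution chain, so the proof is short.
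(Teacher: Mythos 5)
Your proposal is correct and follows exactly the route the paper indicates: the paper derives this corollary in one sentence from Corollary~\ref{cor:SP(hom-shuffle)}, the properties \eqref{subeq:def-wedge-P-1}--\eqref{subeq:def-wedge-P-4}, and Corollaries~\ref{cor:cp^-surj} and~\ref{cor:cp^-surj+}, which are precisely the ingredients you chain together. Your explicit verification that $\wedge_\bbP(\hat{Z}_\bbP^{-1}(0)) = Z_\bbP^{-1}(0)$ is a detail the paper leaves implicit, and your lift-of-paths argument for it is sound.
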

For use in the next section the following theorem assembles the properties of the function $c_\bbP$, which follow 
from \eqref{subeq:def-wedge-P-1} - \eqref{subeq:def-wedge-P-4}, 
Theorem~\ref{thm:def-cp^} and Theorem~\ref{thm:cp^-surj}:
\begin{theorem}\label{thm:def-cp}
Let $x \in  \bigcap\limits_{t \in N} (\hat{\tau}_t^{N})^{-1}(\pre(\langle P \rangle))$ then
\begin{subequations}\label{subeq:def-cp}
\begin{align}
&  Z_\bbP(c_\bbP(x)) = \hat{n}_\bbP(x) , \label{subeq:def-cp-1}\\
&  \alpha_\bbP(c_\bbP(x)) = \wedge(\hat{\Theta}^N(x)) , \label{subeq:def-cp-2}\\ 
&  |c_\bbP(x)| = |x| , \label{subeq:def-cp-3}\\ 
&  \pre(c_\bbP(x)) = c_\bbP(\pre(x)) , \mbox{ and} \label{subeq:def-cp-4}\\
&  c_\bbP [x(x^{-1}[\bigcap\limits_{t \in N} (\hat{\tau}_t^{N})^{-1}(\pre(\langle P \rangle))])] = 
        c_\bbP(x)[(c_\bbP (x))^{-1}(A_\bbP)], \label{subeq:def-cp-5}\\
&  \mbox{which implies} \nonumber \\
&  c_\bbP [\bigcap\limits_{t \in N} (\hat{\tau}_t^{N})^{-1}(\pre(\langle P \rangle))] = 
       A_\bbP . \label{subeq:def-cp-6}
\end{align}
\end{subequations}
\end{theorem}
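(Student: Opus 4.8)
The plan is to obtain all six identities by transporting the corresponding properties of $\hat{c}_\bbP$ through the homomorphism $\wedge_\bbP$, using only the defining equation $c_\bbP = \wedge_\bbP \circ \hat{c}_\bbP$ of Definition~\ref{def:def-cp} together with the structural facts \eqref{subeq:def-wedge-P-1}--\eqref{subeq:def-wedge-P-4}, Theorem~\ref{thm:def-cp^} and Theorem~\ref{thm:cp^-surj}. Fix $x \in \bigcap\limits_{t \in N} (\hat{\tau}_t^N)^{-1}(\pre(\langle P \rangle))$ throughout.

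First I would handle \eqref{subeq:def-cp-1}--\eqref{subeq:def-cp-4}, each of which is a single composition. For \eqref{subeq:def-cp-1}: $Z_\bbP(c_\bbP(x)) = Z_\bbP(\wedge_\bbP(\hat{c}_\bbP(x))) = \hat{Z}_\bbP(\hat{c}_\bbP(x)) = \hat{n}_\bbP(x)$ by \eqref{subeq:def-wedge-P-3} and \eqref{subeq:def-cp^-1}. For \eqref{subeq:def-cp-2}: $\alpha_\bbP(c_\bbP(x)) = \alpha_\bbP(\wedge_\bbP(\hat{c}_\bbP(x))) = \wedge(\hat{\alpha}_\bbP(\hat{c}_\bbP(x))) = \wedge(\hat{\Theta}^N(x))$ by \eqref{subeq:def-wedge-P-4} and \eqref{subeq:def-cp^-2}. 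For \eqref{subeq:def-cp-3}: $\wedge_\bbP$ is strictly alphabetic by \eqref{subeq:def-wedge-P-1}, hence length preserving, so $|c_\bbP(x)| = |\hat{c}_\bbP(x)| = |x|$ by \eqref{subeq:def-cp^-3}. For \eqref{subeq:def-cp-4}: a strictly alphabetic homomorphism commutes with $\pre$ (this is \eqref{eq:sp-98'} specialised to $\wedge_\bbP$), so $\pre(c_\bbP(x)) = \pre(\wedge_\bbP(\hat{c}_\bbP(x))) = \wedge_\bbP(\pre(\hat{c}_\bbP(x))) = \wedge_\bbP(\hat{c}_\bbP(\pre(x))) = c_\bbP(\pre(x))$ by \eqref{subeq:def-cp^-4}.

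The only item requiring an additional argument is \eqref{subeq:def-cp-5}. I would start from \eqref{subeq:cp^-surj-2} and apply $\wedge_\bbP$ to both sides. Since $x(x^{-1}[\bigcap\limits_{t\in N}(\hat{\tau}_t^N)^{-1}(\pre(\langle P\rangle))])$ is a subset of the (prefix closed) domain of $\hat{c}_\bbP$, the left-hand side becomes $c_\bbP[x(x^{-1}[\bigcap\limits_{t\in N}(\hat{\tau}_t^N)^{-1}(\pre(\langle P\rangle))])]$ by the definition of $c_\bbP$. For the right-hand side, write $\hat{c}_\bbP(x)[(\hat{c}_\bbP(x))^{-1}(\hat{A}_\bbP)] = \{\hat{c}_\bbP(x)z : \hat{c}_\bbP(x)z \in \hat{A}_\bbP\}$; applying the homomorphism $\wedge_\bbP$ and using $\wedge_\bbP(\hat{c}_\bbP(x)) = c_\bbP(x)$ turns this into $\{c_\bbP(x)\wedge_\bbP(z) : \hat{c}_\bbP(x)z \in \hat{A}_\bbP\}$. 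By \eqref{subeq:def-wedge-P-2} the condition $\hat{c}_\bbP(x)z \in \hat{A}_\bbP$ is equivalent to $c_\bbP(x)\wedge_\bbP(z) \in A_\bbP$, and by surjectivity of $\wedge_\bbP$ (again \eqref{subeq:def-wedge-P-1}) every continuation $w \in \shuffle_\bbP^*$ equals $\wedge_\bbP(z)$ for some $z$; hence the set above is exactly $\{c_\bbP(x)w : c_\bbP(x)w \in A_\bbP\} = c_\bbP(x)[(c_\bbP(x))^{-1}(A_\bbP)]$, which is \eqref{subeq:def-cp-5}. Finally \eqref{subeq:def-cp-6} is the instance $x = \varepsilon$ of \eqref{subeq:def-cp-5}, since $c_\bbP(\varepsilon) = \wedge_\bbP(\hat{c}_\bbP(\varepsilon)) = \wedge_\bbP(\varepsilon) = \varepsilon$ collapses both sides to $c_\bbP[\bigcap\limits_{t\in N}(\hat{\tau}_t^N)^{-1}(\pre(\langle P\rangle))]$ and $A_\bbP$ respectively.

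I do not expect a genuine obstacle: the theorem is essentially bookkeeping, and the statement already names the inputs to be combined. The one place to be slightly attentive is the surjectivity step in \eqref{subeq:def-cp-5} --- without invoking that $\wedge_\bbP$ is onto one only obtains the inclusion ``$\subseteq$'', and the reverse inclusion genuinely needs the lifting of an arbitrary path continuation in $\bbP_\shuffle$ to one in $\hat{\bbP}_\shuffle$, which is precisely what \eqref{subeq:def-wedge-P-1} and \eqref{subeq:def-wedge-P-2} provide.
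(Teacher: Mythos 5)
Your proposal is correct and follows exactly the route the paper intends: the paper gives no written-out proof but states that the theorem "assembles the properties of $c_\bbP$ which follow from \eqref{subeq:def-wedge-P-1}--\eqref{subeq:def-wedge-P-4}, Theorem~\ref{thm:def-cp^} and Theorem~\ref{thm:cp^-surj}", and your argument is precisely that assembly, including the one non-trivial point (surjectivity of $\wedge_\bbP$ and \eqref{subeq:def-wedge-P-2} for the reverse inclusion in \eqref{subeq:def-cp-5}).
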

\section{Shuffle Projection in Terms of $S$-Automata}\label{sec:automata-shuffle projection}
To express shuffle projection in terms of $S$-automata we first consider shuffle projection w.r.t. 
prefix closed languages. 
Let therefore $P, V \subset \Sigma^*$, $P \neq \emptyset$ and let $\bbP$ be an automaton for $P$ 
as in Section~\ref{sec:automata}. 
By Corollary~\ref{cor:sp-4'} together with Theorem~\ref{thm:mod-shuffle-projection} holds 
$\SP(\pre(P),V)$ iff there exists an infinite countable set $N$ such that
\begin{equation}\label{eq:sat-0}
\hat{\Pi}_{N \setminus \{r\}}^N [(\bigcap\limits_{t\in N}(\hat{\tau}_t^{N})^{-1}(\pre(\langle P \rangle))) 
\cap (\hat{\Theta}^N)^{-1}(\wedge^{-1}(V))] \subset (\hat{\Theta}^N)^{-1}(\wedge^{-1}(V))
\end{equation}
for each $r \in N$.
\ \\
The same argument as to prove \eqref{eq:mod-shuffle-projection-130} shows that \eqref{eq:sat-0} 
is equivalent to
\begin{align}\label{eq:sat-0'}
&\hat{\Pi}_{N \setminus \{r\}}^N [(\bigcap\limits_{t\in N}(\hat{\tau}_t^{N})^{-1}(\pre(\langle P \rangle))) 
\cap (\hat{\Theta}^N)^{-1}(\wedge^{-1}(V))] \subset \nonumber \\ 
&(\bigcap\limits_{t\in N}(\hat{\tau}_t^{N})^{-1}(\pre(\langle P \rangle))) 
\cap (\hat{\Theta}^N)^{-1}(\wedge^{-1}(V))
\end{align}
for each $r \in N$.
\ \\
Condition \eqref{eq:sat-0'} is a saturation property of  
$(\wedge\circ\hat{\Theta}_{|\bigcap\limits_{t \in N}(\hat{\tau}_t^{N})^{-1}(\pre(\langle P \rangle))}^N)^{-1}(V)$ 
wrt. a binary relation 
on $\bigcap\limits_{t \in N} (\hat{\tau}_t^{N})^{-1}(\pre(\langle P \rangle))$ defined by the 
homomorphisms $\hat{\Pi}_{N \setminus \{r\}}^N$ for $r \in N$. More precisely:
\ \\
\ \\
Let $R \subset F \times F$ be a binary relation on a set $F$ and let $W \subset F$. The 
\emph{saturation property} $\S(W,R)$ let be defined by 
\begin{equation} \label{eq:def-sat}
\S(W,R) \mbox{ iff } x \in W \mbox{ and } (x,y) \in R \mbox{ imply } y \in W.
\end{equation} 
Let $f : F \to G$, $g : G \to H$ and $V \subset H$, then \eqref{eq:def-sat} 
immediately implies
\begin{equation} \label{eq:sat-1}
\S((g \circ f)^{-1}(V),R) \mbox{ iff } \S((g^{-1}(V),(f \otimes f)(R)),
\end{equation}
where $f \otimes f : F \times F \to G \times G $ is defined by
$(f \otimes f)((x,y)) := (f(x),f(y))$ for $(x,y) \in F \times F$.
\begin{definition}\label{def:def-RP}
\begin{align}
&\mbox{Let }\mcR_P := \{(x,y)\in \bigcap\limits_{t \in N} (\hat{\tau}_t^{N})^{-1}(\pre(\langle P \rangle)) 
\times \bigcap\limits_{t \in N} (\hat{\tau}_t^{N})^{-1}(\pre(\langle P \rangle)) | \nonumber \\
&\ \ \ \ \ \ \ \ \ \ \ \ \ \ \ \ \ \ \ \ \ \ \ \ \ \ \ \ \mbox{there exists } 
r \in N \mbox{ with } y = \hat{\Pi}_{N \setminus \{r\}}^N (x)\}. \nonumber
\end{align}
\end{definition}
Now by \eqref{eq:sat-0} and \eqref{eq:sat-0'}
\begin{equation} \label{eq:sat-2}
\SP(\pre(P),V) \mbox{ iff } \S((\wedge\circ\hat{\Theta}_{|
\bigcap\limits_{t \in N}(\hat{\tau}_t^{N})^{-1}(\pre(\langle P \rangle))}^N)^{-1}(V),\mcR_P).
\end{equation}
On account of \eqref{subeq:def-cp-2} holds $\wedge\circ\hat{\Theta}_{|
\bigcap\limits_{t \in N}(\hat{\tau}_t^{N})^{-1}(\pre(\langle P \rangle))}^N = 
\alpha_\bbP \circ c_\bbP$. 
Therefore \eqref{eq:sat-1} and \eqref{eq:sat-2} imply
\begin{equation} \label{eq:sat-3}
\SP(\pre(P),V) \mbox{ iff } \S(\alpha_\bbP^{-1}(V),(c_\bbP \otimes c_\bbP)(\mcR_P)).
\end{equation}
\ \\
In Section~\ref{sec:automata} the idea to define $\bbP_\shuffle$ was the following:
Each computation in $\bbP_\shuffle$ ``correspond'' to a ``shuffled run'' of ``elementary computations''. 
Now we will show that $(u,v) \in (c_\bbP \otimes c_\bbP)(\mcR_P) \subset A_\bbP \times A_\bbP$ 
iff the ``shuffled run'' $v'$ of ``elementary computations'' is generated from the ``shuffled run'' 
$u'$ of ``elementary computations'' by ``deleting'' one of the ``elementary computations'' in $u'$, where 
$u$ ``correspond'' to $u'$ and $v$ ``correspond'' to $v'$. The formalization of this idea will result 
in a characterization of $(c_\bbP \otimes c_\bbP)(\mcR_P) \subset A_\bbP \times A_\bbP$ without explicit use 
of $\mcR_P$.
\ \\
\ \\
First we have to formalize ``elementary computations'': For each $r \in N$ holds 
$(\hat{\tau}_r^{\{r\}})^{-1}(\pre(\langle P \rangle)) \subset 
\bigcap\limits_{t \in N} (\hat{\tau}_t^{N})^{-1}(\pre(\langle P \rangle))$ 
and 
$c_\bbP((\hat{\tau}_r^{\{r\}})^{-1}(\pre(\langle P \rangle))) = 
c_\bbP((\hat{\tau}_s^{\{s\}})^{-1}(\pre(\langle P \rangle)))$ 
for each $s \in N$.
Therefore the following definition does not depend on $r \in N$.
\begin{definition}\label{def:def-E_P}
\ \\
Let $r \in N$. The prefix closed set 
$E_\bbP := c_\bbP((\hat{\tau}_r^{\{r\}})^{-1}(\pre(\langle P \rangle))) \subset A_\bbP$ 
is called the set of \emph{elementary computations} in $\bbP_\shuffle$.
\end{definition}
%
\begin{example}\label{ex:def-E_P}
\begin{figure}[h]
\centering
\begin{tikzpicture}[->,>=stealth',shorten >=1pt,auto,node distance=2cm,semithick,initial text=]
   \node[state,initial,inner sep=1pt,minimum size=6mm] (k1)  {\small I};
   \node[state,inner sep=1pt,minimum size=6mm] (k2) at (1.5,0) {\small II};
   \node[state,accepting,inner sep=1pt,minimum size=6mm] (k3) at (3,0) {\small III};
   \path 
         (k1) edge node {\small $a$} (k2)
         (k2) edge node {\small $b$} (k3)
		;
\end{tikzpicture}
\caption{Automaton $\bbP$ recognizing $P = \{ab\}$}\label{fig:def-E_P}
\end{figure}
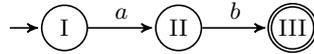
\ \\
Let $P$ and $\bbP$ be defined as in Fig.~\ref{fig:def-E_P}, then $E_\bbP = \pre(\{(0,a,1_{\small\II})(1_{\small\II},b,0)\})$.
\end{example}
$E_\bbP$ can also be characterized without referring to $c_\bbP$:
\begin{align} \label{eq:char-E_P}
& E_\bbP = \pre(\{c\in Z_\bbP^{-1}(0)\cap\alpha_\bbP^{-1}(P) | Z_\bbP(c')=1_{\delta(q_0,\alpha_\bbP(c'))} 
\mbox{ for each } \nonumber\\ 
& c' \in \pre(c) \mbox{ with } 0 < |c'| < |c|\}), \mbox{ which implies } \nonumber\\
& \alpha_\bbP(E_\bbP) = \pre(P).
\end{align}
To formally define shuffled runs and corresponding representations, 
let $\check{\Sigma}$ be a disjoint copy of $\Sigma$ 
and $\check{\iota} : \check{\Sigma}^* \to \Sigma^*$ the corresponding isomorphism. 
This isomorphism defines a deterministic automaton $\check{\bbP}$ isomorphic to $\bbP$ 
with the same state set as $\bbP$ and recognizing $\check{\iota}^{-1}(P)$. More precisely: 
Let $\check{\bbP} :=(\check{\Sigma},Q,\check{\delta},q_0,F)$, where 
$\bbP=(\Sigma,Q,\delta,q_0,F)$ and 
$\check{\delta}(p,\check{a}) := \delta(p,\check{\iota}(\check{a}))$ 
for $\check{a} \in \check{\Sigma}$ and $p \in Q$. This definition implies
\begin{equation} \label{eq:iota-shuff-0}
(f,\check{a},g) \in \shuffle_{\check{\bbP}} \mbox{ iff } 
(f,\check{\iota}(\check{a}),g) \in \shuffle_{\bbP} 
\mbox{ for } f,g \in \dsN_0^Q \mbox{ and } \check{a} \in \check{\Sigma}.
\end{equation}
Therefore
\begin{equation} \label{eq:iota-shuff-1}
\check{\iota}_{\shuffle_{\check{\bbP}}}((f,\check{a},g)) := (f,\check{\iota}(\check{a}),g)
\mbox{ for } (f,\check{a},g) \in \shuffle_{\check{\bbP}}
\end{equation}
defines an isomorphism
$\check{\iota}_{\shuffle_{\check{\bbP}}} : \shuffle_{\check{\bbP}}^* \to \shuffle_{\bbP}^*$ with
\begin{subequations}\label{subeq:iota-shuff-2}
\begin{align}
&  \check{\iota}_{\shuffle_{\check{\bbP}}}(A_{\check{\bbP}}) = A_{\bbP} , \label{subeq:iota-shuff-2-a}\\
&  \check{\iota}_{\shuffle_{\check{\bbP}}}(E_{\check{\bbP}}) = E_{\bbP} , \label{subeq:iota-shuff-2-b}\\ 
&  Z_{\check{\bbP}} = Z_{\bbP} \circ 
   \check{\iota}_{\shuffle_{\check{\bbP}} | A_{\check{\bbP}}} , \mbox{ and} \label{subeq:iota-shuff-2-c}\\
&  \check{\iota} \circ \alpha_{\check{\bbP}} = \alpha_{\bbP} \circ 
   \check{\iota}_{\shuffle_{\check{\bbP}} | A_{\check{\bbP}}}. \label{subeq:iota-shuff-2-d}
\end{align}
\end{subequations}
Because of $\check{\Sigma} \cap \Sigma = \emptyset$, it also holds 
$\shuffle_{\check{\bbP}} \cap \shuffle_{\bbP} = \emptyset$. 
\[\mbox{Let therefore } \pi_{\shuffle_{\bbP}}: 
(\shuffle_{\bbP} \dotcup \shuffle_{\check{\bbP}})^* \to \shuffle_{\bbP}^* \mbox{ be defined by }\] 
\[\pi_{\shuffle_{\bbP}}(y):= y \mbox{ for } y \in {\shuffle_{\bbP}} \mbox{ and }
\pi_{\shuffle_{\bbP}}(y):= \varepsilon \mbox{ for } y \in \shuffle_{\check{\bbP}}.\]
\[\mbox{In the same way let } \pi_{\shuffle_{\check{\bbP}}}: 
(\shuffle_{\bbP} \dotcup \shuffle_{\check{\bbP}})^* \to \shuffle_{\check{\bbP}}^* \mbox{ be defined by }\] 
\[\pi_{\shuffle_{\check{\bbP}}}(y):= \varepsilon \mbox{ for } y \in {\shuffle_{\bbP}} \mbox{ and }
\pi_{\shuffle_{\check{\bbP}}}(y):= y \mbox{ for } y \in \shuffle_{\check{\bbP}}.\]
As $A_{\bbP} \subset \shuffle_{\bbP}^*$ and $E_{\check{\bbP}} \subset \shuffle_{\check{\bbP}}^*$
are prefix closed languages, 
$\pi_{\shuffle_{\bbP}}^{-1}(A_{\bbP}) \cap \pi_{\shuffle_{\check{\bbP}}}^{-1}(E_{\check{\bbP}}) 
\subset (\shuffle_{\bbP} \dotcup \shuffle_{\check{\bbP}})^*$ is also a prefix closed language. 
Its elements are called \emph{shuffled runs} of a computation in $\bbP$ and an elementary computation in $\check{\bbP}$.
Let now $\beta_{\bbP}: 
(\shuffle_{\bbP} \dotcup \shuffle_{\check{\bbP}})^* \to {\Sigma}^*$ be defined by
\begin{align}\label{eq:beta_P} 
& \beta_\bbP((f,x,g)):= x \mbox{ for } (f,x,g)\in \shuffle_{\bbP} \mbox{ and } \nonumber\\
& \beta_\bbP((f,x,g)):= \check{\iota}(x) \mbox{ for } (f,x,g)\in \shuffle_{\check{\bbP}}.
\end{align}
A shuffled run $b \in \pi_{\shuffle_{\bbP}}^{-1}(A_{\bbP}) \cap \pi_{\shuffle_{\check{\bbP}}}^{-1}(E_{\check{\bbP}})$ 
is called a \emph{shuffled representation} of $c\in A_\bbP$ by $d\in A_\bbP$ and $e\in E_{\check{\bbP}}$ iff
\begin{subequations}\label{subeq:shuff-rep}
\begin{align}
&  \alpha_\bbP(c)=\beta_\bbP(b) , \label{subeq:shuff-rep-a}\\
&  \pi_{\shuffle_{\check{\bbP}}}(b)=e , \label{subeq:shuff-rep-b}\\ 
&  \pi_{\shuffle_{\bbP}}(b)= d , \mbox{ and} \label{subeq:shuff-rep-c}\\
&  Z_\bbP(c')= Z_{\check{\bbP}}(\pi_{\shuffle_{\check{\bbP}}}(b'))+ 
   Z_{{\bbP}}(\pi_{\shuffle_{\bbP}}(b')) \nonumber\\
&  \mbox{for each } c'\in \pre(c), \mbox{ where } b'\in \pre(b) \mbox{ with } |b'|=|c'| . \label{subeq:shuff-rep-d}
\end{align}
\end{subequations}
\begin{example}\label{ex:shuff-rep}
\ \\
Let $P$ and $\bbP$ be defined as in Fig.~\ref{fig:def-E_P}, and 
\[d = (0,a,1_{\small\II})(1_{\small\II},b,0)(0,a,1_{\small\II})(1_{\small\II},b,0) \in A_{\bbP}, \]
\[e = (0,\check{a},1_{\small\II})(1_{\small\II},\check{b},0) \in E_{\check{\bbP}}, \]
\[b = (0,a,1_{\small\II})(0,\check{a},1_{\small\II})(1_{\small\II},b,0)
(0,a,1_{\small\II})(1_{\small\II},\check{b},0)(1_{\small\II},b,0) \in 
\pi_{\shuffle_{\bbP}}^{-1}(A_{\bbP}) \cap \pi_{\shuffle_{\check{\bbP}}}^{-1}(E_{\check{\bbP}}) \]
and
\[c = (0,a,1_{\small\II})(1_{\small\II},a,2_{\small\II})(2_{\small\II},b,1_{\small\II})
(1_{\small\II},a,2_{\small\II})(2_{\small\II},b,1_{\small\II})(1_{\small\II},b,0) \in A_{\bbP}, \]
then $b$ is a shuffled representation of $c$ by $d$ and $e$.
\end{example}
The shuffled representations define a relation
$\mcR_\bbP \subset A_\bbP \times A_\bbP$:

\begin{definition}\label{def:rel-RP}
\ \\
$\mcR_\bbP := \{(c,d)\in A_\bbP \times A_\bbP | \mbox{ there exists } e\in E_{\check{\bbP}} 
\mbox{ and a shuffled representation } b  \in 
\pi_{\shuffle_{\bbP}}^{-1}(A_{\bbP}) \cap \pi_{\shuffle_{\check{\bbP}}}^{-1}(E_{\check{\bbP}}) 
\mbox{ of } c \mbox{ by } d \mbox{ and } e\}$.
\end{definition}
Now we will prove $\mcR_\bbP = (c_\bbP \otimes c_\bbP)(\mcR_P)$. 
For this purpose we define an appropriate function 
$b_\bbP : N \times \bigcap\limits_{t \in N} (\hat{\tau}_t^N)^{-1}(\pre(\langle P \rangle)) \to 
\pi_{\shuffle_{\bbP}}^{-1}(A_{\bbP}) \cap \pi_{\shuffle_{\check{\bbP}}}^{-1}(E_{\check{\bbP}})$. 
For it we first need a unique factorization property of the elements of 
$\bigcap\limits_{t \in N} (\hat{\tau}_t^N)^{-1}(\pre(\langle P \rangle))$:\\ 
\ \\
\ \\
Let $w \in \bigcap\limits_{t \in N} (\hat{\tau}_t^N)^{-1}(\pre(\langle P \rangle))$, 
$r \in N$, $x = \hat{\Pi}_{\{r\}}^N(w)$ 
and $y = \hat{\Pi}_{N \setminus \{r\}}^N(w)$. 
Then there exists exactly one $y_0 \in \hat{\Sigma}_{N \setminus \{r\}}^*$, 
and for each $i \in \{i \in \dsN | 1 \leq i \leq |x| \}$ exactly one 
$x_i \in \hat{\Sigma}_{\{r\}}$ as well as exactly one
$y_i \in \hat{\Sigma}_{N \setminus \{r\}}^*$ such that
\begin{align}\label{eq:unique-fac}
&  w = y = y_0 \mbox{ for } x = \varepsilon \mbox{, \ \ \  and }  \nonumber\\
&  w = y_0x_1y_1...x_{|x|}y_{|x|},\ x = x_1...x_{|x|} 
\mbox{ as well as } y = y_0y_1...y_{|x|} \mbox{ for } x \neq \varepsilon. 
\end{align}
Because of $|c_\bbP(x)| = |x|$, $|c_\bbP(y)| = |y|$ and
$\hat{\Pi}_{\{r\}}^N(w) = (\hat{\tau}_r^{\{r\}})^{-1}(\hat{\tau}_r^N(w))$, 
which implies $c_\bbP(x) \in E_\bbP$, the following definition is sound:
\begin{definition}\label{def:b_P}
\ \\
Let $r,\ w,\  x,\ y$, and the factorizations of $w$, $x$ and $y$ as in \ref{eq:unique-fac}, then
\[b_\bbP : N \times \bigcap\limits_{t \in N} (\hat{\tau}_t^N)^{-1}(\pre(\langle P \rangle)) \to 
\pi_{\shuffle_{\bbP}}^{-1}(A_{\bbP}) \cap \pi_{\shuffle_{\check{\bbP}}}^{-1}(E_{\check{\bbP}}) 
\mbox{ is defined by }\] 
\[b_\bbP((r,w)) := c_\bbP(y) \mbox{ for } x = \varepsilon \mbox{ and } 
b_\bbP((r,w)) := v_0u_1v_1...u_{|x|}v_{|x|} \mbox{ for } x \neq \varepsilon, \] 
where $u_1...u_{|x|} = \check{\iota}_{\shuffle_{\check{\bbP}}}^{-1}(c_\bbP(x))$, 
$v_0v_1...v_{|x|} = c_\bbP(y)$, $|u_i| = |x_i|$ and 
$|v_k| = |y_k|$ for $1 \leq i \leq |x|$ and $0 \leq k \leq |x|$.
\end{definition}
By this definition $c_\bbP(y)$ and $\check{\iota}_{\shuffle_{\check{\bbP}}}^{-1}(c_\bbP(x))$ 
are shuffled in $b_\bbP((r,w))$ in the same manner as $y$ and $x$ are shuffled in $w$, 
which implies
\begin{equation}\label{eq:b_P-1}
|b_\bbP((r,w))| = |w|,
\end{equation}
and moreover
\begin{align}\label{eq:unique-fac-1}
&  b_\bbP((r,w)) \in \pi_{\shuffle_{\bbP}}^{-1}(c_\bbP(\hat{\Pi}_{N \setminus \{r\}}^N(w))) \cap 
   \pi_{\shuffle_{\check{\bbP}}}^{-1}(\check{\iota}_{\shuffle_{\check{\bbP}}}^{-1}
   (c_\bbP(\hat{\Pi}_{\{r\}}^N(w)))), \nonumber\\
&  |\hat{\Pi}_{N \setminus \{r\}}^N(w')| = |\pi_{\shuffle_{\bbP}}(b')| \mbox{ and }
   |\hat{\Pi}_{\{r\}}^N(w')| = |\pi_{\shuffle_{\check{\bbP}}}(b')| \nonumber\\
&  \mbox{for each } w'\in \pre(w) \mbox{ and } b'\in \pre(b_\bbP((r,w))) \mbox{ with } |w'|=|b'| . 
\end{align}
It is easy to see that  \ref{eq:unique-fac-1} characterizes $b_\bbP((r,w))$. More precisely:
\begin{align}\label{eq:characterize-b_P}
\{b_\bbP((r,w))\} = \{b \in &\pi_{\shuffle_{\bbP}}^{-1}(c_\bbP(\hat{\Pi}_{N \setminus \{r\}}^N(w))) \cap 
   \pi_{\shuffle_{\check{\bbP}}}^{-1}(\check{\iota}_{\shuffle_{\check{\bbP}}}^{-1}
   (c_\bbP(\hat{\Pi}_{\{r\}}^N(w)))) \ | \nonumber\\
& |\hat{\Pi}_{N \setminus \{r\}}^N(w')| = |\pi_{\shuffle_{\bbP}}(b')| \mbox{ and }
   |\hat{\Pi}_{\{r\}}^N(w')| = |\pi_{\shuffle_{\check{\bbP}}}(b')| \nonumber\\
&  \mbox{for each } w'\in \pre(w) \mbox{ and } b'\in \pre(b) \mbox{ with } |w'|=|b'| \} . 
\end{align}
Now \eqref{eq:characterize-b_P} and Theorem~\ref{thm:def-cp} 
together with \eqref{eq:def-cp^2},
\eqref{subeq:iota-shuff-2-c}, \eqref{subeq:iota-shuff-2-d}
and \eqref{eq:beta_P} imply
\begin{align}
&  \pre(b_\bbP((r,w))) = b_\bbP((r,\pre(w))) , \label{eq:b_P-2}\\ 
&  \wedge(\hat{\Theta}^N(w)) = \beta_\bbP(b_\bbP((r,w))) \mbox{ and} \label{eq:b_P-3}\\
&  \hat{n}_\bbP(w) = Z_{\check{\bbP}}(\pi_{\shuffle_{\check{\bbP}}}(b_\bbP((r,w))))+ 
   Z_{{\bbP}}(\pi_{\shuffle_{\bbP}}(b_\bbP((r,w)))) \label{eq:b_P-4}
\end{align}
To complete the list of properties of $b_\bbP$ we will show
\begin{equation}\label{eq:b_P-5}
b_\bbP(N \times \bigcap\limits_{t \in N} (\hat{\tau}_t^N)^{-1}(\pre(\langle P \rangle))) = 
\pi_{\shuffle_{\bbP}}^{-1}(A_{\bbP}) \cap \pi_{\shuffle_{\check{\bbP}}}^{-1}(E_{\check{\bbP}}).
\end{equation}
\begin{proof} Proof of equation~\eqref{eq:b_P-5}:\\
\ \\
Let $b \in \pi_{\shuffle_{\bbP}}^{-1}(A_{\bbP}) \cap \pi_{\shuffle_{\check{\bbP}}}^{-1}(E_{\check{\bbP}}).$
Because of \eqref{subeq:def-cp-6}, Definition~\ref{def:def-E_P} and \eqref{subeq:iota-shuff-2-b} there exist 
$y \in \bigcap\limits_{t \in N} (\hat{\tau}_t^N)^{-1}(\pre(\langle P \rangle))$ and 
$\hat{x} \in \pre(\langle P \rangle)$ such that $c_\bbP(y) = \pi_{\shuffle_{\bbP}}(b)$ and 
$\check{\iota}_{\shuffle_{\check{\bbP}}}^{-1}(c_\bbP((\hat{\tau}_s^{\{s\}})^{-1}(\hat{x}))) = 
\pi_{\shuffle_{\check{\bbP}}}(b)$ for each $s \in N$.\\
\ \\
Let now $r \in N \setminus \{ \kappa(y) \}$, then by the same argument as in 
\eqref{eq:unique-fac} and in \eqref{eq:unique-fac-1} $y$ and $(\hat{\tau}_r^{\{r\}})^{-1}(\hat{x})$ 
can be shuffled in the same manner as $\pi_{\shuffle_{\bbP}}(b)$ and $\pi_{\shuffle_{\check{\bbP}}}(b)$ 
are shuffled in $b$. 
This result in $w \in \bigcap\limits_{t \in N} (\hat{\tau}_t^N)^{-1}(\pre(\langle P \rangle))$ 
with $(\hat{\tau}_r^{\{r\}})^{-1}(\hat{x}) = \hat{\Pi}_{\{r\}}^N(w)$, $y = \hat{\Pi}_{N \setminus \{r\}}^N(w)$, 
$|\hat{\Pi}_{N \setminus \{r\}}^N(w')| = |\pi_{\shuffle_{\bbP}}(b')|$ and 
$|\hat{\Pi}_{\{r\}}^N(w')| = |\pi_{\shuffle_{\check{\bbP}}}(b')|$  
for each $w'\in \pre(w)$ and $b'\in \pre(b)$ with $|w'|=|b'|$. 
Now by \eqref{eq:characterize-b_P}
$b_\bbP((r,w)) = b$, which completes the proof of equation~\eqref{eq:b_P-5}.
\end{proof}
To prove the main theorem of this section, additionally to \eqref{eq:b_P-1} - \eqref{eq:b_P-5} 
the following characterization of equality in $A_\bbP$ is needed, which is an immediate consequence 
of the definitions in \eqref{eq:def A,Z} and \eqref{eq:def alpha}:
\begin{align} \label{eq:char-eq-A_P}
& \mbox{Let } u,v \in A_\bbP, \mbox{ then } u = v \mbox{ iff } \alpha_\bbP(u) = \alpha_\bbP(v) 
  \mbox{ and } Z_\bbP(u') = Z_\bbP(v') \nonumber\\ 
& \mbox{for each } u'\in \pre(u) \mbox{ and } v'\in \pre(v) \mbox{ with } |u'|=|v'|.
\end{align}
\begin{theorem}\label{thm:eq-rel-RP}
\ \ \ \ $\mcR_\bbP = (c_\bbP \otimes c_\bbP)(\mcR_P)$
\end{theorem}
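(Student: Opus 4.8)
The plan is to prove the two inclusions separately, using the function $b_\bbP$ of Definition~\ref{def:b_P} as the bridge between $\mcR_P$ and $\mcR_\bbP$, together with its properties \eqref{eq:b_P-1}--\eqref{eq:b_P-5}, Theorem~\ref{thm:def-cp}, and the characterization of equality in $A_\bbP$ from \eqref{eq:char-eq-A_P}. Throughout, the length identities $|b_\bbP((r,w))| = |w|$ from \eqref{eq:b_P-1} and $|c_\bbP(x)| = |x|$ from \eqref{subeq:def-cp-3}, combined with uniqueness of prefixes of a fixed length in a free monoid, will be used to match up corresponding prefixes of $c$, $c_\bbP(w)$ and $b$.

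For the inclusion $(c_\bbP \otimes c_\bbP)(\mcR_P) \subset \mcR_\bbP$, I would start from $(x,y) \in \mcR_P$, fix $r \in N$ with $y = \hat{\Pi}_{N \setminus \{r\}}^N(x)$, and propose $b := b_\bbP((r,x))$ together with $e := \check{\iota}_{\shuffle_{\check{\bbP}}}^{-1}(c_\bbP(\hat{\Pi}_{\{r\}}^N(x)))$ as witnesses for $(c_\bbP(x),c_\bbP(y)) \in \mcR_\bbP$. Since $\hat{\Pi}_{\{r\}}^N(x) = (\hat{\tau}_r^{\{r\}})^{-1}(\hat{\tau}_r^N(x)) \in (\hat{\tau}_r^{\{r\}})^{-1}(\pre(\langle P \rangle))$, Definition~\ref{def:def-E_P} and \eqref{subeq:iota-shuff-2-b} give $e \in E_{\check{\bbP}}$; moreover $b$ is a shuffled run by the codomain of $b_\bbP$. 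Condition \eqref{subeq:shuff-rep-a} follows from \eqref{subeq:def-cp-2} and \eqref{eq:b_P-3}, and \eqref{subeq:shuff-rep-b}, \eqref{subeq:shuff-rep-c} are read off \eqref{eq:unique-fac-1}. For \eqref{subeq:shuff-rep-d}, writing a prefix of $c_\bbP(x)$ as $c_\bbP(x')$ with $x' \in \pre(x)$ (by \eqref{subeq:def-cp-4}) and taking the prefix $b'$ of $b$ of matching length, one identifies $b' = b_\bbP((r,x'))$ via \eqref{eq:b_P-2}, and then $Z_\bbP(c_\bbP(x')) = \hat{n}_\bbP(x')$ from \eqref{subeq:def-cp-1} equals the right-hand side of \eqref{subeq:shuff-rep-d} by \eqref{eq:b_P-4}.

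For the converse inclusion $\mcR_\bbP \subset (c_\bbP \otimes c_\bbP)(\mcR_P)$, I would take $(c,d) \in \mcR_\bbP$ with some $e \in E_{\check{\bbP}}$ and a shuffled representation $b$ of $c$ by $d$ and $e$. By \eqref{eq:b_P-5} there are $r \in N$ and $w \in \bigcap\limits_{t \in N}(\hat{\tau}_t^N)^{-1}(\pre(\langle P \rangle))$ with $b = b_\bbP((r,w))$; set $x := w$ and $y := \hat{\Pi}_{N \setminus \{r\}}^N(w)$, so $(x,y) \in \mcR_P$ by Definition~\ref{def:def-RP}. Then $c_\bbP(y) = d$ is immediate from \eqref{eq:unique-fac-1} (giving $\pi_{\shuffle_{\bbP}}(b) = c_\bbP(\hat{\Pi}_{N\setminus\{r\}}^N(w))$) and \eqref{subeq:shuff-rep-c} (giving $\pi_{\shuffle_{\bbP}}(b) = d$). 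The remaining, harder, point is $c_\bbP(w) = c$: since both lie in $A_\bbP$, I would apply \eqref{eq:char-eq-A_P}. The labels agree because $\alpha_\bbP(c_\bbP(w)) = \wedge(\hat{\Theta}^N(w)) = \beta_\bbP(b) = \alpha_\bbP(c)$ by \eqref{subeq:def-cp-2}, \eqref{eq:b_P-3} and \eqref{subeq:shuff-rep-a}. For the states at prefixes of equal length, write a prefix of $c_\bbP(w)$ as $c_\bbP(w')$ with $w' \in \pre(w)$, let $c'$ be the prefix of $c$ and $b'$ the prefix of $b$ of the same length; then $b' = b_\bbP((r,w'))$ by \eqref{eq:b_P-2}, and \eqref{subeq:shuff-rep-d}, \eqref{eq:b_P-4} and \eqref{subeq:def-cp-1} together give $Z_\bbP(c') = \hat{n}_\bbP(w') = Z_\bbP(c_\bbP(w'))$, so \eqref{eq:char-eq-A_P} yields $c = c_\bbP(w)$. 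Hence $(c,d) = (c_\bbP \otimes c_\bbP)((w, \hat{\Pi}_{N\setminus\{r\}}^N(w))) \in (c_\bbP \otimes c_\bbP)(\mcR_P)$.

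The main obstacle I anticipate is purely the bookkeeping of these repeated ``matching prefix'' steps: in both directions one must pair a prefix of $c$ (or of $c_\bbP(w)$) with the prefix of $b$ of the same length and recognize the latter as $b_\bbP((r,w'))$ for the right prefix $w'$ of $w$, and one must check that all length indices line up using \eqref{eq:b_P-1} and \eqref{subeq:def-cp-3}. Once the indices are aligned, every remaining verification is a direct substitution into the already-established identities \eqref{eq:b_P-1}--\eqref{eq:b_P-5} and Theorem~\ref{thm:def-cp}.
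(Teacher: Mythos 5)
Your proposal is correct and follows essentially the same route as the paper's proof: both directions hinge on the function $b_\bbP$ and its properties \eqref{eq:unique-fac-1}--\eqref{eq:b_P-5}, with Theorem~\ref{thm:def-cp} supplying the label/state identities and \eqref{eq:char-eq-A_P} forcing $c = c_\bbP(w)$ in the converse direction. Your write-up merely makes explicit the prefix-matching bookkeeping that the paper compresses into citations of those equations.
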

\begin{proof}\ \\
Let $(w,y) \in \mcR_P$, then $w \in \bigcap\limits_{t \in N} (\hat{\tau}_t^N)^{-1}(\pre(\langle P \rangle))$, 
and there exists  $r \in N$ such that $y = \hat{\Pi}_{N \setminus \{r\}}^N(w)$. 
By \eqref{eq:unique-fac-1} - \eqref{eq:b_P-4} and 
Theorem~\ref{thm:def-cp} together with \eqref{eq:def-cp^2} $b_\bbP((r,w))$ is a shuffled representation 
of $c_\bbP(w)$ by $c_\bbP(y)$ and 
$\check{\iota}_{\shuffle_{\check{\bbP}}}^{-1}(c_\bbP((\hat{\tau}_s^{\{s\}})^{-1}(\hat{x})))$. 
Therefore $(c_\bbP(w),c_\bbP(y)) \in \mcR_\bbP$, which proves 
$(c_\bbP \otimes c_\bbP)(\mcR_P) \subset \mcR_\bbP$.\\
\ \\
To show the contrary inclusion let $(c,d) \in \mcR_\bbP$. Then 
there exists $e\in E_{\check{\bbP}}$ and a shuffled representation $b \in 
\pi_{\shuffle_{\bbP}}^{-1}(A_{\bbP}) \cap \pi_{\shuffle_{\check{\bbP}}}^{-1}(E_{\check{\bbP}})$ 
of $c$ by $d$ and $e$. By \eqref{eq:b_P-5} there exists 
$w \in \bigcap\limits_{t \in N} (\hat{\tau}_t^N)^{-1}(\pre(\langle P \rangle))$, 
and $r \in N$ such that $b = b_\bbP((r,w))$. Now \eqref{eq:characterize-b_P} - \eqref{eq:b_P-4} and 
Theorem~\ref{thm:def-cp} together with \eqref{eq:char-eq-A_P} imply 
$(c,d) = (c_\bbP(w),c_\bbP(\hat{\Pi}_{N \setminus \{r\}}^N(w))) = 
(c_\bbP \otimes c_\bbP)(w,\hat{\Pi}_{N \setminus \{r\}}^N(w)) \in (c_\bbP \otimes c_\bbP)(\mcR_P)$. 
Therefore $\mcR_\bbP \subset (c_\bbP \otimes c_\bbP)(\mcR_P)$, which completes the proof of     
Theorem~\ref{thm:eq-rel-RP}.
\end{proof}
\ \\
Now we consider shuffle projections w.r.t. arbitrary languages. Therefore in 
Definition~\ref{def:def-RP} $\pre(\langle P \rangle)$ has to be replaced by 
$\langle P \rangle \cup \{\varepsilon\}$. So on account of \eqref{eq:np^0} we define:  
\begin{definition}\label{def:def-R0P}
\ \\
Let $\mathring{\mcR}_P := \{(x,y)\in \hat{n}_\bbP^{-1}(0) \times \hat{n}_\bbP^{-1}(0) |
\mbox{there exists } r \in N \mbox{ with } y = \hat{\Pi}_{N \setminus \{r\}}^N (x)\}.$
\end{definition}
Because of $\hat{\Pi}_{N \setminus \{r\}}^N (\hat{n}_\bbP^{-1}(0)) \subset \hat{n}_\bbP^{-1}(0)$ 
it holds 
\begin{equation}\label{eq:def-R0P}
\mathring{\mcR}_P = \mcR_P \cap (\hat{n}_\bbP^{-1}(0) \times \hat{n}_\bbP^{-1}(0)).
\end{equation}
Now by the same argument as in  \eqref{eq:sat-2} 
\begin{equation} \label{eq:sat-R0P-2}
\SP(P \cup \{\varepsilon\},V) \mbox{ iff } 
\S((\wedge\circ\hat{\Theta}_{|\hat{n}_\bbP^{-1}(0)}^N)^{-1}(V),\mathring{\mcR}_P).
\end{equation}
On account of \eqref{subeq:def-cp-2} holds $\wedge\circ\hat{\Theta}_{|\hat{n}_\bbP^{-1}(0)}^N = 
\alpha_\bbP \circ c_{\bbP |\hat{n}_\bbP^{-1}(0)}$. 
Therefore \eqref{eq:sat-1} and \eqref{eq:sat-R0P-2} imply
\[\SP(P \cup \{\varepsilon\},V) \mbox{ iff } 
\S(\alpha_\bbP^{-1}(V),(c_{\bbP |\hat{n}_\bbP^{-1}(0)} \otimes c_{\bbP |\hat{n}_\bbP^{-1}(0)})(\mathring{\mcR}_P)),\]
and because of $(c_{\bbP |\hat{n}_\bbP^{-1}(0)} \otimes c_{\bbP |\hat{n}_\bbP^{-1}(0)})(\mathring{\mcR}_P) = 
(c_\bbP \otimes c_\bbP)(\mathring{\mcR}_P)$ 
\begin{equation} \label{eq:sat-R0P-3}
\SP(P \cup \{\varepsilon\},V) \mbox{ iff } 
\S(\alpha_\bbP^{-1}(V),(c_\bbP \otimes c_\bbP)(\mathring{\mcR}_P)).
\end{equation}
Theorem~\ref{thm:eq-rel-RP} allows to characterize the relation 
$(c_\bbP \otimes c_\bbP)(\mathring{\mcR}_P) \subset A_\bbP \times A_\bbP$ 
without explicit use of $\mathring{\mcR}_P$:
\begin{corollary}\label{cor:eq-rel-R0P}\ \ 
$(c_\bbP \otimes c_\bbP)(\mathring{\mcR}_P) = 
\mcR_\bbP \cap (Z_\bbP^{-1}(0) \times Z_\bbP^{-1}(0)) =: \mathring{\mcR}_\bbP.$ 
\end{corollary}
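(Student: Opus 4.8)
The plan is to obtain the identity as a pure repackaging of Theorem~\ref{thm:eq-rel-RP} under the restriction ``final state $0$'', mirroring exactly how \eqref{eq:sat-R0P-3} was derived from \eqref{eq:sat-3}; no new work on shuffled runs or on $b_\bbP$ is needed, since all of that has already been carried out in the proof of Theorem~\ref{thm:eq-rel-RP}. The three ingredients are: Theorem~\ref{thm:eq-rel-RP} itself ($\mcR_\bbP = (c_\bbP \otimes c_\bbP)(\mcR_P)$), the intersection description \eqref{eq:def-R0P} ($\mathring{\mcR}_P = \mcR_P \cap (\hat{n}_\bbP^{-1}(0)\times\hat{n}_\bbP^{-1}(0))$), and \eqref{subeq:def-cp-1} ($Z_\bbP(c_\bbP(x)) = \hat{n}_\bbP(x)$).

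First I would record the elementary observation that, for any $x$ in the domain $\bigcap\limits_{t\in N}(\hat{\tau}_t^N)^{-1}(\pre(\langle P \rangle))$ of $c_\bbP$, one has $c_\bbP(x)\in Z_\bbP^{-1}(0)$ if and only if $x\in\hat{n}_\bbP^{-1}(0)$, which is immediate from \eqref{subeq:def-cp-1}. For the inclusion ``$\subseteq$'', take $(x,y)\in\mathring{\mcR}_P$; by \eqref{eq:def-R0P} we get $(x,y)\in\mcR_P$, so $(c_\bbP(x),c_\bbP(y))\in\mcR_\bbP$ by Theorem~\ref{thm:eq-rel-RP}, while \eqref{eq:def-R0P} also gives $x,y\in\hat{n}_\bbP^{-1}(0)$, hence $c_\bbP(x),c_\bbP(y)\in Z_\bbP^{-1}(0)$ by the observation; thus $(c_\bbP\otimes c_\bbP)(\mathring{\mcR}_P)\subset\mcR_\bbP\cap(Z_\bbP^{-1}(0)\times Z_\bbP^{-1}(0))$. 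For ``$\supseteq$'', take $(c,d)\in\mcR_\bbP\cap(Z_\bbP^{-1}(0)\times Z_\bbP^{-1}(0))$; by Theorem~\ref{thm:eq-rel-RP} there is $(x,y)\in\mcR_P$ with $c_\bbP(x)=c$ and $c_\bbP(y)=d$, and then $\hat{n}_\bbP(x)=Z_\bbP(c)=0$ and $\hat{n}_\bbP(y)=Z_\bbP(d)=0$, so $(x,y)\in\mcR_P\cap(\hat{n}_\bbP^{-1}(0)\times\hat{n}_\bbP^{-1}(0))=\mathring{\mcR}_P$ by \eqref{eq:def-R0P}, whence $(c,d)=(c_\bbP\otimes c_\bbP)(x,y)\in(c_\bbP\otimes c_\bbP)(\mathring{\mcR}_P)$. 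The naming ``$=:\mathring{\mcR}_\bbP$'' requires no argument.

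The only point that needs a word of care is that both components of every pair in $\mcR_P$ lie in the domain of $c_\bbP$, so that \eqref{subeq:def-cp-1} may be applied to them; this is immediate from Definition~\ref{def:def-RP}. I do not expect a genuine obstacle: once Theorem~\ref{thm:eq-rel-RP} is in hand, the corollary is a two-line set-theoretic restriction, with \eqref{subeq:def-cp-1} providing the bridge between the constraint ``$\hat{n}_\bbP = 0$'' on structured representations and the constraint ``$Z_\bbP = 0$'' on computations.
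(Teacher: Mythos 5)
Your proof is correct and follows essentially the same route as the paper: the paper's own argument is a chain of set equalities built from exactly your three ingredients, namely Theorem~\ref{thm:eq-rel-RP}, \eqref{eq:def-R0P}, and the identification $\hat{n}_\bbP^{-1}(0)=c_\bbP^{-1}(Z_\bbP^{-1}(0))$ coming from \eqref{subeq:def-cp-1}, with the final step being the identity $f(A\cap f^{-1}(B))=f(A)\cap B$ that your two element-wise inclusions simply unpack. No substantive difference.
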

\begin{proof}
\ \\
\eqref{subeq:def-cp-1} \eqref{eq:def-R0P} and Theorem~\ref{thm:eq-rel-RP} imply 
\begin{align*}
(c_\bbP \otimes c_\bbP)(\mathring{\mcR}_P) =  
& (c_\bbP \otimes c_\bbP)[\mcR_P \cap (\hat{n}_\bbP^{-1}(0) \times \hat{n}_\bbP^{-1}(0))] = \\
& (c_\bbP \otimes c_\bbP)[\mcR_P \cap (c_\bbP^{-1}(Z_\bbP^{-1}(0)) \times c_\bbP^{-1}(Z_\bbP^{-1}(0)))] = \\
& (c_\bbP \otimes c_\bbP)[\mcR_P \cap (c_\bbP^{-1}\otimes c_\bbP^{-1})(Z_\bbP^{-1}(0) \times Z_\bbP^{-1}(0))] = \\
& (c_\bbP \otimes c_\bbP)(\mcR_P) \cap (Z_\bbP^{-1}(0) \times Z_\bbP^{-1}(0)) = \ \\
& \mcR_\bbP \cap (Z_\bbP^{-1}(0) \times Z_\bbP^{-1}(0)),\\
\end{align*}
which completes the proof of  Corollary~\ref{cor:eq-rel-R0P}.
\end{proof}
\ \\
Considering the powerset $2^F$, a binary relation $R \subset F \times F$ defines 
a function $R' : 2^F \to 2^F$ by 
\begin{equation}\label{eq:def-R'}
R'(U) := \{y \in F | \mbox{there exists } x \in U \mbox{ with } (x,y) \in R\} 
\mbox{ for each } U \in 2^F.
\end{equation}
It is an immediate consequence that
\begin{equation}\label{eq:fkt-R'-1}
R'(U) = \bigcup\limits_{x \in U} R'(\{x\}) 
\mbox{ \ \ \ for each } U \in 2^F.
\end{equation}
Now, 
\begin{equation}\label{eq:fkt-R'-2}
\S(W,R) \mbox{ iff } R'(W) \subset W, \mbox{ for each } W \in 2^F. 
\end{equation}
Applying \eqref{eq:fkt-R'-2} to \eqref{eq:sat-3} and Theorem~\ref{thm:eq-rel-RP} 
result in
\begin{corollary}\label{cor:eq-rel-RP'} 
\[\SP(\pre(P),V) \mbox{ iff } \mcR'_\bbP(\alpha_\bbP^{-1}(V)) \subset \alpha_\bbP^{-1}(V).\]
\end{corollary}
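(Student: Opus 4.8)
The plan is to chain together three facts already established in this section, each of which is an ``iff''. First I would invoke \eqref{eq:sat-3}, which states that $\SP(\pre(P),V)$ is equivalent to the saturation property $\S(\alpha_\bbP^{-1}(V),(c_\bbP \otimes c_\bbP)(\mcR_P))$. Next I would apply Theorem~\ref{thm:eq-rel-RP}, which identifies $(c_\bbP \otimes c_\bbP)(\mcR_P)$ with $\mcR_\bbP$; substituting this into the previous line shows that $\SP(\pre(P),V)$ is equivalent to $\S(\alpha_\bbP^{-1}(V),\mcR_\bbP)$. Finally, since by Definition~\ref{def:rel-RP} $\mcR_\bbP \subset A_\bbP \times A_\bbP$ and $\alpha_\bbP^{-1}(V) \subset A_\bbP$, I would take $F = A_\bbP$ in \eqref{eq:fkt-R'-2} and conclude that $\S(\alpha_\bbP^{-1}(V),\mcR_\bbP)$ holds iff $\mcR'_\bbP(\alpha_\bbP^{-1}(V)) \subset \alpha_\bbP^{-1}(V)$. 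Composing the three equivalences yields the claim.

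The only point requiring attention is the bookkeeping about the ambient set: \eqref{eq:fkt-R'-2} is stated for a relation $R \subset F \times F$ and a subset $W \in 2^F$, so one must read $\alpha_\bbP^{-1}(V)$ as a subset of $A_\bbP$ (equivalently, one could replace $V$ by $V \cap \alpha_\bbP(A_\bbP)$ without affecting $\alpha_\bbP^{-1}(V)$), and recall that $\mcR_\bbP$ is indeed a relation on $A_\bbP$ and that $\mcR'_\bbP$ is the induced set-function of \eqref{eq:def-R'}. There is no genuine obstacle here; the corollary is simply a transcription of \eqref{eq:sat-3} and Theorem~\ref{thm:eq-rel-RP} into the language of $\mcR'_\bbP$, with all the substantive work already carried out in proving those two statements.
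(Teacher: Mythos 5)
Your proposal is correct and follows exactly the paper's own route: the corollary is obtained by applying \eqref{eq:fkt-R'-2} to \eqref{eq:sat-3} together with Theorem~\ref{thm:eq-rel-RP}. Your remark on reading $\alpha_\bbP^{-1}(V)$ as a subset of $A_\bbP$ is sound and consistent with how the paper uses these objects.
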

Corollary~\ref{cor:eq-rel-R0P} implies
\begin{equation}\label{eq:fkt-R0P'-1}
\mathring{\mcR'}_\bbP(U) = \mcR'_\bbP(U \cap Z_\bbP^{-1}(0)) \cap Z_\bbP^{-1}(0) 
\mbox{ for each } U \subset A_\bbP. 
\end{equation}
On account of \eqref{subeq:shuff-rep-d} holds
\begin{equation}\label{eq:fkt-RP'}
\mcR'_\bbP(Z_\bbP^{-1}(0)) \subset Z_\bbP^{-1}(0),
\end{equation}
and therefore by \eqref{eq:fkt-R0P'-1}
\begin{equation}\label{eq:fkt-R0P'-2}
\mathring{\mcR'}_\bbP(U) = \mcR'_\bbP(U \cap Z_\bbP^{-1}(0))  
\mbox{ for each } U \subset A_\bbP. 
\end{equation}
Now from \eqref{eq:sat-R0P-3}, \eqref{eq:fkt-R'-2}, \eqref{eq:fkt-RP'}, and \eqref{eq:fkt-R0P'-2} it follows
\begin{corollary}\label{cor:eq-rel-R0P'}
\begin{align*}
\SP(P \cup \{\varepsilon\},V) & \mbox{ iff } 
\mcR'_\bbP(\alpha_\bbP^{-1}(V) \cap Z_\bbP^{-1}(0)) \subset \alpha_\bbP^{-1}(V) \\ 
& \mbox{ iff } 
\mcR'_\bbP(\alpha_\bbP^{-1}(V) \cap Z_\bbP^{-1}(0)) \subset \alpha_\bbP^{-1}(V) \cap Z_\bbP^{-1}(0).
\end{align*}
\end{corollary}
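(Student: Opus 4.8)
The plan is to chain together the equivalences and inclusions that are already in hand, so the argument is essentially bookkeeping. First I would invoke \eqref{eq:sat-R0P-3} together with Corollary~\ref{cor:eq-rel-R0P} to rewrite $\SP(P \cup \{\varepsilon\},V)$ as the saturation property $\S(\alpha_\bbP^{-1}(V),\mathring{\mcR}_\bbP)$, where $\mathring{\mcR}_\bbP = \mcR_\bbP \cap (Z_\bbP^{-1}(0) \times Z_\bbP^{-1}(0))$. Applying \eqref{eq:fkt-R'-2} with $F = A_\bbP$, $R = \mathring{\mcR}_\bbP$ and $W = \alpha_\bbP^{-1}(V)$ turns this into the inclusion $\mathring{\mcR'}_\bbP(\alpha_\bbP^{-1}(V)) \subset \alpha_\bbP^{-1}(V)$.

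Next I would use \eqref{eq:fkt-R0P'-2}, which states $\mathring{\mcR'}_\bbP(U) = \mcR'_\bbP(U \cap Z_\bbP^{-1}(0))$ for every $U \subset A_\bbP$, in particular for $U = \alpha_\bbP^{-1}(V)$. Substituting this into the inclusion obtained above immediately yields the first claimed equivalence: $\SP(P \cup \{\varepsilon\},V)$ iff $\mcR'_\bbP(\alpha_\bbP^{-1}(V) \cap Z_\bbP^{-1}(0)) \subset \alpha_\bbP^{-1}(V)$.

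For the second equivalence, the implication from the stronger inclusion to the weaker one is trivial, since $\alpha_\bbP^{-1}(V) \cap Z_\bbP^{-1}(0) \subset \alpha_\bbP^{-1}(V)$. For the converse, I would note that $\mcR'_\bbP$ is monotone: by \eqref{eq:fkt-R'-1}, $\mcR'_\bbP(\alpha_\bbP^{-1}(V) \cap Z_\bbP^{-1}(0)) \subset \mcR'_\bbP(Z_\bbP^{-1}(0))$, and then \eqref{eq:fkt-RP'} gives $\mcR'_\bbP(Z_\bbP^{-1}(0)) \subset Z_\bbP^{-1}(0)$. Hence $\mcR'_\bbP(\alpha_\bbP^{-1}(V) \cap Z_\bbP^{-1}(0)) \subset Z_\bbP^{-1}(0)$ holds unconditionally, so whenever $\mcR'_\bbP(\alpha_\bbP^{-1}(V) \cap Z_\bbP^{-1}(0)) \subset \alpha_\bbP^{-1}(V)$ is also true we may intersect the two inclusions to conclude $\mcR'_\bbP(\alpha_\bbP^{-1}(V) \cap Z_\bbP^{-1}(0)) \subset \alpha_\bbP^{-1}(V) \cap Z_\bbP^{-1}(0)$.

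I do not expect a genuine obstacle: the translation of $\SP$ into a saturation property, the identification of $\mathring{\mcR}_\bbP$, the reduction formula \eqref{eq:fkt-R0P'-2}, and the absorbing property \eqref{eq:fkt-RP'} are all already established. The only point requiring a moment's care is to keep \eqref{eq:fkt-RP'} (the unconditional inclusion $\mcR'_\bbP(Z_\bbP^{-1}(0)) \subset Z_\bbP^{-1}(0)$) distinct from the conditional inclusion being proved, and to make sure the monotonicity of $\mcR'_\bbP$ used for the second equivalence is exactly what \eqref{eq:fkt-R'-1} delivers.
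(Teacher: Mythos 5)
Your proposal is correct and follows exactly the chain the paper itself uses: \eqref{eq:sat-R0P-3} with Corollary~\ref{cor:eq-rel-R0P} and \eqref{eq:fkt-R'-2} to get the saturation inclusion, \eqref{eq:fkt-R0P'-2} to rewrite it as $\mcR'_\bbP(\alpha_\bbP^{-1}(V) \cap Z_\bbP^{-1}(0)) \subset \alpha_\bbP^{-1}(V)$, and monotonicity via \eqref{eq:fkt-R'-1} together with \eqref{eq:fkt-RP'} for the second equivalence. Nothing is missing.
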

\section{Construction Principles}
Under certain conditions for a fixed language $P$ Corollary~\ref{cor:eq-rel-RP'} allows 
to construct a variety of languages $V$ such that $\SP(\pre(P),V)$. The key 
to such constructions is the following implication of \eqref{subeq:shuff-rep-d}:
\begin{equation}\label{eq:fkt-RP'-1}
Z_\bbP(\pre(\mcR'_\bbP(\{c\}))) \subset \bigcup\limits_{x \in \pre(c)} \{f\in\dsN_0^Q | f\le Z_\bbP(x)\} 
\mbox{ for each } c \in A_\bbP,
\end{equation}
where Q is the state set of $\bbP$.    
\begin{definition}[{initial segment}]\label{def:initial segment} \ \\
$\emptyset \neq I \subset \dsN_0^Q$ is called \emph{initial segment} iff $r \le s \in I$ implies $r \in I$. 
For each initial segment $I$, let $A_{(I,\bbP)}:= \{c\in A_\bbP | Z_\bbP(\pre(c)) \subset I \}$.
\end{definition}
It holds $\emptyset \neq  A_{(I,\bbP)} = \pre(A_{(I,\bbP)})$. 
\begin{definition}\label{def:i-s-compatible} \ \\
An initial segment $I$ is called \emph{compatible with $\bbP$} iff $A_{(I,\bbP)}$ is saturated by the partition of $A_\bbP$ 
induced by $\alpha_\bbP$. I.e. $c, c' \in A_{(I,\bbP)}$ and $\alpha_\bbP(c') = \alpha_\bbP(c)$
implies $c' \in A_{(I,\bbP)}$. For an initial segment $I$ compatible with $\bbP$, let $L_{(I,\bbP)}:= \alpha_\bbP(A_{(I,\bbP)})$.
\end{definition}
By this definition $\emptyset \neq L_{(I,\bbP)} \subset (\pre(P))^\shuffle$ and $L_{(I,\bbP)}=\pre(L_{(I,\bbP)})$.
\begin{theorem}\label{thm:i-s-compatible}
Let $\emptyset \neq P \subset \Sigma^*$ and $I$ an initial segment compatible with $\bbP$, then $\SP(\pre(P),L_{(I,\bbP)})$.  
\end{theorem}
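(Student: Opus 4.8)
The plan is to reduce the statement to the characterization of closure under shuffle projection provided by Corollary~\ref{cor:eq-rel-RP'}, namely that $\SP(\pre(P),V)$ holds iff $\mcR'_\bbP(\alpha_\bbP^{-1}(V)) \subset \alpha_\bbP^{-1}(V)$. Applied to $V = L_{(I,\bbP)}$, this reduces the theorem to the single inclusion
\[
\mcR'_\bbP(\alpha_\bbP^{-1}(L_{(I,\bbP)})) \subset \alpha_\bbP^{-1}(L_{(I,\bbP)}) .
\]
So the whole proof consists of verifying this inclusion, and the key preliminary observation is that compatibility of $I$ with $\bbP$ is exactly what turns $\alpha_\bbP^{-1}(L_{(I,\bbP)})$ back into $A_{(I,\bbP)}$.

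First I would record that, by Definition~\ref{def:i-s-compatible}, $\alpha_\bbP^{-1}(L_{(I,\bbP)}) = A_{(I,\bbP)}$: the inclusion $\supset$ is immediate from $L_{(I,\bbP)} = \alpha_\bbP(A_{(I,\bbP)})$, and for $\subset$, if $c \in A_\bbP$ with $\alpha_\bbP(c) \in L_{(I,\bbP)}$ then there is $c' \in A_{(I,\bbP)}$ with $\alpha_\bbP(c') = \alpha_\bbP(c)$, whence $c \in A_{(I,\bbP)}$ by compatibility. Thus it remains to show $\mcR'_\bbP(A_{(I,\bbP)}) \subset A_{(I,\bbP)}$, and by \eqref{eq:fkt-R'-1} it suffices to show $\mcR'_\bbP(\{c\}) \subset A_{(I,\bbP)}$ for each fixed $c \in A_{(I,\bbP)}$.

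Now I would invoke the crucial inequality \eqref{eq:fkt-RP'-1}, which gives
\[
Z_\bbP(\pre(\mcR'_\bbP(\{c\}))) \subset \bigcup_{x \in \pre(c)} \{ f \in \dsN_0^Q \mid f \le Z_\bbP(x) \} .
\]
Since $c \in A_{(I,\bbP)}$, we have $Z_\bbP(x) \in I$ for every $x \in \pre(c)$ (Definition~\ref{def:initial segment}), and since $I$ is an initial segment, $f \le Z_\bbP(x) \in I$ forces $f \in I$; hence the right-hand union lies in $I$, so $Z_\bbP(\pre(\mcR'_\bbP(\{c\}))) \subset I$. Finally, for any $d \in \mcR'_\bbP(\{c\})$ one has $\pre(d) \subset \pre(\mcR'_\bbP(\{c\}))$ (using prefix-closedness of $A_\bbP$), so $Z_\bbP(\pre(d)) \subset I$, i.e. $d \in A_{(I,\bbP)}$. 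This establishes $\mcR'_\bbP(\{c\}) \subset A_{(I,\bbP)}$, hence the desired inclusion, and the theorem follows.

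\textbf{Expected main obstacle.} There is essentially no hard step left once Corollary~\ref{cor:eq-rel-RP'} and \eqref{eq:fkt-RP'-1} are in hand — the entire "weight" of the argument was already carried by those two results, together with the saturation bookkeeping of Section~\ref{sec:automata-shuffle projection}. The only place one must be careful is the identification $\alpha_\bbP^{-1}(L_{(I,\bbP)}) = A_{(I,\bbP)}$, which is where the compatibility hypothesis is genuinely used; without it the inclusion $\mcR'_\bbP(A_{(I,\bbP)}) \subset A_{(I,\bbP)}$ would not suffice to conclude $\SP(\pre(P),L_{(I,\bbP)})$.
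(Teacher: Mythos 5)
Your proposal is correct and follows essentially the same route as the paper's own proof: reduce via Corollary~\ref{cor:eq-rel-RP'} and \eqref{eq:fkt-R'-1} to the singleton inclusion, identify $\alpha_\bbP^{-1}(L_{(I,\bbP)})$ with $A_{(I,\bbP)}$ using compatibility, and conclude with \eqref{eq:fkt-RP'-1} together with the initial-segment property. You merely spell out the two steps the paper leaves implicit, so there is nothing to add.
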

\begin{proof}
On account of Corollary~\ref{cor:eq-rel-RP'} and \eqref{eq:fkt-R'-1} 
it is sufficient to show
\begin{equation}\label{eq:i-s-1}
\mcR'_\bbP(\{c\}) \subset \alpha_\bbP^{-1}(L_{(I,\bbP)}) 
\mbox{ for each } c \in \alpha_\bbP^{-1}(L_{(I,\bbP)}).
\end{equation}
Since the initial segment $I$ is compatible with $\bbP$ it holds
\begin{equation}\label{eq:i-s-2}
\alpha_\bbP^{-1}(L_{(I,\bbP)}) = \{x \in A_\bbP | Z_\bbP(\pre(x)) \subset I \}. 
\end{equation}
Now \eqref{eq:fkt-RP'-1} and \eqref{eq:i-s-2} imply \eqref{eq:i-s-1}, 
which completes the proof of Theorem~\ref{thm:i-s-compatible}.
\end{proof}
An immediate consequence of Definition~\ref{def:i-s-compatible} is
\begin{lemma}\label{lemma:i-s-compatible}
An initial segment $I$ is compatible with $\bbP$ iff for each $c, \check{c} \in A_{(I,\bbP)}$ 
with $\alpha(c) = \alpha(\check{c})$, and for each $(Z_\bbP(c),a,f) \in \shuffle_\bbP$ 
and $(Z_\bbP(\check{c}),a,\check{f}) \in \shuffle_\bbP$ holds 
$f \in I$ iff $\check{f} \in I$.
\end{lemma}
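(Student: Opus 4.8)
The plan is to prove the two implications of the equivalence separately, reading Definition~\ref{def:i-s-compatible} as: $I$ is compatible with $\bbP$ iff $A_{(I,\bbP)}$ is saturated by the $\alpha_\bbP$-partition of $A_\bbP$, i.e. $c \in A_{(I,\bbP)}$, $c' \in A_\bbP$ and $\alpha_\bbP(c') = \alpha_\bbP(c)$ imply $c' \in A_{(I,\bbP)}$. Throughout I would use three facts established earlier: $A_{(I,\bbP)}$ is prefix closed (Definition~\ref{def:initial segment}), $|\alpha_\bbP(u)| = |u|$ for every $u \in A_\bbP$ by \eqref{eq:def alpha}, and $0 \in I$ since $I$ is a non-empty initial segment, so $\varepsilon \in A_{(I,\bbP)}$.

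For the implication ``$I$ compatible $\Rightarrow$ the one-step condition'', I would take $c, \check{c} \in A_{(I,\bbP)}$ with $\alpha_\bbP(c) = \alpha_\bbP(\check{c})$ and transitions $(Z_\bbP(c),a,f), (Z_\bbP(\check{c}),a,\check{f}) \in \shuffle_\bbP$, and append them. By \eqref{eq:def A,Z} the extended paths $c(Z_\bbP(c),a,f)$ and $\check{c}(Z_\bbP(\check{c}),a,\check{f})$ lie in $A_\bbP$, and by \eqref{eq:def alpha} they share the $\alpha_\bbP$-image $\alpha_\bbP(c)\,a$. Assuming $f \in I$, the membership $c \in A_{(I,\bbP)}$ forces $Z_\bbP(\pre(c(Z_\bbP(c),a,f))) = Z_\bbP(\pre(c)) \cup \{f\} \subset I$, so $c(Z_\bbP(c),a,f) \in A_{(I,\bbP)}$; saturation then places $\check{c}(Z_\bbP(\check{c}),a,\check{f})$ in $A_{(I,\bbP)}$ as well, whence $\check{f} = Z_\bbP(\check{c}(Z_\bbP(\check{c}),a,\check{f})) \in I$. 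By symmetry this yields $f \in I \Leftrightarrow \check{f} \in I$.

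For the converse I would prove, by induction on $|c|$, that whenever $c \in A_{(I,\bbP)}$, $c' \in A_\bbP$ and $\alpha_\bbP(c') = \alpha_\bbP(c)$, then $c' \in A_{(I,\bbP)}$; this is exactly compatibility. The base case $|c| = 0$ is immediate, since $\alpha_\bbP$ preserves length forces $c = c' = \varepsilon$. In the step I would factor $c = d(Z_\bbP(d),a,f)$ and $c' = d'(Z_\bbP(d'),a',f')$ with $|d| = |d'|$; comparing $\alpha_\bbP$-images gives $\alpha_\bbP(d') = \alpha_\bbP(d)$ and $a' = a$. Prefix-closedness of $A_{(I,\bbP)}$ gives $d \in A_{(I,\bbP)}$, the induction hypothesis gives $d' \in A_{(I,\bbP)}$, and the one-step hypothesis applied to $d, d'$ and the two transitions $(Z_\bbP(d),a,f), (Z_\bbP(d'),a,f') \in \shuffle_\bbP$ yields $f \in I \Leftrightarrow f' \in I$. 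Since $f = Z_\bbP(c) \in I$ we get $f' \in I$, hence $Z_\bbP(\pre(c')) = Z_\bbP(\pre(d')) \cup \{f'\} \subset I$, i.e. $c' \in A_{(I,\bbP)}$.

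I do not expect a genuine obstacle: both directions are bookkeeping with the inductive definitions \eqref{eq:def A,Z} and \eqref{eq:def alpha} and with prefix-closedness of $A_{(I,\bbP)}$. The only points that need care are (i) reading the definition of ``compatible'' as full $\alpha_\bbP$-saturation of $A_{(I,\bbP)}$ inside $A_\bbP$, and (ii) in the converse, running the induction on the \emph{common} length $|c| = |c'| = |\alpha_\bbP(c)|$ so that the two factorizations of $c$ and $c'$ align position by position.
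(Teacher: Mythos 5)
Your proof is correct and is exactly the bookkeeping argument the paper has in mind: the paper states this lemma as an immediate consequence of Definition~\ref{def:i-s-compatible} and omits the proof, and your two directions (one-step extension plus saturation for the forward implication, induction on path length using prefix-closedness of $A_{(I,\bbP)}$ for the converse) fill in precisely the intended details. Your reading of "saturated" as $c \in A_{(I,\bbP)}$, $c' \in A_\bbP$, $\alpha_\bbP(c') = \alpha_\bbP(c)$ implying $c' \in A_{(I,\bbP)}$ is the right one, and both directions check out.
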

The condition of Lemma~\ref{lemma:i-s-compatible} can be checked by a partial 
powerset construction on $\bbP_\shuffle$. 
For this purpose let the partial function 
$\mcD_{(I,\bbP)} : 2^I \times \Sigma \to 2^I$ be defined by
\begin{align}\label{eq:def-D}
& \mcD_{(I,\bbP)}(M,a) := \{f \in \dsN_0^Q | \mbox{ there exist } g \in M \mbox{ and } (g,a,f) \in \shuffle_\bbP \} \nonumber \\
& \mbox{for each } (M,a) \in 2^I \times \Sigma \mbox{ with } \nonumber \\
& \emptyset \neq \{f \in \dsN_0^Q | \mbox{ there exist } g \in M \mbox{ and } (g,a,f) \in \shuffle_\bbP \} \subset I.
\end{align}
The partial function $\mcD_{(I,\bbP)}$ defines a deterministic semiautomaton
\begin{equation}\label{eq:def-A}
\mcP_{(I,\bbP)} := (\Sigma, 2^I,\mcD_{(I,\bbP)}, \{0\}).
\end{equation}
Now Lemma~\ref{lemma:i-s-compatible} implies
\begin{theorem}\label{thm:i-s-compatible2} 
An initial segment $I \subset \dsN_0^Q$ is compatible with $\bbP$, iff \\
for each $a \in \Sigma$ and $M \in 2^I$ reachable in $\mcP_{(I,\bbP)}$ either 
$\mcD_{(I,\bbP)}(M,a)$ is defined, or 
$\{f \in \dsN_0^Q | \mbox{ there exist } g \in M \mbox{ and } (g,a,f) \in \shuffle_\bbP \} \subset \dsN_0^Q \setminus I$. \\
In that case $\mcP_{(I,\bbP)}$ recognizes $L_{(I,\bbP)}$.  
\end{theorem}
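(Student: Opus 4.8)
The plan is to derive Theorem~\ref{thm:i-s-compatible2} directly from Lemma~\ref{lemma:i-s-compatible} by recognising $\mcP_{(I,\bbP)}$ as exactly the reachable-subset automaton associated with $\bbP_\shuffle$ restricted to $I$, and reading off the saturation condition as a consistency requirement on the image sets $\mcD_{(I,\bbP)}(M,a)$. First I would establish the key bookkeeping identity: for every word $x \in \Sigma^*$ that is read from $\{0\}$ in $\mcP_{(I,\bbP)}$, the state reached (if defined) equals
\[
\{Z_\bbP(c) \mid c \in A_\bbP,\ \alpha_\bbP(c) = x,\ Z_\bbP(\pre(c)) \subset I\}
= \{Z_\bbP(c) \mid c \in A_{(I,\bbP)},\ \alpha_\bbP(c) = x\}.
\]
This is proven by induction on $|x|$: the base case $x = \varepsilon$ gives $\{0\}$ since $A_{(I,\bbP)} \ni \varepsilon$ and $Z_\bbP(\varepsilon) = 0 \in I$; the step uses that $\shuffle_\bbP$ is the transition relation of $\bbP_\shuffle$ and that $\mcD_{(I,\bbP)}(M,a)$ collects precisely the $\shuffle_\bbP$-successors of states in $M$ under $a$, intersected with the requirement that all of them lie in $I$ (which corresponds exactly to the prefix condition $Z_\bbP(\pre(c)) \subset I$ propagating along $c$).

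Next I would use this identity to translate the two directions of the iff. For the forward direction: if $I$ is compatible with $\bbP$, take any reachable $M$ and any $a \in \Sigma$, and let $E := \{f \mid \exists g \in M,\ (g,a,f) \in \shuffle_\bbP\}$. By the identity, $M = \{Z_\bbP(c) \mid c \in A_{(I,\bbP)},\ \alpha_\bbP(c) = x\}$ for some $x$, so $E = \{f \mid \exists c \in A_{(I,\bbP)},\ \alpha_\bbP(c) = x,\ (Z_\bbP(c),a,f) \in \shuffle_\bbP\}$. Lemma~\ref{lemma:i-s-compatible} says that for two such computations $c, \check c$ the successor $f \in I$ iff $\check f \in I$; hence $E$ is either entirely inside $I$ (in which case $\mcD_{(I,\bbP)}(M,a)$ is defined and equals $E$) or entirely outside $I$. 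That is the claimed dichotomy. For the converse: assume the dichotomy holds for all reachable $M$ and all $a$. To verify the condition of Lemma~\ref{lemma:i-s-compatible}, take $c, \check c \in A_{(I,\bbP)}$ with $\alpha_\bbP(c) = \alpha_\bbP(\check c) =: x$, and transitions $(Z_\bbP(c),a,f), (Z_\bbP(\check c),a,\check f) \in \shuffle_\bbP$. Both $Z_\bbP(c)$ and $Z_\bbP(\check c)$ belong to the same reachable set $M = \{Z_\bbP(c') \mid c' \in A_{(I,\bbP)},\ \alpha_\bbP(c') = x\}$, so $f, \check f$ both lie in $E$; the dichotomy forces $f \in I \iff \check f \in I$. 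Thus $I$ is compatible with $\bbP$.

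Finally, for the last sentence — that $\mcP_{(I,\bbP)}$ recognises $L_{(I,\bbP)}$ — I would note that when $I$ is compatible, the identity above shows that a word $x$ drives $\mcP_{(I,\bbP)}$ from $\{0\}$ to a defined (nonempty) state iff there is some $c \in A_\bbP$ with $\alpha_\bbP(c) = x$ and $Z_\bbP(\pre(c)) \subset I$, i.e.\ iff $x \in \alpha_\bbP(A_{(I,\bbP)}) = L_{(I,\bbP)}$; here one uses that compatibility makes ``some computation over $x$ stays in $I$'' equivalent to ``every computation over $x$ stays in $I$'', via \eqref{eq:i-s-2}, so the run never gets stuck on a word of $L_{(I,\bbP)}$ and always gets stuck (reaches an undefined transition) on a word not in $L_{(I,\bbP)}$ but in $(\pre(P))^\shuffle$. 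Since $\mcP_{(I,\bbP)}$ is a semiautomaton, "recognises" here means its reachable-word language is $L_{(I,\bbP)}$, which is exactly what the identity delivers.

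\textbf{Main obstacle.} I expect the only real work is the inductive identity linking states of $\mcP_{(I,\bbP)}$ to $Z_\bbP$-images of equi-labelled computations in $A_{(I,\bbP)}$ — in particular getting the quantifier structure right: the subset-construction state collects \emph{all} successors, so one must be careful that the "$\subset I$" guard in the definition of $\mcD_{(I,\bbP)}$ corresponds to the \emph{universal} prefix condition $Z_\bbP(\pre(c)) \subset I$ and not merely to some computation staying inside $I$; compatibility of $I$ is exactly what reconciles the existential and universal readings. Everything after that is a direct unwinding of Lemma~\ref{lemma:i-s-compatible} and Definition~\ref{def:i-s-compatible}.
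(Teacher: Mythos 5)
Your proposal is correct and follows essentially the same route as the paper, which derives the theorem directly from Lemma~\ref{lemma:i-s-compatible} via the partial powerset construction; your inductive identity identifying each reachable state of $\mcP_{(I,\bbP)}$ with $\{Z_\bbP(c) \mid c \in A_{(I,\bbP)},\ \alpha_\bbP(c)=x\}$ is exactly the bookkeeping the paper leaves implicit. The one step you should make explicit is that, in the converse direction, the set $M$ you invoke is indeed reachable --- i.e.\ the run of $\mcP_{(I,\bbP)}$ on $x$ does not get stuck on a proper prefix --- which follows from the assumed dichotomy by induction on $|x|$, using the prefixes of $c$ as witnesses that each successor set meets $I$ and hence cannot lie in $\dsN_0^Q \setminus I$.
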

\begin{example}\label{ex:i-s-compatible} \ \\
Let $\tilde{P}=\{abc\}$, $\tilde\bbP$ as defined in  Fig.~\ref{fig:i-s-compatible-1},
and $\tilde{I}=\{0, 1_{\small II}, 1_{\small III}, 1_{\small II}+1_{\small III} \}$.
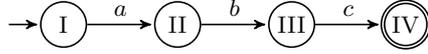
\begin{figure}[h]
\centering
\begin{tikzpicture}
[->,>=stealth',shorten >=1pt,auto,node distance=2cm,semithick,initial text=]
   \node[state,initial,inner sep=1pt,minimum size=6mm] (k1)  {\small I};
   \node[state,inner sep=1pt,minimum size=6mm] (k2) at (1.5,0) {\small II};
   \node[state,inner sep=1pt,minimum size=6mm] (k3) at (3,0) {\small III};
   \node[state,accepting,inner sep=1pt,minimum size=6mm] (k4) at (4.5,0) {\small IV};
   \path 
         (k1) edge node {\small $a$} (k2)
         (k2) edge node {\small $b$} (k3)
         (k3) edge node {\small $c$} (k4)
		;
\end{tikzpicture}
\caption{Automaton $\tilde\bbP$ recognizing $\tilde P$}\label{fig:i-s-compatible-1}
\end{figure}
The partial powerset construction result in the semiautomaton 
$\mcP_{(\tilde{I},\tilde\bbP)}$ of Fig.~\ref{fig:i-s-compatible-2}, which 
fulfills the conditions of Theorem~\ref{thm:i-s-compatible2}. 
Therefore $\tilde{I}$ is compatible with $\tilde\bbP$, which
implies $\SP(\pre(\tilde{P}),L_{(\tilde{I},\tilde\bbP)})$.
\begin{figure}[h]
\centering
\begin{tikzpicture}[->,>=stealth',shorten >=1pt,auto,node distance=1.5cm,semithick,initial text=]
   \node[state,rectangle, rounded corners=8pt,inner sep=4pt,initial] (k1) at (0,0) {\small $\{0\}$};
   \node[state,rectangle, rounded corners=8pt,inner sep=4pt] (k2) at (-1.5,-1.5) {\small $\{1_{\mathrm{II}}\}$};
   \node[state,rectangle, rounded corners=8pt,inner sep=4pt] (k3) at (1.5,-1.5) {\small $\{1_{\mathrm{III}}\}$};
   \node[state,rectangle, rounded corners=8pt,inner sep=4pt] (k4) at (0,-3) {\small $\{1_{\mathrm{II}}+1_{\mathrm{III}}\}$};
   \path 
         (k1) edge  node [swap] {\small $a$} (k2)
         (k2) edge  node {\small $b$} (k3)
         (k3) edge  node [swap] {\small $c$} (k1)
         (k3) edge  node {\small $a$} (k4)
         (k4) edge  node {\small $c$} (k2)
		;
\end{tikzpicture}
\caption{Semiautomaton $\mcP_{(\tilde{I},\tilde\bbP)}$ recognizing $L_{(\tilde{I},\tilde\bbP)}$}\label{fig:i-s-compatible-2}
\end{figure}
\end{example}
It is an immediate consequence of Definition~\ref{def:shuffle-automaton} that
\begin{equation}\label{eq:finite-support}
Z_\bbP(A_\bbP)\subset T(Q) := \{f\in\dsN_0^Q | \{q\in Q | f(q)\neq 0\} \mbox{ is a finite set.} \}
\end{equation}
for each deterministic automaton $\bbP$ with state set $Q$ (not necessarily finite).
\ \\
\ \\
There are special initial sections $I \subset T(Q)$ and automata $\bbP$ with state set $Q$, 
such that compatibility of $I$ with $\bbP$ can be verified easily:
\begin{equation}\label{eq:easy-compatible-1}
\mbox{For } f \in T(Q) \mbox{ let } \lVert{f}\rVert:=\sum\limits_{q\in Q} f(q)\in\dsN_0.
\end{equation}
\begin{equation}\label{eq:easy-compatible-2}
\mbox{For } n\in \dsN_0 \mbox{ let } K(n,Q):= \{f \in T(Q) | \lVert{f}\rVert \le n\},
\end{equation}
which is an initial segment.
\begin{theorem}\label{thm:easy-compatible} \ \\
Let $\Phi$, $\Gamma$, and $\Omega$ be pairwise disjoint sets,
$\emptyset\neq P\subset \Gamma \cup \Phi\Gamma^*\Omega$ and $\bbP$
be a deterministic automaton with state set $Q$ recognizing $P$.
Then $K(n,Q)$ is compatible with $\bbP$ for each $n\in \dsN_0$, and 
therefore  $\SP(\pre(P),L_{(K(n,Q),\bbP)})$.
\end{theorem}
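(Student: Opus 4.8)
The plan is to obtain the statement from Theorem~\ref{thm:i-s-compatible}, so that it remains to show: under the hypothesis $\emptyset\neq P\subseteq\Gamma\cup\Phi\Gamma^*\Omega$, the initial segment $K(n,Q)$ is compatible with $\bbP$ for each $n\in\dsN_0$. The criterion I will use for compatibility is Lemma~\ref{lemma:i-s-compatible}, and the crucial ingredient is a \emph{level lemma}. Write $\lVert f\rVert:=\sum_{q\in Q}f(q)$ as in \eqref{eq:easy-compatible-1} (finite for every state occurring along a path, since $Z_\bbP(A_\bbP)\subseteq T(Q)$ by \eqref{eq:finite-support}), and let $\ell\colon\Phi\cup\Gamma\cup\Omega\to\{-1,0,1\}$ be constantly $1$ on $\Phi$, constantly $0$ on $\Gamma$, and constantly $-1$ on $\Omega$. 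The level lemma asserts: if $c\in A_\bbP$ and $(Z_\bbP(c),a,g)\in\shuffle_\bbP$, then $a\in\Phi\cup\Gamma\cup\Omega$ and $\lVert g\rVert=\lVert Z_\bbP(c)\rVert+\ell(a)$.

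To prove the level lemma I would unwind Definition~\ref{def:shuffle-automaton} (the semiautomaton $\bbP_\shuffle$), the four transition rules of $\hat{\bbP}_\shuffle$, and the bracketed coding $\langle\rangle$ of Definition~\ref{def:sp-95}. Recall $\hat{\shuffle}_\bbP=\tilde{\shuffle}_\bbP\dotcup\mathring{\shuffle}_\bbP\dotcup\bar{\shuffle}_\bbP\dotcup\tilde{\bar{\shuffle}}_\bbP$, and that a $\tilde{\shuffle}_\bbP$-, $\mathring{\shuffle}_\bbP$-, $\bar{\shuffle}_\bbP$-, $\tilde{\bar{\shuffle}}_\bbP$-step changes $\lVert\cdot\rVert$ by $+1$, $0$, $-1$, $0$ respectively; since $\wedge_\bbP$ is strictly alphabetic this passes to $\bbP_\shuffle$. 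By Theorem~\ref{thm:cp^-surj} and Theorem~\ref{thm:def-cp} every path of $A_\bbP$ has the form $c_\bbP(w)$ with $w\in\bigcap_{t\in N}(\hat{\tau}_t^N)^{-1}(\pre(\langle P\rangle))$, and by the inductive definitions of $\hat{c}_\bbP$ and $c_\bbP$ each of its steps is the $\wedge_\bbP$-image of a step $(\hat{n}_\bbP(x),\hat{\Theta}^N(\hat{a}),\hat{n}_\bbP(x\hat{a}))$ of $\hat{\bbP}_\shuffle$ with $x\hat{a}\in\bigcap_{t\in N}(\hat{\tau}_t^N)^{-1}(\pre(\langle P\rangle))$; which of the four transition rules this step obeys is governed by which of the four copies $\tilde{\Sigma},\mathring{\Sigma},\bar{\Sigma},\tilde{\bar{\Sigma}}$ of $\Sigma$ inside $\hat{\Sigma}$ contains $\hat{\Theta}^N(\hat{a})$. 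The hypothesis $P\subseteq\Gamma\cup\Phi\Gamma^*\Omega$ pins these copies down: inspecting $\langle\rangle$, each letter of $\tilde{\Sigma}$ occurring in $\pre(\langle P\rangle)$ is a copy of the first letter of a word of $P$ of length $\ge 2$, hence of a letter of $\Phi$; each letter of $\mathring{\Sigma}$ occurring there is a copy of a non-extremal letter of such a word, hence of a letter of $\Gamma$; each letter of $\bar{\Sigma}$ occurring there is a copy of the last letter of such a word, hence of a letter of $\Omega$; and each letter of $\tilde{\bar{\Sigma}}$ occurring there is a copy of a one-letter word of $P$, hence of a letter of $\Gamma$. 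Since $\Phi$, $\Gamma$, $\Omega$ are pairwise disjoint, the alphabet containing $a=\wedge(\hat{\Theta}^N(\hat{a}))$ determines the transition rule and hence the change of $\lVert\cdot\rVert$ (for $a\in\Gamma$ the two admissible rules both give level change $0$), which is the level lemma. This bookkeeping — matching the four transition rules of $\hat{\bbP}_\shuffle$ with the four marker copies of $\Sigma$ used by $\langle\rangle$, and checking that the rule is forced by the alphabet membership of the letter — is the main obstacle; the rest is purely formal.

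Granting the level lemma, summing $\ell$ along a path gives $\lVert Z_\bbP(c)\rVert=\sum_{i=1}^{m}\ell(a_i)$ for $c\in A_\bbP$ with $\alpha_\bbP(c)=a_1\cdots a_m$ (base case $\lVert 0\rVert=0$); in particular $\alpha_\bbP(c)=\alpha_\bbP(\check{c})$ forces $\lVert Z_\bbP(c)\rVert=\lVert Z_\bbP(\check{c})\rVert$. Consequently, for $c,\check{c}\in A_{(K(n,Q),\bbP)}$ with $\alpha_\bbP(c)=\alpha_\bbP(\check{c})$ and transitions $(Z_\bbP(c),a,f),(Z_\bbP(\check{c}),a,\check{f})\in\shuffle_\bbP$, the level lemma yields $\lVert f\rVert=\lVert Z_\bbP(c)\rVert+\ell(a)=\lVert Z_\bbP(\check{c})\rVert+\ell(a)=\lVert\check{f}\rVert$, so $\lVert f\rVert\le n$ iff $\lVert\check{f}\rVert\le n$, i.e.\ $f\in K(n,Q)$ iff $\check{f}\in K(n,Q)$. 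By Lemma~\ref{lemma:i-s-compatible}, $K(n,Q)$ — a nonempty initial segment since $0\in K(n,Q)$, cf.\ \eqref{eq:easy-compatible-2} — is therefore compatible with $\bbP$, and Theorem~\ref{thm:i-s-compatible} gives $\SP(\pre(P),L_{(K(n,Q),\bbP)})$.
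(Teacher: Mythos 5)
Your proposal is correct and follows essentially the same route as the paper: the paper's proof consists precisely of your level lemma (its equation \eqref{eq:easy-compatible-3}, stating that $a\in\Phi$, $a\in\Gamma$, $a\in\Omega$ force level change $+1$, $0$, $-1$ respectively), the observation that states reached by the same label therefore have equal $\lVert\cdot\rVert$ (its \eqref{eq:easy-compatible-4}, phrased via the powerset semiautomaton $\mcP_{(K(n,Q),\bbP)}$ rather than via Lemma~\ref{lemma:i-s-compatible}), and an appeal to Theorem~\ref{thm:i-s-compatible}. Your write-up is in fact somewhat more careful than the paper's one-line justification of the level lemma, since you spell out how $P\subseteq\Gamma\cup\Phi\Gamma^*\Omega$ and the disjointness of $\Phi,\Gamma,\Omega$ pin each letter's role to one of the four transition classes of $\hat{\bbP}_\shuffle$, and you restrict the claim to transitions leaving reachable states.
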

\begin{proof} \ \\
From Definition~\ref{def:shuffle-automaton} it follows for each $(f,a,g) \in \shuffle_\bbP$
\begin{align}\label{eq:easy-compatible-3}
& a \in \Phi \mbox{ implies } \lVert{g}\rVert = \lVert{f}\rVert +1, \nonumber \\
& a \in \Gamma \mbox{ implies } \lVert{g}\rVert = \lVert{f}\rVert, \mbox{ and} \nonumber \\
& a \in \Omega \mbox{ implies } \lVert{g}\rVert = \lVert{f}\rVert -1.
\end{align}
Therefore
\begin{equation}\label{eq:easy-compatible-4}
f,f' \in M \mbox{ implies } \lVert{f}\rVert = \lVert{f'}\rVert 
\mbox{ for each state } M \mbox{ reachable in } \mcP_{(K(n,Q),\bbP)}.
\end{equation}
Now \eqref{eq:easy-compatible-3} and \eqref{eq:easy-compatible-4} together 
with Theorem~\ref{thm:i-s-compatible} completes the proof.
\end{proof}
\begin{example}\label{ex:easy-compatible} \ \\
Let $\bar\bbP$ and $\bar{P}$ as defined in Figure~\ref{fig:easy-compatible-1}. Then by Theorem~\ref{thm:easy-compatible}
$K(n,\bar{Q})$ is compatible with $\bar\bbP$ for each $n\in \dsN_0$, where $\bar{Q}$ is the state set of $\bar\bbP$,
and it holds $\SP(\pre(\bar{P}),L_{(K(n,\bar{Q}),\bar\bbP)})$ for each $n\in \dsN_0$.
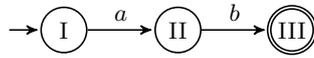
\begin{figure}[h]
\centering
\begin{tikzpicture}[->,>=stealth',shorten >=1pt,auto,node distance=2cm,semithick,initial text=]
   \node[state,initial,inner sep=1pt,minimum size=6mm] (k1)  {\small I};
   \node[state,inner sep=1pt,minimum size=6mm] (k2) at (1.5,0) {\small II};
   \node[state,accepting,inner sep=1pt,minimum size=6mm] (k3) at (3,0) {\small III};
   \node[init] (unsichtbar) [right of =k3,node distance=1.5cm] {};
   \path 
         (k1) edge node {\small $a$} (k2)
         (k2) edge node {\small $b$} (k3)
		;
\end{tikzpicture}
\caption{Automaton $\bar\bbP$ recognizing $\bar{P}:=\{ab\}$}\label{fig:easy-compatible-1}
\end{figure}
\ \\
Figure~\ref{fig:easy-compatible-2} shows the semiautomaton $\mcP_{(K(n,\bar{Q}),\bar\bbP)}$.
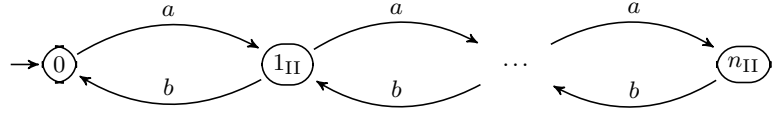
\begin{figure}[h]
\centering
\begin{tikzpicture}[->,>=stealth',shorten >=1pt,auto,node distance=2cm,semithick,initial text=]
   \node[state,initial,rectangle, rounded corners=8pt,inner sep=4pt] (k1) 
   {\small $0$};
   \node[state,rectangle, rounded corners=8pt,inner sep=4pt] (k2) at (3,0) 
   {\small $1_{\mathrm{II}}$};
   \node[state,rectangle, rounded corners=8pt,inner sep=8pt,draw opacity=0] (inv) at (6,0) 
   {\small $\ldots$};
   \node[state,rectangle, rounded corners=8pt,inner sep=4pt] (k3) at (9,0) 
   {\small $n_{\mathrm{II}}$};
   \path 
         (k1) edge  [bend left] node  {\small $a$} (k2)
         (k2) edge  [bend left] node [swap] {\small $b$} (k1)
         (k2) edge  [bend left] node  {\small $a$} (inv)
         (inv) edge  [bend left] node [swap] {\small $b$} (k2)
         (inv) edge  [bend left] node  {\small $a$} (k3)
         (k3) edge  [bend left] node [swap] {\small $b$} (inv)
		;
\end{tikzpicture} 
\caption{Semiautomaton recognizing $L_{(K(n,\bar{Q}),\bar\bbP)}$ for each $n\in \dsN_0$}\label{fig:easy-compatible-2}
\end{figure}
\end{example} 
The following example is a bridge to the next section.
\begin{example}\label{ex:not-i-s-compatible} \ \\
Let $\mathring{P}$ and $\mathring{\bbP}$ as defined in Fig.~\ref{fig:not-i-s-compatible-1}.
\begin{figure}[h]
\centering
\begin{tikzpicture}[->,>=stealth',shorten >=1pt,auto,node distance=2cm,semithick,initial text=]
   \node[state,initial,inner sep=4pt] (k1) 
   {\small $\mathrm{I}$};
   \node[state,inner sep=4pt] (k2) at (3,0) 
   {\small $\mathrm{II}$};
   \node[state,accepting,inner sep=4pt] (k3) at (6,0) 
   {\small $\mathrm{III}$};
   \path 
         (k1) edge  [bend left] node  {\small $a$} (k2)
         (k1) edge  [bend right] node [swap] {\small $b$} (k2)
         (k2) edge  node  {\small $c$} (k3)
		;
\end{tikzpicture} 
\caption{Automaton $\mathring{\bbP}$ recognizing $\mathring{P}$}\label{fig:not-i-s-compatible-1}
\end{figure}
It holds $Z_{\mathring{\bbP}}(A_{\mathring{\bbP}}) = \{0\} \cup \{n_{\mathrm{II}} | n\in \dsN \}$.
Therefore, $ab\in L_{(\mathring{I},\mathring{\bbP})}$ implies $ba\in L_{(\mathring{I},\mathring{\bbP})}$
for each initial segment $\mathring{I}$ compatible with $\mathring{\bbP}$.
\ \\
\begin{figure}[h]
\centering
\begin{tikzpicture}[->,>=stealth',shorten >=1pt,auto,node distance=2cm,semithick,initial text=]
   \node[state,initial,inner sep=4pt] (k1) 
   {\small $1$};
   \node[state,inner sep=4pt] (k2) at (0,-2) 
   {\small $2$};
   \node[state,inner sep=4pt] (k3) at (3,0) 
   {\small $3$};
   \node[state,inner sep=4pt] (k4) at (3,-2) 
   {\small $4$};
   \path 
         (k1) edge  node  {\small $a$} (k2)
         (k1) edge  node [swap] {\small $b$} (k3)
         (k2) edge  node  {\small $b$} (k4)
         (k2) edge  [bend left] node  {\small $c$} (k1)
         (k3) edge  [bend right] node [swap] {\small $c$} (k1)
         (k3) edge  [bend left] node  {\small $b$} (k4)
         (k4) edge  node  {\small $c$} (k3)
		;
\end{tikzpicture} 
\caption{Semiautomaton ${\mathring{\bbV}}$ recognizing ${\mathring{V}}$}\label{fig:not-i-s-compatible-2}
\end{figure}
\ \\
Let the prefix closed language $\mathring{V}$ be defined by the semiautomaton in Fig.~\ref{fig:not-i-s-compatible-2}. 
Because of $ab \in \mathring{V}$ but $ba \notin \mathring{V}$,
$\mathring{V}$ cannot be represented by 
$\mathring{V}= L_{(\mathring{I}_{\mathring{V}},\mathring{\bbP})}$ 
with an initial segment $\mathring{I}_{\mathring{V}}$ compatible
with $\mathring{\bbP}$.
So $\SP(\pre(\mathring{P}),\mathring{V})$ cannot be shown by theorem~\ref{thm:i-s-compatible}. 
But in the next section a method will be developed to prove $\SP(\pre(\mathring{P}),\mathring{V})$.
\end{example}
\section{Representation Theorem}\label{sec:representation theorem}
In this section a representation of $\mcR'_\bbP$ will be developed, which shows certain 
restrictions of $\mcR'_\bbP$ to be rational transductions \cite{berstel79}. More precisely: 
Depending on a subset $\Delta \subset \shuffle_\bbP$, an alphabet $\Delta^{()}$ and a 
prefix closed language $W_\Delta \subset \Delta^{()*}$ will be constructed, which represents 
the function $\mcR'_{\bbP|2^{A_\bbP \cap \Delta^*}}$ in the following manner: \\
\ \\
There exist two alphabetic homomorphisms $\mu_\Delta : \Delta^{()*} \to \Delta^*$ 
and $\nu_\Delta : \Delta^{()*} \to \shuffle_\bbP^*$ such that for each  
$c \in A_\bbP \cap \Delta^*$ it holds,
\begin{equation*}
d \in \mcR'_\bbP(\{c\}) \mbox{ iff there exists } x \in W_\Delta  \mbox{ with } 
c = \mu_\Delta(x) \mbox{ and } d = \nu_\Delta(x),
\end{equation*}
which is equivalent to
\begin{equation}\label{eq:RT-179}
\mcR'_\bbP(B) = \nu_\Delta(\mu_\Delta^{-1}(B) \cap W_\Delta) 
\mbox{ for each } B \subset A_\bbP \cap \Delta^*.
\end{equation}
Additionally, it will be shown that
\begin{equation}\label{eq:RT-180}
W_\Delta \mbox{ is regular if } \Delta \mbox{ is finite.} 
\end{equation}
In that case $\mcR'_{\bbP|2^{A_\bbP \cap \Delta^*}}$ is a rational transduction \cite{berstel79}. \\
\ \\
On account of \eqref{eq:finite-support} it can be assumed
\begin{equation}\label{eq:RT-180.1}
\Delta \subset \shuffle_\bbP \cap (T(Q) \times \Sigma \times T(Q)). 
\end{equation}
The construction of $W_\Delta$ is based on the following idea: Each $x \in W_\Delta$ 
uniquely describes a shuffled representation $b$ of $c \in A_\bbP \cap \Delta^*$ 
by $d\in A_\bbP$ and $e\in E_{\check{\bbP}}$ as defined in \eqref{subeq:shuff-rep}. 
This description is structured into three tracks, respectively one for $c$, $d$, and $e$. 
Additionally the second and third track describe the position of $d$ and $e$ in $b$ 
such that both tracks together represent $b$. 
These three tracks will be formalized by three components of the elements of $\Delta^{()}$. \\
\ \\
By an appropriate definition of $\Delta^{()}$, $W_\Delta$ can be defined as a \emph{local} 
prefix closed language \cite{berstel79}. So $W_\Delta$ will be defined by $\Delta^{()}$, the 
set of initial letters of its words and the set of forbidden adjacencies of letters in its words. 
Generally, local languages with a finite alphabet are regular languages \cite{berstel79}.  
Starting basis for this are the definitions of $A_\bbP \cap \Delta^*$ and $E_{\check{\bbP}}$ 
as local prefix closed languages: \eqref{eq:def A,Z} imply 
\begin{align}\label{eq:local-def-A}
& A_\bbP \cap \Delta^* = \nonumber \\
& (\{ \varepsilon \} \cup \{(f,a,g) \in \Delta | f = 0 \} \Delta^*) 
\setminus \Delta^* \{(f,a,g)(f',a',g') \in \Delta\Delta | g \neq f' \} \Delta^*. 
\end{align}
With $\shuffle_{\check{\bbP}}^E := \{(f,a,g) \in \shuffle_{\check{\bbP}} | f,g \in 
\{ 0 \} \cup \{ 1_q \in \dsN_0^Q | q \in Q \} \} $ \eqref{eq:char-E_P} imply 
\begin{align}\label{eq:local-def-E}
& E_{\check{\bbP}} = 
(\{ \varepsilon \} \cup \{(f,a,g) \in \shuffle_{\check{\bbP}}^E | f = 0 \} \shuffle_{\check{\bbP}}^{E*}) \setminus \nonumber \\
& \shuffle_{\check{\bbP}}^{E*} \{(f,a,g)(f',a',g') \in \shuffle_{\check{\bbP}}^E\shuffle_{\check{\bbP}}^E
| g \neq f' \mbox{ or } g = 0 \} \shuffle_{\check{\bbP}}^{E*}. 
\end{align}
To achieve \eqref{eq:RT-179} and \eqref{eq:RT-180}, $W_\Delta$ has to be defined in such a way, 
that for each $x \in W_\Delta$ the corresponding $c$ and $d$ can be extracted from $x$ by 
alphabetic homomorphisms, and that finiteness of $\Delta$ implies regularity of $W_\Delta$. 
For formalization let
\begin{equation}\label{eq:RT-183}
\Delta^{()} \subset \Delta^{()'} := \Delta^{(1)} \times \Delta^{(2)} \times \Delta^{(3)},
\end{equation}
where $\Delta^{(1)}$ is the alphabet for representing $c \in A_\bbP \cap \Delta^*$, and 
$\Delta^{(2)} \times \Delta^{(3)}$ is the alphabet for representing 
$b \in \pi_{\shuffle_{\bbP}}^{-1}(A_{\bbP}) \cap \pi_{\shuffle_{\check{\bbP}}}^{-1}(E_{\check{\bbP}})$, 
the shuffled representation of $c$. This is possible, since $|c| = |b|$ because of 
\eqref{subeq:shuff-rep-a}.
\ \\
\ \\
In that representation $\Delta^{(2)}$ is the alphabet for representing 
$d = \pi_{\shuffle_{\bbP}}(b) \in A_\bbP$ as well as for describing the 
positioning of $d$ inside $b$. 
$\Delta^{(3)}$ is the alphabet for representing 
$e = \pi_{\shuffle_{\check{\bbP}}}(b) \in E_{\check{\bbP}}$ as well as for describing the 
positioning of $e$ inside $b$.
Additionally it should be noticed that each 
$b \in (\shuffle_{\bbP} \dotcup \shuffle_{\check{\bbP}})^*$ is uniquely determined 
by $\pi_{\shuffle_{\bbP}}(b)$, $\pi_{\shuffle_{\check{\bbP}}}(b)$ and by the information,
which positions of $b$ contain elements of $\shuffle_{\bbP}$ and which positions 
contain elements of $\shuffle_{\check{\bbP}}$. \\
\ \\
As $\Delta^{(1)}$ is the alphabet for representing $c \in A_\bbP \cap \Delta^*$, let  
\begin{equation}\label{eq:RT-184}
\Delta^{(1)} := \Delta,
\end{equation}
and let $\varphi_\Delta^{(1)} : \Delta^{()'*} \to \Delta^{(1)*}$ be the homomorphism defined by
\begin{equation}\label{eq:RT-185}
\varphi_\Delta^{(1)}((x_1,x_2,x_3)) := x_1 \mbox{ for } (x_1,x_2,x_3) \in \Delta^{()'}.
\end{equation}
Then $\varphi_\Delta^{(1)}(x) \in A_\bbP \cap \Delta^*$ should hold for each $x \in W_\Delta$. \\
\ \\
Let now the mappings $\varphi_\Delta^{(1.1)}$, $\varphi_\Delta^{(1.2)}$ and $\varphi_\Delta^{(1.3)}$ be defined by
\begin{equation}\label{eq:RT-186}
\varphi_\Delta^{(1.1)} : \Delta^{(1)} \to \dsN_0^Q \mbox{ with } 
\varphi_\Delta^{(1.1)}((f,a,g)) := f,
\end{equation}
\begin{equation}\label{eq:RT-186'}
\varphi_\Delta^{(1.2)} : \Delta^{(1)} \to \Sigma \mbox{ with } 
\varphi_\Delta^{(1.2)}((f,a,g)) := a,
\end{equation}
and
\begin{equation}\label{eq:RT-187}
\varphi_\Delta^{(1.3)} : \Delta^{(1)} \to \dsN_0^Q \mbox{ with } 
\varphi_\Delta^{(1.3)}((f,a,g)) := g
\end{equation}
for each $(f,a,g) \in \Delta^{(1)}$. \\
\ \\
Then \eqref{eq:local-def-A} becomes
\begin{equation*}
A_\bbP \cap \Delta^* = 
(\{ \varepsilon \} \cup (\varphi_\Delta^{(1.1)})^{-1}(0) \Delta^{(1)*}) 
\setminus \Delta^{(1)*} F^{(1)} \Delta^{(1)*} 
\end{equation*}
with
\begin{equation}\label{eq:RT-187'}
F^{(1)} := \{xy \in \Delta^{(1)}\Delta^{(1)} | \varphi_\Delta^{(1.3)}(x) \neq \varphi_\Delta^{(1.1)}(y) \}.
\end{equation}
Therefore $\varphi_\Delta^{(1)}(W_\Delta) \subset A_\bbP \cap \Delta^*$ if
\begin{equation}\label{eq:RT-188}
\varphi_\Delta^{(1)}(W_\Delta) \subset
(\{ \varepsilon \} \cup (\varphi_\Delta^{(1.1)})^{-1}(0) \Delta^{(1)*}) 
\setminus \Delta^{(1)*} F^{(1)} \Delta^{(1)*}. 
\end{equation}
With two further conditions similar to \eqref{eq:RT-188} and additional 
restrictions of the alphabet $\Delta^{()'}$ the language $W_\Delta$ will be defined. 
But first the sets $\Delta^{(2)}$ and $\Delta^{(3)}$ have to be defined. \\
\ \\
Since the elements of $W_\Delta$ particularly have to represent condition 
\eqref{subeq:shuff-rep-d} let
\begin{equation}\label{eq:RT-190}
S_\Delta^{(1)} := Z_\bbP(A_\bbP \cap \Delta^*),
\end{equation}
\begin{equation}\label{eq:RT-191}
S_\Delta^{(3)} := \{f \in Z_{\check{\bbP}}(E_{\check{\bbP}}) | 
\mbox{ there exists } g \in S_\Delta^{(1)} \mbox{ with } f \le g \},
\end{equation}
and
\begin{equation}\label{eq:RT-192}
S_\Delta^{(2)} := \{f \in Z_\bbP(A_\bbP) | 
\mbox{ there exists } g \in S_\Delta^{(1)} \mbox{ and } h \in S_\Delta^{(3)}  
\mbox{ with } g = f + h \}.
\end{equation}
Now, on account of \eqref{eq:RT-180.1} 
\begin{align}\label{eq:RT-193} 
& \mbox{Finiteness of } \Delta  \mbox{ implies finiteness of } 
  S_\Delta^{(1)}, S_\Delta^{(2)} \mbox{ and of } S_\Delta^{(3)},\nonumber \\ 
& \mbox{which can be effectively determined. }
\end{align}
With
$\Sigma_\Delta := \varphi_\Delta^{(1.2)}(\Delta^{(1)}) = \varphi_\Delta^{(1.2)}(\Delta) \subset \Sigma$ 
\begin{equation}\label{eq:RT-194'} 
\mbox{finiteness of } \Delta  \mbox{ implies finiteness of } \Sigma_\Delta.
\end{equation}
By the definition  
\begin{equation}\label{eq:RT-195} 
\Delta^{(2)'} := \shuffle_\bbP \cap (S_\Delta^{(2)} \times \Sigma_\Delta \times S_\Delta^{(2)})
\end{equation}
holds $d \in A_\bbP \cap \Delta^{(2)'*}$ for $d = \pi_{\shuffle_{\bbP}}(b)$ 
because of \eqref{subeq:shuff-rep-d}. \\
\ \\
Now $S_\Delta^{(2)}$ is used to describe the positioning of $d$ inside $b$. 
Let therefore
\begin{align}\label{eq:RT-196} 
& \Delta^{(2)} := \Delta^{(2)'} \dotcup S_\Delta^{(2)}, \mbox{ which is finite if } \Delta \mbox{ is finite,}\nonumber \\ 
& \mbox{and can be effectively determined. }
\end{align}
Let the homomorphisms $\varphi_\Delta^{(2)} : \Delta^{()'*} \to \Delta^{(2)*}$ and 
$\gamma_\Delta^{(2)} : \Delta^{(2)*} \to \Delta^{(2)'*}$ be defined by
\begin{align}\label{eq:RT-197} 
& \varphi_\Delta^{(2)}((x_1,x_2,x_3)) := x_2 \mbox{ for } (x_1,x_2,x_3) \in \Delta^{()'}, \nonumber \\
& \gamma_\Delta^{(2)}(y) := y \mbox{ for } y \in \Delta^{(2)'} \mbox{ and } \nonumber \\
& \gamma_\Delta^{(2)}(y) := \varepsilon \mbox{ for } y \in S_\Delta^{(2)}.
\end{align}
Now, on account of \eqref{subeq:shuff-rep-c}
$\gamma_\Delta^{(2)}(\varphi_\Delta^{(2)}(x)) \in A_\bbP \cap \Delta^{(2)'*}$ 
should hold for each $x \in W_\Delta$. \\
\ \\
With the mappings $\varphi_\Delta^{(2.1)} : \Delta^{(2)} \to S_\Delta^{(2)}$ and 
$\varphi_\Delta^{(2.3)} : \Delta^{(2)} \to S_\Delta^{(2)}$ defined by
\begin{align}\label{eq:RT-198}
& \varphi_\Delta^{(2.1)}((f,a,g)) := f \mbox{ and } \varphi_\Delta^{(2.3)}((f,a,g)) := g 
\mbox{ for } (f,a,g) \in \Delta^{(2)'} \mbox{ and } \nonumber \\ 
& \varphi_\Delta^{(2.1)}(f) := \varphi_\Delta^{(2.3)}(f) := f \mbox{ for } f \in S_\Delta^{(2)}
\end{align}
it holds $\gamma_\Delta^{(2)}(\varphi_\Delta^{(2)}(W_\Delta)) \subset A_\bbP \cap \Delta^{(2)'*}$ if
\begin{align}\label{eq:RT-199}
& \varphi_\Delta^{(2)}(W_\Delta) \subset 
(\{ \varepsilon \} \cup (\varphi_\Delta^{(2.1)})^{-1}(0) \Delta^{(2)*}) 
\setminus \Delta^{(2)*} F^{(2)} \Delta^{(2)*} \mbox{ where } \nonumber \\ 
& F^{(2)} := \{xy \in \Delta^{(2)}\Delta^{(2)} | \varphi_\Delta^{(2.3)}(x) \neq \varphi_\Delta^{(2.1)}(y) \}.
\end{align}
Let the mapping $Z_\Delta^{(2)} : \Delta^{(2)*} \to S_\Delta^{(2)}$ be defined by
\begin{equation}\label{eq:RT-199'}
Z_\Delta^{(2)}(\varepsilon) := 0, \mbox{ and } Z_\Delta^{(2)}(uv) := \varphi_\Delta^{(2.3)}(v)
\mbox{ for } u \in \Delta^{(2)*} \mbox{ and } v \in \Delta^{(2)}.
\end{equation}
Then \eqref{eq:RT-199} implies
\begin{equation}\label{eq:RT-199''}
Z_\Delta^{(2)}(\varphi_\Delta^{(2)}(x)) = Z_\bbP(\gamma_\Delta^{(2)}(\varphi_\Delta^{(2)}(x)))
\mbox{ for each } x \in W_\Delta.
\end{equation}
Now, the definitions concerning $\Delta^{(3)}$ are similar to those concerning $\Delta^{(2)}$. 
But additionally it must be pointed out that \[E_{\check{\bbP}} \subset \shuffle_{\check{\bbP}}^{E*} \setminus
(\shuffle_{\check{\bbP}}^{E*} \{(f,a,g)(f',a',g') \in \shuffle_{\check{\bbP}}^E\shuffle_{\check{\bbP}}^E
| g = 0 \} \shuffle_{\check{\bbP}}^{E*}).\] 
Therefore we use an additional letter 
$\check{0} \notin \shuffle_{\check{\bbP}}^E \dotcup S_\Delta^{(3)}$ 
to define the content of the third track by a prefix closed local language 
such that \[\varphi_\Delta^{(3)}(W_\Delta) \subset 
\pre((S_\Delta^{(3)} \cup \{(f,a,g) \in \shuffle_{\check{\bbP}}^E | g \neq 0 \})^*
\{(f,a,g) \in \shuffle_{\check{\bbP}}^E | g = 0 \} \{ \check{0} \}^*).\] 
So let
\begin{align}\label{eq:RT-200} 
& \Delta^{(3)'} := \shuffle_{\check{\bbP}}^E \cap (S_\Delta^{(3)} \times \check{\iota}^{-1}(\Sigma_\Delta) \times S_\Delta^{(3)})
  \mbox{ and } 
  \Delta^{(3)} := \Delta^{(3)'} \dotcup S_\Delta^{(3)} \dotcup \{ \check{0} \},\nonumber \\ 
& \mbox{which are finite and can be effectively determined, if } \Delta \mbox{ is finite.}
\end{align}
By this definition of $\Delta^{(3)'}$ holds $e \in E_{\check{\bbP}} \cap \Delta^{(3)'*}$ 
for $e = \pi_{\shuffle_{\check{\bbP}}}(b)$ 
because of \eqref{subeq:shuff-rep-d}. 
$S_\Delta^{(3)} \dotcup \{ \check{0} \}$ is used to describe the positioning of $e$ inside $b$. \\
\ \\
Let the homomorphisms $\varphi_\Delta^{(3)} : \Delta^{()'*} \to \Delta^{(3)*}$ and 
$\gamma_\Delta^{(3)} : \Delta^{()*} \to \Delta^{(3)'*}$ be defined by
\begin{align}\label{eq:RT-202} 
& \varphi_\Delta^{(3)}((x_1,x_2,x_3)) := x_3 \mbox{ for } (x_1,x_2,x_3) \in \Delta^{()'}, \nonumber \\
& \gamma_\Delta^{(3)}(y) := y \mbox{ for } y \in \Delta^{(3)'} \mbox{ and } \nonumber \\
& \gamma_\Delta^{(3)}(y) := \varepsilon \mbox{ for } y \in S_\Delta^{(3)} \dotcup \{ \check{0} \}.
\end{align}
Now, on account of \eqref{subeq:shuff-rep-b}
$\gamma_\Delta^{(3)}(\varphi_\Delta^{(3)}(x)) \in E_{\check{\bbP}} \cap \Delta^{(3)'*}$ 
should hold for each $x \in W_\Delta$. \\
\ \\
With the mappings $\varphi_\Delta^{(3.1)} : \Delta^{(3)} \to S_\Delta^{(3)}$ and 
$\varphi_\Delta^{(3.3)} : \Delta^{(3)} \to S_\Delta^{(3)}$ defined by
\begin{align}\label{eq:RT-205}
& \varphi_\Delta^{(3.1)}((f,a,g)) := f \mbox{ and } \varphi_\Delta^{(3.3)}((f,a,g)) := g 
\mbox{ for } (f,a,g) \in \Delta^{(3)'}, \nonumber \\ 
& \varphi_\Delta^{(3.1)}(f) := \varphi_\Delta^{(3.3)}(f) := f \mbox{ for } f \in S_\Delta^{(3)} \mbox{ and } \nonumber \\
& \varphi_\Delta^{(3.1)}(\check{0}) := \varphi_\Delta^{(3.3)}(\check{0}) := 0 .
\end{align}
it holds $\gamma_\Delta^{(3)}(\varphi_\Delta^{(3)}(W_\Delta)) \subset E_{\check{\bbP}} \cap \Delta^{(3)'*}$ if
\begin{align}\label{eq:RT-206}
& \varphi_\Delta^{(3)}(W_\Delta) \subset 
(\{ \varepsilon \} \cup ((\varphi_\Delta^{(3.1)})^{-1}(0) \setminus \{ \check{0} \}) \Delta^{(3)*}) \  
\setminus \ \Delta^{(3)*} F^{(3)} \Delta^{(3)*} \mbox{ where } \nonumber \\ 
& F^{(3)} := \{xy \in \Delta^{(3)}\Delta^{(3)} | \varphi_\Delta^{(3.3)}(x) \neq \varphi_\Delta^{(3.1)}(y) \} \  \cup \nonumber \\
& ((\Delta^{(3)'} \cap (\varphi_\Delta^{(3.3)})^{-1}(0))\cup \{ \check{0} \})
  (\Delta^{(3)} \setminus \{ \check{0} \}) \  \cup \nonumber \\
& (\Delta^{(3)} \setminus ((\Delta^{(3)'} \cap (\varphi_\Delta^{(3.3)})^{-1}(0))\cup \{ \check{0} \})) \{ \check{0} \}.
\end{align}
Let the mapping $Z_\Delta^{(3)} : \Delta^{(3)*} \to S_\Delta^{(3)}$ be defined by
\begin{equation}\label{eq:RT-206'}
Z_\Delta^{(3)}(\varepsilon) := 0, \mbox{ and } Z_\Delta^{(3)}(uv) := \varphi_\Delta^{(3.3)}(v)
\mbox{ for } u \in \Delta^{(3)*} \mbox{ and } v \in \Delta^{(3)}.
\end{equation}
Then \eqref{eq:RT-206} implies
\begin{equation}\label{eq:RT-206''}
Z_\Delta^{(3)}(\varphi_\Delta^{(3)}(x)) = Z_\bbP(\gamma_\Delta^{(3)}(\varphi_\Delta^{(3)}(x)))
\mbox{ for each } x \in W_\Delta.
\end{equation}
\ \\
Now the conditions \eqref{subeq:shuff-rep-a} and \eqref{subeq:shuff-rep-d} 
imply restrictions of the set $\Delta^{()'}$, which finally define the alphabet 
\begin{equation*}
\Delta^{()} \subset \Delta^{()'} = \Delta^{(1)} \times \Delta^{(2)} \times \Delta^{(3)} 
= \Delta \times (\Delta^{(2)'} \dotcup S_\Delta^{(2)}) \times 
(\Delta^{(3)'} \dotcup S_\Delta^{(3)} \dotcup \{ \check{0} \}).
\end{equation*}	
For that purpose let the mappings $\varphi_\Delta^{(2.2)} : \Delta^{(2)'} \to \Sigma$ 
and $\varphi_\Delta^{(3.2)} : \Delta^{(3)'} \to \check{\Sigma}$ be defined by
\begin{equation}\label{eq:RT-206'''}
\varphi_\Delta^{(i.2)}((f,a,g)) := a \mbox{ for } (f,a,g) \in \Delta^{(i)'} 
\mbox{ with } i \in \{ 2,3 \}.
\end{equation}
As the second and third track together represent a shuffled representation, 
\eqref{subeq:shuff-rep-a} requires
\begin{align}\label{eq:RT-207}
& \mbox{either } x_2 \in \Delta^{(2)'}, \ x_3 \in S_\Delta^{(3)} \dotcup \{ \check{0} \} 
  \mbox{ and } \varphi_\Delta^{(1.2)}(x_1) = \varphi_\Delta^{(2.2)}(x_2) \nonumber \\
& \mbox{or } x_2 \in S_\Delta^{(2)}, \ x_3 \in \Delta^{(3)'} 
  \mbox{ and } \varphi_\Delta^{(1.2)}(x_1) = \check{\iota}(\varphi_\Delta^{(3.2)}(x_3)) \nonumber \\
& \mbox{for each } (x_1,x_2,x_3) \in \Delta^{()}.
\end{align}
Additionally \eqref{subeq:shuff-rep-d} requires
\begin{align}\label{eq:RT-208}
& \varphi_\Delta^{(1.1)}(x_1) = \varphi_\Delta^{(2.1)}(x_2) + \varphi_\Delta^{(3.1)}(x_3) 
  \mbox{ and } \nonumber \\
& \varphi_\Delta^{(1.3)}(x_1) = \varphi_\Delta^{(2.3)}(x_2) + \varphi_\Delta^{(3.3)}(x_3) 
  \mbox{ for each } (x_1,x_2,x_3) \in \Delta^{()}.
\end{align}
Let therefore
\begin{equation}\label{eq:RT-209}
\Delta^{()} := \{(x_1,x_2,x_3) \in \Delta^{()'} | \mbox{ it holds } \eqref{eq:RT-207} 
\mbox{ and } \eqref{eq:RT-208} \},
\end{equation}
which is finite and can be effectively determined, if $\Delta$ is finite. \\
\ \\
Combining \eqref{eq:RT-209} with \eqref{eq:RT-188}, \eqref{eq:RT-199} and \eqref{eq:RT-206} 
result in
\begin{definition}\label{def:RT-210} 
Let $\Delta \subset \shuffle_\bbP$, then
\begin{align*}
W_\Delta := & \Delta^{()*} \ \cap \\
& (\varphi_\Delta^{(1)})^{-1}[ 
(\{ \varepsilon \} \cup (\varphi_\Delta^{(1.1)})^{-1}(0) \Delta^{(1)*}) \  
\setminus \ \Delta^{(1)*} F^{(1)} \Delta^{(1)*}] \ \cap \\
& (\varphi_\Delta^{(2)})^{-1}[ 
(\{ \varepsilon \} \cup (\varphi_\Delta^{(2.1)})^{-1}(0) \Delta^{(2)*}) \  
\setminus \ \Delta^{(2)*} F^{(2)} \Delta^{(2)*}] \ \cap \\
& (\varphi_\Delta^{(3)})^{-1}[ 
(\{ \varepsilon \} \cup ((\varphi_\Delta^{(3.1)})^{-1}(0) \setminus \{ \check{0} \}) \Delta^{(3)*}) \  
\setminus \ \Delta^{(3)*} F^{(3)} \Delta^{(3)*}]. 
\end{align*}
\end{definition}
By the well known closure properties of the class of regular languages \cite{berstel79} 
this representation shows that $W_\Delta$ is regular, if $\Delta$ is finite, 
and it is a prefix closed local language, because of 
$\varphi_\Delta^{(i)}( \Delta^{()'}) \subset \Delta^{(i)}$ for each $i \in \{ 1,2,3 \}$. \\
\ \\
To show that $W_\Delta$ represents 
the function $\mcR'_{\bbP|2^{A_\bbP \cap \Delta^*}}$, we need an additional homomorphism 
$\eta_\Delta : \Delta^{()*} \to (\shuffle_{\bbP} \dotcup \shuffle_{\check{\bbP}})^*$, 
defined by
\begin{align}\label{eq:RT-210.1}
& \eta_\Delta((x_1,x_2,x_3)) := x_2 \mbox{ for } (x_1,x_2,x_3) \in \Delta^{()} 
  \mbox{ with } x_2 \in \shuffle_{\bbP} \nonumber \\
& \mbox{and } \nonumber \\
& \eta_\Delta((x_1,x_2,x_3)) := x_3 \mbox{ for } (x_1,x_2,x_3) \in \Delta^{()} 
  \mbox{ with } x_3 \in \shuffle_{\check{\bbP}}.
\end{align}
By \eqref{eq:RT-210.1} $\eta_\Delta$ is well defined, because 
\[\Delta^{()} = \{(x_1,x_2,x_3) \in \Delta^{()} | x_2 \in \shuffle_{\bbP} \} 
\dotcup \{(x_1,x_2,x_3) \in \Delta^{()} | x_3 \in \shuffle_{\check{\bbP}} \} \]
on account of \eqref{eq:RT-207}. \\
\ \\
\eqref{eq:RT-199} and \eqref{eq:RT-206} imply $\eta_\Delta(W_\Delta) \subset 
\pi_{\shuffle_{\bbP}}^{-1}(A_{\bbP}) \cap \pi_{\shuffle_{\check{\bbP}}}^{-1}(E_{\check{\bbP}})$. 
With a standard induction technique for prefix closed local languages it follows
\begin{lemma} \label{lemma:RT-1}\ \\ 
Let $x \in W_\Delta$, then $\eta_\Delta(x) \in 
\pi_{\shuffle_{\bbP}}^{-1}(A_{\bbP}) \cap \pi_{\shuffle_{\check{\bbP}}}^{-1}(E_{\check{\bbP}})$ 
is a shuffled representation of $\varphi_\Delta^{(1)}(x) \in A_\bbP$ by 
$\gamma_\Delta^{(2)}(\varphi_\Delta^{(2)}(x)) \in A_\bbP$ and .
$\gamma_\Delta^{(3)}(\varphi_\Delta^{(3)}(x)) \in E_{\check{\bbP}}$.
\end{lemma}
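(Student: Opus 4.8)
The plan is to verify the four defining conditions \eqref{subeq:shuff-rep-a}--\eqref{subeq:shuff-rep-d} of a shuffled representation for $b := \eta_\Delta(x)$, with $c := \varphi_\Delta^{(1)}(x)$, $d := \gamma_\Delta^{(2)}(\varphi_\Delta^{(2)}(x))$, and $e := \gamma_\Delta^{(3)}(\varphi_\Delta^{(3)}(x))$. First I would record the membership facts that have already been established in the surrounding text: $c \in A_\bbP \cap \Delta^*$ follows from the first intersection clause of Definition~\ref{def:RT-210} together with \eqref{eq:RT-188}/\eqref{eq:local-def-A}; likewise $d \in A_\bbP \cap \Delta^{(2)'*}$ from \eqref{eq:RT-199} via $\gamma_\Delta^{(2)}$, and $e \in E_{\check{\bbP}} \cap \Delta^{(3)'*}$ from \eqref{eq:RT-206} via $\gamma_\Delta^{(3)}$; and $b \in \pi_{\shuffle_{\bbP}}^{-1}(A_\bbP) \cap \pi_{\shuffle_{\check{\bbP}}}^{-1}(E_{\check{\bbP}})$ from the remark immediately preceding the lemma. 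So the core of the proof is the compatibility of these four objects, not their individual well-definedness.

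Next I would dispatch conditions \eqref{subeq:shuff-rep-b} and \eqref{subeq:shuff-rep-c}. By \eqref{eq:RT-207} every letter of $\Delta^{()}$ has either its second component in $\shuffle_\bbP$ (and third in $S_\Delta^{(3)}\dotcup\{\check 0\}$) or its second component in $S_\Delta^{(2)}$ (and third in $\shuffle_{\check\bbP}$); comparing the definition of $\eta_\Delta$ in \eqref{eq:RT-210.1} with the definitions of $\gamma_\Delta^{(2)},\gamma_\Delta^{(3)},\pi_{\shuffle_\bbP},\pi_{\shuffle_{\check\bbP}}$ and with \eqref{eq:RT-197}, \eqref{eq:RT-202}, one sees letter by letter that $\pi_{\shuffle_\bbP}(\eta_\Delta(x)) = \gamma_\Delta^{(2)}(\varphi_\Delta^{(2)}(x)) = d$ and $\pi_{\shuffle_{\check\bbP}}(\eta_\Delta(x)) = \gamma_\Delta^{(3)}(\varphi_\Delta^{(3)}(x)) = e$. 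This is a routine structural matching once the case split \eqref{eq:RT-207} is in place. Condition \eqref{subeq:shuff-rep-a}, $\alpha_\bbP(c) = \beta_\bbP(b)$, also reduces to a letterwise check: on each letter $(x_1,x_2,x_3)$, $\alpha_\bbP$ reads off $\varphi_\Delta^{(1.2)}(x_1)$, while $\beta_\bbP(\eta_\Delta(x))$ reads off either $\varphi_\Delta^{(2.2)}(x_2)$ (when $x_2\in\shuffle_\bbP$) or $\check\iota(\varphi_\Delta^{(3.2)}(x_3))$ (when $x_3\in\shuffle_{\check\bbP}$), and in both cases \eqref{eq:RT-207} forces equality with $\varphi_\Delta^{(1.2)}(x_1)$. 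The matching of $\beta_\bbP$ on the $\shuffle_{\check\bbP}$-letters uses precisely the $\check\iota(\cdot)$ in \eqref{eq:beta_P}.

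The main obstacle is condition \eqref{subeq:shuff-rep-d}, the state-additivity along all prefixes: for every $c'\in\pre(c)$ and the matching $b'\in\pre(b)$ with $|b'|=|c'|$ one needs $Z_\bbP(c') = Z_{\check\bbP}(\pi_{\shuffle_{\check\bbP}}(b')) + Z_\bbP(\pi_{\shuffle_\bbP}(b'))$. This is exactly where the ``standard induction technique for prefix closed local languages'' enters: I would induct on $|c'|$. For the step, write the new last letter as $(x_1,x_2,x_3)\in\Delta^{()}$. The left side changes to $\varphi_\Delta^{(1.3)}(x_1)$ by \eqref{eq:def A,Z} applied to the local description of $A_\bbP\cap\Delta^*$ (using that the adjacency constraint $F^{(1)}$ of \eqref{eq:RT-187'} guarantees the transition actually occurs in $\shuffle_\bbP$). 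The right side changes to $Z_\Delta^{(2)}(\varphi_\Delta^{(2)}(\cdot)) + Z_\Delta^{(3)}(\varphi_\Delta^{(3)}(\cdot))$ evaluated at the extended prefix, which by \eqref{eq:RT-199''} and \eqref{eq:RT-206''} equals $Z_\bbP(\pi_{\shuffle_\bbP}(b')) + Z_{\check\bbP}(\pi_{\shuffle_{\check\bbP}}(b'))$, and by the definitions \eqref{eq:RT-199'}, \eqref{eq:RT-206'} this last state is $\varphi_\Delta^{(2.3)}(x_2) + \varphi_\Delta^{(3.3)}(x_3)$. Now \eqref{eq:RT-208} is exactly the identity $\varphi_\Delta^{(1.3)}(x_1) = \varphi_\Delta^{(2.3)}(x_2) + \varphi_\Delta^{(3.3)}(x_3)$ needed to close the induction; the base case $c'=\varepsilon$ gives $0 = 0 + 0$ since all three $Z$-functions send $\varepsilon$ to $0$. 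The one subtlety to handle carefully is that $Z_\Delta^{(2)}$, $Z_\Delta^{(3)}$ track the third-component states of the letters of $x$ directly, whereas $Z_\bbP\circ\pi_{\shuffle_\bbP}$ and $Z_{\check\bbP}\circ\pi_{\shuffle_{\check\bbP}}$ track states of the projected words $d$ and $e$; the identities \eqref{eq:RT-199''} and \eqref{eq:RT-206''}, which themselves come from the adjacency constraints $F^{(2)}$, $F^{(3)}$ in \eqref{eq:RT-199}, \eqref{eq:RT-206}, are precisely what bridge this gap, and I would invoke them rather than re-prove them.
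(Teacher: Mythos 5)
Your proof is correct and follows the same route the paper intends: the paper disposes of this lemma in one line by invoking a ``standard induction technique for prefix closed local languages,'' and your argument is precisely the expansion of that induction, with the letterwise verification of \eqref{subeq:shuff-rep-a}--\eqref{subeq:shuff-rep-c} via \eqref{eq:RT-207} and the prefix induction for \eqref{subeq:shuff-rep-d} closed by \eqref{eq:RT-208} together with \eqref{eq:RT-199''} and \eqref{eq:RT-206''}. Nothing is missing.
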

To show the reverse of Lemma~\ref{lemma:RT-1}, the following observation is helpful:
\begin{lemma} \label{lemma:RT-2}\ \\ 
Let $b',x \in (\shuffle_{\bbP} \dotcup \shuffle_{\check{\bbP}})^*$ and $b = b'x \in 
\pi_{\shuffle_{\bbP}}^{-1}(A_{\bbP}) \cap \pi_{\shuffle_{\check{\bbP}}}^{-1}(E_{\check{\bbP}})$ 
be a shuffled representation of  
$c \in A_\bbP$ by $d = \pi_{\shuffle_{\bbP}}(b) \in A_\bbP$ and 
$e = \pi_{\shuffle_{\check{\bbP}}}(b) \in E_{\check{\bbP}}$, then 
$b'$ is a shuffled representation of $c' \in \pre(c)$ with $|c'| = |b'|$ 
by $d' \in \pre(d)$ and $e' \in \pre(e)$ with $|d'| = |\pi_{\shuffle_{\bbP}}(b')|$ 
and $|e'| = |\pi_{\shuffle_{\check{\bbP}}}(b')|$.
\end{lemma}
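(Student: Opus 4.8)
The plan is to choose the evident candidates for $c'$, $d'$, $e'$ and then check the four defining clauses \eqref{subeq:shuff-rep-a}--\eqref{subeq:shuff-rep-d} of a shuffled representation, each of which follows simply by restricting the corresponding clause for $b$ to the prefix $b'$. In other words, the lemma is a prefix-monotonicity statement and nothing more.

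First I would settle the length bookkeeping. Since $\beta_\bbP$ sends each letter of $\shuffle_\bbP \dotcup \shuffle_{\check{\bbP}}$ to a single letter, $|\beta_\bbP(b)| = |b|$; since $\alpha_\bbP$ appends exactly one letter per transition, $|\alpha_\bbP(c)| = |c|$. Together with clause \eqref{subeq:shuff-rep-a} for $b$ this gives $|c| = |b| \ge |b'|$, so there is a (unique) prefix $c' \in \pre(c)$ with $|c'| = |b'|$, and $c' \in A_\bbP$ because $A_\bbP$ is prefix closed. Put $d' := \pi_{\shuffle_{\bbP}}(b')$ and $e' := \pi_{\shuffle_{\check{\bbP}}}(b')$. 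As $\pi_{\shuffle_{\bbP}}$ and $\pi_{\shuffle_{\check{\bbP}}}$ are homomorphisms and $b' \in \pre(b)$, we get $d' \in \pre(d)$ and $e' \in \pre(e)$; moreover $b' \in \pi_{\shuffle_{\bbP}}^{-1}(A_\bbP) \cap \pi_{\shuffle_{\check{\bbP}}}^{-1}(E_{\check{\bbP}})$ since that language is prefix closed, which also re-derives $d' \in A_\bbP$ and $e' \in E_{\check{\bbP}}$. The size conditions $|d'| = |\pi_{\shuffle_{\bbP}}(b')|$ and $|e'| = |\pi_{\shuffle_{\check{\bbP}}}(b')|$ hold by the very choice of $d'$ and $e'$.

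Next I would verify the four clauses for $b'$, $c'$, $d'$, $e'$. Clauses \eqref{subeq:shuff-rep-b} and \eqref{subeq:shuff-rep-c} are just the definitions of $e'$ and $d'$. For \eqref{subeq:shuff-rep-a}, write $\beta_\bbP(b) = \beta_\bbP(b')\,\beta_\bbP(x)$, so $\beta_\bbP(b')$ is the length-$|b'|$ prefix of $\beta_\bbP(b) = \alpha_\bbP(c)$; on the other hand $\alpha_\bbP(c')$ is the length-$|c'|$ prefix of $\alpha_\bbP(c)$; since $|c'| = |b'|$ these coincide, giving $\alpha_\bbP(c') = \beta_\bbP(b')$. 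For \eqref{subeq:shuff-rep-d}, take any $c'' \in \pre(c')$ and the corresponding $b'' \in \pre(b')$ with $|b''| = |c''|$; then $c'' \in \pre(c)$ and $b''$ is the unique length-$|c''|$ prefix of $b$, so clause \eqref{subeq:shuff-rep-d} for $b$ yields exactly $Z_\bbP(c'') = Z_{\check{\bbP}}(\pi_{\shuffle_{\check{\bbP}}}(b'')) + Z_{\bbP}(\pi_{\shuffle_{\bbP}}(b''))$, as required. This finishes the argument.

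There is no genuine obstacle here: every clause of a shuffled representation is inherited by prefixes, and the verification is mechanical once the candidates are fixed. The only point demanding a little care is the first paragraph, namely extracting $|c| = |b|$ from \eqref{subeq:shuff-rep-a} together with the length-preserving behaviour of $\beta_\bbP$ and $\alpha_\bbP$, so that the prefix $c'$ of the correct length is guaranteed to exist inside $\pre(c)$; everything after that is routine.
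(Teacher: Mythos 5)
Your proof is correct. The paper itself gives no proof of Lemma~\ref{lemma:RT-2} — it is introduced merely as a helpful ``observation'' — and your argument (fix $c'$, $d'$, $e'$ as the obvious prefixes, then check that each of the four clauses \eqref{subeq:shuff-rep-a}--\eqref{subeq:shuff-rep-d} restricts to the prefix $b'$, with the only non-trivial point being the extraction of $|c|=|b|$ from clause \eqref{subeq:shuff-rep-a} and the strictly alphabetic nature of $\alpha_\bbP$ and $\beta_\bbP$) is exactly the routine prefix-monotonicity verification the authors implicitly assume.
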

Using Lemma~\ref{lemma:RT-2} with $|x| = 1$, standard induction technique shows  
\begin{lemma} \label{lemma:RT-3}\ \\ 
Let $b \in 
\pi_{\shuffle_{\bbP}}^{-1}(A_{\bbP}) \cap \pi_{\shuffle_{\check{\bbP}}}^{-1}(E_{\check{\bbP}})$ 
be a shuffled representation of  
$c \in A_\bbP$ by $d = \pi_{\shuffle_{\bbP}}(b) \in A_\bbP$ and 
$e = \pi_{\shuffle_{\check{\bbP}}}(b) \in E_{\check{\bbP}}$, then 
there exists $x \in W_\Delta$ such that $b = \eta_\Delta(x)$, 
$c = \varphi_\Delta^{(1)}(x)$, $d = \gamma_\Delta^{(2)}(\varphi_\Delta^{(2)}(x))$ 
and $e = \gamma_\Delta^{(3)}(\varphi_\Delta^{(3)}(x))$.
\end{lemma}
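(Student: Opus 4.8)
The plan is to prove Lemma~\ref{lemma:RT-3} by induction on $|b|$, exploiting that $W_\Delta$ is a prefix closed \emph{local} language (Definition~\ref{def:RT-210}): once a word lies in $W_\Delta$, adjoining a letter $\xi\in\Delta^{()}$ keeps it in $W_\Delta$ as soon as the new adjacency is forbidden in none of the three tracks (and, if the word was empty, $\xi$ is an admissible initial letter). Lemma~\ref{lemma:RT-2} supplies the tool for the step: chopping the last letter off a shuffled representation again yields a shuffled representation. For the base case $b=\varepsilon$ one has $c=d=e=\varepsilon$, and $x=\varepsilon\in W_\Delta$ satisfies all four identities trivially; note also $|c|=|b|$ in general, since $\alpha_\bbP$ and $\beta_\bbP$ are length preserving and $\alpha_\bbP(c)=\beta_\bbP(b)$ by \eqref{subeq:shuff-rep-a}.

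For the induction step I would write $b=b'\hat b$ with $\hat b\in\shuffle_\bbP\dotcup\shuffle_{\check\bbP}$ and $c=c'\hat c$, where $\hat c\in\shuffle_\bbP$ is the last letter of $c\in A_\bbP$ and $c'$ the prefix of $c$ of length $|c|-1=|b'|$. By Lemma~\ref{lemma:RT-2}, $b'$ is a shuffled representation of $c'$ by $d':=\pi_{\shuffle_\bbP}(b')\in\pre(d)$ and $e':=\pi_{\shuffle_{\check\bbP}}(b')\in\pre(e)$, so the induction hypothesis furnishes $x'\in W_\Delta$ with $b'=\eta_\Delta(x')$, $c'=\varphi_\Delta^{(1)}(x')$, $d'=\gamma_\Delta^{(2)}(\varphi_\Delta^{(2)}(x'))$, $e'=\gamma_\Delta^{(3)}(\varphi_\Delta^{(3)}(x'))$. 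I then set $\xi=(\xi_1,\xi_2,\xi_3)$ with $\xi_1:=\hat c\in\Delta^{(1)}=\Delta$, and distinguish the two cases: if $\hat b\in\shuffle_\bbP$ (so $d=d'\hat b$, $e=e'$) put $\xi_2:=\hat b\in\Delta^{(2)'}$ and let $\xi_3\in S_\Delta^{(3)}\dotcup\{\check 0\}$ be the current $\check\bbP$-state $Z_{\check\bbP}(e)$, replaced by the padding letter $\check 0$ exactly when the third track has already recorded the completion of $e$; if $\hat b\in\shuffle_{\check\bbP}$ (so $d=d'$, $e=e'\hat b$) put $\xi_3:=\hat b\in\Delta^{(3)'}$ and $\xi_2:=Z_\bbP(d)\in S_\Delta^{(2)}$. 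In both cases $\eta_\Delta(\xi)=\hat b$ by \eqref{eq:RT-210.1}, while $\gamma_\Delta^{(2)}(\varphi_\Delta^{(2)}(\xi))$ is $\hat b$ resp.\ $\varepsilon$ and $\gamma_\Delta^{(3)}(\varphi_\Delta^{(3)}(\xi))$ is $\varepsilon$ resp.\ $\hat b$, so once $x:=x'\xi$ is shown to lie in $W_\Delta$ the four identities for $x$ follow at once from those for $x'$.

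It remains to verify $x'\xi\in W_\Delta$. That $\xi\in\Delta^{()}$ uses the shuffled-representation conditions directly: \eqref{subeq:shuff-rep-a} identifies the label of $\hat b$ with the last letter of $\alpha_\bbP(c)=\beta_\bbP(b)$, which is the track-alignment requirement \eqref{eq:RT-207}; applying \eqref{subeq:shuff-rep-d} to the two longest prefixes of $b$ gives $Z_\bbP(c')=Z_\bbP(d')+Z_{\check\bbP}(e')$ and its one-step successor, i.e.\ the additivity \eqref{eq:RT-208} for $(\xi_1,\xi_2,\xi_3)$, from which one also reads $Z_\bbP(d')\in S_\Delta^{(2)}$ and $Z_{\check\bbP}(e')\in S_\Delta^{(3)}$ via \eqref{eq:RT-190}--\eqref{eq:RT-192}; hence $\xi\in\Delta^{()}$ by \eqref{eq:RT-209}. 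That appending $\xi$ respects the three local constraints of Definition~\ref{def:RT-210} — the adjacency of the last letter of $x'$ with $\xi$ avoids $F^{(1)},F^{(2)},F^{(3)}$, and $\xi$ is an admissible first letter when $x'=\varepsilon$ — is checked track by track from $c,d\in A_\bbP$ and $e\in E_{\check\bbP}$ using the local characterizations \eqref{eq:local-def-A}, \eqref{eq:local-def-E}, the definitions of $\gamma_\Delta^{(2)},\gamma_\Delta^{(3)}$, and \eqref{eq:RT-199''}, \eqref{eq:RT-206''}; the $\check 0$-clauses of $F^{(3)}$ are matched because, by the structure of $E_{\check\bbP}$ recorded in \eqref{eq:char-E_P}, a completed elementary computation admits no extension in $E_{\check\bbP}$, so $\check 0$ occupies exactly the positions of $b$ after the last $\shuffle_{\check\bbP}$-letter once $e$ has reached state $0$.

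The main obstacle is precisely this induction-step bookkeeping: one must confirm simultaneously that the single letter $\xi$ reconstructed from the local data of $b$ meets \eqref{eq:RT-207}, \eqref{eq:RT-208} and all three adjacency conditions, with the placement of the padding symbol $\check 0$ being the only genuinely delicate point; everything else is routine once locality of $W_\Delta$ has reduced the global statement to this letter-by-letter verification.
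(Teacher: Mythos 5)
Your proposal is correct and follows essentially the same route as the paper, which proves Lemma~\ref{lemma:RT-3} by exactly this induction: Lemma~\ref{lemma:RT-2} with $|x|=1$ to peel off the last letter, the induction hypothesis for the prefix, and a letter-by-letter reconstruction of $\xi\in\Delta^{()}$ checked against \eqref{eq:RT-207}, \eqref{eq:RT-208} and the local constraints of Definition~\ref{def:RT-210} (the paper leaves these details to the reader as ``standard induction technique''). Your handling of the padding letters in the second and third tracks, including the $\check{0}$ positions after the elementary computation completes, supplies precisely the bookkeeping the paper omits.
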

Lemma~\ref{lemma:RT-1} and Lemma~\ref{lemma:RT-3} imply 
that for each  $c \in A_\bbP \cap \Delta^*$ it holds,
\begin{equation*}
d \in \mcR'_\bbP(\{c\}) \mbox{ iff there exists } x \in W_\Delta  \mbox{ with } 
c = \varphi_\Delta^{(1)}(x) \mbox{ and } d = \gamma_\Delta^{(2)}(\varphi_\Delta^{(2)}(x)).
\end{equation*}  
Now the results of this section can be summarized:
\begin{definition}\label{def:mu-nu} 
\ \\
Let the alphabetic homomorphisms $\mu_\Delta : \Delta^{()*} \to \Delta^*$ 
and $\nu_\Delta : \Delta^{()*} \to \shuffle_\bbP^*$ be defined by 
\begin{align*}
& \mu_\Delta(x) := \varphi_\Delta^{(1)}(x) \in \Delta^{(1)*} = \Delta^* \mbox{ and } 
\nu_\Delta(x) := \gamma_\Delta^{(2)}(\varphi_\Delta^{(2)}(x)) \in \Delta^{(2)'*} \subset \shuffle_\bbP^* \\
& \mbox{for } x \in \Delta^{()*} \subset \Delta^{()'*}. 
\end{align*}
\end{definition}
\begin{theorem}[{Representation Theorem}]\label{thm:RT} \ \\
Let $\Delta \subset \shuffle_\bbP$, then $\mcR'_\bbP(B) = \nu_\Delta(\mu_\Delta^{-1}(B) \cap W_\Delta)$ 
for each $B \subset A_\bbP \cap \Delta^*$.
\ \\
Additionally $W_\Delta$ is regular, if $\Delta$ is finite.
\end{theorem}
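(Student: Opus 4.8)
The plan is to obtain the theorem as a routine assembly of the machinery already set up in this section, since the substantive combinatorics are carried out in Lemma~\ref{lemma:RT-1} and Lemma~\ref{lemma:RT-3}. First I would record the pointwise characterization announced just before Definition~\ref{def:mu-nu}: for every $c \in A_\bbP \cap \Delta^*$ and every $d$,
\[ d \in \mcR'_\bbP(\{c\}) \iff \text{there exists } x \in W_\Delta \text{ with } c = \mu_\Delta(x) \text{ and } d = \nu_\Delta(x). \]
For ``$\Leftarrow$'', take $x \in W_\Delta$; by Lemma~\ref{lemma:RT-1} the word $\eta_\Delta(x)$ is a shuffled representation of $\varphi_\Delta^{(1)}(x) \in A_\bbP$ by $\gamma_\Delta^{(2)}(\varphi_\Delta^{(2)}(x)) \in A_\bbP$ and $\gamma_\Delta^{(3)}(\varphi_\Delta^{(3)}(x)) \in E_{\check{\bbP}}$, so by Definition~\ref{def:rel-RP} the pair $(\mu_\Delta(x),\nu_\Delta(x)) = (\varphi_\Delta^{(1)}(x),\gamma_\Delta^{(2)}(\varphi_\Delta^{(2)}(x)))$ lies in $\mcR_\bbP$; since $\varphi_\Delta^{(1)}(x) \in \Delta^{(1)*} = \Delta^*$ by \eqref{eq:RT-184}, \eqref{eq:def-R'} yields $\nu_\Delta(x) \in \mcR'_\bbP(\{\mu_\Delta(x)\})$ with $\mu_\Delta(x) \in A_\bbP \cap \Delta^*$. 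For ``$\Rightarrow$'', fix $c \in A_\bbP \cap \Delta^*$ and $d \in \mcR'_\bbP(\{c\})$; by \eqref{eq:def-R'} and Definition~\ref{def:rel-RP} there are $e \in E_{\check{\bbP}}$ and a shuffled representation $b$ of $c$ by $d$ and $e$, and Lemma~\ref{lemma:RT-3} produces $x \in W_\Delta$ with $c = \varphi_\Delta^{(1)}(x) = \mu_\Delta(x)$ and $d = \gamma_\Delta^{(2)}(\varphi_\Delta^{(2)}(x)) = \nu_\Delta(x)$ (Definition~\ref{def:mu-nu}).

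Next I would lift this to arbitrary $B \subset A_\bbP \cap \Delta^*$. By \eqref{eq:fkt-R'-1}, $\mcR'_\bbP(B) = \bigcup_{c \in B} \mcR'_\bbP(\{c\})$. On the other side, since $\mu_\Delta$ is a homomorphism and $B \subset \Delta^* = \Delta^{(1)*}$, one has $\mu_\Delta^{-1}(B) \cap W_\Delta = \bigcup_{c \in B}(\mu_\Delta^{-1}(\{c\}) \cap W_\Delta)$, hence $\nu_\Delta(\mu_\Delta^{-1}(B) \cap W_\Delta) = \bigcup_{c \in B} \nu_\Delta(\{x \in W_\Delta : \mu_\Delta(x) = c\})$. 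The pointwise equivalence above identifies the $c$-indexed term of the first union with that of the second for every $c \in B$, so the two unions coincide; this is exactly $\mcR'_\bbP(B) = \nu_\Delta(\mu_\Delta^{-1}(B) \cap W_\Delta)$, i.e.\ \eqref{eq:RT-179}.

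For the regularity claim I would argue that when $\Delta$ is finite the alphabet $\Delta^{()}$ is finite and effectively computable, stepwise from \eqref{eq:RT-193}, \eqref{eq:RT-196}, \eqref{eq:RT-200} and \eqref{eq:RT-209}. Definition~\ref{def:RT-210} exhibits $W_\Delta$ as $\Delta^{()*}$ intersected with three languages, each being an inverse homomorphic image (under $\varphi_\Delta^{(1)}$, $\varphi_\Delta^{(2)}$, $\varphi_\Delta^{(3)}$, whose codomains $\Delta^{(1)},\Delta^{(2)},\Delta^{(3)}$ are finite) of a local language specified by a set of admissible initial letters and a finite set of forbidden two-letter factors; each such local language over a finite alphabet is regular, and the regular languages are closed under inverse homomorphism and finite intersection, so $W_\Delta$ is regular, establishing \eqref{eq:RT-180}. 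I expect no genuine obstacle in this last theorem itself: the real work was already done in Lemma~\ref{lemma:RT-1}--Lemma~\ref{lemma:RT-3} and in the painstaking construction of $\Delta^{()}$ and $W_\Delta$. The only points that still need care are checking that the constraints collected into Definition~\ref{def:RT-210} force $\varphi_\Delta^{(1)}(x) \in A_\bbP \cap \Delta^*$ and $\nu_\Delta(x) \in \shuffle_\bbP^*$, and the bookkeeping that the two union decompositions match index for index -- both immediate once the displayed pointwise equivalence is in place.
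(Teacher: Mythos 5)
Your proposal is correct and follows essentially the same route as the paper, which obtains the pointwise equivalence directly from Lemma~\ref{lemma:RT-1} (for the ``if'' direction, via Definition~\ref{def:rel-RP} and \eqref{eq:def-R'}) and Lemma~\ref{lemma:RT-3} (for the ``only if'' direction), notes that this is equivalent to the set-level identity \eqref{eq:RT-179}, and derives regularity of $W_\Delta$ from Definition~\ref{def:RT-210} by closure of regular languages under inverse alphabetic homomorphisms and finite intersection, the three constituent languages being local over the finite alphabets $\Delta^{(1)},\Delta^{(2)},\Delta^{(3)}$. The only difference is that you spell out the union-decomposition bookkeeping for arbitrary $B$, which the paper leaves implicit.
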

\begin{example}\label{ex:RT}
\ \\
Theorem~\ref{thm:RT} can be applied to Example~\ref{ex:not-i-s-compatible} 
to prove $\SP(\pre(\mathring{P}),\mathring{V})$. For that purpose a finite subset 
$\Delta \subset \shuffle_{\mathring{\bbP}}$ has to be found such that 
$\alpha_{\mathring{\bbP}}^{-1}(\mathring{V}) \subset A_{\mathring{\bbP}} \cap \Delta^*$. 
This can be achieved considering the product automaton of $\mathring{\bbV}$ and 
${\mathring{\bbP}_\shuffle}$, if this automaton is finite. 
Reachability analysis for this product construction 
result in the product automaton of Fig.~\ref{fig:productVP}.
\begin{figure}[h]
\centering
\begin{tikzpicture}[->,>=stealth',shorten >=1pt,auto,node distance=2cm,semithick,initial text=]
   \node[state,initial,rectangle,rounded corners=8pt,inner sep=4pt] (k1)
   {\small $(1,0)$};
   \node[state,rectangle,rounded corners=8pt,inner sep=4pt] (k2) at (0,-2)
   {\small $(2,1_{\mathrm{II}})$};
   \node[state,rectangle,rounded corners=8pt,inner sep=4pt] (k3) at (4,0)
   {\small $(3,1_{\mathrm{II}})$};
   \node[state,rectangle,rounded corners=8pt,inner sep=4pt] (k4) at (4,-2)
   {\small $(4,2_{\mathrm{II}})$};
   \path 
         (k1) edge  node  {\small $(0,a,1_{\mathrm{II}})$} (k2)
         (k1) edge  node [swap] {\small $(0,b,1_{\mathrm{II}})$} (k3)
         (k2) edge  node  {\small $(1_{\mathrm{II}},b,2_{\mathrm{II}})$} (k4)
         (k2) edge  [bend left] node  {\small $(1_{\mathrm{II}},c,0)$} (k1)
         (k3) edge  [bend right] node [swap] {\small $(1_{\mathrm{II}},c,0)$} (k1)
         (k3) edge  [bend left] node  {\small $(1_{\mathrm{II}},b,2_{\mathrm{II}})$} (k4)
         (k4) edge  node  {\small $(2_{\mathrm{II}},c,1_{\mathrm{II}})$} (k3)
        ;
\end{tikzpicture}
\caption{Product automaton of $\mathring{\bbV}$ and ${\mathring{\bbP}_\shuffle}$}\label{fig:productVP}
\end{figure}
Fig.~\ref{fig:productVP} shows that 
\begin{equation}\label{eq:ex-RT-1}
\alpha_{\mathring{\bbP}}^{-1}(\mathring{V}) \subset A_{\mathring{\bbP}} \cap 
\{(0,a,1_{\mathrm{II}}),(0,b,1_{\mathrm{II}}),(1_{\mathrm{II}},c,0),
(1_{\mathrm{II}},b,2_{\mathrm{II}}),(2_{\mathrm{II}},c,1_{\mathrm{II}})\}^*.
\end{equation}
For this example let therefore \[\Delta := \Delta^{(1)} := \{(0,a,1_{\mathrm{II}}),(0,b,1_{\mathrm{II}}),(1_{\mathrm{II}},c,0),
(1_{\mathrm{II}},b,2_{\mathrm{II}}),(2_{\mathrm{II}},c,1_{\mathrm{II}})\}.\] 
This implies
\[S_\Delta^{(1)} = \{0,1_{\mathrm{II}},2_{\mathrm{II}}\}, 
\ S_\Delta^{(3)} = \{0,1_{\mathrm{II}}\}, 
\ S_\Delta^{(2)} = \{0,1_{\mathrm{II}},2_{\mathrm{II}}\}, 
\ \Sigma_\Delta = \{a,b,c\},\]
\[\Delta^{(2)'} = \{(0,a,1_{\mathrm{II}}),(0,b,1_{\mathrm{II}}),(1_{\mathrm{II}},c,0),
(1_{\mathrm{II}},a,2_{\mathrm{II}}),(1_{\mathrm{II}},b,2_{\mathrm{II}}),
(2_{\mathrm{II}},c,1_{\mathrm{II}})\}, \mbox{ and }\]
\[\Delta^{(3)'} = \{(0,a,1_{\mathrm{II}}),(0,b,1_{\mathrm{II}}),(1_{\mathrm{II}},c,0)\}.\]
Now $\Delta^{()}$ is given by \eqref{eq:RT-209}. To illustrate the three tracks, we use a 
column notation to represent the elements of \
\ \\
\ \\ 
\noindent
$\Delta^{()} = \{
\left[ \begin{array}{c}
(0,a,1_\II)\\
(0,a,1_\II)\\
0
\end{array} \right]
%
\ ,\ 
\left[ \begin{array}{c}
(0,a,1_\II)\\
0\\
(0,\check{a},1_\II)
\end{array} \right]
%
\ ,\ 
\left[ \begin{array}{c}
(0,b,1_\II)\\
(0,b,1_\II)\\
0
\end{array} \right]
%
\ ,\ 
\left[ \begin{array}{c}
(0,b,1_\II)\\
0\\
(0,\check{b},1_\II)
\end{array} \right]
%
\ ,\ 
\left[ \begin{array}{c}
(1_\II,c,0)\\
(1_\II,c,0)\\
0
\end{array} \right]
%
\ ,\\
\ \\
\ \\ 
\left[ \begin{array}{c}
(1_\II,c,0)\\
0\\
(1_\II,\check{c},0)
\end{array} \right]
%
\ ,\ 
\left[ \begin{array}{c}
(1_\II,b,2_\II)\\
(1_\II,b,2_\II)\\
0
\end{array} \right]
%
\ ,\ 
\left[ \begin{array}{c}
(1_\II,b,2_\II)\\
(0,b,1_\II)\\
1_\II
\end{array} \right]
%
\ ,\ 
\left[ \begin{array}{c}
(1_\II,b,2_\II)\\
1_\II\\
(0,\check{b},1_\II)
\end{array} \right]
%
\ ,\ 
\left[ \begin{array}{c}
(2_\II,c,1_\II)\\
(2_\II,c,1_\II)\\
0
\end{array} \right]
%
\ ,\\
\ \\
\ \\ 
\left[ \begin{array}{c}
(2_\II,c,1_\II)\\
(1_\II,c,0)\\
1_\II
\end{array} \right]
%
\ ,\ 
\left[ \begin{array}{c}
(2_\II,c,1_\II)\\
1_\II\\
(1_\II,\check{c},0)
\end{array} \right]
%
\ , \ 
\left[ \begin{array}{c}
(0,a,1_\II)\\
(0,a,1_\II)\\
\check{0}
\end{array} \right]
%
\ ,\ 
\left[ \begin{array}{c}
(0,b,1_\II)\\
(0,b,1_\II)\\
\check{0}
\end{array} \right]
%
\ ,\ 
\left[ \begin{array}{c}
(1_\II,c,0)\\
(1_\II,c,0)\\
\check{0}
\end{array} \right]
%
\ ,\\
\ \\
\ \\
\left[ \begin{array}{c}
(1_\II,b,2_\II)\\
(1_\II,b,2_\II)\\
\check{0}
\end{array} \right]
%
\ ,\ 
\left[ \begin{array}{c}
(2_\II,c,1_\II)\\
(2_\II,c,1_\II)\\
\check{0}
\end{array} \right]
%
\}.$ \\
\ \\
\ \\
The definition of $W_\Delta$ can be translated into a semiautomaton 
\[\bbW_\Delta := 
(\Delta^{()} \ , \ S_\Delta^{()} \ , \ \Lambda_\Delta \ , \ (0,0,0)) \]
recognizing $W_\Delta$, where 
$S_\Delta^{()} := S_\Delta^{(1)} \times S_\Delta^{(2)} \times (S_\Delta^{(3)} \dotcup \{ \check{0} \})$. 
Its state transition relation
\[\Lambda_\Delta \subset 
S_\Delta^{()} \times \Delta^{()} \times S_\Delta^{()}\]
can be constructed step by step in compliance with the restrictions of Definition~\ref{def:RT-210}. 
For its representation we use a column notation for the states just as for the elements
of $\Delta^{()}$. So we get 
\ \\
\ \\
\noindent
$\Lambda_\Delta = \{
\left[ \begin{array}{c}
0\\
0\\
0
\end{array} \right]
\left[ \begin{array}{c}
(0,a,1_\II)\\
(0,a,1_\II)\\
0
\end{array} \right]
\left[ \begin{array}{c}
1_\II\\
1_\II\\
0
\end{array} \right]
\ ,\
%
\left[ \begin{array}{c}
0\\
0\\
0
\end{array} \right]\ 
\left[ \begin{array}{c}
(0,a,1_\II)\\
0\\
(0,\check{a},1_\II)
\end{array} \right]
\left[ \begin{array}{c}
1_\II\\
0\\
1_\II
\end{array} \right]
\ ,\
%
\left[ \begin{array}{c}
0\\
0\\
0
\end{array} \right]
\left[ \begin{array}{c}
(0,b,1_\II)\\
(0,b,1_\II)\\
0
\end{array} \right]
\left[ \begin{array}{c}
1_\II\\
1_\II\\
0
\end{array} \right]
\ ,\\
%
\ \\
\ \\
\left[ \begin{array}{c}
0\\
0\\
0
\end{array} \right]\ 
\left[ \begin{array}{c}
(0,b,1_\II)\\
0\\
(0,\check{b},1_\II)
\end{array} \right]
\left[ \begin{array}{c}
1_\II\\
0\\
1_\II
\end{array} \right]
\ ,\
%
\left[ \begin{array}{c}
1_\II\\
1_\II\\
0
\end{array} \right] 
\left[ \begin{array}{c}
(1_\II,c,0)\\
(1_\II,c,0)\\
0
\end{array} \right]
\left[ \begin{array}{c}
0\\
0\\
0
\end{array} \right]
\ ,\
%
\left[ \begin{array}{c}
1_\II\\
1_\II\\
0
\end{array} \right]
\left[ \begin{array}{c}
(1_\II,b,2_\II)\\
(1_\II,b,2_\II)\\
0
\end{array} \right]
\left[ \begin{array}{c}
2_\II\\
2_\II\\
0
\end{array} \right]
\ ,\\
%
\ \\
\ \\
\left[ \begin{array}{c}
1_\II\\
1_\II\\
0
\end{array} \right]
\left[ \begin{array}{c}
(1_\II,b,2_\II)\\
1_\II\\
(0,\check{b},1_\II)
\end{array} \right]
\left[ \begin{array}{c}
2_\II\\
1_\II\\
1_\II
\end{array} \right]
\ ,\
%
\left[ \begin{array}{c}
1_\II\\
0\\
1_\II
\end{array} \right]\ 
\left[ \begin{array}{c}
(1_\II,c,0)\\
0\\
(1_\II,\check{c},0)
\end{array} \right]
\left[ \begin{array}{c}
0\\
0\\
\check{0}
\end{array} \right]
\ ,\
%
\left[ \begin{array}{c}
1_\II\\
0\\
1_\II
\end{array} \right]
\left[ \begin{array}{c}
(1_\II,b,2_\II)\\
(0,b,1_\II)\\
1_\II
\end{array} \right]
\left[ \begin{array}{c}
2_\II\\
1_\II\\
1_\II
\end{array} \right]
\ ,\\
%
\ \\
\ \\
\left[ \begin{array}{c}
2_\II\\
2_\II\\
0
\end{array} \right] 
\left[ \begin{array}{c}
(2_\II,c,1_\II)\\
(2_\II,c,1_\II)\\
0
\end{array} \right]
\left[ \begin{array}{c}
1_\II\\
1_\II\\
0
\end{array} \right]
\ ,\
%
\left[ \begin{array}{c}
2_\II\\
1_\II\\
1_\II
\end{array} \right] 
\left[ \begin{array}{c}
(2_\II,c,1_\II)\\
(1_\II,c,0)\\
1_\II
\end{array} \right]
\left[ \begin{array}{c}
1_\II\\
0\\
1_\II
\end{array} \right]
\ ,\
%
\left[ \begin{array}{c}
2_\II\\
1_\II\\
1_\II
\end{array} \right]
\left[ \begin{array}{c}
(2_\II,c,1_\II)\\
1_\II\\
(1_\II,\check{c},0)
\end{array} \right]
\left[ \begin{array}{c}
1_\II\\
1_\II\\
\check{0}
\end{array} \right]
\ ,\\
%
\ \\
\ \\
\left[ \begin{array}{c}
0\\
0\\
\check{0}
\end{array} \right]\ 
\left[ \begin{array}{c}
(0,a,1_\II)\\
(0,a,1_\II)\\
\check{0}
\end{array} \right]
\left[ \begin{array}{c}
1_\II\\
1_\II\\
\check{0}
\end{array} \right]
\ ,\
%
\left[ \begin{array}{c}
0\\
0\\
\check{0}
\end{array} \right]\ 
\left[ \begin{array}{c}
(0,b,1_\II)\\
(0,b,1_\II)\\
\check{0}
\end{array} \right]
\left[ \begin{array}{c}
1_\II\\
1_\II\\
\check{0}
\end{array} \right]
\ ,\
%
\left[ \begin{array}{c}
1_\II\\
1_\II\\
\check{0}
\end{array} \right]\ 
\left[ \begin{array}{c}
(1_\II,c,0)\\
(1_\II,c,0)\\
\check{0}
\end{array} \right]
\left[ \begin{array}{c}
0\\
0\\
\check{0}
\end{array} \right]
\ ,\\
%
\ \\
\ \\
\left[ \begin{array}{c}
1_\II\\
1_\II\\
\check{0}
\end{array} \right]\ 
\left[ \begin{array}{c}
(1_\II,b,2_\II)\\
(1_\II,b,2_\II)\\
\check{0}
\end{array} \right]
\left[ \begin{array}{c}
2_\II\\
2_\II\\
\check{0}
\end{array} \right]
\ ,\
%
\left[ \begin{array}{c}
2_\II\\
2_\II\\
\check{0}
\end{array} \right]\ 
\left[ \begin{array}{c}
(2_\II,c,1_\II)\\
(2_\II,c,1_\II)\\
\check{0}
\end{array} \right]
\left[ \begin{array}{c}
1_\II\\
1_\II\\
\check{0}
\end{array} \right]
%
\}.$ \\
\ \\
\ \\
Applying standard automata algorithms \cite{berstel79} to this semiautomaton, shows 
$\nu_\Delta(W_\Delta) \subset \alpha_{\mathring{\bbP}}^{-1}(\mathring{V})$, 
which by Theorem~\ref{thm:RT} and \eqref{eq:ex-RT-1} implies
\begin{equation}\label{eq:ex-RT-2}
\mcR'_{\mathring{\bbP}}(\alpha_{\mathring{\bbP}}^{-1}(\mathring{V})) = 
\nu_\Delta(\mu_\Delta^{-1}(\alpha_{\mathring{\bbP}}^{-1}(\mathring{V})) \cap W_\Delta) 
\subset \alpha_{\mathring{\bbP}}^{-1}(\mathring{V}).
\end{equation}
Now \eqref{eq:ex-RT-2} together with Corollary~\ref{cor:eq-rel-RP'} proves 
$\SP(\pre(\mathring{P}),\mathring{V})$.
\end{example}
Using Corollary~\ref{cor:eq-rel-RP'} and Theorem~\ref{thm:RT}, Example~\ref{ex:RT} 
demonstrates how to decide $\SP(\pre(P),V)$, if there exists a finite subset 
$\Delta \subset \shuffle_\bbP$, such that $\alpha_\bbP^{-1}(V) \subset \Delta^*$. 
Since we assume $\emptyset \neq P \subset \Sigma^*$ and $\delta(q_0,\pre(P))=Q$, $\pre(P)$ is
recognized by the automaton $\grave{\bbP} := (\Sigma,Q,\delta,q_0,Q)$.
So using Corollary~\ref{cor:eq-rel-R0P'} instead of Corollary~\ref{cor:eq-rel-RP'}, 
we also can decide $\SP(\pre(P),V)$, if there exists a finite subset 
$\grave{\Delta} \subset \shuffle_{\grave{\bbP}}$ such that 
$(\alpha_{\grave{\bbP}}^{-1}(V) \cap Z_{\grave{\bbP}}^{-1}(0)) \subset \grave{\Delta}^*$. \\
\ \\
Now the question arises: Is there any relation between $\Delta$ and $\grave{\Delta}$? 
The only difference between $\bbP$ and $\grave{\bbP}$ is the set of their final states: 
$F \subset Q$ versus $Q$. Therefore Definition~\ref{def:shuffle-automaton} implies
\begin{align*}
\shuffle_{\grave{\bbP}} = & \shuffle_\bbP  
\ \cup \ \{(f,a,f) \in \dsN_0^Q \times \Sigma \times \dsN_0^Q \ | \ \delta(q_0,a) \mbox{ is defined} \} \ \cup  \\
& \{(f,a,f-1_q) \in \dsN_0^Q \times \Sigma \times \dsN_0^Q \ | 
\ f \geqslant 1_q \mbox{ and } \delta(q,a) \mbox{ is defined} \}.
\end{align*}
Now on account of
\begin{align*}
& \{(f,a,f-1_q+1_{\delta(q,a)}) \in \dsN_0^Q \times \Sigma \times \dsN_0^Q \ | 
\ f \geqslant 1_q \mbox{ and } \delta(q,a) \mbox{ is defined} \} \ \cup \\ 
&\{(f,a,f+1_{\delta(q_0,a)}) \in \dsN_0^Q \times \Sigma \times \dsN_0^Q \ | \ \delta(q_0,a) \mbox{ is defined} \} 
\subset \shuffle_\bbP,
\end{align*}
\begin{equation}\label{eq:grave-P-1}
\mbox{for each } (f,a,g) \in \shuffle_{\grave{\bbP}} 
\mbox{ there exists } (f,a,g') \in \shuffle_\bbP 
\mbox{ such that } g' \geqslant g . 
\end{equation}
\eqref{eq:def-cp^3} implies
\begin{equation}\label{eq:grave-P-2}
(f+h,a,g'+h) \in \shuffle_\bbP \mbox{ for each } 
(f,a,g') \in \shuffle_\bbP \mbox{ and } h \in \dsN_0^Q. 
\end{equation}
By \eqref{eq:grave-P-1} and \eqref{eq:grave-P-2} an 
induction proof shows:
\begin{align*}
& \mbox{For each } x \in A_{\grave{\bbP}} 
\mbox{ there exists } y \in A_\bbP \mbox{ with } 
\alpha_{\grave{\bbP}}(x) = \alpha_\bbP(y)
\mbox{ and } \\ 
& Z_\bbP(y') \geqslant Z_{\grave{\bbP}}(x') 
\mbox{ for each } x' \in \pre(x) \mbox{ and } y' \in \pre(y) 
\mbox{ with } |x'| = |y'|,
\end{align*}
which implies
\begin{align}\label{eq:grave-P-3}
& \mbox{For each } x \in \alpha_{\grave{\bbP}}^{-1}(V) 
\mbox{ there exists } y \in \alpha_\bbP^{-1}(V) \mbox{ with } 
\alpha_{\grave{\bbP}}(x) = \alpha_\bbP(y)
\mbox{ and } \nonumber \\ 
& Z_\bbP(y') \geqslant Z_{\grave{\bbP}}(x') 
\mbox{ for each } x' \in \pre(x) \mbox{ and } y' \in \pre(y) 
\mbox{ with } |x'| = |y'|.
\end{align}
Let now $\Delta \subset \shuffle_\bbP \cap (T(Q) \times \Sigma \times T(Q))$ 
such that $\alpha_\bbP^{-1}(V) \subset \Delta^*$, and let $\Sigma_\Delta$ be 
defined as in \eqref{eq:RT-194'}. Let 
\begin{align}\label{eq:grave-P-4}
& S_\Delta := Z_\bbP(\pre(\alpha_\bbP^{-1}(V))) \mbox{ and } \nonumber \\ 
& \grave{S}_\Delta := \{f \in \dsN_0^Q |\mbox{ there exists }  
g \in S_\Delta \mbox{ with } g \geqslant f \}. 
\end{align}
Then finiteness of $\Delta$ implies finiteness of $\Sigma_\Delta$, 
$S_\Delta$ and $\grave{S}_\Delta$, and by \eqref{eq:grave-P-3} 
holds $\alpha_{\grave{\bbP}}^{-1}(V) \subset \grave{\Delta}^*$ with 
$\grave{\Delta} := \shuffle_{\grave{\bbP}} \cap (\grave{S}_\Delta \times \Sigma_\Delta \times \grave{S}_\Delta)$. 
This implies:
\begin{align}\label{eq:grave-P-5}
& \mbox{If } \alpha_\bbP^{-1}(V) \subset \Delta^* \mbox{ for a finite subset }
  \Delta \subset \shuffle_\bbP, \mbox{ then there exists }  \nonumber \\ 
& \mbox{a finite subset } \grave{\Delta} \subset \shuffle_{\grave{\bbP}}  
  \mbox{ with } (\alpha_{\grave{\bbP}}^{-1}(V) \cap Z_{\grave{\bbP}}^{-1}(0)) \subset \grave{\Delta}^*. 
\end{align}
The following example shows that the converse of \eqref{eq:grave-P-5} does not hold.
\begin{example}\label{ex:grave-P}
\ \\
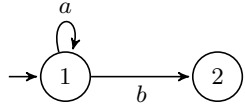
\begin{figure}[h]
\centering
\begin{tikzpicture}[->,>=stealth',shorten >=1pt,auto,node distance=2cm,semithick,initial text=]
   \node[state,initial,inner sep=4pt] (k1)
   {\small $1$};
   \node[state,inner sep=4pt] (k2) at (2,0)
   {\small $2$};
   \path
         (k1) edge [loop above]  node  {\small $a$} (k1)
         (k1) edge  node [swap] {\small $b$} (k2)
        ;
\end{tikzpicture}
\caption{Semiautomaton $\bar{\bbV}$ recognizing $\bar{V}$}\label{fig:bar-V}
\end{figure}
Let $\bar{\bbP}$ and $\bar{P}$ as defined in Figure~\ref{fig:easy-compatible-1}, and
let $\bar{\bbV}$ and $\bar{V}$ as defined in Figure~\ref{fig:bar-V}.
Then $Z_{\bar{\bbP}}(\alpha_{\bar{\bbP}}^{-1}(\bar{V})) = \{0\} \cup \{n_{\mathrm{II}} | n\in \dsN \}$. 
Therefore each $\Delta \subset \shuffle_{\bar{\bbP}}$ with 
$\alpha_{\bar{\bbP}}^{-1}(\bar{V}) \subset \Delta^*$ is an infinite set. \\
\ \\
But $(\alpha_{\grave{\bar{\bbP}}}^{-1}(\bar{V}) \cap Z_{\grave{\bar{\bbP}}}^{-1}(0)) \subset 
\{ (0,a,0) , (0,a,1_\II) , (1_\II,a,1_\II) , (1_\II,b,0) \}^*$, because of 
$Z_{\grave{\bar{\bbP}}}(\alpha_{\grave{\bar{\bbP}}}^{-1}(\bar{V})) = 
\{0\} \cup \{n_{\mathrm{II}} | n\in \dsN \}$, and 
$c^{-1}(\alpha_{\grave{\bar{\bbP}}}^{-1}(\bar{V}) \cap Z_{\grave{\bar{\bbP}}}^{-1}(0)) = \emptyset$ 
for each $c \in \alpha_{\grave{\bar{\bbP}}}^{-1}(\bar{V})$ with 
$Z_{\grave{\bar{\bbP}}}(c) \geqslant 2_\II$. 
\end{example}
\section{Decidability Questions}
%
%
%
%




In Section~\ref{sec:representation theorem} it was demonstrated by way of an example,
how a finite set 
$\Delta \subset \shuffle_\bbP$ can be found
that fulfils the condition $\alpha^{-1}_\bbP(\pre(V)) \subset \Delta^*$.
Given that $P$ and $V$ are regular languages the approach is now considered in
general and it is shown how the existence of such a finite set can be decided. \\
\ \\
For an arbitrary alphabet $\Gamma$ let the mapping $\alf : 2^{\Gamma^*} \to 2^\Gamma$ 
be defined by $\Gamma(\emptyset) := \Gamma(\{ \varepsilon \}) := \emptyset$, 
$\Gamma(\{ wa \}) := \Gamma(\{ w \}) \cup \{ a \}$ for $w \in \Gamma^*$ and $a \in \Gamma$, 
and $\Gamma(L) := \bigcup\limits_{w\in L}\Gamma(\{ w \})$.
Then the minimal set $\Delta$ with the above property is $\alf(\alpha^{-1}_\bbP(\pre(V)))$.
So, the problem is to find $\alf(\alpha^{-1}_\bbP(\pre(V)))$ and to prove that
$\alf(\alpha^{-1}_\bbP(\pre(V)))$ is finite. In \eqref{eq:grave-P-5} it is shown that the 
more general problem is to investigate $\alf(\alpha^{-1}_\bbP(V) \cap Z^{-1}_\bbP(0))$. 
But we first examine the problem concerning $\alf(\alpha^{-1}_\bbP(\pre(V)))$, because there 
is a much easier decision procedure than for the general problem. \\
\ \\
Let now $\emptyset\neq P\subset\Sigma^*$, $\emptyset\neq V\subset\Sigma^*$, $\bbP=(\Sigma,Q,\delta,q_0,F)$
a deterministic automaton that recognizes $P$ with $\delta(q_0,\pre(P))=Q$,
and $\bbV=(\Sigma,Q_\bbV,\delta_\bbV,q_{\bbV_0})$
a deterministic semiautomaton that recognizes $\pre(V)$ with $Q \cap Q_\bbV = \emptyset$. Then 
$\delta_\bbV(q_{\bbV_0},\alpha_\bbP(x))$ is defined for each $x\in\alpha^{-1}_\bbP(\pre(V))$.  
\begin{align}\label{eq:A1}
\mbox{The set } 
x^{-1}(\alpha^{-1}_\bbP(\pre(V))) \cap \shuffle_\bbP
\mbox{ is finite for each } x\in\alpha^{-1}_\bbP(\pre(V)) \nonumber \\
\mbox{ and depends only on }
(Z_\bbP(x),\delta_\bbV(q_{\bbV_0},\alpha_\bbP(x))) .
\end{align}
\begin{align}\label{eq:A2}
\mbox{For each }
y\in x^{-1}(\alpha^{-1}_\bbP(\pre(V))) \cap \shuffle_\bbP
\mbox{ is } (Z_\bbP(xy),\delta_\bbV(q_{\bbV_0},\alpha_\bbP(xy))) \nonumber \\
\mbox{ uniquely determined by }
(Z_\bbP(x),\delta_\bbV(q_{\bbV_0},\alpha_\bbP(x))) \mbox{ and } y . 
\end{align}
Let $Q_{\bbP\bbV}:=\{(Z_\bbP(x),\delta_\bbV(q_{\bbV_0},\alpha_\bbP(x))) | x\in \alpha^{-1}_\bbP(\pre(V))\}$.
Then $Q_{\bbP\bbV}$ can be considered as the state set of a deterministic semiautomaton 
$\bbS_{\bbP\bbV}$ that recognizes
$\alpha^{-1}_\bbP(\pre(V))$. 
Its initial state is $(0,q_{\bbV_0})$, its alphabet is $\shuffle_\bbP$, and its
state transition function is given by \eqref{eq:A2}. More precisely:
\begin{align}\label{eq:A2.1}
& \bbS_{\bbP\bbV} = (\shuffle_\bbP,Q_{\bbP\bbV},\delta_{\bbP\bbV},(0,q_{\bbV_0})) 
\mbox{ where } \delta_{\bbP\bbV} : Q_{\bbP\bbV} \times \shuffle_\bbP \to Q_{\bbP\bbV} \nonumber \\ 
& \mbox{is a partial function with } \nonumber \\  
& \delta_{\bbP\bbV}((Z_\bbP(x),\delta_\bbV(q_{\bbV_0},\alpha_\bbP(x))),y) := 
(Z_\bbP(xy),\delta_\bbV(q_{\bbV_0},\alpha_\bbP(xy))) \nonumber \\
& \mbox{for } x\in\alpha^{-1}_\bbP(\pre(V)) \mbox{ and }
y\in x^{-1}(\alpha^{-1}_\bbP(\pre(V))) \cap \shuffle_\bbP. 
\end{align}
In example~\ref{ex:RT}, $\bbS_{\bbP\bbV}$ corresponds to the product automaton of Figure~\ref{fig:productVP}. \\
\ \\
Let now $Z_{\bbP\bbV}:\alpha^{-1}_\bbP(\pre(V))\to Q_{\bbP\bbV}$ with
\begin{equation}\label{eq:A2.2}
Z_{\bbP\bbV}(x) := (Z_\bbP(x),\delta_\bbV(q_{\bbV_0},\alpha_\bbP(x))) 
\mbox{ for each } x\in\alpha^{-1}_\bbP(\pre(V)).
\end{equation}
Then 
\begin{align}\label{eq:A2.3}
& Q_{\bbP\bbV}=Z_{\bbP\bbV}(\alpha^{-1}_\bbP(\pre(V))) \mbox{ and } \nonumber \\
& Z_{\bbP\bbV}(x) = \delta_{\bbP\bbV}((0,q_{\bbV_0}),x) 
  \mbox{ for each } x\in\alpha^{-1}_\bbP(\pre(V)). 
\end{align}
\begin{align}\label{eq:A3}
\mbox{For each }
n\in\dsN_0 \mbox{ let } 
A^{(n)}_\bbP:=\{w\in A_\bbP | \abs{w} \leq n\}
\mbox{ and } \nonumber \\
Q^{(n)}_{\bbP\bbV}:= Z_{\bbP\bbV}(\alpha^{-1}_\bbP(\pre(V))\cap A^{(n)}_\bbP) . 
\end{align}
\begin{align}\label{eq:A4}
\mbox{From \eqref{eq:A1} follows that }
\alpha^{-1}_\bbP(\pre(V))\cap A^{(n)}_\bbP
\mbox{ and thus } \nonumber \\
Q^{(n)}_{\bbP\bbV}
\mbox{ for each } 
n\in \dsN_0
\mbox{ are finite sets} . 
\end{align}
If $Q^{(k)}_{\bbP\bbV}=Q^{(k+1)}_{\bbP\bbV}$ for a $k\in \dsN_0$, 
then follows from \eqref{eq:A1} and \eqref{eq:A2}
$Q^{(i)}_{\bbP\bbV}=Q^{(k)}_{\bbP\bbV}$ and
\begin{equation}\label{eq:A5}
\alf(\alpha^{-1}_\bbP(\pre(V))\cap A^{(i+1)}_\bbP)=
\alf(\alpha^{-1}_\bbP(\pre(V))\cap A^{(k+1)}_\bbP) 
\mbox{ for each } i\geq k.
\end{equation}
Because $A_\bbP=\bigcup\limits_{n\in\dsN_0}A^{(n)}_\bbP$ and
$A^{(n)}_\bbP\subset A^{(n+1)}_\bbP$ for each $n\in \dsN_0$ holds
\begin{equation}\label{eq:A6}
Q_{\bbP\bbV} = \bigcup\limits_{n\in\dsN_0}Q_{\bbP\bbV}^{(n)} \mbox{ and }
Q_{\bbP\bbV}^{(n)}\subset Q_{\bbP\bbV}^{(n+1)} \mbox{ for each } n\in \dsN_0. 
\end{equation}

From \eqref{eq:A5}-\eqref{eq:A6} follows
\begin{align}\label{eq:A7}
&\alf(\alpha^{-1}_\bbP(\pre(V)))=\alf(\alpha^{-1}_\bbP(\pre(V))\cap A^{(k+1)}_\bbP),
\mbox{ as well as } 
Q_{\bbP\bbV}= Q_{\bbP\bbV}^{(k)} \mbox{ if } \nonumber \\
& Q_{\bbP\bbV}^{(k)}= Q_{\bbP\bbV}^{(k+1)}, \mbox{ and } 
\alf(\alpha^{-1}_\bbP(\pre(V))) \mbox{ and } Q_{\bbP\bbV}
\mbox{ are finite sets } 
\end{align}
because of \eqref{eq:A4}.\\
\ \\
Because $\alpha^{-1}_\bbP(\pre(V))$ is prefix closed 
\[\alf(\alpha^{-1}_\bbP(\pre(V))) \subset Z_\bbP(\alpha^{-1}_\bbP(\pre(V)))\times \Sigma
\times Z_\bbP(\alpha^{-1}_\bbP(\pre(V))), \mbox{ and}\]
$Z_\bbP(\alpha^{-1}_\bbP(\pre(V)))\subset p_3(\alf(\alpha^{-1}_\bbP(\pre(V))))\cup\{0\}$,
where $p_3((f,a,g)):=g$ for $(f,a,g)\in \shuffle_\bbP$.
Because $\Sigma$ is finite, it follows
\begin{equation}\label{eq:A8}
\alf(\alpha^{-1}_\bbP(\pre(V))) \mbox{ is finite iff } 
Z_\bbP(\alpha^{-1}_\bbP(\pre(V))) \mbox{ is finite.} 
\end{equation}
Accordingly, from the finiteness of $Q_\bbV$ follows
\begin{equation}\label{eq:A9}
Z_\bbP(\alpha^{-1}_\bbP(\pre(V))) \mbox{ is finite iff } 
Q_{\bbP\bbV} \mbox{ is finite.} 
\end{equation}
If $Q_{\bbP\bbV}$
is finite, then because of \eqref{eq:A6} 
\begin{equation}\label{eq:A10}
\mbox{it exists a } k\in\dsN_0
\mbox{ with } Q_{\bbP\bbV}^{(i)}= Q_{\bbP\bbV}^{(k)}
\mbox{ for all } i\geq k . 
\end{equation}
Because of \eqref{eq:A7}-\eqref{eq:A10} the stepwise computation of
\begin{align}\label{eq:A11}
&Q_{\bbP\bbV}^{(i)} \mbox{ for } i \in \dsN_0 \mbox{ and the test }
Q_{\bbP\bbV}^{(i)}=Q_{\bbP\bbV}^{(i+1)} \nonumber \\
&\mbox{provides a semi-algorithm for the finiteness of }
\alf(\alpha^{-1}_\bbP(\pre(V))). 
\end{align}
In case of a positive result, 
$\alf(\alpha^{-1}_\bbP(\pre(V)))$
can be computed using \eqref{eq:A7}.
\ \\
\ \\
In preparation for the decision on finiteness of $Q_{\bbP\bbV}$ we need a closer look 
on the structure of $\shuffle_\bbP$.
By Definition~\ref{def:shuffle-automaton^}, Definition~\ref{def:shuffle-automaton} and 
\eqref{eq:def-wedge-P} it holds
\begin{align}\label{eq:A11.1}
\shuffle_\bbP =& \wedge_\bbP(\hat{\shuffle}_\bbP) = \wedge_\bbP(\tilde{\shuffle}_\bbP) \ \cup 
\ \wedge_\bbP(\mathring{\shuffle}_\bbP) \ \cup 
\ \wedge_\bbP(\bar{\shuffle}_\bbP) \ \cup \ \wedge_\bbP(\tilde{\bar{\shuffle}}_\bbP) \mbox{ and } \nonumber \\ 
\wedge_\bbP(\tilde{\shuffle}_\bbP) =& \{(f,a,f+1_p) \in \dsN_0^Q \times \Sigma \times \dsN_0^Q \ | 
\ \delta(q_0,a)=p \mbox{ and it exists } b\in \Sigma \mbox{ such } \nonumber \\
&\mbox{ that } \delta(p,b) \mbox{ is defined} \},\nonumber \\
\wedge_\bbP(\mathring{\shuffle}_\bbP) =& \{(f,a,f+1_p-1_q) \in \dsN_0^Q \times \Sigma \times \dsN_0^Q \ | 
\ f \geqslant 1_q, \delta(q,a)=p \mbox{ and it exists } \nonumber \\
&b\in \Sigma \mbox{ such that } \delta(p,b) \mbox{ is defined} \},\nonumber \\
\wedge_\bbP(\bar{\shuffle}_\bbP) =& \{(f,a,f-1_q) \in \dsN_0^Q \times \Sigma \times \dsN_0^Q \ | 
\ f \geqslant 1_q \mbox{ and } \delta(q,a) \in F \} \mbox{ and }\nonumber \\
\wedge_\bbP(\tilde{\bar{\shuffle}}_\bbP) =& \{(f,a,f) \in \dsN_0^Q \times \Sigma \times \dsN_0^Q \ |\ \delta(q_0,a) \in F \}.
\end{align}
On account of \eqref{eq:grave-P-2} a proper subset $\shuffle_\bbP^\sigma \subset \shuffle_\bbP$ 
together with $\dsN_0^Q$ suffices 
to completely characterize $\shuffle_\bbP$. Let therefore
\begin{align}\label{eq:A11.2}
\shuffle_\bbP^\sigma :=& \tilde{\shuffle}_\bbP^\sigma \ \cup 
\ \mathring{\shuffle}_\bbP^\sigma \ \cup 
\ \bar{\shuffle}_\bbP^\sigma \ \cup \ \tilde{\bar{\shuffle}}_\bbP^\sigma \mbox{ with }\nonumber \\ 
\tilde{\shuffle}_\bbP^\sigma :=& \{(0,a,1_p) \in \dsN_0^Q \times \Sigma \times \dsN_0^Q \ |
 \ \delta(q_0,a)=p \mbox{ and it exists } b\in \Sigma \mbox{ such } \nonumber \\
&\mbox{ that } \delta(p,b) \mbox{ is defined} \},\nonumber \\
\mathring{\shuffle}_\bbP^\sigma :=& \{(1_q,a,1_p) \in \dsN_0^Q \times \Sigma \times \dsN_0^Q \ | 
\ \delta(q,a)=p \mbox{ and it exists } \nonumber \\
&b\in \Sigma \mbox{ such that } \delta(p,b) \mbox{ is defined} \},\nonumber \\
\bar{\shuffle}_\bbP^\sigma :=& \{(1_q,a,0) \in \dsN_0^Q \times \Sigma \times \dsN_0^Q \ | \ \delta(q,a) \in F \} \mbox{ and }\nonumber \\
\tilde{\bar{\shuffle}}_\bbP^\sigma :=& \{(0,a,0) \in \dsN_0^Q \times \Sigma \times \dsN_0^Q \ |\ \delta(q_0,a) \in F \}.
\end{align}
Then by \eqref{eq:grave-P-2}
\begin{equation}\label{eq:A11.3}
\shuffle_\bbP = \{(f+h,a,g+h) \in \dsN_0^Q \times \Sigma \times \dsN_0^Q \ |\ (f,a,g) \in\shuffle_\bbP^\sigma 
\mbox{ and } h \in \dsN_0^Q \}.
\end{equation}
The following should be noticed:
\begin{align}\label{eq:A11.4}
& \shuffle_\bbP^\sigma = \tilde{\shuffle}_\bbP^\sigma \ \dotcup 
\ \mathring{\shuffle}_\bbP^\sigma \ \dotcup 
\ \bar{\shuffle}_\bbP^\sigma \ \dotcup \ \tilde{\bar{\shuffle}}_\bbP^\sigma. \nonumber \\ 
& \mbox{Generally, for } (f',a,g') \in \shuffle_\bbP \mbox{ the representation } 
 (f',a,g') = (f+h,a,g+h) \nonumber \\ 
& \mbox{with } (f,a,g) \in\shuffle_\bbP^\sigma 
\mbox{ and } h \in \dsN_0^Q \mbox{ is not unique.} \nonumber \\
& \shuffle_\bbP^\sigma \mbox{ is finite for finite automata } \bbP.
\end{align}
Let the mapping $\sigma_\bbP : \shuffle_\bbP \to 2^{\shuffle_\bbP^\sigma} \setminus \{ \emptyset \}$ 
be defined by
\begin{equation}\label{eq:A11.5}
\sigma_\bbP ((f',a,g')) := \{(f,a,g) \in\shuffle_\bbP^\sigma \ |\ (f',a,g') = (f+h,a,g+h) 
\mbox{ with } h \in \dsN_0^Q \} 
\end{equation}
for $(f',a,g') \in \shuffle_\bbP$.
\ \\
\ \\
For the decision on finiteness of $Q_{\bbP\bbV}$ we now utilize
results from Petri nets \cite{reutenauer90}, \cite{Wimmel08}.
A Petri net $N=(S,T,K)$ consists of a finite set $S$ of \emph{places},
a finite set $T$ of \emph{transitions}, and a set $K\subset (S\times T) \cup (T\times S)$
of \emph{edges}.
A \emph{marking} of such a Petri net is a mapping $M:S\to \dsN_0$. 
Dynamic behavior of Petri nets is formalized in terms of \emph{occurrence steps} and
\emph{occurrence sequences}. The set $\Omega$ of occurrence steps is defined by 
\begin{align}\label{eq:A11.6}
& \Omega := \{(M,t,M') \in \dsN_0^S \times T \times \dsN_0^S \ |\ 
M \geqslant \sum\limits_{x \in S , (x,t) \in K} 1_x \mbox{ and } \nonumber \\
& M' = M - \sum\limits_{x \in S , (x,t) \in K} 1_x + 
\sum\limits_{y \in S , (t,y) \in K} 1_y \}.
\end{align}
The set $\mcO$ of occurrence steps with $\mcO \subset \Omega^+$ and the functions 
$\mcI : \mcO \to \dsN_0^S$ and $\mcF : \mcO \to \dsN_0^S$ are defined inductively by
\begin{align}\label{eq:A11.7}
& \mbox{For each } o = (M,t,M') \in \Omega \mbox{ let } o \in \mcO ,\  
\mcI(o) := M \mbox{ and } \mcF(o) := M'. \nonumber \\
& \mbox{For each } w \in \mcO \mbox{ and } o \in \Omega \mbox{ with } 
\mcF(w) = \mcI(o) \mbox{ let } \nonumber \\ 
& wo \in \mcO ,\  \mcI(wo) := \mcI(w) \mbox{ and } \mcF(wo) := \mcF(o).
\end{align}
$\mcI(w)$ is called the \emph{initial marking} and $\mcF(w)$ the \emph{final marking} of $w$.
For $M \in \dsN_0^S$ the \emph{reachability set} $\mcE(M)$ is defined by
\begin{equation}\label{eq:A11.7'}
\mcE(M) := \{ M \} \cup \mcF(\mcI^{-1}(M)).
\end{equation}
The semiautomaton $\bbS_{\bbP\bbV}$ can be simulated by a Petri net
$N_{\bbP\bbV}$ such that there exists an injective mapping $\iota$ from 
$Q_{\bbP\bbV}$ into the set of markings of $N_{\bbP\bbV}$ with
\begin{equation}\label{eq:A14}
\iota(Q_{\bbP\bbV})=\mcE(\iota((0,q_{\bbV_0}))). 
\end{equation}
To define $N_{\bbP\bbV}$ let its set of places $S:=Q\dotcup Q_\bbV$.  
Let therefore the injective mapping \[\iota : Q_{\bbP\bbV}\to \dsN_0^{Q\dotcup Q_\bbV}\]
be defined by
\begin{align}\label{eq:A13}
&\iota((f,q))(x):=f(x) \mbox{ for } x\in Q, \nonumber \\
&\iota((f,q))(x):=0 \mbox{ for } x\in Q_\bbV \setminus \{ q \} \mbox{ and } \nonumber \\
&\iota((f,q))(x):=1 \mbox{ for } x\in Q_\bbV \cap \{ q \}, \nonumber \\
&\mbox{for each } (f,q)\in Q_{\bbP\bbV}\subset \dsN_0^Q\times Q_\bbV. 
\end{align} 
The set $T$ of transitions of $N_{\bbP\bbV}$ will be defined such that there exists 
a bijective mapping $\chi : \shuffle_\bbP^\sigma \times Q_\bbV \to T$. 
For this purpose let 
$T:=\tilde{T} \ \dotcup \ \mathring{T} \ \dotcup \ \bar{T} \ \dotcup \ \tilde{\bar{T}}$,
where
\begin{align}\label{eq:A13.1}
\tilde{T} :=& \{(r,a,(p,s)) \in Q_\bbV \times {\Sigma} \times (Q\times Q_\bbV) \ | 
\ (0,a,1_p) \in \tilde{\shuffle}_\bbP^\sigma \mbox{ and } \delta_\bbV(r,a)=s\}, \nonumber \\
\mathring{T} :=& \{((q,r),a,(p,s)) \in (Q\times Q_\bbV) \times {\Sigma} \times (Q\times Q_\bbV) \ | 
\ (1_q,a,1_p) \in \mathring{\shuffle}_\bbP^\sigma \mbox{ and } \delta_\bbV(r,a)=s\}, \nonumber \\
\bar{T} :=& \{((q,r),a,s) \in (Q\times Q_\bbV) \times {\Sigma} \times Q_\bbV \ | 
\ (1_q,a,0) \in \bar{\shuffle}_\bbP^\sigma \mbox{ and } \delta_\bbV(r,a)=s\}, \mbox{ and }\nonumber \\
\tilde{\bar{T}} :=& \{(r,a,s) \in Q_\bbV \times {\Sigma} \times Q_\bbV \ |
\ (0,a,0) \in \tilde{\bar{\shuffle}}_\bbP^\sigma \mbox{ and } \delta_\bbV(r,a)=s\}.
\end{align}
Now let the bijective mapping $\chi : \shuffle_\bbP^\sigma \times Q_\bbV \to T$ 
be defined by
\begin{align}\label{eq:A13.2}
\chi(((0,a,1_p),r)) :=& (r,a,(p,\delta_\bbV(r,a))) \mbox{ for } 
((0,a,1_p),r) \in \tilde{\shuffle}_\bbP^\sigma \times Q_\bbV, \nonumber \\
\chi(((1_q,a,1_p),r)) :=& ((q,r),a,(p,\delta_\bbV(r,a))) \mbox{ for }
\ ((1_q,a,1_p),r) \in \mathring{\shuffle}_\bbP^\sigma \times Q_\bbV, \nonumber \\
\chi(((1_q,a,0),r)) :=& ((q,r),a,\delta_\bbV(r,a)) \mbox{ for }
\ ((1_q,a,0),r) \in \bar{\shuffle}_\bbP^\sigma \times Q_\bbV, \mbox{ and }\nonumber \\
\chi(((0,a,0),r)) :=& (r,a,\delta_\bbV(r,a)) \mbox{ for }
\ ((0,a,0),r) \in \tilde{\bar{\shuffle}}_\bbP^\sigma \times Q_\bbV.
\end{align}
The set K of edges of $N_{\bbP\bbV}$ let be defined by 
$K:=\tilde{K} \ \dotcup \ \mathring{K} \ \dotcup \ \bar{K} \ \dotcup \ \tilde{\bar{K}}$,
where 
\begin{align}\label{eq:A12}
\tilde{K} :=& \bigcup\limits_{(r,a,(p,s)) \in \tilde{T}}
\{(r,(r,a,(p,s))),((r,a,(p,s)),p),((r,a,(p,s)),s)\} \nonumber \\
& \ \ \ \ \ \ \ \ \ \ \ \ \subset 
(Q_\bbV\times \tilde{T})\cup(\tilde{T}\times(Q\cup Q_\bbV)), \nonumber \\
\mathring{K} :=& \bigcup\limits_{((q,r),a,(p,s))  \in \mathring{T}} 
\{(q,((q,r),a,(p,s))),(r,((q,r),a,(p,s))),(((q,r),a,(p,s)),p),\nonumber \\
& \ \ \ \ \ \ \ \ \ \ \ \ (((q,r),a,(p,s)),s)\} \subset
((Q\cup Q_\bbV)\times \mathring{T})\cup(\mathring{T}\times(Q\cup Q_\bbV)), \nonumber \\
\bar{K} :=& \bigcup\limits_{((q,r),a,s) \in \bar{T}}
\{(q,((q,r),a,s)),(r,((q,r),a,s)),(((q,r),a,s),s)\} \nonumber \\
& \ \ \ \ \ \ \ \ \ \ \ \ \subset
((Q\cup Q_\bbV)\times \bar{T})\cup(\bar{T}\times Q_\bbV), \mbox{ and}\nonumber \\
\tilde{\bar{K}} :=& \bigcup\limits_{(r,a,s) \in \tilde{\bar{T}}}
\{(r,(r,a,s)),((r,a,s),s)\} \subset
(Q_\bbV\times \tilde{\bar{T}})\cup(\tilde{\bar{T}}\times Q_\bbV). 
\end{align}
With these definitions of $N_{\bbP\bbV}$, $\iota$ and $\chi$ the following 
can be shown by induction: 
\begin{align}\label{eq:A12.1}
& \mbox{For each }o = o_1...o_{|o|} \in (\dsN_0^{Q\dotcup Q_\bbV} \times T \times \dsN_0^{Q\dotcup Q_\bbV})^+ \nonumber \\
& \mbox{with } o_i \in \dsN_0^{Q\dotcup Q_\bbV} \times T \times \dsN_0^{Q\dotcup Q_\bbV} 
\mbox{ for } 1 \leq i \leq |o| \mbox{ holds } o \in \mcI^{-1}(\iota((0,q_{\bbV_0}))),\nonumber \\
& \mbox{iff there exists } x \in \alpha^{-1}_\bbP(\pre(V)) 
\mbox{ with } |x| = |o| \mbox{ such that for } 1\leq i \leq \abs{o} \mbox{ holds:}\nonumber \\
& o_i=(\iota(Z_{\bbP\bbV}(x'_{i-1})),t_i,\iota(Z_{\bbP\bbV}(x'_i))) \mbox{ with } 
x'_j\in \pre(x) \mbox{ and } \abs{x'_j}=j \mbox{ for } 0\leq j \leq \abs{o} \nonumber \\
& \mbox{and } \nonumber \\
& t_i \in \chi((y_i,\delta_\bbV(q_{\bbV 0}, \alpha_\bbP(x'_{i-1})))) \mbox{ with } y_i \in \sigma_\bbP(x_i), \nonumber \\
& \mbox{where } x=x_1\ldots{}x_{\abs{o}} \mbox{ and } x_i\in \shuffle_\bbP \mbox{ for } 1\leq i \leq \abs{o}.
\end{align}
This proves \eqref{eq:A14}.
Because $\iota$ is injective, $Q_{\bbP\bbV}$ is finite iff $\mcE(\iota((0,q_{\bbV_0})))$ is finite.
The finiteness of $\mcE(M)$ is decidable for each each Petri net and each marking $M$ of the net 
\cite{reutenauer90} \cite{Wimmel08}. 
Therefore, with \eqref{eq:A11} and \eqref{eq:A7}, the following theorem holds.
\begin{theorem}\label{thm:A15}
If $\bbP$ and $\bbV$ are finite automata, then it is decidable if 
$\alf(\alpha^{-1}_\bbP(\pre(V)))$ is finite.
In the positive case $\alf(\alpha^{-1}_\bbP(\pre(V)))$ is computable.
\end{theorem}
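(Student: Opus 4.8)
The plan is to chain together the reductions established in \eqref{eq:A7}--\eqref{eq:A14}, so that the question about $\alf(\alpha^{-1}_\bbP(\pre(V)))$ becomes a reachability-finiteness question for the Petri net $N_{\bbP\bbV}$, which is decidable by the cited results.

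First I would note that, since $\bbV$ is a finite automaton and $\Sigma$ is finite, \eqref{eq:A8} and \eqref{eq:A9} reduce finiteness of $\alf(\alpha^{-1}_\bbP(\pre(V)))$ to finiteness of $Q_{\bbP\bbV}$. Next, since $\bbP$ and $\bbV$ are finite, $\shuffle_\bbP^\sigma$ is finite by \eqref{eq:A11.4} and $Q_\bbV$ is finite, so the Petri net $N_{\bbP\bbV}$ with places $S = Q \dotcup Q_\bbV$ and transitions $T$ (in bijection with $\shuffle_\bbP^\sigma \times Q_\bbV$ via $\chi$) is finite and effectively constructible from $\bbP$ and $\bbV$. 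By \eqref{eq:A14}---which is the set equality obtained from the inductive simulation statement \eqref{eq:A12.1}---we have $\iota(Q_{\bbP\bbV}) = \mcE(\iota((0,q_{\bbV_0})))$, and since $\iota$ is injective, $Q_{\bbP\bbV}$ is finite iff the reachability set $\mcE(\iota((0,q_{\bbV_0})))$ is finite. As finiteness of the reachability set of a Petri net from a given marking is decidable \cite{reutenauer90,Wimmel08}, finiteness of $\alf(\alpha^{-1}_\bbP(\pre(V)))$ is decidable.

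For the computability claim I would then run the semi-algorithm of \eqref{eq:A11}: compute $Q_{\bbP\bbV}^{(i)}$ for $i = 0,1,2,\dots$ and test $Q_{\bbP\bbV}^{(i)} = Q_{\bbP\bbV}^{(i+1)}$. Each $Q_{\bbP\bbV}^{(i)}$ is finite and effectively computable by \eqref{eq:A4}, since $\alpha^{-1}_\bbP(\pre(V)) \cap A^{(i)}_\bbP$ is obtained by enumerating the paths of $\bbP_\shuffle$ from $0$ of length at most $i$ (finitely many, as $\bbP$ is finite) and keeping those whose label lies in $\pre(V)$. Once we know $Q_{\bbP\bbV}$ is finite, \eqref{eq:A6} and \eqref{eq:A10} ensure the test succeeds at some stage $k$, and then \eqref{eq:A7} yields $\alf(\alpha^{-1}_\bbP(\pre(V))) = \alf(\alpha^{-1}_\bbP(\pre(V)) \cap A^{(k+1)}_\bbP)$, an explicitly computed finite set.

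Rather than a deep obstacle, the part needing care is the effectiveness bookkeeping: checking that $\bbS_{\bbP\bbV}$, $N_{\bbP\bbV}$, $\iota$, $\chi$ and the sets $Q_{\bbP\bbV}^{(i)}$ are all effectively constructible under the two finiteness hypotheses, and that the simulation \eqref{eq:A12.1} really delivers \eqref{eq:A14} with $\iota$ injective. Granting the cited decidability of Petri net reachability-set finiteness, the theorem follows by combining \eqref{eq:A8}, \eqref{eq:A9} and \eqref{eq:A14} for the decidability part, and \eqref{eq:A11} together with \eqref{eq:A7} for the computation in the positive case.
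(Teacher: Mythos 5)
Your proposal is correct and follows essentially the same route as the paper: reducing finiteness of $\alf(\alpha^{-1}_\bbP(\pre(V)))$ to finiteness of $Q_{\bbP\bbV}$ via \eqref{eq:A8} and \eqref{eq:A9}, deciding the latter through the Petri net simulation \eqref{eq:A14} and the decidability of reachability-set finiteness, and then computing the set in the positive case with the semi-algorithm of \eqref{eq:A11} and the identity \eqref{eq:A7}. The only cosmetic difference is that you spell out the effectiveness bookkeeping slightly more explicitly than the paper does.
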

The key to decide finiteness of $\mcE(M)$ is Dickson's lemma \cite{lothaire83}, \cite{reutenauer90}. Therefore 
Theorem~\ref{thm:A15} can also be proven by directly applying Dickson's lemma. We used 
the simulation by a Petri net, because we also need this simulation to tackle the more 
general problem to decide the finiteness of $\alf(Z_\bbP^{-1}(0)\cap \alpha^{-1}_\bbP(V))$.
\ \\
\ \\
For this we make the same assumptions as in the respective problem regarding
$\alf(\alpha^{-1}_\bbP(\pre(V)))$ and additionally postulate the existence of
$F_\bbV\subset Q_\bbV$ with
\begin{align}\label{eq:A16}
V=\{w\in \pre(V) | \delta_\bbV(q_{\bbV_0},w)\in F_\bbV\}. 
\end{align}
Let therefore,
\begin{align}\label{eq:A17}
\breve{A}_{\bbP\bbV} :=& \pre(Z_\bbP^{-1}(0)\cap \alpha^{-1}_\bbP(V))= \nonumber \\
&\{u\in \alpha^{-1}_\bbP(\pre(V))|  
Z_{\bbP\bbV}(u(u^{-1}(\alpha^{-1}_\bbP(\pre(V)))))\cap \{0\}\times F_\bbV \neq \emptyset\} . 
\end{align}
From \eqref{eq:A12.1} it follows:
\begin{align}\label{eq:A14'}
&\mbox{For each } u\in \alpha^{-1}_\bbP(\pre(V)) \mbox{ holds }\nonumber \\
&\iota(Z_{\bbP\bbV}(u(u^{-1}(\alpha^{-1}_\bbP(\pre(V))))))=\mcE(\iota(Z_{\bbP\bbV}(u))).
\end{align}
For each Petri net and each two markings $M$ and $M'$ it is decidable if
$M'\in \mcE(M)$ \cite{reutenauer90}, \cite{Wimmel08}.
From this it follows on account of \eqref{eq:A14'}:
\begin{align}\label{eq:A18}
\mbox{For each } u\in \alpha^{-1}_\bbP(\pre(V)) \mbox{ it is decidable, if } u\in \breve{A}_{\bbP\bbV}. 
\end{align}
On account of \eqref{eq:A17}:
\begin{align}\label{eq:A19}
\alf(Z_\bbP^{-1}(0)\cap \alpha^{-1}_\bbP(V))=\alf(\breve{A}_{\bbP\bbV}) .
\end{align}
\begin{align}\label{eq:A20}
\mbox{Let now } \breve{Q}_{\bbP\bbV}:=Z_{\bbP\bbV}(\breve{A}_{\bbP\bbV})\subset Q_{\bbP\bbV} .
\end{align}
Analog to \eqref{eq:A8} and \eqref{eq:A9},
\begin{align}\label{eq:A21}
\alf(\breve{A}_{\bbP\bbV}) \mbox{ is finite if }
\breve{Q}_{\bbP\bbV} \mbox{ is finite.}
\end{align}
\begin{align}\label{eq:A22}
\mbox{For each } n\in \dsN_0 \mbox{ let } 
\breve{Q}_{\bbP\bbV}^{(n)}:=\breve{Q}_{\bbP\bbV}\cap {Q}_{\bbP\bbV}^{(n)}. 
\end{align}
Therewith,
\begin{align}\label{eq:A23}
\breve{Q}_{\bbP\bbV}^{(n)} \mbox{ are finite sets that are computable on account of \eqref{eq:A18}.} 
\end{align}
As in  \eqref{eq:A6} - \eqref{eq:A10}, 
\begin{align}\label{eq:A24}
&\mbox{the stepwise computation of each } \breve{Q}_{\bbP\bbV}^{(i)}
\mbox{ for } i\in\dsN_0  \mbox{ and the test } \breve{Q}_{\bbP\bbV}^{(i)}=\breve{Q}_{\bbP\bbV}^{(i+1)}\nonumber \\
&\mbox{provides a semi-algorithm to decide the finiteness of } \alf(\breve{A}_{\bbP\bbV}) . 
\end{align}
Let $k\in\dsN_0$ be the smallest $i\in\dsN_0$ such that 
$\breve{Q}_{\bbP\bbV}^{(i)}=\breve{Q}_{\bbP\bbV}^{(i+1)}$, then 
\begin{align}\label{eq:A25}
 \alf(\breve{A}_{\bbP\bbV})= \alf(\breve{A}_{\bbP\bbV}\cap A_\bbP^{(i+1)}). 
\end{align}
With regard to \eqref{eq:A21} it remains to prove the decidability of finiteness of $\breve{Q}_{\bbP\bbV}$.
This can be done using the following result for Petri nets:
\begin{align}\label{eq:A26}
&\mbox{Let } M \mbox{ and } M' \mbox{ markings of a Petri net, then it is decidable if } \nonumber \\
&\{\mathring{M}\in \mcE(M) | M'\in \mcE(\mathring{M})\} \mbox{ is finite \cite{Wimmel08}.} 
\end{align}
On account of \eqref{eq:A14'}, \eqref{eq:A17}, and \eqref{eq:A20}:
\begin{align}\label{eq:A27}
&\breve{Q}_{\bbP\bbV} \mbox{ is finite iff for each } q\in F_\bbV \nonumber \\
&\iota(Z_{\bbP\bbV}(\{u\in \alpha^{-1}_\bbP(\pre(V)) | 
(0,q)\in Z_{\bbP\bbV}(u(u^{-1}(\alpha^{-1}_\bbP(\pre(V)))))\})
\mbox{ is finite .} 
\end{align}
On account of \eqref{eq:A14'} furthermore holds:
\begin{align}\label{eq:A28}
&\iota(Z_{\bbP\bbV}(\{u\in \alpha^{-1}_\bbP(\pre(V)) | 
(0,q)\in Z_{\bbP\bbV}(u(u^{-1}(\alpha^{-1}_\bbP(\pre(V)))))\}) \nonumber \\
&= \{x\in \mcE(\iota((0,q_{\bbV_0}))) | \iota(0,q)\in \mcE(x)\}. 
\end{align}
Now \eqref{eq:A26} - \eqref{eq:A28} prove the following theorem:
\begin{theorem}\label{thm:A29}
If $\bbP$ and $\bbV$ are finite automata, then it is decidable if 
$\alf(Z_\bbP^{-1}(0)\cap \alpha^{-1}_\bbP(V))$ is finite.
In the positive case $\alf(Z_\bbP^{-1}(0)\cap \alpha^{-1}_\bbP(V))$ is computable by \eqref{eq:A25}.
\end{theorem}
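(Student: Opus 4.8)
The plan is to assemble into a genuine algorithm the machinery developed in \eqref{eq:A16}--\eqref{eq:A28}, in exactly the way \eqref{eq:A26}--\eqref{eq:A28} indicate. The first step is to reduce the question to finiteness of the ``restricted'' state set $\breve{Q}_{\bbP\bbV}$. By \eqref{eq:A19} we have $\alf(Z_\bbP^{-1}(0)\cap \alpha^{-1}_\bbP(V))=\alf(\breve{A}_{\bbP\bbV})$, so it suffices to decide finiteness of $\alf(\breve{A}_{\bbP\bbV})$ and, in the positive case, to compute it. Since $\breve{A}_{\bbP\bbV}=\pre(Z_\bbP^{-1}(0)\cap \alpha^{-1}_\bbP(V))$ is prefix closed and $\Sigma$ is finite, the argument underlying \eqref{eq:A8} applies verbatim to give that $\alf(\breve{A}_{\bbP\bbV})$ is finite iff $Z_\bbP(\breve{A}_{\bbP\bbV})$ is finite; finiteness of $Q_\bbV$ then yields, as in \eqref{eq:A9}, that this is equivalent to finiteness of $\breve{Q}_{\bbP\bbV}=Z_{\bbP\bbV}(\breve{A}_{\bbP\bbV})$. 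Together with \eqref{eq:A21} this makes the two conditions ``$\alf(\breve{A}_{\bbP\bbV})$ finite'' and ``$\breve{Q}_{\bbP\bbV}$ finite'' equivalent, so the theorem reduces to: (i) decide whether $\breve{Q}_{\bbP\bbV}$ is finite, and (ii) in the affirmative case, compute $\alf(\breve{A}_{\bbP\bbV})$.

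For (i) I would invoke the Petri net $N_{\bbP\bbV}$ simulating $\bbS_{\bbP\bbV}$. Combining \eqref{eq:A14'}, \eqref{eq:A17} and \eqref{eq:A20}, as recorded in \eqref{eq:A27}--\eqref{eq:A28}, $\breve{Q}_{\bbP\bbV}$ is finite iff for every $q\in F_\bbV$ the set $\{x\in \mcE(\iota((0,q_{\bbV_0}))) \mid \iota((0,q))\in \mcE(x)\}$ is finite. Each of these finitely many sets (recall $F_\bbV\subset Q_\bbV$ is finite) has the form $\{\mathring{M}\in \mcE(M)\mid M'\in \mcE(\mathring{M})\}$ for the concrete markings $M=\iota((0,q_{\bbV_0}))$ and $M'=\iota((0,q))$, and finiteness of such a set is decidable by the Petri net result \eqref{eq:A26}. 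Running this test once for each $q\in F_\bbV$ decides (i).

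For (ii), in the case that $\breve{Q}_{\bbP\bbV}$ is finite, I would run the semi-algorithm of \eqref{eq:A24}: compute the sets $\breve{Q}_{\bbP\bbV}^{(i)}=\breve{Q}_{\bbP\bbV}\cap Q_{\bbP\bbV}^{(i)}$, which are finite and effectively obtainable by \eqref{eq:A22}--\eqref{eq:A23} (using decidability of membership in $\breve{A}_{\bbP\bbV}$ from \eqref{eq:A18}), and test whether $\breve{Q}_{\bbP\bbV}^{(i)}=\breve{Q}_{\bbP\bbV}^{(i+1)}$. Because the transition structure of $\bbS_{\bbP\bbV}$ is local in the sense of \eqref{eq:A1}--\eqref{eq:A2}, once two consecutive layers coincide all later layers coincide with them, so the ascending chain of \eqref{eq:A6} stabilizes at some smallest $k$ precisely when $\breve{Q}_{\bbP\bbV}$ is finite; having already settled finiteness in step (i), we know the test terminates and delivers this $k$. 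Then \eqref{eq:A25} gives $\alf(\breve{A}_{\bbP\bbV})=\alf(\breve{A}_{\bbP\bbV}\cap A_\bbP^{(k+1)})$, a finite set that is computable since $\breve{A}_{\bbP\bbV}\cap A_\bbP^{(k+1)}$ is; by \eqref{eq:A19} this equals $\alf(Z_\bbP^{-1}(0)\cap \alpha^{-1}_\bbP(V))$, which completes the proof.

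The main obstacle is the same subtlety already met in Theorem~\ref{thm:A15}: one has a genuine \emph{decision} procedure for finiteness of $\breve{Q}_{\bbP\bbV}$ but only a \emph{semi}-algorithm for actually computing $\alf(\breve{A}_{\bbP\bbV})$, so the two must be interlocked — the Petri net test says whether the semi-algorithm will halt, and only then do we run it and read off $k$. Making this precise rests on two facts already established in the excerpt and needing only to be cited: the equivalence ``the chain $(\breve{Q}_{\bbP\bbV}^{(i)})_i$ stabilizes $\iff$ $\breve{Q}_{\bbP\bbV}$ is finite'', which follows from the local transition structure \eqref{eq:A1}--\eqref{eq:A2}, and the faithfulness of the simulation \eqref{eq:A12.1}, \eqref{eq:A14'}, which transports finiteness questions about $\breve{Q}_{\bbP\bbV}$ to the decidable reachability questions \eqref{eq:A26} for $N_{\bbP\bbV}$. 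Since both are in place, the remaining work is purely the bookkeeping of combining (i) and (ii).
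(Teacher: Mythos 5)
Your proposal is correct and follows essentially the same route as the paper, which likewise establishes Theorem~\ref{thm:A29} by combining the reduction \eqref{eq:A19}--\eqref{eq:A21} with the Petri-net finiteness test \eqref{eq:A26}--\eqref{eq:A28} and then, in the positive case, running the layer semi-algorithm \eqref{eq:A22}--\eqref{eq:A25} to compute the answer. Your explicit spelling-out of the ``iff'' in the analogue of \eqref{eq:A8}--\eqref{eq:A9} and of the interlocking of the decision procedure with the semi-algorithm matches the paper's intent exactly.
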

Now, combining the technique of Section~\ref{sec:representation theorem} with the simulation 
of S-automata by Petri nets will result in a proof of the decidability of $\SP(P \cup \{\varepsilon\},V)$ 
for regular $P$ and $V$. The idea is, to consider the counterexamples for 
\[\mcR'_\bbP(\alpha_\bbP^{-1}(V) \cap Z_\bbP^{-1}(0)) \subset \alpha_\bbP^{-1}(V).\]  
Preliminarily we notice that on account of \eqref{subeq:shuff-rep-d}
\begin{equation}\label{eq:decidability-1}
Z_\bbP(d) \leq Z_\bbP(c) \mbox{ for each } d \in \mcR'_\bbP(\{ c \}).
\end{equation}
By Corollary~\ref{cor:eq-rel-R0P'} $\SP(P \cup \{\varepsilon\},V)$ does not hold, iff 
there exists $c \in \alpha_\bbP^{-1}(V) \cap Z_\bbP^{-1}(0)$ and 
$d \in \mcR'_\bbP(\{ c \})$ with $d \notin \alpha_\bbP^{-1}(V)$. 
With Theorem~\ref{thm:RT} this is equivalent to: 
\begin{align}\label{eq:decidability-2}
& \mbox{There exists } x \in W_{\shuffle_\bbP} \mbox{ with } \nonumber \\ 
& \mu_{\shuffle_\bbP}(x) \in Z_\bbP^{-1}(0) \cap \alpha_\bbP^{-1}(V) 
  \mbox{ and } \nu_{\shuffle_\bbP}(x) \notin \alpha_\bbP^{-1}(V).
\end{align}
As $\nu_{\shuffle_\bbP}(x) \in \mcR'_\bbP(\{ \mu_{\shuffle_\bbP}(x) \})$ 
by \eqref{eq:decidability-1} \eqref{eq:decidability-2} is equivalent to
\begin{align}\label{eq:decidability-3}
& \mbox{There exists } x \in W_{\shuffle_\bbP} \mbox{ with } \nonumber \\ 
& \mu_{\shuffle_\bbP}(x) \in Z_\bbP^{-1}(0) \cap \alpha_\bbP^{-1}(V) 
  \mbox{ and } \nu_{\shuffle_\bbP}(x) \in Z_\bbP^{-1}(0) \setminus \alpha_\bbP^{-1}(V).
\end{align}
The first step to decide $\SP(P \cup \{\varepsilon\},V)$ is to ``isomorphically refine'' 
the prefix closed language $W_{\shuffle_\bbP}$ by appropriately attaching states of 
$\bbV$ to the elements of $\shuffle_\bbP^{()}$. 
For that purpose, additionally to the assumptions about $\bbP$ and $\bbV$, we assume 
that $\bbV$ is complete. This means $\delta_\bbV : Q_\bbV \times \Sigma \to Q_\bbV$ is a total 
function, and it poses no restriction on $V$, as using an additional dummy state each deterministic 
automaton can be transformed into a complete deterministic automaton recognizing the same language. \\
\ \\
According to $\shuffle_\bbP^{()}$, $\varphi_{\shuffle_\bbP}^{(1.2)}$ 
and $\varphi_{\shuffle_\bbP}^{(2.2)}$, as defined in 
\eqref{eq:RT-209}, \eqref{eq:RT-186'} and \eqref{eq:RT-206'''}, let now 
\begin{align}\label{eq:decidability-4}
\Psi_\bbP^\bbV := \{& ((y_1,y_2),x,(y'_1,y'_2)) \in 
  (Q_\bbV \times Q_\bbV) \times \shuffle_\bbP^{()} \times (Q_\bbV \times Q_\bbV) |  \nonumber \\
& y'_1 = \delta_\bbV(y_1,\varphi_{\shuffle_\bbP}^{(1.2)}(x_1)), \nonumber \\
& y'_2 = \delta_\bbV(y_2,\varphi_{\shuffle_\bbP}^{(2.2)}(x_2)) 
  \mbox{ if } x_2 \in \shuffle_\bbP^{(2)'} \mbox{ and } \nonumber \\  
& y'_2 = y_2 \mbox{ if } x_2 \in S_{\shuffle_\bbP}^{(2)}, \mbox{ with } \nonumber \\
& x = (x_1,x_2,x_3) \},
\end{align}
let the mappings $Z_\bbP^{\bbV(1)} : \Psi_\bbP^{\bbV*} \to Q_\bbV$ and 
$Z_\bbP^{\bbV(2)} : \Psi_\bbP^{\bbV*} \to Q_\bbV$ be defined by
\begin{equation}\label{eq:decidability-5}
Z_\bbP^{\bbV(1)}(\varepsilon) := Z_\bbP^{\bbV(2)}(\varepsilon) := q_{\bbV_0}, \ 
  Z_\bbP^{\bbV(1)}(uv) := y'_1 \mbox{ and } Z_\bbP^{\bbV(2)}(uv) := y'_2
\end{equation}
for  $u \in \Psi_\bbP^{\bbV*}$ and $v \in \Psi_\bbP^\bbV$ with $v = ((y_1,y_2),x,(y'_1,y'_2))$, \\
\ \\
and let the homomorphism $\psi_\bbP^\bbV : \Psi_\bbP^{\bbV*} \to \shuffle_\bbP^{()*}$ 
be defined by
\begin{equation}\label{eq:decidability-6}
\psi_\bbP^\bbV((y_1,y_2),x,(y'_1,y'_2))) := x \mbox{ for } ((y_1,y_2),x,(y'_1,y'_2)) \in \Psi_\bbP^\bbV.
\end{equation}
\begin{definition}\label{def:WPV} \ \\
Let the prefix closed language $W_\bbP^\bbV \subset \Psi_\bbP^{\bbV*}$ be defined by
\begin{align*}
W_\bbP^\bbV := \{& w \in (\psi_\bbP^\bbV)^{-1}(W_{\shuffle_\bbP}) | 
  \ y_1 = Z_\bbP^{\bbV(1)}(u) \mbox{ and } y_2 = Z_\bbP^{\bbV(2)}(u) \mbox{ for each }  \\ 
& uv \in \pre(w) \mbox{ with } u \in \Psi_\bbP^{\bbV*} \mbox{ and } 
  v = ((y_1,y_2),x,(y'_1,y'_2))\in \Psi_\bbP^\bbV \}.
\end{align*}
\end{definition}
Now Definition~\ref{def:mu-nu} implies
\begin{align}\label{eq:decidability-7}
& Z_\bbP^{\bbV(1)}(u) = 
   \delta_\bbV(q_{\bbV_0},\alpha_\bbP(\mu_{\shuffle_\bbP}(\psi_\bbP^\bbV(u)))) \mbox{ and } \nonumber \\
& Z_\bbP^{\bbV(2)}(u) = \delta_\bbV(q_{\bbV_0},\alpha_\bbP(\nu_{\shuffle_\bbP}(\psi_\bbP^\bbV(u)))) 
   \mbox{ for each } u \in W_\bbP^\bbV.
\end{align}
\begin{equation}\label{eq:decidability-8}
\mbox{As } \bbV \mbox{ is a complete deterministic automaton } (\psi_\bbP^\bbV)_{|W_\bbP^\bbV} \mbox{ is a bijection.} 
\end{equation}
On account of \eqref{eq:decidability-8}, \eqref{eq:decidability-3} is equivalent to: 
\begin{align*}
& \mbox{There exists } u \in W_\bbP^\bbV \mbox{ with } \nonumber \\ 
& \mu_{\shuffle_\bbP}(\psi_\bbP^\bbV(u)) \in Z_\bbP^{-1}(0) \cap \alpha_\bbP^{-1}(V) 
  \mbox{ and } \nu_{\shuffle_\bbP}(\psi_\bbP^\bbV(u)) \in Z_\bbP^{-1}(0) \setminus \alpha_\bbP^{-1}(V),
\end{align*}
which by \eqref{eq:decidability-7} can be equivalently restated in terms of reachable states:
\begin{align}\label{eq:decidability-9}
& \mbox{There exists } u \in W_\bbP^\bbV \mbox{ with } Z_\bbP^{\bbV(1)}(u) \in F_\bbV,
  \ Z_\bbP^{\bbV(2)}(u) \in Q_\bbV \setminus F_\bbV \mbox{ and } \nonumber \\ 
& Z_\bbP(\mu_{\shuffle_\bbP}(\psi_\bbP^\bbV(u))) = 0 = Z_\bbP(\nu_{\shuffle_\bbP}(\psi_\bbP^\bbV(u))).
\end{align}
Caused by this formulation, the second step to decide $\SP(P \cup \{\varepsilon\},V)$ is to construct 
a deterministic semiautomaton $\bbW_\bbP^\bbV$ recognizing $W_\bbP^\bbV$. Generally $\bbW_\bbP^\bbV$ will be infinite.  
It is an immediate consequence of 
Definition~\ref{def:WPV} that
\begin{align}\label{eq:decidability-10}
W_\bbP^\bbV = &(\psi_\bbP^\bbV)^{-1}(W_{\shuffle_\bbP}) \setminus 
  (X_\bbP^\bbV \Psi_\bbP^{\bbV*} \cup \Psi_\bbP^{\bbV*} Y_\bbP^\bbV \Psi_\bbP^{\bbV*}) \mbox{ with } \nonumber \\
X_\bbP^\bbV = &\{((y_1,y_2),x,(y'_1,y'_2))\in \Psi_\bbP^\bbV | y_1 \neq q_{\bbV_0} 
  \mbox{ or } y_2 \neq q_{\bbV_0} \} \mbox{ and } \nonumber \\
Y_\bbP^\bbV = &\{((y_1,y_2),x,(y'_1,y'_2))((\bar{y}_1,\bar{y}_2),\bar{x},(\bar{y}'_1,\bar{y}'_2)) 
\in \Psi_\bbP^\bbV \Psi_\bbP^\bbV   | \nonumber \\ 
&  \ y'_1 \neq \bar{y}_1 \mbox{ or } y'_2 \neq \bar{y}_2 \}.
\end{align}
Let now 
$\bbW_{\shuffle_\bbP} = 
(\shuffle_\bbP^{()} , S_{\shuffle_\bbP}^{()} , \lambda_{\shuffle_\bbP} , s_0)$
be a deterministic semiautomaton recognizing $W_{\shuffle_\bbP}$, where 
$\lambda_{\shuffle_\bbP} : 
S_{\shuffle_\bbP}^{()} \times \shuffle_\bbP^{()} \to S_{\shuffle_\bbP}^{()}$ 
is a partial function and $s_0 \in S_{\shuffle_\bbP}^{()}$. 
Generally $\bbW_{\shuffle_\bbP}$ is infinite.
\eqref{eq:decidability-10} implies that the following deterministic semiautomaton 
$\bbW_\bbP^\bbV$ recognizes $W_\bbP^\bbV$:	
\begin{align}\label{eq:decidability-11}
& \bbW_\bbP^\bbV := (\Psi_\bbP^\bbV, S_\bbP^\bbV, \lambda_\bbP^\bbV, q_{\bbP_0}^\bbV) \mbox{ where } \nonumber \\
& S_\bbP^\bbV := Q_\bbV \times Q_\bbV \times S_{\shuffle_\bbP}^{()}, \ 
q_{\bbP_0}^\bbV := (q_{\bbV_0}, q_{\bbV_0}, s_0), \mbox{ and } \nonumber \\
& \lambda_\bbP^\bbV : S_\bbP^\bbV \times \Psi_\bbP^\bbV \to S_\bbP^\bbV 
\mbox{ is a partial function with } \nonumber \\
& \lambda_\bbP^\bbV((y_1,y_2,s),a) := (y'_1,y'_2,s'), \mbox{ for }
  (y_1,y_2,s) \in Q_\bbV \times Q_\bbV \times S_{\shuffle_\bbP}^{()}, \nonumber \\
& a = ((y_1,y_2),x,(y'_1,y'_2)) \in \Psi_\bbP^\bbV \mbox{ and } \nonumber \\
& \lambda_{\shuffle_\bbP}(s,x) = s'. 
\end{align}
By \eqref{eq:decidability-11} and Definition~\ref{def:WPV} holds 
\begin{equation}\label{eq:decidability-12}
\lambda_\bbP^\bbV(q_{\bbP_0}^\bbV,u) = 
(Z_\bbP^{\bbV(1)}(u),Z_\bbP^{\bbV(2)}(u),\lambda_{\shuffle_\bbP}(s_0,\psi_\bbP^\bbV(u))) 
\mbox{ for each } u \in W_\bbP^\bbV.
\end{equation}
To completely define $\bbW_\bbP^\bbV$, a complete definition of $\bbW_{\shuffle_\bbP}$ must be given. 
For that purpose we need the mapping $\check{\varphi}_{\shuffle_\bbP}^{(3.1)} : \shuffle_\bbP^{(3)} 
\to S_{\shuffle_\bbP}^{(3)} \dotcup \{\check{0}\}$ defined by
\begin{align}\label{eq:decidability-13}
& \check{\varphi}_{\shuffle_\bbP}^{(3.1)}((f,a,g)) := f 
\mbox{ for } (f,a,g) \in \shuffle_\bbP^{(3)'}, \nonumber \\ 
& \check{\varphi}_{\shuffle_\bbP}^{(3.1)}(f) := f \mbox{ for } f \in S_{\shuffle_\bbP}^{(3)} \mbox{ and } \nonumber \\
& \check{\varphi}_{\shuffle_\bbP}^{(3.1)}(\check{0}) := \check{0}.
\end{align}
As in Example~\ref{ex:RT}, \eqref{eq:RT-209} and Definition~\ref{def:RT-210} can be directly translated 
into a deterministic semiautomaton $\bbW_{\shuffle_\bbP}$. Let therefore 
\begin{align}\label{eq:decidability-14}
& S_{\shuffle_\bbP}^{()} := 
  S_{\shuffle_\bbP}^{(1)} \times S_{\shuffle_\bbP}^{(2)} \times (S_{\shuffle_\bbP}^{(3)} \dotcup \{ \check{0} \}), 
  \ s_0 := (0,0,0), \mbox{ and let } \nonumber \\
& \lambda_{\shuffle_\bbP}((q_1,q_2,q_3),(x_1,x_2,x_3)) \mbox{ be defined for } \nonumber \\
& (q_1,q_2,q_3) \in S_{\shuffle_\bbP}^{()} \mbox{ and } 
  (x_1,x_2,x_3) \in \shuffle_\bbP^{()} \mbox{ with } \nonumber \\
& (q_1,q_2,q_3) = 
  (\varphi_{\shuffle_\bbP}^{(1.1)}(x_1),\varphi_{\shuffle_\bbP}^{(2.1)}(x_2),\check{\varphi}_{\shuffle_\bbP}^{(3.1)}(x_3)), 
  \mbox{ where } \nonumber \\
& \lambda_{\shuffle_\bbP}((q_1,q_2,q_3),(x_1,x_2,x_3)) := 
  (\varphi_{\shuffle_\bbP}^{(1.3)}(x_1),\varphi_{\shuffle_\bbP}^{(2.3)}(x_2),\check{0}) \mbox{ for } \nonumber \\
& x_3 \in (\shuffle_\bbP^{(3)'} \cap (\varphi_{\shuffle_\bbP}^{(3.3)})^{-1}(0))\cup \{ \check{0} \} \mbox{ and } \nonumber \\
& \lambda_{\shuffle_\bbP}((q_1,q_2,q_3),(x_1,x_2,x_3)) := 
  (\varphi_{\shuffle_\bbP}^{(1.3)}(x_1),\varphi_{\shuffle_\bbP}^{(2.3)}(x_2),
   \varphi_{\shuffle_\bbP}^{(3.3)}(x_3)) \mbox{ for } \nonumber \\
& x_3 \in \shuffle_\bbP^{(3)} \setminus
  ((\shuffle_\bbP^{(3)'} \cap (\varphi_{\shuffle_\bbP}^{(3.3)})^{-1}(0))\cup \{ \check{0} \}).
\end{align}
Now by induction it is easy to show that $\bbW_{\shuffle_\bbP}$ recognizes $W_{\shuffle_\bbP}$. With 
\eqref{eq:local-def-A}, \eqref{eq:RT-199''},  Definition~\ref{def:RT-210} and 
Definition~\ref{def:mu-nu}, \eqref{eq:decidability-14} implies 
\begin{align}\label{eq:decidability-15}
& Z_\bbP(\mu_{\shuffle_\bbP}(w)) = q_1 \mbox{ and } Z_\bbP(\nu_{\shuffle_\bbP}(w)) = q_2, \mbox{ with } 
  \lambda_{\shuffle_\bbP}((0,0,0),w) = (q_1,q_2,q_3), \nonumber \\
& \mbox{for each } w \in W_{\shuffle_\bbP}.
\end{align}
By \eqref{eq:RT-208} holds
\begin{align}\label{eq:decidability-16}
& \lambda_{\shuffle_\bbP}((0,0,0),w) \in \{(0,0,0),(0,0,\check{0}) \} 
  \mbox{ for each } w \in W_{\shuffle_\bbP} \mbox{ with } \nonumber \\
& \lambda_{\shuffle_\bbP}((0,0,0),w) \in 
\{ 0 \} \times S_{\shuffle_\bbP}^{(2)} \times (S_{\shuffle_\bbP}^{(3)} \dotcup \{ \check{0} \}).
\end{align}
Now, on account of \eqref{eq:decidability-3}, \eqref{eq:decidability-9}, \eqref{eq:decidability-12}, 
\eqref{eq:decidability-15}  and \eqref{eq:decidability-16}
\begin{align}\label{eq:decidability-17}
& \SP(P \cup \{\varepsilon\},V) \mbox{ iff there don't exist any } u \in W_\bbP^\bbV \mbox{ with } \nonumber \\  
& \lambda_\bbP^\bbV(q_{\bbP_0}^\bbV,u) \in F_\bbV \times (Q_\bbV \setminus F_\bbV) \times \{(0,0,0),(0,0,\check{0})\}.
\end{align}
A more detailed analysis shows that
\begin{align}
& \lambda_\bbP^\bbV(q_{\bbP_0}^\bbV,u) \in F_\bbV \times (Q_\bbV \setminus F_\bbV) \times \{(0,0,0),(0,0,\check{0})\} 
  \mbox{ iff } \nonumber \\  
& \lambda_\bbP^\bbV(q_{\bbP_0}^\bbV,u) \in F_\bbV \times (Q_\bbV \setminus F_\bbV) \times \{(0,0,\check{0})\}. \nonumber
\end{align}
The reachability question posed by \eqref{eq:decidability-17} can be decided by simulating 
$\bbW_\bbP^\bbV$ by a Petri net. Preparative to that simulation, first we need an appropriate characterization of 
$\lambda_\bbP^\bbV$, similar to the characterization of $\shuffle_\bbP$ by $\shuffle_\bbP^\sigma$ together with $\dsN_0^Q$. 
So the third step to decide $\SP(P \cup \{\varepsilon\},V)$ is to present such a characterization. \\
\ \\
By \eqref{eq:decidability-4}, \eqref{eq:decidability-5}, \eqref{eq:decidability-11} and \eqref{eq:decidability-14} 
$\lambda_\bbP^\bbV$ is uniquely determined by $\delta_\bbV$ and $\shuffle_\bbP^{()}$. Therefore we now look for an 
appropriate characterization of $\shuffle_\bbP^{()}$. For that purpose we assume 
\begin{equation}\label{eq:decidability-18}
\alf(P) = \Sigma , 
\end{equation}
which don't cause any restriction for $P$. Now we assemble the different sets 
to define $\shuffle_\bbP^{()}$. On account of \eqref{eq:RT-184}, \eqref{eq:RT-190}, \eqref{eq:RT-191}, 
\eqref{subeq:iota-shuff-2-c} and $E_\bbP \subset A_\bbP$ holds 
\begin{equation}\label{eq:decidability-19}
\shuffle_\bbP^{(1)} = \shuffle_\bbP , \ S_{\shuffle_\bbP}^{(1)} = Z_\bbP(A_\bbP) 
\mbox{ and } S_{\shuffle_\bbP}^{(3)} = Z_\bbP(E_\bbP).
\end{equation}
\begin{align}\label{eq:decidability-20}
& S_{\shuffle_\bbP}^{(3)} \mbox{ is finite and can be effectively determined, if } \nonumber \\
& \bbP \mbox{ is finite, and it holds } 0 \in S_{\shuffle_\bbP}^{(3)}.
\end{align}
On account of \eqref{eq:RT-192}, \eqref{eq:decidability-20}, \eqref{eq:RT-195}, 
\eqref{eq:decidability-18} and \eqref{eq:RT-200} it holds
\begin{align}\label{eq:decidability-21}
& S_{\shuffle_\bbP}^{(2)} = Z_\bbP(A_\bbP), 
 \ \shuffle_\bbP^{(2)'} = \shuffle_\bbP \cap (Z_\bbP(A_\bbP) \times \Sigma \times Z_\bbP(A_\bbP))
\mbox{ and } \nonumber \\
& \shuffle_\bbP^{(3)'} = \shuffle_{\check{\bbP}}^E \cap 
  (Z_\bbP(E_\bbP) \times \check{\iota}^{-1}(\Sigma) \times Z_\bbP(E_\bbP)) = \shuffle_{\check{\bbP}}^\sigma.
\end{align}
\begin{equation}\label{eq:decidability-22}
\mbox{So } \shuffle_\bbP^{(3)'} = \shuffle_{\check{\bbP}}^\sigma 
\mbox{ is finite and can be effectively determined, if } 
\bbP \mbox{ is finite.}
\end{equation}
By \eqref{eq:grave-P-2} \eqref{eq:RT-209} can be rephrased. Let therefore 
the mappings $\varphi_\Delta^{(1.1)'}$, $\varphi_\Delta^{(1.2)'}$ and $\varphi_\Delta^{(1.3)'}$ be defined by
\begin{align}\label{eq:decidability-23}
& \varphi_\Delta^{(1.1)'} : \dsN_0^Q \times \Sigma \times \dsN_0^Q \to \dsN_0^Q \mbox{ with } 
  \varphi_\Delta^{(1.1)'}((f,a,g)) := f, \nonumber \\
& \varphi_\Delta^{(1.2)'} : \dsN_0^Q \times \Sigma \times \dsN_0^Q \to \Sigma \mbox{ with } 
  \varphi_\Delta^{(1.2)'}((f,a,g)) := a, \nonumber \\
& \varphi_\Delta^{(1.3)'} : \dsN_0^Q \times \Sigma \times \dsN_0^Q \to \dsN_0^Q \mbox{ with } 
  \varphi_\Delta^{(1.3)'}((f,a,g)) := g \nonumber \\
& \mbox{for each } (f,a,g) \in \dsN_0^Q \times \Sigma \times \dsN_0^Q.
\end{align}
Then \eqref{eq:RT-209} becomes
\begin{align}\label{eq:decidability-24}
\Delta^{()} = \{\ &(x_1,x_2,x_3) \in (\dsN_0^Q \times \Sigma \times \dsN_0^Q) \times \Delta^{(2)} \times \Delta^{(3)} \  | \nonumber \\
& \varphi_\Delta^{(1.1)'}(x_1) = \varphi_\Delta^{(2.1)}(x_2) + \varphi_\Delta^{(3.1)}(x_3), \nonumber \\
& \varphi_\Delta^{(1.3)'}(x_1) = \varphi_\Delta^{(2.3)}(x_2) + \varphi_\Delta^{(3.3)}(x_3) \mbox{ and } \nonumber \\
& \mbox{either } x_2 \in \Delta^{(2)'}, \ x_3 \in S_\Delta^{(3)} \dotcup \{ \check{0} \} 
  \mbox{ and } \varphi_\Delta^{(1.2)'}(x_1) = \varphi_\Delta^{(2.2)}(x_2) \nonumber \\
& \mbox{or } x_2 \in S_\Delta^{(2)}, \ x_3 \in \Delta^{(3)'} 
  \mbox{ and } \varphi_\Delta^{(1.2)'}(x_1) = \check{\iota}(\varphi_\Delta^{(3.2)}(x_3)) \ \}. 
\end{align}
Now \eqref{eq:decidability-24} together with \eqref{eq:RT-196}, \eqref{eq:RT-200}, 
\eqref{eq:decidability-19} and \eqref{eq:decidability-21} implies
\begin{align}\label{eq:decidability-25}
\shuffle_\bbP^{()} = & \shuffle_\bbP^{(S)} \dotcup \shuffle_\bbP^{(E)} \mbox{ with } \nonumber \\
\shuffle_\bbP^{(S)} := 
\{\ &(x_1,x_2,x_3) \in (\dsN_0^Q \times \Sigma \times \dsN_0^Q) \times \shuffle_\bbP^{(2)'} 
 \times (Z_\bbP(E_\bbP) \dotcup \{\check{0}\}) \  | \nonumber \\
& \varphi_{\shuffle_\bbP}^{(1.1)'}(x_1) = \varphi_{\shuffle_\bbP}^{(2.1)}(x_2) + \varphi_{\shuffle_\bbP}^{(3.1)}(x_3), \nonumber \\
& \varphi_{\shuffle_\bbP}^{(1.3)'}(x_1) = \varphi_{\shuffle_\bbP}^{(2.3)}(x_2) + \varphi_{\shuffle_\bbP}^{(3.3)}(x_3) \mbox{ and } \nonumber \\
& \varphi_{\shuffle_\bbP}^{(1.2)'}(x_1) = \varphi_{\shuffle_\bbP}^{(2.2)}(x_2) \ \} \mbox{ and } \nonumber \\
\shuffle_\bbP^{(E)} := 
\{\ &(x_1,x_2,x_3) \in (\dsN_0^Q \times \Sigma \times \dsN_0^Q) \times Z_\bbP(A_\bbP) \times \shuffle_{\check{\bbP}}^\sigma \  | \nonumber \\
& \varphi_{\shuffle_\bbP}^{(1.1)'}(x_1) = \varphi_{\shuffle_\bbP}^{(2.1)}(x_2) + \varphi_{\shuffle_\bbP}^{(3.1)}(x_3), \nonumber \\
& \varphi_{\shuffle_\bbP}^{(1.3)'}(x_1) = \varphi_{\shuffle_\bbP}^{(2.3)}(x_2) + \varphi_{\shuffle_\bbP}^{(3.3)}(x_3) \mbox{ and } \nonumber \\
& \varphi_{\shuffle_\bbP}^{(1.2)'}(x_1) = \check{\iota}(\varphi_{\shuffle_\bbP}^{(3.2)}(x_3)) \ \}. 
\end{align}
Because of \eqref{eq:decidability-21} and \eqref{eq:A11.3} holds
\begin{align}\label{eq:decidability-26}
& (f,a,g) \in \shuffle_\bbP^{(2)'} \mbox{ iff there exists } h \in \dsN_0^Q  \mbox{ and } 
  (f',a,g') \in \shuffle_\bbP^\sigma \mbox{ with } \nonumber \\ 
& f = f' + h \in Z_\bbP(A_\bbP) \mbox{ and } g = g' + h.
\end{align}
This implies
\begin{align}\label{eq:decidability-28}
\shuffle_\bbP^{(S)} = 
\{\ &(x_1,x_2,x_3) \in (\dsN_0^Q \times \Sigma \times \dsN_0^Q) \times 
 \shuffle_\bbP^{(2)'} \times (Z_\bbP(E_\bbP) \dotcup \{\check{0}\}) \  | \nonumber \\
& \mbox{there exist } (f',a,g') \in \shuffle_\bbP^\sigma \mbox{ and } h \in \dsN_0^Q,  \mbox{ such that } \nonumber \\
& f' + h \in Z_\bbP(A_\bbP),\ x_2 = (f'+h,a,g'+h) \mbox{ and } \nonumber \\
& x_1 = (f' + h + \varphi_{\shuffle_\bbP}^{(3.1)}(x_3) , a ,
  g' + h + \varphi_{\shuffle_\bbP}^{(3.3)}(x_3)) \ \}.
\end{align}
Similar to \eqref{eq:decidability-28} $\shuffle_\bbP^{(E)}$ can be represented by
\begin{align}\label{eq:decidability-29}
\shuffle_\bbP^{(E)} = 
\{\ &(x_1,x_2,x_3) \in (\dsN_0^Q \times \Sigma \times \dsN_0^Q) \times 
Z_\bbP(A_\bbP) \times \shuffle_{\check{\bbP}}^\sigma \  | \nonumber \\
& \mbox{there exist } (f',\check{a},g') \in \shuffle_{\check{\bbP}}^\sigma \mbox{ and } h \in Z_\bbP(A_\bbP), \mbox{ such that } \nonumber \\
& x_3 = (f',\check{a},g'),\ x_2 = h \mbox{ and } x_1 = (f' + h , \check{\iota}(\check{a}) , g' + h) \ \}.
\end{align}
On account of \eqref{eq:A11.4} the representation \eqref{eq:decidability-28} is ambiguous.
Contrary to \eqref{eq:decidability-28}, the representation \eqref{eq:decidability-29} is unique. 
To capture the ambiguity of \eqref{eq:decidability-28} let the mapping
\begin{align}\label{eq:decidability-30}
\sigma_\bbP^{(S)} & : \shuffle_\bbP^{(S)} \to 
  2^{\shuffle_\bbP^\sigma} \setminus \{ \emptyset \} \mbox{ be defined by } \nonumber \\
\sigma_\bbP^{(S)} ((x_1,x_2,x_3)) := & \{(f',a,g') \in \shuffle_\bbP^\sigma \ | 
 \mbox{ there exists } h \in \dsN_0^Q  \mbox{ such that } \nonumber \\
& \ f' + h \in Z_\bbP(A_\bbP),\ x_2 = (f'+h,a,g'+h) \mbox{ and } \nonumber \\
& \ x_1 = (f' + h + \varphi_{\shuffle_\bbP}^{(3.1)}(x_3) , a ,
  g' + h + \varphi_{\shuffle_\bbP}^{(3.3)}(x_3)) \ \} \nonumber \\ 
\mbox{for each } & \ (x_1,x_2,x_3) \in \shuffle_\bbP^{(S)}.
\end{align}
As $\shuffle_\bbP^{()} = \shuffle_\bbP^{(S)} \ \dotcup \ \shuffle_\bbP^{(E)}$, 
for technical reasons $\sigma_\bbP^{(S)}$ can be extended to
\begin{align}\label{eq:decidability-31}
\sigma_\bbP^{()} & : \shuffle_\bbP^{()} \to 
  (2^{\shuffle_\bbP^\sigma} \setminus \{ \emptyset \}) \ 
  \dotcup \ (2^{\shuffle_{\check{\bbP}}^\sigma} \setminus \{ \emptyset \}) \mbox{ by } \nonumber \\
\sigma_\bbP^{()}((x_1,x_2,x_3)) & := \sigma_\bbP^{(S)}((x_1,x_2,x_3)) \ \ \ \ \ \ \mbox{ for } 
 (x_1,x_2,x_3) \in \shuffle_\bbP^{(S)}  \mbox{ and } \nonumber \\
\sigma_\bbP^{()}((x_1,x_2,x_3)) & := \{(f',\check{a},g') \in \shuffle_{\check{\bbP}}^\sigma \ | 
 \mbox{ there exists } h \in Z_\bbP(A_\bbP) \mbox{ such that } \nonumber \\ 
& \ \ \ \ \ \  x_3 = (f',\check{a},g'),\ x_2 = h \mbox{ and } x_1 = (f' + h , \check{\iota}(\check{a}) , g' + h) \ \} \nonumber \\
& \ \ \ \ \ \ \ \ \ \ \ \ \ \ \ \ \ \ \ \ \ \ \ \ \ \ \ \ \ \ \ \ \ \mbox{ for } (x_1,x_2,x_3) \in \shuffle_\bbP^{(E)}, \nonumber \\
\mbox{which implies }& \#(\sigma_\bbP^{()}((x_1,x_2,x_3))) = 1 \mbox{ for } (x_1,x_2,x_3) \in \shuffle_\bbP^{(E)}.
\end{align}
Now \eqref{eq:decidability-28}, \eqref{eq:decidability-29} and \eqref{eq:decidability-31} 
present an appropriate characterization of $\shuffle_\bbP^{()}$ to simulate $\bbW_\bbP^\bbV$ 
by a Petri net $N_\bbP^\bbV$, which is the final step to decide $\SP(P \cup \{\varepsilon\},V)$. 
For that purpose we additionally assume finiteness of $\bbP$ and $\bbV$. To define the 
set of places of $N_\bbP^\bbV$, let
\begin{align}\label{eq:decidability-32}
& Q^{(i)} \mbox{ and } Q_\bbV^{(i)} \mbox{ for each } i \in \{ 1,2 \} 
  \mbox{ be copies of } Q \mbox{ and } Q_\bbV \mbox{ with } \nonumber \\
& Q^{(1)} \cap Q^{(2)} = \emptyset = Q_\bbV^{(1)} \cap Q_\bbV^{(2)} \mbox{ and } 
  Q^{(i)} \cap Q_\bbV^{(j)} = \emptyset \mbox{ for each } i,j \in \{ 1,2 \}, 
  \mbox{ and let } \nonumber \\
& \tau^{(i)} : Q^{(i)} \dotcup Q_\bbV^{(i)} \to Q \dotcup Q_\bbV \mbox{ for each } 
  i \in \{ 1,2 \} \mbox{ be the corresponding bijections } \nonumber \\
& \mbox{with } \tau^{(i)}(Q^{(i)}) = Q \mbox{ and } \tau^{(i)}(Q_\bbV^{(i)}) = Q_\bbV 
  \mbox{ for each } i \in \{ 1,2 \}.
\end{align}
Corresponding to the state set $S_\bbP^\bbV$ of the semiautomaton $W_\bbP^\bbV$, 
which by \eqref{eq:decidability-11}, \eqref{eq:decidability-14}, \eqref{eq:decidability-19} 
and \eqref{eq:decidability-21} is represented by 
\[S_\bbP^\bbV = Q_\bbV \times Q_\bbV \times 
(Z_\bbP(A_\bbP) \times Z_\bbP(A_\bbP) \times (Z_\bbP(E_\bbP) \dotcup \{\check{0}\})), \]
the set $R_\bbP^\bbV$ of places of $N_\bbP^\bbV$ is defined by 
\begin{equation}\label{eq:decidability-33}
R_\bbP^\bbV := Q_\bbV^{(1)} \dotcup Q_\bbV^{(2)} \dotcup (Q^{(1)} \dotcup Q^{(2)} 
  \dotcup (Z_\bbP(E_\bbP) \dotcup \{\check{0}\})).
\end{equation}
By this definition there exists an injective mapping from $S_\bbP^\bbV$ into the set 
of markings of $N_\bbP^\bbV$. Let therefore the injection
\begin{align}\label{eq:decidability-34}
& \iota_\bbP^\bbV : S_\bbP^\bbV \to \dsN_0^{Q_\bbV^{(1)} \dotcup Q_\bbV^{(2)} \dotcup (Q^{(1)} \dotcup Q^{(2)} 
  \dotcup (Z_\bbP(E_\bbP) \dotcup \{\check{0}\}))} \mbox{ be defined by } \nonumber \\
& \iota_\bbP^\bbV((q_1,q_2,(s_1,s_2,s_3)))(x) := 0 \mbox{ for } 
   x \in Q_\bbV^{(1)} \setminus \{(\tau^{(1)})^{-1}(q_1) \},   \nonumber \\
& \iota_\bbP^\bbV((q_1,q_2,(s_1,s_2,s_3)))(x) := 1 \mbox{ for } 
   x \in Q_\bbV^{(1)} \cap \{(\tau^{(1)})^{-1}(q_1) \},   \nonumber \\
& \iota_\bbP^\bbV((q_1,q_2,(s_1,s_2,s_3)))(x) := 0 \mbox{ for } 
   x \in Q_\bbV^{(2)} \setminus \{(\tau^{(2)})^{-1}(q_2) \},   \nonumber \\
& \iota_\bbP^\bbV((q_1,q_2,(s_1,s_2,s_3)))(x) := 1 \mbox{ for } 
   x \in Q_\bbV^{(2)} \cap \{(\tau^{(2)})^{-1}(q_2) \},   \nonumber \\
& \iota_\bbP^\bbV((q_1,q_2,(s_1,s_2,s_3)))(x) := s_1(\tau^{(1)}(x)) \mbox{ for } 
   x \in Q^{(1)},   \nonumber \\
& \iota_\bbP^\bbV((q_1,q_2,(s_1,s_2,s_3)))(x) := s_2(\tau^{(2)}(x)) \mbox{ for } 
   x \in Q^{(2)},   \nonumber \\
& \iota_\bbP^\bbV((q_1,q_2,(s_1,s_2,s_3)))(x) := 0 \mbox{ for } 
   x \in (Z_\bbP(E_\bbP) \dotcup \{\check{0}\}) \setminus \{s_3 \}, \mbox{ and }   \nonumber \\
& \iota_\bbP^\bbV((q_1,q_2,(s_1,s_2,s_3)))(x) := 1 \mbox{ for } 
   x \in (Z_\bbP(E_\bbP) \dotcup \{\check{0}\}) \cap \{s_3 \} \mbox{ for each } \nonumber \\
& (q_1,q_2,(s_1,s_2,s_3)) \in S_\bbP^\bbV \subset 
   Q_\bbV \times Q_\bbV \times (\dsN_0^Q \times \dsN_0^Q \times (Z_\bbP(E_\bbP) \dotcup \{\check{0}\})).
\end{align}
The set $T_\bbP^\bbV$ of transitions of $N_\bbP^\bbV$ will be defined such that there exists 
a bijective mapping $\chi_\bbP^\bbV : 
(Q_\bbV \times Q_\bbV \times \shuffle_\bbP^\sigma) \ \dotcup \ 
(Q_\bbV \times \shuffle_{\check{\bbP}}^\sigma) \to T_\bbP^\bbV$. 
For this purpose let
\[T_\bbP^\bbV := 
  \ \tilde{T}_\bbP^{\bbV(S)} \ \dotcup \ \mathring{T}_\bbP^{\bbV(S)} 
\ \dotcup \ \bar{T}_\bbP^{\bbV(S)} \ \dotcup \ \tilde{\bar{T}}_\bbP^{\bbV(S)} \ \dotcup
 \ \tilde{T}_\bbP^{\bbV(E)} \ \dotcup \ \mathring{T}_\bbP^{\bbV(E)} 
\ \dotcup \ \bar{T}_\bbP^{\bbV(E)} \ \dotcup \ \tilde{\bar{T}}_\bbP^{\bbV(E)}, \mbox{ where } \]
\begin{align}\label{eq:decidability-35}
\tilde{T}_\bbP^{\bbV(S)} := & \{((q_1,q_2),(a,p),(p_1,p_2)) \in 
  (Q_\bbV \times Q_\bbV) \times (\Sigma \times Q) \times (Q_\bbV \times Q_\bbV) | \nonumber \\ 
  & \ (0,a,1_p) \in \tilde{\shuffle}_\bbP^\sigma, \ \delta_\bbV(q_1,a) = p_1 
  \mbox{ and } \delta_\bbV(q_2,a) = p_2  \}, \nonumber \\
\mathring{T}_\bbP^{\bbV(S)} := & \{((q_1,q_2),(q,a,p),(p_1,p_2)) \in 
  (Q_\bbV \times Q_\bbV) \times (Q \times \Sigma \times Q) \times (Q_\bbV \times Q_\bbV) | \nonumber \\ 
  & \ (1_q,a,1_p) \in \mathring{\shuffle}_\bbP^\sigma, \ \delta_\bbV(q_1,a) = p_1 
  \mbox{ and } \delta_\bbV(q_2,a) = p_2  \}, \nonumber \\
\bar{T}_\bbP^{\bbV(S)} := & \{((q_1,q_2),(q,a),(p_1,p_2)) \in 
  (Q_\bbV \times Q_\bbV) \times (Q \times \Sigma) \times (Q_\bbV \times Q_\bbV) | \nonumber \\ 
  & \ (1_q,a,0) \in \bar{\shuffle}_\bbP^\sigma, \ \delta_\bbV(q_1,a) = p_1 
  \mbox{ and } \delta_\bbV(q_2,a) = p_2  \}, \nonumber \\
\tilde{\bar{T}}_\bbP^{\bbV(S)} := & \{((q_1,q_2),a,(p_1,p_2)) \in 
  (Q_\bbV \times Q_\bbV) \times \Sigma \times (Q_\bbV \times Q_\bbV) | \nonumber \\ 
  & \ (0,a,0) \in \tilde{\bar{\shuffle}}_\bbP^\sigma, \ \delta_\bbV(q_1,a) = p_1 
  \mbox{ and } \delta_\bbV(q_2,a) = p_2  \}, \nonumber \\
\tilde{T}_\bbP^{\bbV(E)} := & \{(q_1,(\check{a},p),p_1) \in 
  Q_\bbV \times (\check{\Sigma} \times Q) \times Q_\bbV | \nonumber \\
  & \ (0,\check{a},1_p) \in \tilde{\shuffle}_{\check{\bbP}}^\sigma  
  \mbox{ and } \delta_\bbV(q_1,a) = p_1  \}, \nonumber \\
\mathring{T}_\bbP^{\bbV(E)} := & \{(q_1,(q,\check{a},p),p_1) \in 
  Q_\bbV \times (Q \times \check{\Sigma} \times Q) \times Q_\bbV | \nonumber \\ 
  & \ (1_q,\check{a},1_p) \in \mathring{\shuffle}_{\check{\bbP}}^\sigma  
  \mbox{ and } \delta_\bbV(q_1,a) = p_1  \}, \nonumber \\
\bar{T}_\bbP^{\bbV(E)} := & \{(q_1,(q,\check{a}),p_1) \in 
  Q_\bbV \times (Q \times \check{\Sigma}) \times Q_\bbV | \nonumber \\
  & \ (1_q,\check{a},0) \in \bar{\shuffle}_{\check{\bbP}}^\sigma  
  \mbox{ and } \delta_\bbV(q_1,a) = p_1  \} \mbox{ and } \nonumber \\
\tilde{\bar{T}}_\bbP^{\bbV(E)} := & \{(q_1,\check{a},p_1) \in 
  Q_\bbV \times \check{\Sigma} \times Q_\bbV | 
  (0,\check{a},0) \in \tilde{\bar{\shuffle}}_{\check{\bbP}}^\sigma  
  \mbox{ and } \delta_\bbV(q_1,a) = p_1  \}.
\end{align}
Let now the bijective mapping $\chi_\bbP^\bbV : 
(Q_\bbV \times Q_\bbV \times \shuffle_\bbP^\sigma) \ \dotcup \ 
(Q_\bbV \times \shuffle_{\check{\bbP}}^\sigma) \to T_\bbP^\bbV$ be defined by
\begin{align}\label{eq:decidability-36}
\chi_\bbP^\bbV ((q_1,q_2,(0,a,1_p))) := & ((q_1,q_2),(a,p),(\delta_\bbV(q_1,a),\delta_\bbV(q_2,a))) \nonumber \\
  & \mbox{ for } (q_1,q_2,(0,a,1_p)) \in Q_\bbV \times Q_\bbV \times \tilde{\shuffle}_\bbP^\sigma,  \nonumber \\
\chi_\bbP^\bbV ((q_1,q_2,(1_q,a,1_p))) := & ((q_1,q_2),(q,a,p),(\delta_\bbV(q_1,a),\delta_\bbV(q_2,a))) \nonumber \\
  & \mbox{ for } (q_1,q_2,(1_q,a,1_p)) \in Q_\bbV \times Q_\bbV \times \mathring{\shuffle}_\bbP^\sigma,  \nonumber \\
\chi_\bbP^\bbV ((q_1,q_2,(1_q,a,0))) := & ((q_1,q_2),(q,a),(\delta_\bbV(q_1,a),\delta_\bbV(q_2,a))) \nonumber \\
  & \mbox{ for } (q_1,q_2,(1_q,a,0)) \in Q_\bbV \times Q_\bbV \times \bar{\shuffle}_\bbP^\sigma,  \nonumber \\
\chi_\bbP^\bbV ((q_1,q_2,(0,a,0))) := & ((q_1,q_2),a,(\delta_\bbV(q_1,a),\delta_\bbV(q_2,a))) \nonumber \\
  & \mbox{ for } (q_1,q_2,(0,a,0)) \in Q_\bbV \times Q_\bbV \times \tilde{\bar{\shuffle}}_\bbP^\sigma,  \nonumber \\
\chi_\bbP^\bbV ((q_1,(0,\check{a},1_p))) := & (q_1,(\check{a},p),\delta_\bbV(q_1,a)) \nonumber \\
  & \mbox{ for } (q_1,(0,\check{a},1_p)) \in Q_\bbV \times \tilde{\shuffle}_{\check{\bbP}}^\sigma,  \nonumber \\
\chi_\bbP^\bbV ((q_1,(1_q,\check{a},1_p))) := & (q_1,(q,\check{a},p),\delta_\bbV(q_1,a)) \nonumber \\
  & \mbox{ for } (q_1,(1_q,\check{a},1_p)) \in Q_\bbV \times \mathring{\shuffle}_{\check{\bbP}}^\sigma,  \nonumber \\
\chi_\bbP^\bbV ((q_1,(1_q,\check{a},0))) := & (q_1,(q,\check{a}),\delta_\bbV(q_1,a)) \nonumber \\
  & \mbox{ for } (q_1,(1_q,\check{a},0)) \in Q_\bbV \times \bar{\shuffle}_{\check{\bbP}}^\sigma \mbox{ and } \nonumber \\
\chi_\bbP^\bbV ((q_1,(0,\check{a},0))) := & (q_1,\check{a},\delta_\bbV(q_1,a)) \nonumber \\
  & \mbox{ for } (q_1,(0,\check{a},0)) \in Q_\bbV \times \tilde{\bar{\shuffle}}_{\check{\bbP}}^\sigma.
\end{align}
The set $K_\bbP^\bbV$ of edges of $N_\bbP^\bbV$ let be defined by
\[K_\bbP^\bbV := 
  \ \tilde{K}_\bbP^{\bbV(S)} \ \dotcup \ \mathring{K}_\bbP^{\bbV(S)} 
\ \dotcup \ \bar{K}_\bbP^{\bbV(S)} \ \dotcup \ \tilde{\bar{K}}_\bbP^{\bbV(S)} \ \dotcup
 \ \tilde{K}_\bbP^{\bbV(E)} \ \dotcup \ \mathring{K}_\bbP^{\bbV(E)} 
\ \dotcup \ \bar{K}_\bbP^{\bbV(E)} \ \dotcup \ \tilde{\bar{K}}_\bbP^{\bbV(E)}, \]
where
\begin{align}\label{eq:decidability-37}
&\tilde{K}_\bbP^{\bbV(S)} := \bigcup\limits_{((q_1,q_2),(a,p),(p_1,p_2)) \in \tilde{T}_\bbP^{\bbV(S)}} \nonumber \\
&\ \ \ \ \ \ \ \ \ \ \ \ \ \ \ \ \ \ \ \ \ \ \ \ \ \{((\tau^{(1)})^{-1}(q_1),((q_1,q_2),(a,p),(p_1,p_2))), \nonumber \\
&\ \ \ \ \ \ \ \ \ \ \ \ \ \ \ \ \ \ \ \ \ \ \ \ \ \ ((\tau^{(2)})^{-1}(q_2),((q_1,q_2),(a,p),(p_1,p_2))), \nonumber \\
&\ \ \ \ \ \ \ \ \ \ \ \ \ \ \ \ \ \ \ \ \ \ \ \ \ \ (((q_1,q_2),(a,p),(p_1,p_2)),(\tau^{(1)})^{-1}(p_1)), \nonumber \\
&\ \ \ \ \ \ \ \ \ \ \ \ \ \ \ \ \ \ \ \ \ \ \ \ \ \ (((q_1,q_2),(a,p),(p_1,p_2)),(\tau^{(2)})^{-1}(p_2)), \nonumber \\
&\ \ \ \ \ \ \ \ \ \ \ \ \ \ \ \ \ \ \ \ \ \ \ \ \ \ (((q_1,q_2),(a,p),(p_1,p_2)),(\tau^{(1)})^{-1}(p)), \nonumber \\
&\ \ \ \ \ \ \ \ \ \ \ \ \ \ \ \ \ \ \ \ \ \ \ \ \ \ (((q_1,q_2),(a,p),(p_1,p_2)),(\tau^{(2)})^{-1}(p)) \} \subset \nonumber \\
&((Q_\bbV^{(1)} \dotcup Q_\bbV^{(2)}) \times \tilde{T}_\bbP^{\bbV(S)})
 \dotcup(\tilde{T}_\bbP^{\bbV(S)}\times
(Q_\bbV^{(1)} \dotcup Q_\bbV^{(2)} \dotcup Q^{(1)} \dotcup Q^{(2)})),
\end{align}
\begin{align}\label{eq:decidability-38}
&\mathring{K}_\bbP^{\bbV(S)} := \bigcup\limits_{((q_1,q_2),(q,a,p),(p_1,p_2)) \in \mathring{T}_\bbP^{\bbV(S)}} \nonumber \\
&\ \ \ \ \ \ \ \ \ \ \ \ \ \ \ \ \ \ \ \ \ \ \ \ \ \{((\tau^{(1)})^{-1}(q_1),((q_1,q_2),(q,a,p),(p_1,p_2))), \nonumber \\
&\ \ \ \ \ \ \ \ \ \ \ \ \ \ \ \ \ \ \ \ \ \ \ \ \ \ ((\tau^{(2)})^{-1}(q_2),((q_1,q_2),(q,a,p),(p_1,p_2))), \nonumber \\
&\ \ \ \ \ \ \ \ \ \ \ \ \ \ \ \ \ \ \ \ \ \ \ \ \ \ ((\tau^{(1)})^{-1}(q),((q_1,q_2),(q,a,p),(p_1,p_2))), \nonumber \\
&\ \ \ \ \ \ \ \ \ \ \ \ \ \ \ \ \ \ \ \ \ \ \ \ \ \ ((\tau^{(2)})^{-1}(q),((q_1,q_2),(q,a,p),(p_1,p_2))), \nonumber \\
&\ \ \ \ \ \ \ \ \ \ \ \ \ \ \ \ \ \ \ \ \ \ \ \ \ \ (((q_1,q_2),(q,a,p),(p_1,p_2)),(\tau^{(1)})^{-1}(p_1)), \nonumber \\
&\ \ \ \ \ \ \ \ \ \ \ \ \ \ \ \ \ \ \ \ \ \ \ \ \ \ (((q_1,q_2),(q,a,p),(p_1,p_2)),(\tau^{(2)})^{-1}(p_2)), \nonumber \\
&\ \ \ \ \ \ \ \ \ \ \ \ \ \ \ \ \ \ \ \ \ \ \ \ \ \ (((q_1,q_2),(q,a,p),(p_1,p_2)),(\tau^{(1)})^{-1}(p)), \nonumber \\
&\ \ \ \ \ \ \ \ \ \ \ \ \ \ \ \ \ \ \ \ \ \ \ \ \ \ (((q_1,q_2),(q,a,p),(p_1,p_2)),(\tau^{(2)})^{-1}(p)) \} \subset \nonumber \\
&((Q_\bbV^{(1)} \dotcup Q_\bbV^{(2)} \dotcup Q^{(1)} \dotcup Q^{(2)}) \times \mathring{T}_\bbP^{\bbV(S)})
 \dotcup(\mathring{T}_\bbP^{\bbV(S)}\times
(Q_\bbV^{(1)} \dotcup Q_\bbV^{(2)} \dotcup Q^{(1)} \dotcup Q^{(2)})),
\end{align}
\begin{align}\label{eq:decidability-39}
&\bar{K}_\bbP^{\bbV(S)} := \bigcup\limits_{((q_1,q_2),(q,a),(p_1,p_2)) \in \bar{T}_\bbP^{\bbV(S)}} \nonumber \\
&\ \ \ \ \ \ \ \ \ \ \ \ \ \ \ \ \ \ \ \ \ \ \ \ \ \{((\tau^{(1)})^{-1}(q_1),((q_1,q_2),(q,a),(p_1,p_2))), \nonumber \\
&\ \ \ \ \ \ \ \ \ \ \ \ \ \ \ \ \ \ \ \ \ \ \ \ \ \ ((\tau^{(2)})^{-1}(q_2),((q_1,q_2),(q,a),(p_1,p_2))), \nonumber \\
&\ \ \ \ \ \ \ \ \ \ \ \ \ \ \ \ \ \ \ \ \ \ \ \ \ \ ((\tau^{(1)})^{-1}(q),((q_1,q_2),(q,a),(p_1,p_2))), \nonumber \\
&\ \ \ \ \ \ \ \ \ \ \ \ \ \ \ \ \ \ \ \ \ \ \ \ \ \ ((\tau^{(2)})^{-1}(q),((q_1,q_2),(q,a),(p_1,p_2))), \nonumber \\
&\ \ \ \ \ \ \ \ \ \ \ \ \ \ \ \ \ \ \ \ \ \ \ \ \ \ (((q_1,q_2),(q,a,p),(p_1,p_2)),(\tau^{(1)})^{-1}(p_1)), \nonumber \\
&\ \ \ \ \ \ \ \ \ \ \ \ \ \ \ \ \ \ \ \ \ \ \ \ \ \ (((q_1,q_2),(q,a,p),(p_1,p_2)),(\tau^{(2)})^{-1}(p_2)) \} \subset \nonumber \\
&((Q_\bbV^{(1)} \dotcup Q_\bbV^{(2)} \dotcup Q^{(1)} \dotcup Q^{(2)}) \times \bar{T}_\bbP^{\bbV(S)})
 \dotcup(\bar{T}_\bbP^{\bbV(S)}\times
(Q_\bbV^{(1)} \dotcup Q_\bbV^{(2)})),
\end{align}
\begin{align}\label{eq:decidability-40}
&\tilde{\bar{K}}_\bbP^{\bbV(S)} := \bigcup\limits_{((q_1,q_2),a,(p_1,p_2)) \in \tilde{\bar{T}}_\bbP^{\bbV(S)}} \nonumber \\
&\ \ \ \ \ \ \ \ \ \ \ \ \ \ \ \ \ \ \ \ \ \ \ \ \ \{((\tau^{(1)})^{-1}(q_1),((q_1,q_2),a,(p_1,p_2))), \nonumber \\
&\ \ \ \ \ \ \ \ \ \ \ \ \ \ \ \ \ \ \ \ \ \ \ \ \ \ ((\tau^{(2)})^{-1}(q_2),((q_1,q_2),a,(p_1,p_2))), \nonumber \\
&\ \ \ \ \ \ \ \ \ \ \ \ \ \ \ \ \ \ \ \ \ \ \ \ \ \ (((q_1,q_2),a,(p_1,p_2)),(\tau^{(1)})^{-1}(p_1)), \nonumber \\
&\ \ \ \ \ \ \ \ \ \ \ \ \ \ \ \ \ \ \ \ \ \ \ \ \ \ (((q_1,q_2),a,(p_1,p_2)),(\tau^{(2)})^{-1}(p_2)) \} \subset \nonumber \\
&((Q_\bbV^{(1)} \dotcup Q_\bbV^{(2)}) \times \tilde{\bar{T}}_\bbP^{\bbV(S)})
 \dotcup(\tilde{\bar{T}}_\bbP^{\bbV(S)}\times
(Q_\bbV^{(1)} \dotcup Q_\bbV^{(2)})),
\end{align}
\begin{align}\label{eq:decidability-41}
&\tilde{K}_\bbP^{\bbV(E)} := \bigcup\limits_{(q_1,(\check{a},p),p_1) \in \tilde{T}_\bbP^{\bbV(E)}} \nonumber \\
&\ \ \ \ \ \ \ \ \ \ \ \ \ \ \ \ \ \ \ \ \ \ \ \ \ \{((\tau^{(1)})^{-1}(q_1),(q_1,(\check{a},p),p_1)), \nonumber \\
&\ \ \ \ \ \ \ \ \ \ \ \ \ \ \ \ \ \ \ \ \ \ \ \ \ \ (0,(q_1,(\check{a},p),p_1)), \nonumber \\
&\ \ \ \ \ \ \ \ \ \ \ \ \ \ \ \ \ \ \ \ \ \ \ \ \ \ ((q_1,(\check{a},p),p_1),(\tau^{(1)})^{-1}(p_1)), \nonumber \\
&\ \ \ \ \ \ \ \ \ \ \ \ \ \ \ \ \ \ \ \ \ \ \ \ \ \ ((q_1,(\check{a},p),p_1),1_p), \nonumber \\
&\ \ \ \ \ \ \ \ \ \ \ \ \ \ \ \ \ \ \ \ \ \ \ \ \ \ ((q_1,(\check{a},p),p_1),(\tau^{(1)})^{-1}(p)) \} \subset \nonumber \\
&((Q_\bbV^{(1)} \dotcup (Z_\bbP(E_\bbP) \dotcup \{\check{0}\})) \times \tilde{T}_\bbP^{\bbV(E)})\dotcup \nonumber \\
&(\tilde{T}_\bbP^{\bbV(E)}\times
(Q_\bbV^{(1)} \dotcup (Z_\bbP(E_\bbP) \dotcup \{\check{0}\}) \dotcup Q^{(1)})),
\end{align}
\begin{align}\label{eq:decidability-42}
&\mathring{K}_\bbP^{\bbV(E)} := \bigcup\limits_{(q_1,(q,\check{a},p),p_1) \in \mathring{T}_\bbP^{\bbV(E)}} \nonumber \\
&\ \ \ \ \ \ \ \ \ \ \ \ \ \ \ \ \ \ \ \ \ \ \ \ \ \{((\tau^{(1)})^{-1}(q_1),(q_1,(q,\check{a},p),p_1)), \nonumber \\
&\ \ \ \ \ \ \ \ \ \ \ \ \ \ \ \ \ \ \ \ \ \ \ \ \ \ (1_q,(q_1,(q,\check{a},p),p_1)), \nonumber \\
&\ \ \ \ \ \ \ \ \ \ \ \ \ \ \ \ \ \ \ \ \ \ \ \ \ \{((\tau^{(1)})^{-1}(q),(q_1,(q,\check{a},p),p_1)), \nonumber \\
&\ \ \ \ \ \ \ \ \ \ \ \ \ \ \ \ \ \ \ \ \ \ \ \ \ \ ((q_1,(q,\check{a},p),p_1),(\tau^{(1)})^{-1}(p_1)), \nonumber \\
&\ \ \ \ \ \ \ \ \ \ \ \ \ \ \ \ \ \ \ \ \ \ \ \ \ \ ((q_1,(q,\check{a},p),p_1),1_p), \nonumber \\
&\ \ \ \ \ \ \ \ \ \ \ \ \ \ \ \ \ \ \ \ \ \ \ \ \ \ ((q_1,(q,\check{a},p),p_1),(\tau^{(1)})^{-1}(p)) \} \subset \nonumber \\
&((Q_\bbV^{(1)} \dotcup (Z_\bbP(E_\bbP) \dotcup \{\check{0}\}) \dotcup Q^{(1)}) \times \mathring{T}_\bbP^{\bbV(E)})\dotcup \nonumber \\
&(\mathring{T}_\bbP^{\bbV(E)}\times
(Q_\bbV^{(1)} \dotcup (Z_\bbP(E_\bbP) \dotcup \{\check{0}\}) \dotcup Q^{(1)})),
\end{align}
\begin{align}\label{eq:decidability-43}
&\bar{K}_\bbP^{\bbV(E)} := \bigcup\limits_{(q_1,(q,\check{a}),p_1) \in \bar{T}_\bbP^{\bbV(E)}} \nonumber \\
&\ \ \ \ \ \ \ \ \ \ \ \ \ \ \ \ \ \ \ \ \ \ \ \ \ \{((\tau^{(1)})^{-1}(q_1),(q_1,(q,\check{a}),p_1)), \nonumber \\
&\ \ \ \ \ \ \ \ \ \ \ \ \ \ \ \ \ \ \ \ \ \ \ \ \ \ (1_q,(q_1,(q,\check{a}),p_1)), \nonumber \\
&\ \ \ \ \ \ \ \ \ \ \ \ \ \ \ \ \ \ \ \ \ \ \ \ \ \{((\tau^{(1)})^{-1}(q),(q_1,(q,\check{a}),p_1)), \nonumber \\
&\ \ \ \ \ \ \ \ \ \ \ \ \ \ \ \ \ \ \ \ \ \ \ \ \ \ ((q_1,(q,\check{a}),p_1),(\tau^{(1)})^{-1}(p_1)), \nonumber \\
&\ \ \ \ \ \ \ \ \ \ \ \ \ \ \ \ \ \ \ \ \ \ \ \ \ \ ((q_1,(q,\check{a}),p_1),\check{0}) \} \subset \nonumber \\
&((Q_\bbV^{(1)} \dotcup (Z_\bbP(E_\bbP) \dotcup \{\check{0}\}) \dotcup Q^{(1)}) \times \bar{T}_\bbP^{\bbV(E)})\dotcup \nonumber \\
&(\bar{T}_\bbP^{\bbV(E)}\times
(Q_\bbV^{(1)} \dotcup (Z_\bbP(E_\bbP) \dotcup \{\check{0}\}))) \ \ \ \ \mbox{ and }
\end{align}
\begin{align}\label{eq:decidability-44}
&\tilde{\bar{K}}_\bbP^{\bbV(E)} := \bigcup\limits_{(q_1,\check{a},p_1) \in \tilde{\bar{T}}_\bbP^{\bbV(E)}} \nonumber \\
&\ \ \ \ \ \ \ \ \ \ \ \ \ \ \ \ \ \ \ \ \ \ \ \ \ \{((\tau^{(1)})^{-1}(q_1),(q_1,\check{a},p_1)), \nonumber \\
&\ \ \ \ \ \ \ \ \ \ \ \ \ \ \ \ \ \ \ \ \ \ \ \ \ \ (0,(q_1,\check{a},p_1)), \nonumber \\
&\ \ \ \ \ \ \ \ \ \ \ \ \ \ \ \ \ \ \ \ \ \ \ \ \ \ ((q_1,\check{a},p_1),(\tau^{(1)})^{-1}(p_1)), \nonumber \\
&\ \ \ \ \ \ \ \ \ \ \ \ \ \ \ \ \ \ \ \ \ \ \ \ \ \ ((q_1,\check{a},p_1),\check{0}) \} \subset \nonumber \\
&((Q_\bbV^{(1)} \dotcup (Z_\bbP(E_\bbP) \dotcup \{\check{0}\})) \times \tilde{\bar{T}}_\bbP^{\bbV(E)})\dotcup \nonumber \\
&(\tilde{\bar{T}}_\bbP^{\bbV(E)}\times
(Q_\bbV^{(1)} \dotcup (Z_\bbP(E_\bbP) \dotcup \{\check{0}\}))).
\end{align}
Let now the sets $\Omega_\bbP^\bbV$, $\mcO_\bbP^\bbV$ and $\mcE_\bbP^\bbV(M)$ as well as the functions 
$\mcI_\bbP^\bbV$ and $\mcF_\bbP^\bbV$ be defined corresponding to \eqref{eq:A11.6}, \eqref{eq:A11.7} 
and \eqref{eq:A11.7'}. By induction on the length of occurrence sequences 
$o \in (\mcI_\bbP^\bbV)^{ -1}(\iota_\bbP^\bbV((q_{\bbV_0}, q_{\bbV_0},(0,0,0))))$ it can be shown that
\begin{align}\label{eq:decidability-45}
& \sum\limits_{x \in Q_\bbV^{(1)}} M(x) = \sum\limits_{x \in Q_\bbV^{(2)}} M(x) = 
  \sum\limits_{x \in Z_\bbP(E_\bbP) \dotcup \{\check{0}\}} M(x) = 1    \nonumber \\
& \mbox{for each } M \in \mcE_\bbP^\bbV(\iota_\bbP^\bbV((q_{\bbV_0}, q_{\bbV_0},(0,0,0))).
\end{align}
Therefore the function $\zeta_\bbP^{\bbV(3)} : 
\mcE_\bbP^\bbV(\iota_\bbP^\bbV((q_{\bbV_0}, q_{\bbV_0},(0,0,0))) \to \dsN_0^Q$ is  well defined for each 
$M \in \mcE_\bbP^\bbV(\iota_\bbP^\bbV((q_{\bbV_0}, q_{\bbV_0},(0,0,0)))$ by
\begin{align}\label{eq:decidability-46}
& \zeta_\bbP^{\bbV(3)}(M) := 0 \mbox{ if } M(\check{0}) = 1 \mbox{ and } \nonumber \\
& \zeta_\bbP^{\bbV(3)}(M) := f \mbox{ if } M(f) = 1 \mbox{ for } f \in Z_\bbP(E_\bbP).
\end{align}
For $i \in \{1,2\}$ let the functions $\zeta_\bbP^{\bbV(i)} : 
\mcE_\bbP^\bbV(\iota_\bbP^\bbV((q_{\bbV_0}, q_{\bbV_0},(0,0,0))) \to \dsN_0^Q$ 
for each $M \in \mcE_\bbP^\bbV(\iota_\bbP^\bbV((q_{\bbV_0}, q_{\bbV_0},(0,0,0)))$ 
be defined by
\begin{align}\label{eq:decidability-47}
& \zeta_\bbP^{\bbV(1)}(M)(q) := M((\tau^{(1)})^{-1}(q)) \mbox{ and } \nonumber \\
& \zeta_\bbP^{\bbV(2)}(M)(q) := M((\tau^{(2)})^{-1}(q)) \mbox{ for each } q \in Q.
\end{align}
An induction as for \eqref{eq:decidability-45} proves
\begin{align}\label{eq:decidability-48}
& \zeta_\bbP^{\bbV(1)}(M) = \zeta_\bbP^{\bbV(2)}(M) + \zeta_\bbP^{\bbV(3)}(M), \nonumber \\
& \zeta_\bbP^{\bbV(1)}(M) \in Z_\bbP(A_\bbP) \mbox{ and } 
  \zeta_\bbP^{\bbV(2)}(M) \in Z_\bbP(A_\bbP) \nonumber \\
& \mbox{for each } M \in \mcE_\bbP^\bbV(\iota_\bbP^\bbV((q_{\bbV_0}, q_{\bbV_0},(0,0,0))).
\end{align}
To formulate the main theorem about the simulation of 
$\bbW_\bbP^\bbV$ by $N_\bbP^\bbV$ let the mapping
\begin{align}\label{eq:decidability-49}
& \sigma_\bbP^{\bbV()} : \Psi_\bbP^\bbV \to 
  (2^{Q_\bbV \times Q_\bbV \times \shuffle_\bbP^\sigma} \setminus \{ \emptyset \}) \ 
  \dotcup \ (2^{Q_\bbV \times \shuffle_{\check{\bbP}}^\sigma} \setminus \{ \emptyset \}) \mbox{ be defined by } \nonumber \\
& \sigma_\bbP^{\bbV()}(((y_1,y_2),x,(y'_1,y'_2))) := \{y_1\} \times \{y_2\} \times \sigma_\bbP^{()}  
  \mbox{ for } \nonumber \\ 
& ((y_1,y_2),x,(y'_1,y'_2)) \in \Psi_\bbP^\bbV \cap 
  (Q_\bbV \times Q_\bbV) \times \shuffle_\bbP^{(S)} \times (Q_\bbV \times Q_\bbV) \mbox{ and } \nonumber \\ 
& \sigma_\bbP^{\bbV()}(((y_1,y_2),x,(y'_1,y'_2))) := \{y_1\} \times \sigma_\bbP^{()}  
  \mbox{ for } \nonumber \\ 
& ((y_1,y_2),x,(y'_1,y'_2)) \in \Psi_\bbP^\bbV \cap 
  (Q_\bbV \times Q_\bbV) \times \shuffle_\bbP^{(E)} \times (Q_\bbV \times Q_\bbV).
\end{align}
Now, together with \eqref{eq:decidability-48} and \eqref{eq:decidability-45} an induction on the 
length of $w \in W_\bbP^\bbV$ proves 
\begin{theorem}\label{thm:decidability-50}
\begin{align*}
& \mbox{For each }o = o_1...o_{|o|} \in (\dsN_0^{R_\bbP^\bbV} \times T_\bbP^\bbV \times \dsN_0^{R_\bbP^\bbV})^+ \mbox{with } \nonumber \\
& o_i \in \dsN_0^{R_\bbP^\bbV} \times T_\bbP^\bbV \times \dsN_0^{R_\bbP^\bbV} 
  \mbox{ for } 1 \leq i \leq |o| \mbox{ holds } o 
  \in (\mcI_\bbP^\bbV)^{-1}(\iota_\bbP^\bbV(q_{\bbP_0}^\bbV)),\nonumber \\
& \mbox{iff there exists } w \in W_\bbP^\bbV 
\mbox{ with } |w| = |o| \mbox{ such that for } 1\leq i \leq \abs{o} \mbox{ holds:}\nonumber \\
& o_i=(\iota_\bbP^\bbV(\lambda_\bbP^\bbV(q_{\bbP_0}^\bbV,w'_{i-1})),
  t_i,\iota_\bbP^\bbV(\lambda_\bbP^\bbV(q_{\bbP_0}^\bbV,w'_i))) \mbox{ with } w'_j\in \pre(w), \nonumber \\
& \abs{w'_j}=j \mbox{ for } 0\leq j \leq \abs{o},\mbox{ and } 
  t_i \in \chi_\bbP^\bbV(\sigma_\bbP^{\bbV()}(w_i)), \mbox{ where } \nonumber \\
& w=w_1\ldots{}w_{\abs{o}} \mbox{ and } w_i\in \Psi_\bbP^\bbV \mbox{ for } 1\leq i \leq \abs{o}.
\end{align*}
\end{theorem}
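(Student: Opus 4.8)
The plan is to prove Theorem~\ref{thm:decidability-50} by induction on the common length $n:=\abs{o}=\abs{w}$, establishing both directions of the ``iff'' simultaneously. The two inductively defined objects to be matched are, on one side, the set $\mcO_\bbP^\bbV$ of occurrence sequences of $N_\bbP^\bbV$ starting at the marking $\iota_\bbP^\bbV(q_{\bbP_0}^\bbV)$ --- which by definition is obtained by repeatedly appending an occurrence step $o_{n+1}$ whose initial marking equals the final marking $\mcF_\bbP^\bbV(o_1\ldots{}o_n)$ --- and, on the other side, the prefix closed language $W_\bbP^\bbV$, whose words are obtained by repeatedly appending a letter $w_{n+1}\in\Psi_\bbP^\bbV$ for which $\lambda_\bbP^\bbV(q_{\bbP_0}^\bbV,w_1\ldots{}w_n)$ has a defined successor under $w_{n+1}$. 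Throughout, a reachable state $s\in S_\bbP^\bbV$ of $\bbW_\bbP^\bbV$ is identified with the marking $\iota_\bbP^\bbV(s)$; this identification is legitimate because $\iota_\bbP^\bbV$ is injective and, by the invariant \eqref{eq:decidability-45} together with \eqref{eq:decidability-48}, every marking reachable from $\iota_\bbP^\bbV(q_{\bbP_0}^\bbV)$ has the form $\iota_\bbP^\bbV(s)$ for a unique $s\in S_\bbP^\bbV$: one token on exactly one place of each of $Q_\bbV^{(1)}$, $Q_\bbV^{(2)}$ and $Z_\bbP(E_\bbP)\dotcup\{\check{0}\}$, together with $\zeta_\bbP^{\bbV(1)},\zeta_\bbP^{\bbV(2)}\in Z_\bbP(A_\bbP)$ and $\zeta_\bbP^{\bbV(1)}=\zeta_\bbP^{\bbV(2)}+\zeta_\bbP^{\bbV(3)}$, which is exactly the shape of $S_\bbP^\bbV$ as read off from \eqref{eq:decidability-11}, \eqref{eq:decidability-14}, \eqref{eq:decidability-19} and \eqref{eq:decidability-21}.

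Granting this identification, the whole theorem reduces to a single-step matching lemma, which is the heart of the argument: for every reachable $s\in S_\bbP^\bbV$, a transition $t\in T_\bbP^\bbV$ is enabled at $\iota_\bbP^\bbV(s)$ with $\iota_\bbP^\bbV(s)\xrightarrow{\,t\,}M'$ \emph{if and only if} there exists $v\in\Psi_\bbP^\bbV$ such that $\lambda_\bbP^\bbV(s,v)$ is defined, $t\in\chi_\bbP^\bbV(\sigma_\bbP^{\bbV()}(v))$, and $M'=\iota_\bbP^\bbV(\lambda_\bbP^\bbV(s,v))$. Once this lemma is in place the induction step is immediate: appending $o_{n+1}$ to an occurrence sequence already matched with a word $w_1\ldots{}w_n$ is possible exactly when the corresponding $v$ can be appended to $w_1\ldots{}w_n$ in $\bbW_\bbP^\bbV$, and the new final marking is again $\iota_\bbP^\bbV$ of the new state; prefix closedness of $W_\bbP^\bbV$ guarantees that the intermediate markings likewise remain of the form $\iota_\bbP^\bbV(w'_j)$. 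The case $n=1$ is the same lemma applied at $s=q_{\bbP_0}^\bbV$, using $\lambda_\bbP^\bbV(q_{\bbP_0}^\bbV,\varepsilon)=q_{\bbP_0}^\bbV$ and $\iota_\bbP^\bbV(q_{\bbP_0}^\bbV)=\mcI_\bbP^\bbV(o)$.

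The single-step lemma itself is proved by a case distinction over the eight kinds of transitions in $T_\bbP^\bbV=\tilde{T}_\bbP^{\bbV(S)}\dotcup\mathring{T}_\bbP^{\bbV(S)}\dotcup\bar{T}_\bbP^{\bbV(S)}\dotcup\tilde{\bar{T}}_\bbP^{\bbV(S)}\dotcup\tilde{T}_\bbP^{\bbV(E)}\dotcup\mathring{T}_\bbP^{\bbV(E)}\dotcup\bar{T}_\bbP^{\bbV(E)}\dotcup\tilde{\bar{T}}_\bbP^{\bbV(E)}$, mirroring the split of $\shuffle_\bbP^{()}$ into $\shuffle_\bbP^{(S)}$ and $\shuffle_\bbP^{(E)}$ and the split of $\shuffle_\bbP^\sigma$, $\shuffle_{\check{\bbP}}^\sigma$ into the four $\tilde{}/\mathring{}/\bar{}/\tilde{\bar{}}$ subsets. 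In each case one reads off the input and output edges of $t$ from the relevant one of \eqref{eq:decidability-37}--\eqref{eq:decidability-44}, compares the enabledness and update prescribed by the occurrence-step relation with the side conditions defining $\Psi_\bbP^\bbV$ in \eqref{eq:decidability-4} and the transition functions $\lambda_\bbP^\bbV$, $\lambda_{\shuffle_\bbP}$ in \eqref{eq:decidability-11} and \eqref{eq:decidability-14}, and checks that the token movement is exactly the state change dictated by $\lambda_\bbP^\bbV$. For the $\bbV$-tracks $Q_\bbV^{(1)},Q_\bbV^{(2)}$ this simply says $\delta_\bbV$ is simulated by moving the single token, and for the third track it is the bookkeeping of entering or closing the elementary computation encoded in $Z_\bbP(E_\bbP)\dotcup\{\check{0}\}$.

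The step I expect to be the main obstacle is matching, in the $\shuffle_\bbP^{(S)}$ cases, the \emph{ambiguity} of the representation \eqref{eq:decidability-28} of $\shuffle_\bbP^{(S)}$ with the token arithmetic on the places $Q^{(1)}$ and $Q^{(2)}$. A single first-track letter $x_1\in\shuffle_\bbP$ (and the corresponding $x_2\in\shuffle_\bbP^{(2)'}$) arises from an element $(f',a,g')\in\shuffle_\bbP^\sigma$ only up to an additive context $h\in\dsN_0^Q$, and $\sigma_\bbP^{()}$ is precisely the multivalued inverse of this passage; on the net side $(f',a,g')$ is realized by the fixed input/output edge pattern of $\chi_\bbP^\bbV((q_1,q_2,(f',a,g')))$, while $h$ is realized implicitly by whatever tokens happen to remain on $Q^{(1)}$ and $Q^{(2)}$. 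So ``$t$ enabled at $\iota_\bbP^\bbV(s)$'' must be shown equivalent to ``$f'$ is dominated by the first/second-track component of $s$'', and simultaneously to ``$f'+h$ still lies in $Z_\bbP(A_\bbP)$'', which is exactly the clause $f'+h\in Z_\bbP(A_\bbP)$ in \eqref{eq:decidability-28} and is maintained inductively through \eqref{eq:decidability-48}. Getting this domination/complement bookkeeping to line up consistently and simultaneously on the three synchronized tracks --- the $c$-track (first component, in $\shuffle_\bbP$), the $d$-track (second component, in $\shuffle_\bbP^{(2)'}$ or in $S_{\shuffle_\bbP}^{(2)}$) and the $e$-track (third component) --- uniformly over the eight transition types is where the real work lies; everything else is the routine induction outlined above.
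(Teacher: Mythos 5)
Your proposal is correct and follows essentially the same route as the paper, which likewise disposes of the theorem by an induction on the length of $w \in W_\bbP^\bbV$ resting on the invariants \eqref{eq:decidability-45} and \eqref{eq:decidability-48}; your reduction to a single-step matching lemma with a case split over the eight transition types, and your identification of the $\sigma_\bbP^{()}$-ambiguity in \eqref{eq:decidability-28} as the delicate point, make explicit exactly the bookkeeping the paper leaves implicit. No gap.
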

Theorem~\ref{thm:decidability-50} implies
\begin{equation}\label{eq:decidability-51}
\iota_\bbP^\bbV(\lambda_\bbP^\bbV(q_{\bbP_0}^\bbV,W_\bbP^\bbV)) = 
\mcE_\bbP^\bbV(\iota_\bbP^\bbV(q_{\bbP_0}^\bbV)).
\end{equation}
As the reachability problem of Petri nets is decidable \cite{reutenauer90}, \cite{Wimmel08}, by 
\eqref{eq:decidability-17} and \eqref{eq:decidability-51} follows 
\begin{corollary}\label{cor:decidability-52} \ \\
$\SP(P \cup \{\varepsilon\},V)$ is decidable for regular languages $P$ and $V$.
\end{corollary}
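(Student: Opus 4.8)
The plan is to turn the question ``does $\SP(P\cup\{\varepsilon\},V)$ hold?'' into an instance of the (decidable) reachability problem for Petri nets, by threading together the reductions prepared in Sections~\ref{sec:automata}--\ref{sec:representation theorem} and in the present section. Throughout I may assume, without loss of generality, that $\bbP=(\Sigma,Q,\delta,q_0,F)$ is a finite deterministic automaton recognizing $P$ with $\delta(q_0,\pre(P))=Q$ and $\alf(P)=\Sigma$, and that $\bbV=(\Sigma,Q_\bbV,\delta_\bbV,q_{\bbV_0},F_\bbV)$ is a finite, \emph{complete} deterministic automaton recognizing $V$ in the sense of \eqref{eq:A16}, with $Q\cap Q_\bbV=\emptyset$; finiteness of $\bbP$ and $\bbV$ is exactly where regularity of $P$ and $V$ enters.

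First I would carry out the logical reduction. By Corollary~\ref{cor:eq-rel-R0P'}, $\SP(P\cup\{\varepsilon\},V)$ fails precisely when there are $c\in\alpha_\bbP^{-1}(V)\cap Z_\bbP^{-1}(0)$ and $d\in\mcR'_\bbP(\{c\})$ with $d\notin\alpha_\bbP^{-1}(V)$. Applying the Representation Theorem (Theorem~\ref{thm:RT}) with $\Delta=\shuffle_\bbP$, together with the monotonicity $Z_\bbP(d)\le Z_\bbP(c)$ from \eqref{eq:decidability-1}, rewrites this as the existence of a witness $x\in W_{\shuffle_\bbP}$ satisfying \eqref{eq:decidability-3}. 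The isomorphic refinement $W_\bbP^\bbV$ of $W_{\shuffle_\bbP}$ (Definition~\ref{def:WPV}), which by completeness of $\bbV$ satisfies the bijection property \eqref{eq:decidability-8} and tracks the $\bbV$-states of both $\mu_{\shuffle_\bbP}(x)$ and $\nu_{\shuffle_\bbP}(x)$ via \eqref{eq:decidability-7}, then converts the condition into the purely state-based form \eqref{eq:decidability-17}: $\SP(P\cup\{\varepsilon\},V)$ holds iff no word $u\in W_\bbP^\bbV$ drives the (generally infinite) deterministic semiautomaton $\bbW_\bbP^\bbV$ of \eqref{eq:decidability-11} into a state of $F_\bbV\times(Q_\bbV\setminus F_\bbV)\times\{(0,0,0),(0,0,\check{0})\}$.

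Second I would make this effective by the Petri-net simulation. The point is that although $\bbW_\bbP^\bbV$ has infinitely many states, its transition structure is finitely presented: the counter components of $\shuffle_\bbP$ reduce to the finite seed $\shuffle_\bbP^\sigma$ together with $\dsN_0^Q$ via \eqref{eq:A11.3}, and correspondingly $\shuffle_\bbP^{()}$ reduces to the finite data $\shuffle_\bbP^\sigma$, $\shuffle_{\check{\bbP}}^\sigma$, $Z_\bbP(A_\bbP)$ and $Z_\bbP(E_\bbP)$ through the characterizations \eqref{eq:decidability-28}--\eqref{eq:decidability-31}. This lets one build the \emph{finite} Petri net $N_\bbP^\bbV$ of \eqref{eq:decidability-33}--\eqref{eq:decidability-44} and the injection $\iota_\bbP^\bbV$ of \eqref{eq:decidability-34} embedding states of $\bbW_\bbP^\bbV$ into markings of $N_\bbP^\bbV$. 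The central lemma, Theorem~\ref{thm:decidability-50}, is proved by induction on run length, using the marking-conservation identities \eqref{eq:decidability-45} and the counter identities \eqref{eq:decidability-48}, and identifies occurrence sequences of $N_\bbP^\bbV$ from $\iota_\bbP^\bbV(q_{\bbP_0}^\bbV)$ with computations of $\bbW_\bbP^\bbV$ from $q_{\bbP_0}^\bbV$; it immediately gives \eqref{eq:decidability-51}, namely $\iota_\bbP^\bbV(\lambda_\bbP^\bbV(q_{\bbP_0}^\bbV,W_\bbP^\bbV))=\mcE_\bbP^\bbV(\iota_\bbP^\bbV(q_{\bbP_0}^\bbV))$.

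Finally, combining \eqref{eq:decidability-17} with \eqref{eq:decidability-51}, the failure of $\SP(P\cup\{\varepsilon\},V)$ is equivalent to: $\mcE_\bbP^\bbV(\iota_\bbP^\bbV(q_{\bbP_0}^\bbV))$ meets the finite set $\iota_\bbP^\bbV(F_\bbV\times(Q_\bbV\setminus F_\bbV)\times\{(0,0,0),(0,0,\check{0})\})$. Since that target is finite, this is a finite disjunction of Petri-net reachability queries ``$M'\in\mcE(M)$?'', each decidable by \cite{reutenauer90},\cite{Wimmel08}; negating the answer decides $\SP(P\cup\{\varepsilon\},V)$, which is the corollary. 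The main obstacle I anticipate is not conceptual but the verification of Theorem~\ref{thm:decidability-50}: one must check, case by case over the eight transition families $\tilde{T}_\bbP^{\bbV(S)},\dots,\tilde{\bar{T}}_\bbP^{\bbV(E)}$ of \eqref{eq:decidability-35} and the corresponding edge sets \eqref{eq:decidability-37}--\eqref{eq:decidability-44}, that each transition of $N_\bbP^\bbV$ faithfully mimics one step of $\lambda_\bbP^\bbV$ — consuming and producing the right $Q^{(1)}$- and $Q^{(2)}$-counter tokens and moving the unique marker token ranging over $Z_\bbP(E_\bbP)\dotcup\{\check{0}\}$ as prescribed — and that the ambiguity of the representation \eqref{eq:decidability-28}, absorbed by $\sigma_\bbP^{(S)}$ in \eqref{eq:decidability-30}, causes no loss; all of this must remain consistent with the four-way split of Definition~\ref{def:shuffle-automaton^}. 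No genuinely new idea beyond the apparatus of this section is required.
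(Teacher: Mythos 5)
Your proposal is correct and follows essentially the same route as the paper: reduce via Corollary~\ref{cor:eq-rel-R0P'} and Theorem~\ref{thm:RT} (with $\Delta=\shuffle_\bbP$) to the reachability condition \eqref{eq:decidability-17} on $\bbW_\bbP^\bbV$, then use the Petri-net simulation $N_\bbP^\bbV$ together with Theorem~\ref{thm:decidability-50} and \eqref{eq:decidability-51} to turn this into finitely many decidable Petri-net reachability queries. The steps, intermediate results, and the role of finiteness/completeness of $\bbP$ and $\bbV$ all match the paper's argument.
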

The decidability of $\SP(P \cup \{\varepsilon\},V)$ essentially 
depends on the decidability of the Petri net reachability problem.
In \cite{reutenauer90} this decidability result is annotated as double complex: 
in the proof and in the algorithm. 
For practical applications it is therefore important, to have 
simpler sufficient conditions for $\SP(P \cup \{\varepsilon\},V)$,
as demonstrated in Example~\ref{ex:i-s-compatible}, Example~\ref{ex:easy-compatible} 
and in Example~\ref{ex:RT}.

\clearpage
\section*{Appendix}

\appendix

\section{Shuffle Projection in Terms of Shuffle Factors}
The \emph{ shuffle product } $U \Sha V$ \cite{berstel79} for languages $U$ and $V$ can be defined 
in terms of the homomorphisms $\tau_n^I$ and $\Theta^I$.
\begin{definition}\label{def:shuffle-prod}\ \\
For $U, V \subset \Sigma^*$ the \emph{ shuffle product } $U \Sha V \subset \Sigma^*$ 
is defined by 
\[U \Sha V := \Theta^{\{1,2\}}[(\tau_1^{\{1,2\}})^{-1}(U) \cap (\tau_2^{\{1,2\}})^{-1}(V)] .\]
\end{definition}
It is easy to see that
\begin{align} \label{eq:shuffle-fac-0}
& \Sha \mbox{ is commutative, } \{w\} = \{w\} \Sha \{\varepsilon\} 
   \mbox{ for } w \in \Sigma^*, \nonumber \\
& |w| = |u| + |v| \mbox{ for } w \in \{u\} \Sha \{v\} 
  \mbox{ and } u,v \in \Sigma^*, \nonumber \\
& \pre(U \Sha V)=\pre(U) \Sha \pre(V) , \mbox{ and } \nonumber \\
& U \Sha V = \bigcup\limits_{u \in U, v \in V} \{u\} \Sha \{v\} \mbox{ for } U,V \subset \Sigma^*. \nonumber \\
& \mbox{By Lemma~\ref{lemma:shuffle-lemma2} } \nonumber \\
& \{1,2\} \mbox{ can be replaced by any set } S \mbox{ with } \#(S) = 2.
\end{align}
The following lemma is the key to a relation between shuffle products and shuffle projection.
\begin{lemma} \label{lemma:shuffle-fac-1}\ \\
Let $P\subset \Sigma^*$. Then $w \in \{u\} \Sha \{v\}$ for $u,v \in P^\shuffle$, iff there exist 
\begin{align*}
& x \in \bigcap\limits_{t \in \dsN} (\tau_t^\dsN)^{-1}(P \cup \{\varepsilon\}) \mbox{ and } 
K \subset \dsN \mbox{ with } w = \Theta^\dsN(x) \in P^\shuffle , \nonumber \\
& u = \Theta^K(\Pi_K^\dsN(x)) \mbox{ and } 
 v = \Theta^{\dsN \setminus K}(\Pi_{\dsN \setminus K}^\dsN(x)).
\end{align*}
\end{lemma}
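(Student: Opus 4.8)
Write $M:=P\cup\{\varepsilon\}$ throughout, so $P^\shuffle=\Theta^\dsN[\bigcap\limits_{t\in\dsN}(\tau_t^\dsN)^{-1}(M)]$, and recall from Definition~\ref{def:shuffle-prod} that $w\in\{u\}\Sha\{v\}$ exactly when there is a \emph{shuffle witness} $z\in\Sigma_{\{1,2\}}^{*}$ with $\tau_1^{\{1,2\}}(z)=u$, $\tau_2^{\{1,2\}}(z)=v$ and $\Theta^{\{1,2\}}(z)=w$. The plan is to prove the two implications separately, in both cases connecting such a witness with structured representations (Definition~\ref{def:strucrep}) of $u$ and of $v$.

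For the ``if'' direction I would start from given $x\in\bigcap\limits_{t\in\dsN}(\tau_t^\dsN)^{-1}(M)$ and $K\subset\dsN$ and produce the witness by folding indices: let $h:\Sigma_\dsN^{*}\to\Sigma_{\{1,2\}}^{*}$ be the homomorphism with $h(a_i)=a_1$ for $i\in K$ and $h(a_i)=a_2$ for $i\in\dsN\setminus K$, and set $z:=h(x)$. A short check of compositions gives $\tau_1^{\{1,2\}}\circ h=\Theta^K\circ\Pi_K^\dsN$, $\tau_2^{\{1,2\}}\circ h=\Theta^{\dsN\setminus K}\circ\Pi_{\dsN\setminus K}^\dsN$ and $\Theta^{\{1,2\}}\circ h=\Theta^\dsN$, so $z$ is a shuffle witness for $w\in\{u\}\Sha\{v\}$ with $u$ and $v$ exactly as in the statement. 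To see $u\in P^\shuffle$ I would use $\tau_t^K\circ\Pi_K^\dsN=\tau_t^\dsN$ for $t\in K$, which gives $\Pi_K^\dsN(x)\in\bigcap\limits_{t\in K}(\tau_t^K)^{-1}(M)$, hence $u=\Theta^K(\Pi_K^\dsN(x))\in\Theta^K[\bigcap\limits_{t\in K}(\tau_t^K)^{-1}(M)]$; if $K$ is countably infinite this last set equals $P^\shuffle$ by Lemma~\ref{lemma:shuffle-lemma2}, and if $K$ is finite one first enlarges $K$ to a countably infinite $K'\subset\dsN$, the added coordinates being harmless because $\varepsilon\in M$, and concludes as before ($K=\emptyset$ also being covered by $\varepsilon\in P^\shuffle$ from \eqref{eq:rem0-itshuff}). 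Symmetrically $v\in P^\shuffle$, while $w=\Theta^\dsN(x)\in P^\shuffle$ is immediate from Definition~\ref{def:itshuff}.

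For the ``only if'' direction I would exploit the product-set machinery of the paper. Given $w\in\{u\}\Sha\{v\}$ with $u,v\in P^\shuffle$, fix a shuffle witness $z$; then $z\in(\tau_1^{\{1,2\}})^{-1}(P^\shuffle)\cap(\tau_2^{\{1,2\}})^{-1}(P^\shuffle)$, so Lemma~\ref{lemma:shuffle-lemma1} with $S=\{1,2\}$ and $T=\dsN$ yields $y\in\bigcap\limits_{(s,t)\in\{1,2\}\times\dsN}(\tau_{(s,t)}^{\{1,2\}\times\dsN})^{-1}(M)$ with $\Theta_{\{1,2\}}^{\{1,2\}\times\dsN}(y)=z$. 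Now $y$ and $z$ satisfy the hypotheses of Lemma~\ref{lemma:shuffle-lemma3}; applying it with $S'=\{1\}$ and with $S'=\{2\}$, and using $\Theta^{\{1\}}(\Pi_{\{1\}}^{\{1,2\}}(z))=\tau_1^{\{1,2\}}(z)=u$ and likewise $\tau_2^{\{1,2\}}(z)=v$, shows that $\Pi_{\{1\}\times\dsN}^{\{1,2\}\times\dsN}(y)$ is a structured representation of $u$ w.r.t.\ $\{1\}\times\dsN$ and $\Pi_{\{2\}\times\dsN}^{\{1,2\}\times\dsN}(y)$ is one of $v$ w.r.t.\ $\{2\}\times\dsN$. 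Since $\{1,2\}\times\dsN$ is countably infinite I would then transport $y$ along a bijection $\{1,2\}\times\dsN\to\dsN$ to obtain $x\in\bigcap\limits_{t\in\dsN}(\tau_t^\dsN)^{-1}(M)$, letting $K$ be the image of $\{1\}\times\dsN$; this is harmless by Lemma~\ref{lemma:shuffle-lemma2} because the relabeling commutes with all the maps $\tau_t$, $\Theta$ and $\Pi$ involved. Finally, using $\Theta^{\{1,2\}\times\dsN}=\Theta^{\{1,2\}}\circ\Theta_{\{1,2\}}^{\{1,2\}\times\dsN}$ from Lemma~\ref{lemma:shuffle-lemma1}, one reads off $\Theta^\dsN(x)=\Theta^{\{1,2\}}(z)=w$, $\Theta^K(\Pi_K^\dsN(x))=u$ and $\Theta^{\dsN\setminus K}(\Pi_{\dsN\setminus K}^\dsN(x))=v$, which is the left-hand side.

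The calculations just invoked (the three composition identities for $h$, the padding step for finite $K$, and compatibility of the relabeling $\{1,2\}\times\dsN\to\dsN$ with the homomorphisms) are routine. The one point I expect to require genuine care — and this is precisely what Lemmas~\ref{lemma:shuffle-lemma1} and~\ref{lemma:shuffle-lemma3} are designed to absorb — is, in the ``only if'' direction, the order-preserving matching of the index-$1$ (resp.\ index-$2$) positions of the shuffle witness $z$ with the positions of a structured representation of $u$ (resp.\ $v$); if one prefers not to route through those two lemmas here, the same matching can be carried out explicitly via the unique factorisation of $z$ along its index-$1$ and index-$2$ letters, in the spirit of \eqref{eq:unique-fac}, or by a short induction on $|z|$.
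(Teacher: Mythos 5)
Your proof is correct, and the easy direction coincides with the paper's: the paper's homomorphism $\omega_K$ is exactly your folding map $h$, and the composition identities you list are the ones the paper uses implicitly. The difference lies in the hard (``only if'') direction. The paper picks three concrete words --- a shuffle witness $\breve{w}$ of $w$ and structured representations $\breve{u}$, $\breve{v}$ of $u$ and $v$ over $\dsN$ --- and then asserts, by ``combining the structures'' of these three words, the existence of a doubly indexed word $\breve{x}$ over $\dsN\times\{1,2\}$ whose projections reproduce $\breve{w}$, $\breve{u}$, $\breve{v}$ with matching prefix lengths; this interleaving is left informal. You instead obtain the doubly indexed word $y$ over $\{1,2\}\times\dsN$ in one step from Lemma~\ref{lemma:shuffle-lemma1} (applied to $z\in(\tau_1^{\{1,2\}})^{-1}(P^\shuffle)\cap(\tau_2^{\{1,2\}})^{-1}(P^\shuffle)$) and then certify that its two projections are structured representations of $u$ and $v$ via Lemma~\ref{lemma:shuffle-lemma3}; since the conclusion only requires \emph{some} structured representations, not prescribed ones, nothing is lost. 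Both proofs finish with the same relabelling of $\{1,2\}\times\dsN$ onto $\dsN$ (the paper's \eqref{eq:shuffle-fac-3}), taking $K$ to be the image of the index-$1$ copy. What your route buys is that the only genuinely combinatorial step is absorbed into lemmas already proven in Section~3, at the cost of a slightly longer derivation; what the paper's route buys is a $\breve{x}$ with explicitly controlled prefix-length bookkeeping, which it reuses verbatim in the proof of Lemma~\ref{lemma:shuffle-fac-2}. Your handling of $u,v\in P^\shuffle$ in the easy direction (padding a finite $K$ and invoking Lemma~\ref{lemma:shuffle-lemma2}) is also sound, though the paper's appeal to Lemma~\ref{lemma:struc-rep-proj} gets there in one line.
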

\begin{proof}
\ \\
Let $x \in \bigcap\limits_{t \in \dsN} (\tau_t^\dsN)^{-1}(P \cup \{\varepsilon\})$ and $K \subset \dsN$, 
then $w := \Theta^\dsN(x) \in P^\shuffle$ and by Lemma~\ref{lemma:struc-rep-proj} 
$u := \Theta^K(\Pi_K^\dsN(x)) \in P^\shuffle$ and 
$v := \Theta^{\dsN \setminus K}(\Pi_{\dsN \setminus K}^\dsN(x)) \in P^\shuffle$.\\ 
Let $\omega_K : \Sigma_\dsN^* \to  \Sigma_{\{1,2\}}^*$ be defined by 
$\omega_K(a) := (\tau_1^{\{1\}})^{-1}(\Theta^K(a)) \mbox{ for } a \in \Sigma_K \mbox{ and }$  
$\omega_K(a) := (\tau_2^{\{2\}})^{-1}(\Theta^{\dsN \setminus K}(a))\mbox{ for } a \in \Sigma_{\dsN \setminus K}$, 
then $w = \Theta^\dsN(x) = \Theta^{\{1,2\}}(\omega_K(x))$ and 
$\omega_K(x) \in (\tau_1^{\{1,2\}})^{-1}(\{u\}) \cap (\tau_2^{\{1,2\}})^{-1}(\{v\})$. 
This implies $w \in \{u\} \Sha \{v\}$. \\
\ \\
Let now $u,v \in P^\shuffle$ and $w \in \{u\} \Sha \{v\}$. Then there exist \\ 
$\breve{w} \in (\tau_1^{\{1,2\}})^{-1}(\{u\}) \cap (\tau_2^{\{1,2\}})^{-1}(\{v\})$ with
$\Theta^{\{1,2\}}(\breve{w}) = w$,\\ 
$\breve{u} \in \bigcap\limits_{t \in \dsN} (\tau_t^\dsN)^{-1}(P \cup \{\varepsilon\})$ 
with $\Theta^\dsN(\breve{u}) = u$ and \\
$\breve{v} \in \bigcap\limits_{t \in \dsN} (\tau_t^\dsN)^{-1}(P \cup \{\varepsilon\})$ 
with $\Theta^\dsN(\breve{v}) = v$. \\
Then, by ``combining the structures'' of $\breve{w}$, $\breve{u}$ and $\breve{v}$ there exists
\begin{align} \label{eq:shuffle-fac-1}
& \breve{x} \in \bigcap\limits_{n \in \dsN \times \{1,2\}} (\tau_n^{\dsN \times \{1,2\}})^{-1}(P \cup \{\varepsilon\}) 
\mbox{ with } \Theta_{\{1,2\}}^{\dsN \times \{1,2\}}(\breve{x}) = \breve{w} , \nonumber \\
& \Theta_\dsN^{\dsN \times \{1\}}(\Pi_{\dsN \times \{1\}}^{\dsN \times \{1,2\}}(\breve{x})) = \breve{u}, \  
 \Theta_\dsN^{\dsN \times \{2\}}(\Pi_{\dsN \times \{2\}}^{\dsN \times \{1,2\}}(\breve{x})) = \breve{v}, \nonumber \\
& |\Pi_{\dsN \times \{1\}}^{\dsN \times \{1,2\}}(x')| = |\tau_1^{\{1,2\}}(w')| \mbox{ and } 
  |\Pi_{\dsN \times \{2\}}^{\dsN \times \{1,2\}}(x')| = |\tau_2^{\{1,2\}}(w')| \nonumber \\
& \mbox{for each } x' \in \pre(\breve{x}) \mbox{ and } w' \in \pre(\breve{w}) \mbox{ with } |x'| = |w'|.
\end{align}
This implies $w = \Theta^{\dsN \times \{1,2\}}(\breve{x})$,
\begin{equation} \label{eq:shuffle-fac-2}
u = \Theta^{\dsN \times \{1\}}(\Pi_{\dsN \times \{1\}}^{\dsN \times \{1,2\}}(\breve{x})) \mbox{ and } 
v =  \Theta^{\dsN \times \{2\}}(\Pi_{\dsN \times \{2\}}^{\dsN \times \{1,2\}}(\breve{x})).
\end{equation} 
Each bijection $\iota : N \to N'$ defines an isomorphism $\iota_{N'}^N : \Sigma_N^* \to \Sigma_{N'}^*$ 
by \ \ \ \ \ \ \ \ \ \ \ \ $\iota_{N'}^N := (\tau_{\iota(i)}^{\{\iota(i)\}})^{-1}(\tau_i^{\{i\}}(a))$ 
for $a \in \Sigma_{\{i\}}$ and $i \in N$. Then it is easy to see \cite{SysMea14} that
\begin{align} \label{eq:shuffle-fac-3}
& \iota_{N'}^N(x) \in \bigcap\limits_{t \in N'} (\tau_t^{N'})^{-1}(P \cup \{\varepsilon\}) 
\mbox{ and }
 \Theta^K(\Pi_K^N(x)) = \Theta^{\iota(K)}(\Pi_{\iota(K)}^{N'}(\iota_{N'}^N(x))) \nonumber \\
& \mbox{for } x \in \bigcap\limits_{t \in N} (\tau_t^N)^{-1}(P \cup \{\varepsilon\}) \mbox{ and } 
K \subset N.
\end{align}
Applying \eqref{eq:shuffle-fac-3} with $N = \dsN \times \{1,2\}$ and $N' = \dsN$ to 
\eqref{eq:shuffle-fac-1} and \eqref{eq:shuffle-fac-2} completes the proof of the lemma.
\end{proof}
Moreover, the second part of this proof shows
\begin{corollary}\label{cor:shuffle-fac-1}\ \\
Let $P\subset \Sigma^*$. Then $w \in \{u\} \Sha \{v\}$ for $u,v \in P^\shuffle$, iff there exist 
\begin{align*}
& x \in \bigcap\limits_{t \in \dsN} (\tau_t^\dsN)^{-1}(P \cup \{\varepsilon\}) \mbox{ and } 
K \subset \dsN \mbox{ with } \#(K) = \#(\dsN \setminus K) = \#(\dsN) , \nonumber \\
& w = \Theta^\dsN(x) \in P^\shuffle ,\ u = \Theta^K(\Pi_K^\dsN(x)) \mbox{ and } 
 v = \Theta^{\dsN \setminus K}(\Pi_{\dsN \setminus K}^\dsN(x)).
\end{align*}
\end{corollary}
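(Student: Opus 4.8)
The plan is to extract \autoref{cor:shuffle-fac-1} almost verbatim from the proof of \autoref{lemma:shuffle-fac-1}; the only additional content is a cardinality bookkeeping for the index set $K$, and the rest is already established.

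For the ``if'' direction I would note that it is a strict weakening of the ``if'' direction of \autoref{lemma:shuffle-fac-1}. Indeed, if there exist $x \in \bigcap_{t \in \dsN} (\tau_t^\dsN)^{-1}(P \cup \{\varepsilon\})$ and $K \subset \dsN$ satisfying the displayed equations \emph{and} the extra condition $\#(K) = \#(\dsN \setminus K) = \#(\dsN)$, then in particular such $x$ and $K$ exist, so \autoref{lemma:shuffle-fac-1} already yields $w \in \{u\} \Sha \{v\}$; that $u,v \in P^\shuffle$ follows from \autoref{lemma:struc-rep-proj} exactly as in the first paragraph of the proof of \autoref{lemma:shuffle-fac-1}.

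For the ``only if'' direction I would revisit the second half of the proof of \autoref{lemma:shuffle-fac-1}. Starting from $w \in \{u\} \Sha \{v\}$ with $u,v \in P^\shuffle$, that argument constructs an element $\breve{x} \in \bigcap_{n \in \dsN \times \{1,2\}} (\tau_n^{\dsN \times \{1,2\}})^{-1}(P \cup \{\varepsilon\})$ with $\Theta^{\dsN \times \{1,2\}}(\breve{x}) = w$ and with $u$, $v$ obtained as in \eqref{eq:shuffle-fac-2} from the projections of $\breve{x}$ onto $\dsN \times \{1\}$ and $\dsN \times \{2\}$; it then transports everything back to $\dsN$ via the isomorphism $\iota_\dsN^{\dsN \times \{1,2\}}$ induced by a bijection $\iota : \dsN \times \{1,2\} \to \dsN$, using \eqref{eq:shuffle-fac-3}. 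Setting $x := \iota_\dsN^{\dsN \times \{1,2\}}(\breve{x})$ and $K := \iota(\dsN \times \{1\})$, the displayed equations of the corollary hold: applying \eqref{eq:shuffle-fac-3} with the whole set $\dsN \times \{1,2\}$ gives $\Theta^\dsN(x) = w$ since $\iota(\dsN \times \{1,2\}) = \dsN$, applying it with $\dsN \times \{1\}$ gives $u = \Theta^K(\Pi_K^\dsN(x))$, and applying it with $\dsN \times \{2\}$ gives $v = \Theta^{\dsN \setminus K}(\Pi_{\dsN \setminus K}^\dsN(x))$ because $\dsN \setminus K = \iota(\dsN \times \{2\})$ (the sets $\dsN \times \{1\}$ and $\dsN \times \{2\}$ partition $\dsN \times \{1,2\}$ and $\iota$ is a bijection). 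Finally, since $\iota$ is a bijection and $\dsN \times \{1\}$, $\dsN \times \{2\}$ are both countably infinite, $\#(K) = \#(\dsN \times \{1\}) = \#(\dsN)$ and $\#(\dsN \setminus K) = \#(\iota(\dsN \times \{2\})) = \#(\dsN \times \{2\}) = \#(\dsN)$, which is precisely the asserted cardinality condition.

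I do not expect a genuine obstacle: the corollary is essentially a remark on a construction already carried out. The only points that deserve an explicit line are the set identity $\dsN \setminus \iota(\dsN \times \{1\}) = \iota(\dsN \times \{2\})$ and the fact that a bijection preserves cardinality, so that the $K$ produced by the proof of \autoref{lemma:shuffle-fac-1} automatically meets the stronger requirement without any further work.
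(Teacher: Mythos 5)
Your proposal is correct and matches the paper's own treatment: the paper derives this corollary with the single remark that ``the second part of this proof shows'' it, i.e.\ the $K=\iota(\dsN\times\{1\})$ produced in the only-if direction of Lemma~\ref{lemma:shuffle-fac-1} automatically satisfies $\#(K)=\#(\dsN\setminus K)=\#(\dsN)$, exactly as you observe. Your explicit note that the ``if'' direction is just a weakening of the lemma's is a harmless addition the paper leaves implicit.
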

Lemma~\ref{lemma:shuffle-fac-1} and corollary~\ref{cor:shuffle-fac-1} motivates
\begin{definition}\label{def:shuffle-factors}\ \\
For $P \subset \Sigma^*$ let $\SWF_P : 2^{P^\shuffle} \to 2^{P^\shuffle}$ 
be defined by 
\[\SWF_P(M) := \{u \in P^\shuffle | \mbox{ there exist } w \in M \mbox{ and } v \in P^\shuffle 
\mbox{ such that } w \in \{u\} \Sha \{v\} \} \]
for $M \subset P^\shuffle$. The elements of $\SWF_P(M)$ are called \emph{ shuffle factors } of $M$.
\end{definition}
It is an immediate consequence of this definition that 
\begin{align} \label{eq:shuffle-fac-4}
& M \subset \SWF_P(M), \ \SWF_P(M) = \bigcup\limits_{w \in M } \SWF_P(\{w\}) \nonumber \\
& \mbox{and therefore } \SWF_P(U) \subset \SWF_P(M) \mbox{ for } U \subset M \subset P^\shuffle.
\end{align}
\begin{theorem}\label{thm:shuffle-fac-1}\ \\
Let $P,V \subset \Sigma^*$. Then $\SP(P \cup \{\varepsilon\},V)$  iff 
$\SWF_P(P^\shuffle \cap V) \subset P^\shuffle \cap V$.
\end{theorem}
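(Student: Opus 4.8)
The plan is to translate the statement $\SP(P \cup \{\varepsilon\},V)$ into the saturation condition of Theorem~\ref{thm:mod-shuffle-projection}, and then to recognize the shuffle-factor operator $\SWF_P$ as exactly the ``image'' operator attached to the one-coordinate-deletion relation on structured representations. Concretely, $\SP(P\cup\{\varepsilon\},V)$ holds iff for one (equivalently every) infinite countable set $N$,
\[
\Pi_{N\setminus\{n\}}^N\bigl[(\bigcap_{t\in N}(\tau_t^N)^{-1}(P\cup\{\varepsilon\}))\cap(\Theta^N)^{-1}(V)\bigr]\subset(\Theta^N)^{-1}(V)
\]
for each $n\in N$. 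Using Lemma~\ref{lemma:struc-rep-proj} (as in the derivation of \eqref{eq:mod-shuffle-projection-130}) this is equivalent to the saturation property $\S(W,R)$ where $W = (\Theta^N)^{-1}(V)\cap\bigcap_{t\in N}(\tau_t^N)^{-1}(P\cup\{\varepsilon\})$ and $R$ is the relation $(x,y)$ with $y=\Pi_{N\setminus\{n\}}^N(x)$ for some $n\in N$. Pushing $W$ and $R$ through $\Theta^N$ and invoking \eqref{eq:sat-1}, the condition becomes: $w\in P^\shuffle\cap V$ and $w\in\{u\}\Sha\{v\}$ for some $u,v\in P^\shuffle$ implies $u\in V$ (and by symmetry $v\in V$); this last step is precisely Corollary~\ref{cor:shuffle-fac-1} rewritten with the $\SWF_P$ notation.

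First I would make the forward direction precise. Assume $\SP(P\cup\{\varepsilon\},V)$ and let $u\in\SWF_P(P^\shuffle\cap V)$, so there are $w\in P^\shuffle\cap V$ and $v\in P^\shuffle$ with $w\in\{u\}\Sha\{v\}$. By Corollary~\ref{cor:shuffle-fac-1} there is $x\in\bigcap_{t\in\dsN}(\tau_t^\dsN)^{-1}(P\cup\{\varepsilon\})$ and a subset $K\subset\dsN$ with $\#(K)=\#(\dsN\setminus K)=\#(\dsN)$ such that $w=\Theta^\dsN(x)$, $u=\Theta^K(\Pi_K^\dsN(x))$, $v=\Theta^{\dsN\setminus K}(\Pi_{\dsN\setminus K}^\dsN(x))$. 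Since $\#(K)=\#(\dsN)$ I may, via the isomorphism-invariance recorded in \eqref{eq:shuffle-fac-3} and in the Remark after Definition~\ref{def:strucrep} (and the Remark after Definition~\ref{def:shuffle-projection} allowing $\dsN$ to be replaced by any set of the same cardinality), identify $K$ with $\dsN$ and regard $\Pi_K^\dsN(x)$ as a structured representation living over $K$. Then $x\in(\Theta^\dsN)^{-1}(V)$ because $\Theta^\dsN(x)=w\in V$, and $x\in\bigcap_{t\in\dsN}(\tau_t^\dsN)^{-1}(P\cup\{\varepsilon\})$, so $\Pi_K^\dsN(x)$ lies in the set on the left of Definition~\ref{def:shuffle-projection}; hence $\Pi_K^\dsN(x)\in(\Theta^K)^{-1}(V)$, i.e. $u=\Theta^K(\Pi_K^\dsN(x))\in V$. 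As $u\in P^\shuffle$ trivially (Corollary~\ref{cor:shuffle-fac-1}), $u\in P^\shuffle\cap V$, proving $\SWF_P(P^\shuffle\cap V)\subset P^\shuffle\cap V$.

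For the converse I would start from an arbitrary $\emptyset\neq K\subset\dsN$ and $x\in\bigcap_{t\in\dsN}(\tau_t^\dsN)^{-1}(P\cup\{\varepsilon\})\cap(\Theta^\dsN)^{-1}(V)$, and must show $\Pi_K^\dsN(x)\in(\Theta^\dsN)^{-1}(V)$. Set $w:=\Theta^\dsN(x)\in P^\shuffle\cap V$, $u:=\Theta^K(\Pi_K^\dsN(x))$, $v:=\Theta^{\dsN\setminus K}(\Pi_{\dsN\setminus K}^\dsN(x))$; by Lemma~\ref{lemma:shuffle-fac-1}, $u,v\in P^\shuffle$ and $w\in\{u\}\Sha\{v\}$, so $u\in\SWF_P(P^\shuffle\cap V)\subset P^\shuffle\cap V$, giving $u\in V$. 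Finally $\Theta^\dsN(\Pi_K^\dsN(x))=\Theta^K(\Pi_K^\dsN(x))=u\in V$ (the equality $\Theta^\dsN\circ\Pi_K^\dsN=\Theta^K\circ\Pi_K^\dsN$ being immediate from the definitions), so $\Pi_K^\dsN(x)\in(\Theta^\dsN)^{-1}(V)$, which is $\SP(P\cup\{\varepsilon\},V)$.

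The main obstacle is the bookkeeping in the converse: one must be careful that $\SWF_P$ only gives information when the cardinalities match, whereas an arbitrary $K$ might be finite. The clean fix is to observe that for $\Pi_K^\dsN(x)$ only the finitely many coordinates in $\kappa(x)\cap K$ matter (cf.\ the argument of Theorem~\ref{thm:mod-shuffle-projection} reducing $\Pi_K^\dsN$ to $\Pi_{N\setminus R}^N$ for finite $R$), and then to replace $\dsN$ by a suitable countable $N$ and apply the isomorphism invariance of \eqref{eq:shuffle-fac-3} together with Lemma~\ref{lemma:shuffle-lemma2} so that both $K$ and $\dsN\setminus K$ become countably infinite; alternatively one invokes Lemma~\ref{lemma:shuffle-fac-1} directly, which is stated for arbitrary $K\subset\dsN$ and already absorbs this subtlety. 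I expect the rest to be routine unwinding of the homomorphism identities \eqref{eq:pp-63}, \eqref{eq:sp-6.0} and \eqref{eq:sat-1}.
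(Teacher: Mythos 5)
Your proof is correct and follows essentially the same route as the paper: the forward direction is exactly the application of Corollary~\ref{cor:shuffle-fac-1} (extract a structured representation $x$ and a set $K$ of full cardinality, then apply the definition of $\SP$ to conclude $u\in V$), and the converse is exactly the application of Lemma~\ref{lemma:shuffle-fac-1} for arbitrary $K$, which, as you note yourself, already absorbs the cardinality bookkeeping. The opening detour through saturation properties and the worry about finite $K$ are unnecessary but harmless; the core argument matches the paper's.
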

\begin{proof} \ \\
By Corollary~\ref{cor:shuffle-fac-1} $\SP(P \cup \{\varepsilon\},V)$ implies 
$\SWF_P(P^\shuffle \cap V) \subset P^\shuffle \cap V$, which implies $\SP(P \cup \{\varepsilon\},V)$ 
on account of Lemma~\ref{lemma:shuffle-fac-1}.
\end{proof}
\noindent \textbf{Remark.} This proof shows that in Definition~\ref{def:shuffle-projection} 
the restriction $K \neq \emptyset$ can be omitted, or $K$ can also be restricted by 
$\#(K) = \#(\dsN \setminus K) = \#(\dsN)$. \\
\ \\
Additionally to commutativity also associativity of $\Sha$ is well known, see for example \cite{Jantzen85}. 
Because of 
$U \Sha V = \bigcup\limits_{u \in U, v \in V} \{u\} \Sha \{v\}$, the following 
lemma is sufficient for its proof. 
\begin{lemma} \label{lemma:shuffle-fac-2}\ \\
Let $u,v,w \in \Sigma^*$. Then
\begin{align*}
& (\{u\} \Sha \{v\}) \Sha \{w\} = \nonumber \\
& \Theta^{\{1,2,3\}}[(\tau_1^{\{1,2,3\}})^{-1}(\{u\}) 
\cap (\tau_2^{\{1,2,3\}})^{-1}(\{v\}) 
\cap (\tau_3^{\{1,2,3\}})^{-1}(\{w\})] = \nonumber \\
& \{u\} \Sha (\{v\} \Sha \{w\}).
\end{align*}
\end{lemma}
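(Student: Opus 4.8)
The plan is to prove the two displayed equalities by exhibiting all three languages as the homomorphic image of one and the same set of ``three--track'' structured representations. Write
\[
R := (\tau_1^{\{1,2,3\}})^{-1}(\{u\}) \cap (\tau_2^{\{1,2,3\}})^{-1}(\{v\}) \cap (\tau_3^{\{1,2,3\}})^{-1}(\{w\}),
\]
so that the middle term of the lemma is $\Theta^{\{1,2,3\}}(R)$; concretely, an element of $R$ is a word over the three disjoint copies of $\Sigma$ whose index-$1$, index-$2$ and index-$3$ letters spell $u$, $v$ and $w$ respectively, and $\Theta^{\{1,2,3\}}$ merges the copies back. I would first reduce the second equality to the first: by commutativity of $\Sha$ (see \eqref{eq:shuffle-fac-0}) we have $\{u\}\Sha(\{v\}\Sha\{w\}) = (\{v\}\Sha\{w\})\Sha\{u\}$, so the first equality applied to the triple $(v,w,u)$ identifies it with $\Theta^{\{1,2,3\}}(R')$ for the analogously defined $R'$; and $\Theta^{\{1,2,3\}}(R') = \Theta^{\{1,2,3\}}(R)$ because $R$ is invariant, up to the obvious isomorphism, under permutations of the index set $\{1,2,3\}$ — the evident analogue of Lemma~\ref{lemma:shuffle-lemma2}, obtained from the index isomorphisms $\iota_{N'}^N$ recalled in \eqref{eq:shuffle-fac-3}. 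Hence it suffices to prove $(\{u\}\Sha\{v\})\Sha\{w\} = \Theta^{\{1,2,3\}}(R)$.

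For the inclusion ``$\subset$'' I would unfold Definition~\ref{def:shuffle-prod} twice (using two disjoint two-element index sets for the outer and the inner shuffle, which is legitimate by \eqref{eq:shuffle-fac-0}): $t \in (\{u\}\Sha\{v\})\Sha\{w\}$ means there is an outer structured word $y$ with $\Theta(y)=t$ whose first track lies in $\{u\}\Sha\{v\}$ and whose second track equals $w$, and that first track is $\Theta$ of an inner structured word $z$ with tracks $u$ and $v$. Combining the structures of $y$ and $z$ — keep the positions of $y$, relabel its second-track letters to index $3$, and relabel each of its first-track letters by the index ($1$ or $2$) of the corresponding letter of $z$ under the canonical order-preserving bijection between the first track of $y$ and the letters of $z$ — yields $x \in \Sigma_{\{1,2,3\}}^*$ with $\Theta^{\{1,2,3\}}(x)=t$, $\tau_1^{\{1,2,3\}}(x)=u$, $\tau_2^{\{1,2,3\}}(x)=v$, $\tau_3^{\{1,2,3\}}(x)=w$, i.e. $x\in R$. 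For ``$\supset$'' I would run this backwards: given $x\in R$, put $z := \Pi_{\{1,2\}}^{\{1,2,3\}}(x)$, so that $\tau_i^{\{1,2\}}(z)=\tau_i^{\{1,2,3\}}(x)$ for $i\in\{1,2\}$ by the composition identity $\tau_i^{\{1,2\}}\circ\Pi_{\{1,2\}}^{\{1,2,3\}}=\tau_i^{\{1,2,3\}}$, hence $\Theta^{\{1,2\}}(z)\in\{u\}\Sha\{v\}$; then build the outer word $y$ from $x$ by merging indices $1,2$ into the first track and index $3$ into the second track, which gives $\Theta(y)=t$, first track $\Theta^{\{1,2\}}(z)$ and second track $w$, so $t\in\{\Theta^{\{1,2\}}(z)\}\Sha\{w\}\subset(\{u\}\Sha\{v\})\Sha\{w\}$. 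This is exactly the ``combining/splitting structured representations'' technique already used in the proof of Lemma~\ref{lemma:shuffle-fac-1} (the construction of $\breve{x}$ there) and in the definition of $b_\bbP$ in Section~\ref{sec:automata-shuffle projection}.

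The step I expect to be the main obstacle is making this ``combine/split'' passage fully precise: one has to formalize the order-preserving matching between the first track of $y$ and the word $z$, check that it respects prefixes and positions, and verify that the resulting $x$ is well defined over the disjoint copies $\Sigma_{\{1,2,3\}}$. All of this is routine bookkeeping of the same kind that recurs throughout the paper, so I would either carry it out directly or, alternatively, derive the identity from a straightforward ``several languages'' generalization of Shuffle-lemma~\ref{lemma:shuffle-lemma1} applied with a product index set; the latter packages the bookkeeping into one reusable lemma but requires stating and proving that generalization first.
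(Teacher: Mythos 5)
Your proposal is correct and follows essentially the same route as the paper: the paper likewise unfolds the nested shuffle into two layers of structured representations ($\breve{y}$ and $\breve{x}$ over two-element index sets) and then obtains an element $\breve{z}$ of the three-fold intersection by ``combining the structures'', with the reverse direction given by projecting and merging indices, leaving the same bookkeeping informal that you flag as the main obstacle. The only cosmetic difference is that you derive the second equality from the first via commutativity and permutation-invariance of the index set, whereas the paper simply invokes ``an analogous argument''.
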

\begin{proof}
$x \in (\{u\} \Sha \{v\}) \Sha \{w\}$, iff
\begin{equation} \label{eq:shuffle-fac-5}
\mbox{there exists } y \in \{u\} \Sha \{v\} 
\mbox{ with } x \in \{y\} \Sha \{w\}.
\end{equation}
\eqref{eq:shuffle-fac-5} is equivalent to:
\begin{align} \label{eq:shuffle-fac-6}
& \mbox{There exist } \breve{y} \in (\tau_1^{\{1,2\}})^{-1}(\{u\}) 
\cap (\tau_2^{\{1,2\}})^{-1}(\{v\}) \mbox{ and } \nonumber \\
& \breve{x} \in (\tau_{\{1,2\}}^{\{{\{1,2\}},3\}})^{-1}(\{y\}) 
   \cap (\tau_3^{\{{\{1,2\}},3\}})^{-1}(\{w\}) \mbox{ with } \nonumber \\
& y = \Theta^{\{1,2\}}(\breve{y}) \mbox{ and } 
   x = \Theta^{\{{\{1,2\}},3\}}(\breve{x}).
\end{align}
\eqref{eq:shuffle-fac-6} is equivalent to:
\begin{align} \label{eq:shuffle-fac-7}
& \mbox{There exists } \breve{z} \in (\tau_1^{\{1,2,3\}})^{-1}(\{u\}) 
   \cap (\tau_2^{\{1,2,3\}})^{-1}(\{v\}) 
   \cap (\tau_3^{\{1,2,3\}})^{-1}(\{w\}) \mbox{ with } \nonumber \\
& \Pi_{\{1,2\}}^{\{1,2,3\}}(\breve{z}) = \breve{y}, \ \Theta^{\{1,2,3\}}(\breve{z}) = x, \nonumber \\
& |\Pi_{\{1,2\}}^{\{1,2,3\}}(z')| = |\tau_{\{1,2\}}^{\{{\{1,2\}},3\}}(x')| \mbox{ and } 
  |\Pi_{\{3\}}^{\{1,2,3\}}(z')| = |\tau_3^{\{{\{1,2\}},3\}}(x')| \nonumber \\
& \mbox{for each } z' \in \pre(\breve{z}) \mbox{ and } x' \in \pre(\breve{x}) \mbox{ with } |z'| = |x'|.
\end{align}
where $\breve{z}$ result by ``combining the structures'' of $\breve{y}$ and $\breve{x}$. 
\eqref{eq:shuffle-fac-5} - \eqref{eq:shuffle-fac-7} proves the first equation of the lemma. 
The second equation can be shown by an analogous argument. 
\end{proof}
Lemma~\ref{lemma:shuffle-fac-2} shows:
\begin{equation*}
u \in \SWF_P(\{w\}) \mbox{ and } x \in \SWF_P(\{u\}) \mbox{ implies } x \in \SWF_P(\{w\})  
\mbox{ for each } w \in P^\shuffle.
\end{equation*}
Therefore $\SWF_P(\SWF_P(M)) \subset \SWF_P(M)$ for each $M \subset P^\shuffle$, which 
by \eqref{eq:shuffle-fac-4} implies   
\begin{equation}\label{eq:shuffle-fac-8}
\SWF_P(\SWF_P(M)) = \SWF_P(M) \mbox{for each M} \subset P^\shuffle. 
\end{equation}
On account of \eqref{eq:shuffle-fac-4} and \eqref{eq:shuffle-fac-8} $\SWF_P$ is a closure operator \cite{DP02}. 
For $P,V \subset \Sigma^*$ and $M \subset P^\shuffle$, by Theorem~\ref{thm:shuffle-fac-1} $\SWF_P(M)$ 
is the smallest $V$ with $X \subset V$ and $\SP(P \cup \{\varepsilon\},V)$. 
On account of \eqref{eq:sp-6} holds
\begin{equation}\label{eq:shuffle-fac-9}
\SWF_P(M) = \bigcap\limits_{\substack
{M \subset V \subset \Sigma^*\\ \SP(P \cup \{\varepsilon\},V)}} V . 
\end{equation}
For $P \subset \Sigma^*$, $\SWF_P$ is a generalization of $\mcC_\Sigma$, where 
\begin{align*}
\mcC_\Sigma(X) := 
\{ u \in \Sigma^* | & \mbox{ there exist } n \in \dsN \mbox{ and } u_i,v_i \in \Sigma^* \mbox{ for } 
1 \leq i \leq n \mbox{ such that } \nonumber \\ 
& u = u_1...u_n  \mbox{ and } u_1v_1...u_nv_n \in X \} = \SWF_\Sigma(X)
\end{align*}
for $X \subset \Sigma^\shuffle = \Sigma^*$ \cite{Haines69}, 
which is called the \emph{downward closure} of $X$. \\
\ \\
In preparation for the next section we show the following 
\begin{lemma} \label{lemma:shuffle-fac-3}\ \\
\[\{ua\} \Sha \{vb\} = (\{u\} \Sha \{vb\})a \cup (\{ua\} \Sha \{v\})b 
\mbox{ \ for } u,v \in \Sigma^* \mbox{ and } a,b \in \Sigma. \]
\end{lemma}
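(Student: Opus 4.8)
The plan is to prove the set equality $\{ua\} \Sha \{vb\} = (\{u\} \Sha \{vb\})a \cup (\{ua\} \Sha \{v\})b$ by unfolding the definition of $\Sha$ in terms of $\tau$'s and $\Theta$, using a set $S$ with $\#(S) = 2$, which by \eqref{eq:shuffle-fac-0} is permissible. A word $w$ lies in $\{ua\} \Sha \{vb\}$ iff there exists $\breve w \in (\tau_1^{\{1,2\}})^{-1}(\{ua\}) \cap (\tau_2^{\{1,2\}})^{-1}(\{vb\})$ with $\Theta^{\{1,2\}}(\breve w) = w$. The key observation is that since $ua$ and $vb$ are nonempty, $\breve w$ is nonempty, so we may write $\breve w = \breve w' c$ with $c \in \Sigma_{\{1,2\}}$, i.e. $c \in \Sigma_{\{1\}}$ or $c \in \Sigma_{\{2\}}$. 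These two cases correspond exactly to the two members of the union on the right.

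First I would do the inclusion ``$\subseteq$''. Take $w \in \{ua\}\Sha\{vb\}$ with structured representative $\breve w = \breve w' c$ as above. If $c \in \Sigma_{\{1\}}$, then $\tau_1^{\{1,2\}}(\breve w) = \tau_1^{\{1,2\}}(\breve w')\,\tau_1^{\{1,2\}}(c)$ must equal $ua$; since $a \in \Sigma$ and by the unique factorization of a nonempty word, $\tau_1^{\{1,2\}}(c) = a$ (the copy of $a$) and $\tau_1^{\{1,2\}}(\breve w') = u$, while $\tau_2^{\{1,2\}}(\breve w') = \tau_2^{\{1,2\}}(\breve w) = vb$. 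Thus $\breve w'$ witnesses $\Theta^{\{1,2\}}(\breve w') \in \{u\}\Sha\{vb\}$, and $w = \Theta^{\{1,2\}}(\breve w) = \Theta^{\{1,2\}}(\breve w')\,\wedge(c) = \Theta^{\{1,2\}}(\breve w')\,a$, so $w \in (\{u\}\Sha\{vb\})a$. The symmetric case $c \in \Sigma_{\{2\}}$ gives $w \in (\{ua\}\Sha\{v\})b$. For the reverse inclusion ``$\supseteq$'', given $w = w'' a$ with $w'' \in \{u\}\Sha\{vb\}$, take a structured representative $\breve w''$ of $w''$ and set $\breve w := \breve w''\,\tilde a_1$ where $\tilde a_1 \in \Sigma_{\{1\}}$ is the copy of $a$; then $\tau_1^{\{1,2\}}(\breve w) = u a$, $\tau_2^{\{1,2\}}(\breve w) = vb$ and $\Theta^{\{1,2\}}(\breve w) = w'' a = w$, so $w \in \{ua\}\Sha\{vb\}$. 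The case $w \in (\{ua\}\Sha\{v\})b$ is symmetric, appending a copy of $b$ in $\Sigma_{\{2\}}$.

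The only subtlety — and the step I would be most careful about — is the bookkeeping in the ``$\subseteq$'' direction: one must justify that appending a single letter $c$ to $\breve w'$ and projecting under $\tau_1, \tau_2, \Theta$ behaves multiplicatively as claimed, and that $\tau_i^{\{1,2\}}$ restricted to $\Sigma_{\{i\}}^*$ is the ``right'' bijection onto $\Sigma^*$ (so that the last letter of $ua$ really forces the last letter of $\breve w$ when it sits in track $i$). This is routine from Definition~\ref{def:tau} applied with $I = \{1,2\}$ and the fact that $\tau_i^{\{i\}}$ is an isomorphism, together with the unique decomposition $xa$ of a nonempty word used elsewhere in the paper (e.g.\ in Definition~\ref{def:length}). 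No new machinery is needed; the lemma is essentially the ``one-letter unfolding'' of the definition of the shuffle product, and the proof fits in a few lines once the two cases $c \in \Sigma_{\{1\}}$ versus $c \in \Sigma_{\{2\}}$ are separated.
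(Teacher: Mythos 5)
Your proof is correct and follows essentially the same route as the paper: both split a structured representation $\breve w \in (\tau_1^{\{1,2\}})^{-1}(\{ua\}) \cap (\tau_2^{\{1,2\}})^{-1}(\{vb\})$ according to whether its last letter lies in $\Sigma_{\{1\}}$ or $\Sigma_{\{2\}}$, the paper phrasing this compactly as intersecting with $\Sigma_{\{1,2\}}^*\Sigma_{\{1\}}$ and $\Sigma_{\{1,2\}}^*\Sigma_{\{2\}}$ and then factoring the last letter out of the homomorphic image. Your element-wise write-up, including the reverse inclusion by appending the appropriate copy of $a$ or $b$, is just a more explicit rendering of the same argument.
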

\begin{proof} \ \\
On account of \eqref{eq:shuffle-fac-0} $\{ua\} \Sha \{vb\} \subset \Sigma^+$, and therefore
\begin{align*}
& \{ua\} \Sha \{vb\} = \Theta^{\{1,2\}}[(\tau_1^{\{1,2\}})^{-1}(\{ua\}) \cap (\tau_2^{\{1,2\}})^{-1}(\{vb\})] = \nonumber \\
& \Theta^{\{1,2\}}[(\tau_1^{\{1,2\}})^{-1}(\{ua\}) \cap (\tau_2^{\{1,2\}})^{-1}(\{vb\}) 
   \cap \Sigma_{\{1,2\}}^*\Sigma_{\{1\}}] \cup \nonumber \\ 
& \Theta^{\{1,2\}}[(\tau_1^{\{1,2\}})^{-1}(\{ua\}) \cap (\tau_2^{\{1,2\}})^{-1}(\{vb\}) 
   \cap \Sigma_{\{1,2\}}^*\Sigma_{\{2\}}] = \nonumber \\
& \Theta^{\{1,2\}}[(\tau_1^{\{1,2\}})^{-1}(\{u\}) \cap (\tau_2^{\{1,2\}})^{-1}(\{vb\})]a \cup \nonumber \\
& \Theta^{\{1,2\}}[(\tau_1^{\{1,2\}})^{-1}(\{ua\}) \cap (\tau_2^{\{1,2\}})^{-1}(\{v\})]b = \nonumber \\
& (\{u\} \Sha \{vb\})a \cup (\{ua\} \Sha \{v\})b.
\end{align*}
\end{proof}
The properties of \eqref{eq:shuffle-fac-0} and Lemma~\ref{lemma:shuffle-fac-3} completely characterize $\Sha$. 
It is well known that
\begin{align} \label{eq:shuffle-fac-10}
& \{w\} = \{w\} \Sha \{\varepsilon\} = \{\varepsilon\} \Sha \{w\}
   \mbox{ for } w \in \Sigma^*, \nonumber \\
& \{ua\} \Sha \{vb\} = (\{u\} \Sha \{vb\})a \cup (\{ua\} \Sha \{v\})b 
\mbox{ \ for } u,v \in \Sigma^* \mbox{ and } a,b \in \Sigma, \mbox{ and } \nonumber \\
& U \Sha V = \bigcup\limits_{u \in U, v \in V} \{u\} \Sha \{v\} \mbox{ \ for } U,V \subset \Sigma^*
\end{align}
inductively defines $\Sha$, see for example \cite{JedrzejowiczS01a}. \\
\ \\
Lemma~\ref{lemma:shuffle-fac-1} shows
\begin{align}\label{eq:shuffle-fac-11}
& \mbox{For each } \varepsilon \neq w \in P^\shuffle \mbox{ there exist } \varepsilon \neq e \in P 
\mbox{ and } v \in P^\shuffle \mbox{ with } \nonumber \\
& w \in \{e\} \Sha \{v\}.
\end{align}
\eqref{eq:shuffle-fac-11} together with Lemma~\ref{lemma:shuffle-fac-1} implies 
the following well known inductive definition of $P^\shuffle$, see for example \cite{JedrzejowiczS01a}:
\begin{align}\label{eq:shuffle-fac-12}
& P^\shuffle = \bigcup\limits_{n \in \dsN} P^{(\shuffle,n)} \mbox{ where } \nonumber \\
& P^{(\shuffle,1)} := P \cup \{\varepsilon\} \mbox{ and } 
P^{(\shuffle,n+1)} := P^{(\shuffle,n)} \Sha (P \cup \{\varepsilon\}) \mbox{ for } n \in \dsN.
\end{align}
\eqref{eq:shuffle-fac-12} motivates
\begin{definition}\label{def:shuffle-factors-n}\ \\
For $P \subset \Sigma^*$ and $n \in \dsN$ let $\SWF_P^{(n)} : 2^{P^\shuffle} \to 2^{P^\shuffle}$ 
be defined by $\SWF_P^{(n)}(M) :=$ 
\[ \{u \in P^\shuffle | \mbox{ there exist } w \in M \mbox{ and } v \in P^{(\shuffle,n)} 
\mbox{ such that } w \in \{u\} \Sha \{v\} \} \]
for $M \subset P^\shuffle$.
\end{definition}
It is an immediate consequence of this definition that 
\begin{align} \label{eq:shuffle-fac-13}
& \SWF_P^{(n)}(M) = \bigcup\limits_{w \in M } \SWF_P^{(n)}(\{w\}) \mbox{ and therefore } \nonumber \\
& \SWF_P^{(n)}(U) \subset \SWF_P^{(n)}(M) \mbox{ for } U \subset M \subset P^\shuffle \mbox{ and } n \in \dsN.
\end{align}
Since $\{\varepsilon\} \subset P^{(\shuffle,n)} \subset P^{(\shuffle,n+1)}$ for $n \in \dsN$, 
\eqref{eq:shuffle-fac-12} implies 
\begin{align} \label{eq:shuffle-fac-14}
& M \subset \SWF_P^{(n)}(M) \subset \SWF_P^{(n+1)}(M) \mbox{ for } n \in \dsN, \nonumber \\
& \mbox{and } \SWF_P(M) = \bigcup\limits_{n \in \dsN} \SWF_P^{(n)}(M) \mbox{ for } M \subset P^\shuffle.
\end{align}
The iterative definition of $P^{(\shuffle,n)}$ together with the commutativity and associativity of $\Sha$ shows:
\begin{align} \label{eq:shuffle-fac-15}
& \SWF_P^{(n+1)}(M) = \SWF_P^{(1)}(\SWF_P^{(n)}(M)) = \SWF_P^{(n)}(\SWF_P^{(1)}(M)) \nonumber \\
& \mbox{for } M \subset P^\shuffle \mbox{ and } n \in \dsN.
\end{align}
For $M \subset P^\shuffle$ \eqref{eq:shuffle-fac-15} by induction implies 
\begin{equation*}
\SWF_P^{(1)}(M) \subset M  \mbox{ iff } \SWF_P^{(n)}(M) \subset M \mbox{ for each } n \in \dsN.
\end{equation*}
Therefore, by \eqref{eq:shuffle-fac-14} and Theorem~\ref{thm:shuffle-fac-1} holds
\begin{corollary}\label{cor:shuffle-fac-2}\ \\
Let $P,V \subset \Sigma^*$. Then $\SP(P \cup \{\varepsilon\},V)$  iff 
$\SWF_P^{(1)}(P^\shuffle \cap V) \subset P^\shuffle \cap V$.
\end{corollary}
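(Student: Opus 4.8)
The plan is to obtain Corollary~\ref{cor:shuffle-fac-2} as an immediate corollary of Theorem~\ref{thm:shuffle-fac-1}, using only the structural facts about the operators $\SWF_P^{(n)}$ collected in \eqref{eq:shuffle-fac-12}--\eqref{eq:shuffle-fac-15}. Throughout, I set $M := P^\shuffle \cap V$; since $M \subset P^\shuffle$, i.e.\ $M \in 2^{P^\shuffle}$, all of $\SWF_P(M)$ and $\SWF_P^{(n)}(M)$ are defined, and we have $M \subset \SWF_P^{(n)}(M) \subset \SWF_P^{(n+1)}(M)$ together with $\SWF_P(M) = \bigcup_{n\in\dsN}\SWF_P^{(n)}(M)$ by \eqref{eq:shuffle-fac-14}.

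For the direction ``$\SP(P \cup \{\varepsilon\},V) \Rightarrow \SWF_P^{(1)}(M) \subset M$'', I would apply Theorem~\ref{thm:shuffle-fac-1} to get $\SWF_P(M) \subset M$, and then note $\SWF_P^{(1)}(M) \subset \SWF_P(M) \subset M$, the first inclusion being just one term of the union in \eqref{eq:shuffle-fac-14}. For the converse, assume $\SWF_P^{(1)}(M) \subset M$. The key step is the induction remark stated just before the corollary: by \eqref{eq:shuffle-fac-15} we have $\SWF_P^{(n+1)}(M) = \SWF_P^{(1)}(\SWF_P^{(n)}(M))$, so from $\SWF_P^{(n)}(M) \subset M$ and the monotonicity half of \eqref{eq:shuffle-fac-13} we obtain $\SWF_P^{(n+1)}(M) \subset \SWF_P^{(1)}(M) \subset M$; hence, by induction, $\SWF_P^{(n)}(M) \subset M$ for every $n \in \dsN$. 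Taking the union over $n$ and using $\SWF_P(M) = \bigcup_{n\in\dsN}\SWF_P^{(n)}(M)$ from \eqref{eq:shuffle-fac-14} gives $\SWF_P(M) \subset M$, and Theorem~\ref{thm:shuffle-fac-1} then yields $\SP(P \cup \{\varepsilon\},V)$.

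I expect no real obstacle here: the substance has already been supplied by the iterative description \eqref{eq:shuffle-fac-12} of $P^\shuffle$, the commutativity/associativity identities \eqref{eq:shuffle-fac-15} for the $\SWF_P^{(n)}$, and Theorem~\ref{thm:shuffle-fac-1}. The only point requiring a moment's care is bookkeeping: making sure the induction is run on the specific set $M = P^\shuffle \cap V$ (so that every operator in sight is well defined and lies in $2^{P^\shuffle}$), and pinpointing which of \eqref{eq:shuffle-fac-13}--\eqref{eq:shuffle-fac-15} supplies monotonicity of $\SWF_P^{(1)}$ and which supplies the identity $\SWF_P^{(n+1)} = \SWF_P^{(1)} \circ \SWF_P^{(n)}$ invoked in the inductive step.
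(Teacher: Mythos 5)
Your proposal is correct and follows essentially the same route as the paper: the paper likewise derives the corollary from Theorem~\ref{thm:shuffle-fac-1} via the induction "$\SWF_P^{(1)}(M)\subset M$ iff $\SWF_P^{(n)}(M)\subset M$ for all $n$" (using \eqref{eq:shuffle-fac-15}) together with the union representation \eqref{eq:shuffle-fac-14}. Your added bookkeeping (where monotonicity from \eqref{eq:shuffle-fac-13} enters the inductive step) just makes explicit what the paper leaves implicit.
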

By Lemma~\ref{lemma:shuffle-fac-1} Corollary~\ref{cor:shuffle-fac-2} is a reformulation of Theorem~\ref{thm:mod-shuffle-projection}.
\section{Shuffled Runs of Computations in S-Automata}
To represent $\SWF_{\pre(P)}^{(1)}$ and $\SWF_{\pre(P)}$ for $\emptyset \neq P \subset \Sigma^*$ 
in terms of computations in S-automata, 
now a kind of shuffle product will be defined on $2^{A_\bbP}$. 
Guideline for this definition is \eqref{eq:shuffle-fac-10}. Preparatively let $\bbP$ and 
$A_\bbP$ be defined as in section~\ref{sec:automata}, and let
\begin{align*}
& \{c\} \Sha^{(\bbP)} \{\varepsilon\} := \{\varepsilon\} \Sha^{(\bbP)} \{c\} := \{c\} 
   \mbox{ for } c \in A_\bbP, \mbox{ and } \nonumber \\
& \{c(f,a,f')\} \Sha^{(\bbP)} \{d(g,b,g')\} := \nonumber \\ 
& (\{c\} \Sha^{(\bbP)} \{d(g,b,g')\})(f+g',a,f'+g') \cup (\{c(f,a,f')\} \Sha^{(\bbP)} \{d\})(f'+g,b,f'+g') \nonumber \\ 
& \mbox{for } c(f,a,f'),d(g,b,g') \in A_\bbP \mbox{ with } (f,a,f'),(g,b,g') \in \shuffle_\bbP.
\end{align*}
Then \eqref{eq:grave-P-2} and induction show 
\begin{align}\label{eq:shuffle-fac-16}
& \{x\} \Sha^{(\bbP)} \{y\} = \{y\} \Sha^{(\bbP)} \{x\}, \ \{x\} \Sha^{(\bbP)} \{y\} \subset A_\bbP, \nonumber \\
& |c| = |x| + |y| \mbox{ and } Z_\bbP(c) = Z_\bbP(x) + Z_\bbP(y) \nonumber \\
& \mbox{for } x,y \in A_\bbP 
  \mbox{ and } c \in \{x\} \Sha^{(\bbP)} \{y\}.
\end{align}
\begin{definition}\label{def:shuffle-prod-P}\ \\
Using \eqref{eq:shuffle-fac-16}, let the commutativ operation 
$\Sha^{(\bbP)} : 2^{A_\bbP} \times 2^{A_\bbP} \to 2^{A_\bbP}$ in infix notation be inductively defined by 
\begin{align*}
& \{c\} \Sha^{(\bbP)} \{\varepsilon\} := \{\varepsilon\} \Sha^{(\bbP)} \{c\} := \{c\} 
   \mbox{ for } c \in A_\bbP, \nonumber \\
& \{c(f,a,f')\} \Sha^{(\bbP)} \{d(g,b,g')\} := \nonumber \\ 
& (\{c\} \Sha^{(\bbP)} \{d(g,b,g')\})(f+g',a,f'+g') \cup (\{c(f,a,f')\} \Sha^{(\bbP)} \{d\})(f'+g,b,f'+g') \nonumber \\ 
& \mbox{for } c(f,a,f'),d(g,b,g') \in A_\bbP \mbox{ with } (f,a,f'),(g,b,g') \in \shuffle_\bbP, \mbox{ and } \nonumber \\
& X \Sha^{(\bbP)} Y = \bigcup\limits_{x \in X, y \in Y} \{x\} \Sha^{(\bbP)} \{y\} \mbox{ for } X,Y \subset A_\bbP.
\end{align*}
$X \Sha^{(\bbP)} Y$ is called the \emph{shuffled runs} of $X$ and $Y$.
\end{definition}
The name \emph{shuffled runs} is justified by the relation to section~\ref{sec:automata-shuffle projection}, 
as will be demonstrated in the last theorem of this section. \\
\ \\
Definition~\ref{def:shuffle-prod-P} allows to transfer Definition~\ref{def:shuffle-factors} to $A_\bbP$:
\begin{definition}\label{def:shuffle-factors-bbP}\ \\
Let $\SRF_\bbP : 2^{A_\bbP} \to 2^{A_\bbP}$ and $\SRF_\bbP^{(1)} : 2^{A_\bbP} \to 2^{A_\bbP}$ 
be defined by
\begin{align*}
& \SRF_\bbP(M) := \nonumber \\
& \{u \in A_\bbP | \mbox{ there exist } w \in M \mbox{ and } v \in A_\bbP 
  \mbox{ such that } w \in \{u\} \Sha^{(\bbP)} \{v\} \} \nonumber \\
& \mbox{and } \nonumber \\
& \SRF_\bbP^{(1)}(M) := \nonumber \\
& \{u \in A_\bbP | \mbox{ there exist } w \in M \mbox{ and } e \in E_\bbP 
  \mbox{ such that } w \in \{u\} \Sha^{(\bbP)} \{e\} \} \nonumber \\
& \mbox{for } M \subset A_\bbP.
\end{align*}
\end{definition}
The following two lemmas are the key to express $\SWF_{\pre(P)}$ resp. $\SWF_{\pre(P)}^{(1)}$ by 
$\SRF_\bbP$ resp. $\SRF_\bbP^{(1)}$.
\begin{lemma} \label{lemma:shuffle-fac-4}\ \\
For $c,d \in A_\bbP$ and $x \in \{c\} \Sha^{(\bbP)} \{d\}$ holds 
$\alpha_\bbP(x) \in \{\alpha_\bbP(c)\} \Sha \{\alpha_\bbP(d)\}$.
\end{lemma}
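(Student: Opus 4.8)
The statement is purely structural: it says that the word-level shuffle product is the image, under the label map $\alpha_\bbP$, of the computation-level shuffled-run operation $\Sha^{(\bbP)}$. The natural line of attack is induction on $|c|+|d|$, mirroring exactly the inductive clauses of Definition~\ref{def:shuffle-prod-P} and the inductive characterisation \eqref{eq:shuffle-fac-10} of $\Sha$ on words. So first I would fix the base case: if $c=\varepsilon$ then $\{c\}\Sha^{(\bbP)}\{d\}=\{d\}$, and since $\alpha_\bbP(\varepsilon)=\varepsilon$ we get $\alpha_\bbP(x)=\alpha_\bbP(d)\in\{\varepsilon\}\Sha\{\alpha_\bbP(d)\}=\{\alpha_\bbP(c)\}\Sha\{\alpha_\bbP(d)\}$; symmetrically for $d=\varepsilon$.

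For the inductive step I would write $c=c'(f,a,f')$ and $d=d'(g,b,g')$ with $(f,a,f'),(g,b,g')\in\shuffle_\bbP$, so that by Definition~\ref{def:shuffle-prod-P}
\[
\{c\}\Sha^{(\bbP)}\{d\} = (\{c'\}\Sha^{(\bbP)}\{d\})(f+g',a,f'+g') \ \cup\ (\{c\}\Sha^{(\bbP)}\{d'\})(f'+g,b,f'+g').
\]
An arbitrary $x\in\{c\}\Sha^{(\bbP)}\{d\}$ therefore has one of two forms. In the first case $x=x''(f+g',a,f'+g')$ with $x''\in\{c'\}\Sha^{(\bbP)}\{d\}$; by the definition of $\alpha_\bbP$ in \eqref{eq:def alpha} we have $\alpha_\bbP(x)=\alpha_\bbP(x'')a$, and by the induction hypothesis $\alpha_\bbP(x'')\in\{\alpha_\bbP(c')\}\Sha\{\alpha_\bbP(d)\}$. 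Now $\alpha_\bbP(c)=\alpha_\bbP(c')a$ and $\alpha_\bbP(d)=\alpha_\bbP(d')b$, so by the second clause of \eqref{eq:shuffle-fac-10}, $\alpha_\bbP(x)=\alpha_\bbP(x'')a\in(\{\alpha_\bbP(c')\}\Sha\{\alpha_\bbP(d')b\})a\subset\{\alpha_\bbP(c')a\}\Sha\{\alpha_\bbP(d')b\}=\{\alpha_\bbP(c)\}\Sha\{\alpha_\bbP(d)\}$. The second case $x=x''(f'+g,b,f'+g')$ with $x''\in\{c\}\Sha^{(\bbP)}\{d'\}$ is completely symmetric, using $\alpha_\bbP(x)=\alpha_\bbP(x'')b$ and the other summand of Lemma~\ref{lemma:shuffle-fac-3}.

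There are two small technical points I would want to handle carefully. First, the induction hypothesis has to be applied to the pairs $(c',d)$ and $(c,d')$, which have strictly smaller total length, so the induction is well-founded; I should also note that $c'$ and $d'$ lie in $A_\bbP$ because $A_\bbP$ is prefix closed. Second, I need $\alpha_\bbP$ applied to a computation ending in $(f,a,f')$ to genuinely append the middle letter $a$ — this is exactly \eqref{eq:def alpha}, so no real obstacle arises there. The only place requiring any thought is lining up which of the two summands of $\{c\}\Sha^{(\bbP)}\{d\}$ corresponds to which summand of $\{\alpha_\bbP(c)\}\Sha\{\alpha_\bbP(d)\}$ in Lemma~\ref{lemma:shuffle-fac-3}, i.e. tracking that the clause appending $a$ on the computation side matches the clause appending $a$ on the word side; this is a bookkeeping check rather than a difficulty. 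Thus the proof is a routine double-branch induction, and I expect no genuine obstacle — the main thing to get right is the consistent use of Lemma~\ref{lemma:shuffle-fac-3} and \eqref{eq:def alpha} in each of the two symmetric branches.
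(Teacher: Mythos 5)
Your proposal is correct and follows essentially the same route as the paper's proof: induction on the combined length, base case via the $\varepsilon$-clauses, and in the inductive step the same case split along the two summands of Definition~\ref{def:shuffle-prod-P}, with $\alpha_\bbP(x)=\alpha_\bbP(x'')a$ placed into $\{\alpha_\bbP(c)\}\Sha\{\alpha_\bbP(d)\}$ via Lemma~\ref{lemma:shuffle-fac-3}. The only cosmetic difference is that the paper invokes commutativity/symmetry to treat just one branch explicitly, which you also note.
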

\begin{proof} [by induction] \ \\
\emph{Induction base}  \\
Let $c=\varepsilon$ or $d=\varepsilon$. On account of commutativity of $\Sha^{(\bbP)}$ 
let $d=\varepsilon$. Then $x = c$ and $\alpha_\bbP(d) = \varepsilon$, which implies 
$\alpha_\bbP(x) \in \{\alpha_\bbP(c)\} \Sha \{\alpha_\bbP(d)\}$. \\
\emph{Induction step}  \\
$c \neq \varepsilon \neq d$ implies $c = c'(f,a,f')$ and $d = d'(g,b,g')$ 
with $c',d' \in A_\bbP$ and $(f,a,f'),(g,b,g') \in \shuffle_\bbP$. Therefore 
$x \in (\{c'\} \Sha^{(\bbP)} \{d'(g,b,g')\})(f+g',a,f'+g') \cup (\{c'(f,a,f')\} \Sha^{(\bbP)} \{d'\})(f'+g,b,f'+g')$. 
On account of symmetry it is sufficient to prove the induction step for 
$x \in (\{c'\} \Sha^{(\bbP)} \{d'(g,b,g')\})(f+g',a,f'+g')$, which implies $x = x'(f+g',a,f'+g')$ 
with $x' \in \{c'\} \Sha^{(\bbP)} \{d'(g,b,g')\}$. Now by the induction hypothesis 
$\alpha_\bbP(x') \in \{\alpha_\bbP(c')\} \Sha \{\alpha_\bbP(d')b\}$, and therefore by Lemma~\ref{lemma:shuffle-fac-3}
$\alpha_\bbP(x) \in (\{\alpha_\bbP(c')\} \Sha \{\alpha_\bbP(d')b\})a 
\subset \{\alpha_\bbP(c')a\} \Sha \{\alpha_\bbP(d')b\} = \{\alpha_\bbP(c)\} \Sha \{\alpha_\bbP(d)\}$, which completes 
the proof of Lemma~\ref{lemma:shuffle-fac-4}.
\end{proof}
\begin{lemma} \label{lemma:shuffle-fac-5}\ \\
For $u,v \in \pre(P^\shuffle) = (\pre(P))^\shuffle$, $w \in \{u\} \Sha \{v\}$, $c \in \alpha_\bbP^{-1}(u)$ 
and $d \in \alpha_\bbP^{-1}(v)$ there exists 
$x \in \{c\} \Sha^{(\bbP)} \{d\}$ with $\alpha_\bbP(x) = w$.
\end{lemma}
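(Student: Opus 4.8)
The statement to prove is Lemma~\ref{lemma:shuffle-fac-5}: given $u,v \in (\pre(P))^\shuffle$, $w \in \{u\} \Sha \{v\}$, $c \in \alpha_\bbP^{-1}(u)$ and $d \in \alpha_\bbP^{-1}(v)$, find $x \in \{c\} \Sha^{(\bbP)} \{d\}$ with $\alpha_\bbP(x) = w$. This is the converse companion of Lemma~\ref{lemma:shuffle-fac-4}, and the natural approach is an induction on $|w| = |u| + |v|$ that tracks, simultaneously, the last-letter decomposition of $w$ coming from the definition of $\Sha$ and the matching last-step decomposition of $c$ or $d$ as paths in $\bbP_\shuffle$. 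First I would handle the base case $w = \varepsilon$, which forces $u = v = \varepsilon$, hence (by injectivity of $\alpha_\bbP$ on paths of length $0$, i.e.\ \eqref{eq:def alpha}) $c = d = \varepsilon$, and $x := \varepsilon \in \{c\} \Sha^{(\bbP)} \{d\}$ works.

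\textbf{Induction step.} For $w \neq \varepsilon$, by \eqref{eq:shuffle-fac-0} at least one of $u,v$ is nonempty; by commutativity of both $\Sha$ and $\Sha^{(\bbP)}$ assume $v \neq \varepsilon$ and, using Lemma~\ref{lemma:shuffle-fac-3}, split according to whether $w$'s last letter is supplied by $u$ or by $v$. Take the case $w = w'a$ with $u = u'a$, $a \in \Sigma$, and $w' \in \{u'\} \Sha \{v\}$ (the other case is symmetric). Since $c \in \alpha_\bbP^{-1}(u'a)$ and $\alpha_\bbP(c) = u'a$ has positive length, \eqref{eq:def alpha} and \eqref{eq:def A,Z} give $c = c'(f,a,f')$ with $c' \in A_\bbP$, $(f,a,f') \in \shuffle_\bbP$, $Z_\bbP(c') = f$, and $\alpha_\bbP(c') = u'$; also $u' \in \pre(u) \subset (\pre(P))^\shuffle$ because $(\pre(P))^\shuffle$ is prefix closed (it equals $\pre(P^\shuffle)$ by \eqref{eq:rem-itshuff}). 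Now $u', v$, $w' \in \{u'\} \Sha \{v\}$, $c' \in \alpha_\bbP^{-1}(u')$ and $d \in \alpha_\bbP^{-1}(v)$ satisfy the induction hypothesis with $|w'| = |w| - 1$, so there is $x' \in \{c'\} \Sha^{(\bbP)} \{d\}$ with $\alpha_\bbP(x') = w'$. Writing $Z_\bbP(d) = g$ and noting $Z_\bbP(c') = f$, the definition of $\Sha^{(\bbP)}$ (the first summand in the recursion for $\{c'(f,a,f')\} \Sha^{(\bbP)} \{d\}$, read through $\bigcup_{x \in X, y \in Y}$) yields $x := x'(f+g,a,f'+g) \in \{c'(f,a,f')\} \Sha^{(\bbP)} \{d\} = \{c\} \Sha^{(\bbP)} \{d\}$; here $(f+g,a,f'+g) \in \shuffle_\bbP$ by \eqref{eq:def-cp^3}/\eqref{eq:grave-P-2}. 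Finally $\alpha_\bbP(x) = \alpha_\bbP(x')a = w'a = w$ by \eqref{eq:def alpha}. The symmetric case with $w = w'b$, $v = v'b$ is identical after swapping the roles and picking the second summand in the recursion.

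\textbf{Main obstacle.} The routine part is bookkeeping; the genuinely delicate point is making sure the last step $(f,a,f')$ that we peel off $c$ is \emph{exactly} the transition produced in the matching term of the $\Sha^{(\bbP)}$-recursion, i.e.\ that the state bookkeeping $f = Z_\bbP(c')$, $g = Z_\bbP(d)$ forces the new transition to be $(f+g,a,f'+g)$ and that this lies in $\shuffle_\bbP$. This is where \eqref{eq:grave-P-2} (shifting a $\shuffle_\bbP$-transition by any $h \in \dsN_0^Q$) together with the characterization \eqref{eq:char-eq-A_P} of equality in $A_\bbP$ does the work: since $\alpha_\bbP(x) = w$ and the $Z_\bbP$-values along $\pre(x)$ are determined additively from those of $\pre(c')$ and $\pre(d)$ (as recorded in \eqref{eq:shuffle-fac-16}), $x$ is a bona fide element of $A_\bbP$. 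I do not expect surprises beyond carefully aligning indices; there are no new constructions needed, only the inductive combination of Lemma~\ref{lemma:shuffle-fac-3} on the $\Sha$ side with the one-step definition of $\Sha^{(\bbP)}$ on the automaton side.
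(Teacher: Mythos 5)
Your proof is correct and follows essentially the same route as the paper's: an induction that peels off the last letter of $w$ via Lemma~\ref{lemma:shuffle-fac-3} and matches it with the corresponding summand of the $\Sha^{(\bbP)}$ recursion, using $Z_\bbP(c')=f$, $Z_\bbP(d)=g$ and \eqref{eq:grave-P-2} to certify the appended transition. The only difference is organizational: the paper takes ``$u=\varepsilon$ or $v=\varepsilon$'' as the induction base (where $x:=c$ resp.\ $x:=d$ works directly), whereas your base case is only $w=\varepsilon$, so in your induction step the sub-case where exactly one of $u,v$ is empty is not literally covered by Lemma~\ref{lemma:shuffle-fac-3} or by the recursive clause of Definition~\ref{def:shuffle-prod-P} (both are stated for two nonempty words) and should be dispatched separately via the base clauses --- a trivial repair.
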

\begin{proof} [by induction] \ \\
\emph{Induction base}  \\
Let $u=\varepsilon$ or $v=\varepsilon$. On account of commutativity of $\Sha$ 
let $v=\varepsilon$. Then $w = u$ and $d = \varepsilon$, which implies 
$c \in \{c\} \Sha^{(\bbP)} \{d\}$ with $\alpha_\bbP(c) = w$. \\
\emph{Induction step}  \\
$u \neq \varepsilon \neq v$ implies $u = u'a$ and $v = v'b$ 
with $u',v' \in \pre(P^\shuffle)$ and $a,b \in \Sigma$. Therefore 
$w \in (\{u'\} \Sha \{v'b\})a \cup (\{u'a\} \Sha \{v'\})b$, 
$c = c'(f,a,f')$ and $d = d'(g,b,g')$ 
with $c' \in \alpha_\bbP^{-1}(u')$, $d' \in \alpha_\bbP^{-1}(v')$, 
$(f,a,f'),(g,b,g') \in \shuffle_\bbP$, $Z_\bbP(c') = f$ and $Z_\bbP(d') = g$. 
On account of symmetry it is sufficient to prove the induction step for 
$w \in (\{u'\} \Sha \{v'b\})a$, which implies $w = w'a$ 
with $w' \in \{u'\} \Sha \{v'b\}$. 
Now by the induction hypothesis there exists
$x' \in \{c'\} \Sha^{(\bbP)} \{d'(g,b,g')\}$ with $\alpha_\bbP(x') = w'$. 
Then $x := x'(f+g',a,f'+g') \in \{c\} \Sha^{(\bbP)} \{d\}$ with $\alpha_\bbP(x) = w$, 
which completes the proof of Lemma~\ref{lemma:shuffle-fac-5}.
\end{proof}
\begin{theorem}\label{thm:shuffle-fac-2}\ \\ 
$\SWF_{\pre(P)}(M) = \alpha_\bbP(\SRF_\bbP(\alpha_\bbP^{-1}(M)))$ for 
$M \subset \pre(P^\shuffle) = (\pre(P))^\shuffle$.
\end{theorem}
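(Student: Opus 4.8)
\textbf{Proof plan for Theorem~\ref{thm:shuffle-fac-2}.}

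The plan is to prove the set equality by two inclusions, and in each direction to peel off the outer $\alpha_\bbP$ by reducing to a statement about single elements $w\in M$, using the by now familiar fact that both $\SWF_{\pre(P)}$ and $\SRF_\bbP$ commute with unions over singletons (by \eqref{eq:shuffle-fac-4} and Definition~\ref{def:shuffle-factors-bbP} together with the displayed decomposition of $\Sha^{(\bbP)}$). So it suffices to show, for each $w\in\pre(P^\shuffle)$, that $\SWF_{\pre(P)}(\{w\}) = \alpha_\bbP(\SRF_\bbP(\alpha_\bbP^{-1}(\{w\})))$. Note that $\alpha_\bbP^{-1}(\{w\})$ is nonempty exactly when $w\in(\pre(P))^\shuffle = \pre(P^\shuffle)$, by Corollary~\ref{cor:cp-surj}, so the hypothesis on $M$ is what makes both sides meaningfully populated.

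For the inclusion $\alpha_\bbP(\SRF_\bbP(\alpha_\bbP^{-1}(M))) \subset \SWF_{\pre(P)}(M)$, I would take $u\in\SRF_\bbP(\alpha_\bbP^{-1}(M))$, so there are $w\in\alpha_\bbP^{-1}(M)$ (hence $\alpha_\bbP(w)\in M$) and $v\in A_\bbP$ with $w\in\{u\}\Sha^{(\bbP)}\{v\}$. By Lemma~\ref{lemma:shuffle-fac-4}, $\alpha_\bbP(w)\in\{\alpha_\bbP(u)\}\Sha\{\alpha_\bbP(v)\}$. Since $u,v\in A_\bbP$, Corollary~\ref{cor:cp-surj} gives $\alpha_\bbP(u),\alpha_\bbP(v)\in(\pre(P))^\shuffle=P^\shuffle$ (prefix-closed form), so $\alpha_\bbP(u)$ is a shuffle factor of $\alpha_\bbP(w)\in M$ with witness $\alpha_\bbP(v)\in(\pre(P))^\shuffle$, i.e.\ $\alpha_\bbP(u)\in\SWF_{\pre(P)}(M)$.

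For the reverse inclusion $\SWF_{\pre(P)}(M)\subset\alpha_\bbP(\SRF_\bbP(\alpha_\bbP^{-1}(M)))$, take $u\in\SWF_{\pre(P)}(M)$: there are $w\in M$ and $v\in(\pre(P))^\shuffle$ with $w\in\{u\}\Sha\{v\}$. Pick any $c\in\alpha_\bbP^{-1}(u)$ and $d\in\alpha_\bbP^{-1}(v)$ — these exist because $u,v\in(\pre(P))^\shuffle=\alpha_\bbP(A_\bbP)$ by Corollary~\ref{cor:cp-surj}. By Lemma~\ref{lemma:shuffle-fac-5} there is $x\in\{c\}\Sha^{(\bbP)}\{d\}$ with $\alpha_\bbP(x)=w$. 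Then $x\in\alpha_\bbP^{-1}(M)$, $d\in A_\bbP$, and $x\in\{c\}\Sha^{(\bbP)}\{d\}$ exhibit $c$ as an element of $\SRF_\bbP(\alpha_\bbP^{-1}(M))$ with $\alpha_\bbP(c)=u$, so $u\in\alpha_\bbP(\SRF_\bbP(\alpha_\bbP^{-1}(M)))$.

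The two inclusions together finish the proof; it is essentially just careful bookkeeping once Lemmas~\ref{lemma:shuffle-fac-4} and~\ref{lemma:shuffle-fac-5} are in hand. The only point that deserves a moment's care — and the closest thing to an obstacle — is making sure the witness languages match on the nose: $\SWF_{\pre(P)}$ is defined with witnesses in $\pre(P)^\shuffle=\pre(P^\shuffle)$ while $\SRF_\bbP$ uses witnesses $v\in A_\bbP$, and one must invoke Corollary~\ref{cor:cp-surj} (in its $(\pre(P))^\shuffle=\alpha_\bbP(A_\bbP)$ form) in both directions to translate between ``a word in $\pre(P^\shuffle)$'' and ``a computation in $A_\bbP$''. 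Everything else is a direct application of the cited lemmas.
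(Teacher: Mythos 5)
Your proof is correct and takes essentially the same route as the paper's: both inclusions are exactly the paper's two paragraphs, translating between $\Sha$ and $\Sha^{(\bbP)}$ via Lemma~\ref{lemma:shuffle-fac-4}, Lemma~\ref{lemma:shuffle-fac-5}, and the surjectivity statement $(\pre(P))^\shuffle = \alpha_\bbP(A_\bbP)$ from Corollary~\ref{cor:cp-surj}. (One cosmetic slip: in the first inclusion you write $(\pre(P))^\shuffle = P^\shuffle$; the identity you need and in fact use is $(\pre(P))^\shuffle = \pre(P^\shuffle) = \alpha_\bbP(A_\bbP)$.)
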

\begin{proof} \ \\
For $u \in \SWF_{\pre(P)}(M) \subset \pre(P^\shuffle)$ there exist $w \in M$ and $v \in \pre(P^\shuffle)$ 
such that $w \in \{u\} \Sha \{v\}$. 
Now, by Corollary~\ref{cor:cp-surj} and Lemma~\ref{lemma:shuffle-fac-5} there exist 
$c \in \alpha_\bbP^{-1}(u) \subset A_\bbP$, $d \in \alpha_\bbP^{-1}(v) \subset A_\bbP$, 
and $x \in \alpha_\bbP^{-1}(w) \subset \alpha_\bbP^{-1}(M)$ with $x \in \{c\} \Sha^{(\bbP)} \{d\}$. 
This implies $c \in \SRF_\bbP(\alpha_\bbP^{-1}(M))$, which proves 
$u \in \alpha_\bbP(\SRF_\bbP(\alpha_\bbP^{-1}(M)))$. \\
For $c \in \SRF_\bbP(\alpha_\bbP^{-1}(M)) \subset A_\bbP$ there exist $x \in \alpha_\bbP^{-1}(M)$ 
and $d \in A_\bbP$ such that $x \in \{c\} \Sha^{(\bbP)} \{d\}$. 
Now, by Corollary~\ref{cor:cp-surj} and Lemma~\ref{lemma:shuffle-fac-4} 
$\alpha_\bbP(c), \alpha_\bbP(d) \in (\pre(P))^\shuffle$, and 
$\alpha_\bbP(x) \in \{\alpha_\bbP(c)\} \Sha \{\alpha_\bbP(d)\}$, which shows 
$\alpha_\bbP(c) \in \SWF_{\pre(P)}(M)$. 
This completes the proof of Theorem~\ref{thm:shuffle-fac-2}. 
\end{proof}
The proof of Theorem~\ref{thm:shuffle-fac-2} together with $P^\shuffle = \alpha_\bbP(Z_\bbP^{-1}(0))$ 
(Corollary~\ref{cor:cp-surj}) and \eqref{eq:shuffle-fac-16} shows
\begin{corollary}\label{cor:shuffle-fac-3}\ \\
$\SWF_P(M) = \alpha_\bbP(\SRF_\bbP(\alpha_\bbP^{-1}(M) \cap Z_\bbP^{-1}(0)))$ for 
$M \subset P^\shuffle$.
\end{corollary}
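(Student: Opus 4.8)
The statement to prove is Corollary~\ref{cor:shuffle-fac-3}: $\SWF_P(M) = \alpha_\bbP(\SRF_\bbP(\alpha_\bbP^{-1}(M) \cap Z_\bbP^{-1}(0)))$ for $M \subset P^\shuffle$.

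The plan is to mimic the proof of Theorem~\ref{thm:shuffle-fac-2}, adjusting for the fact that $P^\shuffle = \alpha_\bbP(Z_\bbP^{-1}(0))$ (Corollary~\ref{cor:cp-surj}) rather than $(\pre(P))^\shuffle = \alpha_\bbP(A_\bbP)$. The key structural facts I would lean on are: (i) the definitions of $\SWF_P$ and $\SRF_\bbP$ in terms of the shuffle operations $\Sha$ and $\Sha^{(\bbP)}$; (ii) Lemma~\ref{lemma:shuffle-fac-4} and Lemma~\ref{lemma:shuffle-fac-5}, which transport shuffle relations between $\alpha_\bbP$-preimages and $\alpha_\bbP$-images; and (iii) the invariance property from \eqref{eq:shuffle-fac-16}, namely $Z_\bbP(c) = Z_\bbP(x) + Z_\bbP(y)$ for $c \in \{x\} \Sha^{(\bbP)} \{y\}$, together with the fact that $Z_\bbP$-values are functions into $\dsN_0^Q$, so that $Z_\bbP(x) = 0$ forces $Z_\bbP(c) = Z_\bbP(y) = 0$.

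For the forward inclusion, I would take $u \in \SWF_P(M) \subset P^\shuffle$; by Definition~\ref{def:shuffle-factors} there exist $w \in M$ and $v \in P^\shuffle$ with $w \in \{u\} \Sha \{v\}$. Since $M \subset P^\shuffle = \alpha_\bbP(Z_\bbP^{-1}(0))$ I can pick $x \in Z_\bbP^{-1}(0)$ with $\alpha_\bbP(x) = w$, and likewise $c,d \in A_\bbP$ with $\alpha_\bbP(c) = u$, $\alpha_\bbP(d) = v$ and, crucially, $Z_\bbP(c) = Z_\bbP(d) = 0$ (again using Corollary~\ref{cor:cp-surj}). By Lemma~\ref{lemma:shuffle-fac-5} applied to $w \in \{u\}\Sha\{v\}$, I can in fact choose the witnessing $x$ inside $\{c\} \Sha^{(\bbP)} \{d\}$ with $\alpha_\bbP(x) = w$; since $Z_\bbP$ is additive over $\Sha^{(\bbP)}$ and $Z_\bbP(c)=Z_\bbP(d)=0$, this $x$ automatically satisfies $Z_\bbP(x)=0$, so $x \in \alpha_\bbP^{-1}(M) \cap Z_\bbP^{-1}(0)$. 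Hence $c \in \SRF_\bbP(\alpha_\bbP^{-1}(M) \cap Z_\bbP^{-1}(0))$ and $u = \alpha_\bbP(c)$ lies in the right-hand side. Conversely, starting from $c \in \SRF_\bbP(\alpha_\bbP^{-1}(M) \cap Z_\bbP^{-1}(0))$, there are $x \in \alpha_\bbP^{-1}(M) \cap Z_\bbP^{-1}(0)$ and $d \in A_\bbP$ with $x \in \{c\} \Sha^{(\bbP)} \{d\}$; additivity of $Z_\bbP$ and $Z_\bbP(x) = 0$ give $Z_\bbP(c) = Z_\bbP(d) = 0$, so by Corollary~\ref{cor:cp-surj} both $\alpha_\bbP(c)$ and $\alpha_\bbP(d)$ lie in $P^\shuffle$; Lemma~\ref{lemma:shuffle-fac-4} then yields $\alpha_\bbP(x) \in \{\alpha_\bbP(c)\} \Sha \{\alpha_\bbP(d)\}$ with $\alpha_\bbP(x) \in M$, so $\alpha_\bbP(c) \in \SWF_P(M)$.

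The main obstacle — really the only subtle point — is making sure the $Z_\bbP$-bookkeeping goes through cleanly: one must observe that in $\dsN_0^Q$ a sum of two elements equals $0$ only when both summands are $0$, and that the witnesses produced by Lemma~\ref{lemma:shuffle-fac-5} can be taken with the correct $Z_\bbP$-values (this is exactly what lets the ``$\cap Z_\bbP^{-1}(0)$'' restriction pass through the shuffle). Everything else is a direct transcription of the Theorem~\ref{thm:shuffle-fac-2} argument with $A_\bbP$ replaced by $Z_\bbP^{-1}(0)$ and $(\pre(P))^\shuffle$ replaced by $P^\shuffle$, so I would keep the write-up short and simply flag these two places where \eqref{eq:shuffle-fac-16} and Corollary~\ref{cor:cp-surj} are invoked.
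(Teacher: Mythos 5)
Your proposal is correct and follows essentially the same route as the paper, which derives the corollary from the proof of Theorem~\ref{thm:shuffle-fac-2} combined with $P^\shuffle = \alpha_\bbP(Z_\bbP^{-1}(0))$ (Corollary~\ref{cor:cp-surj}) and the additivity $Z_\bbP(c) = Z_\bbP(x) + Z_\bbP(y)$ from \eqref{eq:shuffle-fac-16}. The two points you flag — choosing the witnesses in $Z_\bbP^{-1}(0)$ via Corollary~\ref{cor:cp-surj} and using that a sum in $\dsN_0^Q$ vanishes only if both summands do — are exactly the adjustments the paper relies on.
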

Together with $\alpha_\bbP(E_\bbP) = \pre(P)$ and $\alpha_\bbP(E_\bbP \cap Z_\bbP^{-1}(0)) = P$ \eqref{eq:char-E_P}, 
the proofs of Theorem~\ref{thm:shuffle-fac-2} and Corollary~\ref{cor:shuffle-fac-3} shows
\begin{corollary}\label{cor:shuffle-fac-4}\ \\
$\SWF_P^{(1)}(M) = \alpha_\bbP(\SRF_\bbP^{(1)}(\alpha_\bbP^{-1}(M) \cap Z_\bbP^{-1}(0)))$ for 
$M \subset P^\shuffle$, and \\ 
$\SWF_{\pre(P)}^{(1)}(M) = \alpha_\bbP(\SRF_\bbP^{(1)}(\alpha_\bbP^{-1}(M)))$ for 
$M \subset \pre(P^\shuffle) = (\pre(P))^\shuffle$.
\end{corollary}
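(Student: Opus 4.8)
The plan is to rerun, essentially line for line, the arguments already given for Theorem~\ref{thm:shuffle-fac-2} and Corollary~\ref{cor:shuffle-fac-3}, replacing the ``arbitrary second factor'' taken from $A_\bbP$ (resp.\ its restriction to $Z_\bbP^{-1}(0)$) by an element of $E_\bbP$ (resp.\ of $E_\bbP \cap Z_\bbP^{-1}(0)$). Two facts make this replacement legitimate: the surjectivity statements $\alpha_\bbP(E_\bbP) = \pre(P)$ and $\alpha_\bbP(E_\bbP \cap Z_\bbP^{-1}(0)) = P$ from \eqref{eq:char-E_P}, and the additivity $Z_\bbP(c) = Z_\bbP(x) + Z_\bbP(y)$ for $c \in \{x\}\Sha^{(\bbP)}\{y\}$ from \eqref{eq:shuffle-fac-16}. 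Note also that $(\pre(P))^{(\shuffle,1)} = \pre(P)$ (since $P \neq \emptyset$, so $\varepsilon \in \pre(P)$) while $P^{(\shuffle,1)} = P \cup \{\varepsilon\}$, and $\varepsilon \in E_\bbP \cap Z_\bbP^{-1}(0)$ with $\alpha_\bbP(\varepsilon) = \varepsilon$; hence $\alpha_\bbP$ carries $E_\bbP$ onto $(\pre(P))^{(\shuffle,1)}$ and carries $E_\bbP \cap Z_\bbP^{-1}(0)$ onto a set whose union with $\{\varepsilon\}$ is $P^{(\shuffle,1)}$.

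For the second identity, $\SWF_{\pre(P)}^{(1)}(M) = \alpha_\bbP(\SRF_\bbP^{(1)}(\alpha_\bbP^{-1}(M)))$ with $M \subset (\pre(P))^\shuffle$, I would prove the two inclusions separately. For ``$\subseteq$'': given $u \in \SWF_{\pre(P)}^{(1)}(M)$, pick $w \in M$ and $v \in (\pre(P))^{(\shuffle,1)} = \pre(P)$ with $w \in \{u\} \Sha \{v\}$; choose $c \in \alpha_\bbP^{-1}(u)$ (possible by Corollary~\ref{cor:cp-surj}) and, using $\alpha_\bbP(E_\bbP) = \pre(P)$, choose $d \in E_\bbP$ with $\alpha_\bbP(d) = v$; Lemma~\ref{lemma:shuffle-fac-5} then yields $x \in \{c\}\Sha^{(\bbP)}\{d\}$ with $\alpha_\bbP(x) = w$, so $x \in \alpha_\bbP^{-1}(M)$ and $d \in E_\bbP$ give $c \in \SRF_\bbP^{(1)}(\alpha_\bbP^{-1}(M))$, i.e.\ $u = \alpha_\bbP(c) \in \alpha_\bbP(\SRF_\bbP^{(1)}(\alpha_\bbP^{-1}(M)))$. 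For ``$\supseteq$'': given $c \in \SRF_\bbP^{(1)}(\alpha_\bbP^{-1}(M))$, pick $x \in \alpha_\bbP^{-1}(M)$ and $e \in E_\bbP$ with $x \in \{c\}\Sha^{(\bbP)}\{e\}$; by Lemma~\ref{lemma:shuffle-fac-4}, $\alpha_\bbP(x) \in \{\alpha_\bbP(c)\}\Sha\{\alpha_\bbP(e)\}$, while $\alpha_\bbP(c) \in (\pre(P))^\shuffle$ by Corollary~\ref{cor:cp-surj} and $\alpha_\bbP(e) \in \alpha_\bbP(E_\bbP) = \pre(P) = (\pre(P))^{(\shuffle,1)}$; since $\alpha_\bbP(x) \in M$, this is exactly $\alpha_\bbP(c) \in \SWF_{\pre(P)}^{(1)}(M)$.

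For the first identity, $\SWF_P^{(1)}(M) = \alpha_\bbP(\SRF_\bbP^{(1)}(\alpha_\bbP^{-1}(M) \cap Z_\bbP^{-1}(0)))$ with $M \subset P^\shuffle$, I would run the same two inclusions but also keep track of the reached state. In the ``$\subseteq$'' direction, $w \in M \subset P^\shuffle = \alpha_\bbP(Z_\bbP^{-1}(0))$ lets the representative be chosen with $Z_\bbP(x) = 0$; by additivity $0 = Z_\bbP(x) = Z_\bbP(c) + Z_\bbP(d)$ forces $Z_\bbP(c) = 0$ and $Z_\bbP(d) = 0$, and $v \in P^{(\shuffle,1)} = P \cup \{\varepsilon\}$ can be realized as $\alpha_\bbP(d)$ with $d \in E_\bbP \cap Z_\bbP^{-1}(0)$ by $\alpha_\bbP(E_\bbP \cap Z_\bbP^{-1}(0)) = P$ (taking $d = \varepsilon$ when $v = \varepsilon$); then $x \in \alpha_\bbP^{-1}(M) \cap Z_\bbP^{-1}(0)$ and $e := d \in E_\bbP$ give $c \in \SRF_\bbP^{(1)}(\alpha_\bbP^{-1}(M) \cap Z_\bbP^{-1}(0))$. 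Conversely, if $c \in \SRF_\bbP^{(1)}(\alpha_\bbP^{-1}(M) \cap Z_\bbP^{-1}(0))$ via $x \in \alpha_\bbP^{-1}(M) \cap Z_\bbP^{-1}(0)$ and $e \in E_\bbP$ with $x \in \{c\}\Sha^{(\bbP)}\{e\}$, then $Z_\bbP(x) = 0$ and additivity give $Z_\bbP(c) = Z_\bbP(e) = 0$, hence $\alpha_\bbP(c) \in \alpha_\bbP(Z_\bbP^{-1}(0)) = P^\shuffle$ and $\alpha_\bbP(e) \in \alpha_\bbP(E_\bbP \cap Z_\bbP^{-1}(0)) = P \subset P^{(\shuffle,1)}$, and Lemma~\ref{lemma:shuffle-fac-4} puts $\alpha_\bbP(x) \in \{\alpha_\bbP(c)\}\Sha\{\alpha_\bbP(e)\}$, so $\alpha_\bbP(c) \in \SWF_P^{(1)}(M)$.

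There is no genuine obstacle here; the argument is a routine specialization of the two results it invokes. The only point that needs a moment's care is the bookkeeping at ``level $1$'': one must check that $E_\bbP$ (resp.\ $E_\bbP \cap Z_\bbP^{-1}(0)$) is exactly the automaton-side counterpart, under $\alpha_\bbP$, of $(\pre(P))^{(\shuffle,1)}$ (resp.\ of $P^{(\shuffle,1)}$, up to the empty word), and that the constraint $Z_\bbP^{-1}(0)$ propagates correctly across $\Sha^{(\bbP)}$ by the additivity of $Z_\bbP$ in \eqref{eq:shuffle-fac-16}. Once these are in place, Lemmas~\ref{lemma:shuffle-fac-4} and~\ref{lemma:shuffle-fac-5} supply both inclusions verbatim, exactly as in the proofs of Theorem~\ref{thm:shuffle-fac-2} and Corollary~\ref{cor:shuffle-fac-3}.
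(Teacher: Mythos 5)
Your proof is correct and is exactly the argument the paper intends: it cites the proofs of Theorem~\ref{thm:shuffle-fac-2} and Corollary~\ref{cor:shuffle-fac-3} together with $\alpha_\bbP(E_\bbP)=\pre(P)$ and $\alpha_\bbP(E_\bbP\cap Z_\bbP^{-1}(0))=P$ from \eqref{eq:char-E_P}, which is precisely the specialization of the second shuffle factor to $E_\bbP$ (resp.\ $E_\bbP\cap Z_\bbP^{-1}(0)$) that you carry out, including the use of the additivity of $Z_\bbP$ from \eqref{eq:shuffle-fac-16} and the bookkeeping of $\varepsilon$ at level $1$.
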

Because of $\alpha_\bbP^{-1}((\pre(P))^\shuffle \cap V) = \alpha_\bbP^{-1}(V)$, 
Corollary~\ref{cor:shuffle-fac-2} and Corollary~\ref{cor:shuffle-fac-4} imply
\begin{corollary}\label{cor:shuffle-fac-5}\ \\
$\SP(\pre(P),V)$  iff $\SRF_\bbP^{(1)}(\alpha_\bbP^{-1}(V)) \subset \alpha_\bbP^{-1}(V)$.
\end{corollary}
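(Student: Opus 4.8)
The plan is to derive Corollary~\ref{cor:shuffle-fac-5} by combining the characterization of $\SP(\pre(P),V)$ via shuffle factors (Corollary~\ref{cor:shuffle-fac-2}) with the translation of $\SWF_{\pre(P)}^{(1)}$ into $\SRF_\bbP^{(1)}$ provided by Corollary~\ref{cor:shuffle-fac-4}. First I would recall from Corollary~\ref{cor:shuffle-fac-2}, instantiated with $\pre(P)$ in place of $P$ and using $\pre(P)^\shuffle = \pre(P^\shuffle)$ (equation~\eqref{eq:rem-itshuff}), that $\SP(\pre(P)\cup\{\varepsilon\},V)$ holds iff $\SWF_{\pre(P)}^{(1)}(\pre(P^\shuffle)\cap V)\subset \pre(P^\shuffle)\cap V$. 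Since $\pre(P)$ is prefix closed, $\varepsilon\in\pre(P)$, so $\pre(P)\cup\{\varepsilon\}=\pre(P)$ and the left-hand condition is exactly $\SP(\pre(P),V)$.

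The second step is to rewrite the containment $\SWF_{\pre(P)}^{(1)}(\pre(P^\shuffle)\cap V)\subset \pre(P^\shuffle)\cap V$ in terms of $A_\bbP$. By the second statement of Corollary~\ref{cor:shuffle-fac-4}, applied to $M = \pre(P^\shuffle)\cap V\subset \pre(P^\shuffle)$, we have $\SWF_{\pre(P)}^{(1)}(M) = \alpha_\bbP(\SRF_\bbP^{(1)}(\alpha_\bbP^{-1}(M)))$. The key observation, noted just before the statement of Corollary~\ref{cor:shuffle-fac-5}, is that $\alpha_\bbP^{-1}(\pre(P^\shuffle)\cap V) = \alpha_\bbP^{-1}(V)$, because by Corollary~\ref{cor:cp-surj} we have $\alpha_\bbP(A_\bbP) = (\pre(P))^\shuffle = \pre(P^\shuffle)$, so any $c\in A_\bbP$ with $\alpha_\bbP(c)\in V$ automatically satisfies $\alpha_\bbP(c)\in \pre(P^\shuffle)\cap V$. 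Hence $\SWF_{\pre(P)}^{(1)}(M) = \alpha_\bbP(\SRF_\bbP^{(1)}(\alpha_\bbP^{-1}(V)))$, and $\pre(P^\shuffle)\cap V = \alpha_\bbP(\alpha_\bbP^{-1}(V))$ as well.

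The final step is to pass from the equality of $\alpha_\bbP$-images to the containment of the preimage sets. One direction is immediate: $\SRF_\bbP^{(1)}(\alpha_\bbP^{-1}(V))\subset \alpha_\bbP^{-1}(V)$ implies $\alpha_\bbP(\SRF_\bbP^{(1)}(\alpha_\bbP^{-1}(V)))\subset \alpha_\bbP(\alpha_\bbP^{-1}(V)) = \pre(P^\shuffle)\cap V$. For the converse I need that $\SRF_\bbP^{(1)}(\alpha_\bbP^{-1}(V))$ is already $\alpha_\bbP$-saturated, i.e. closed under the equivalence $\alpha_\bbP(c)=\alpha_\bbP(c')$; this follows because the definition of $\SRF_\bbP^{(1)}(M)$ via $w\in\{u\}\Sha^{(\bbP)}\{e\}$ depends — through Lemma~\ref{lemma:shuffle-fac-4} and Lemma~\ref{lemma:shuffle-fac-5} — only on the labels $\alpha_\bbP(u)$, $\alpha_\bbP(e)$, $\alpha_\bbP(w)$, together with the fact that $\alpha_\bbP^{-1}(V)$ and $E_\bbP$ are themselves $\alpha_\bbP$-saturated ($E_\bbP$ by \eqref{eq:char-E_P}, which gives $\alpha_\bbP(E_\bbP)=\pre(P)$ and $\alpha_\bbP^{-1}(\pre(P))\cap \text{(appropriate run condition)}$). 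Given saturation, $\alpha_\bbP(S)\subset\alpha_\bbP(T)$ with $T$ saturated forces $S\subset T$, so $\alpha_\bbP(\SRF_\bbP^{(1)}(\alpha_\bbP^{-1}(V)))\subset \pre(P^\shuffle)\cap V = \alpha_\bbP(\alpha_\bbP^{-1}(V))$ yields $\SRF_\bbP^{(1)}(\alpha_\bbP^{-1}(V))\subset \alpha_\bbP^{-1}(V)$. Chaining the three equivalences completes the proof.

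\textbf{Main obstacle.} The delicate point is the $\alpha_\bbP$-saturation argument in the last step: one must verify that membership in $\SRF_\bbP^{(1)}(\alpha_\bbP^{-1}(V))$ genuinely depends only on the $\alpha_\bbP$-label of a run, which requires tracing that the inductive clauses of $\Sha^{(\bbP)}$ reconstruct a witness run $x$ from any label-equivalent choice of $c$, $d$ — exactly the content of Lemma~\ref{lemma:shuffle-fac-5}, combined with Corollary~\ref{cor:cp-surj} to know $\alpha_\bbP$ is surjective onto $(\pre(P))^\shuffle$. Once that is in hand, everything else is bookkeeping over the already-proven equalities $\SWF_{\pre(P)}^{(1)}(M)=\alpha_\bbP(\SRF_\bbP^{(1)}(\alpha_\bbP^{-1}(M)))$ and $\alpha_\bbP^{-1}((\pre(P))^\shuffle\cap V)=\alpha_\bbP^{-1}(V)$.
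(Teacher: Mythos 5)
Your proof follows the paper's route exactly: Corollary~\ref{cor:shuffle-fac-2} (applied with $\pre(P)$ in place of $P$, using $\pre(P)\cup\{\varepsilon\}=\pre(P)$), Corollary~\ref{cor:shuffle-fac-4}, and the identity $\alpha_\bbP^{-1}((\pre(P))^\shuffle\cap V)=\alpha_\bbP^{-1}(V)$. The one remark is that your ``main obstacle'' is not one: to pass from $\alpha_\bbP(\SRF_\bbP^{(1)}(\alpha_\bbP^{-1}(V)))\subset(\pre(P))^\shuffle\cap V\subset V$ back to $\SRF_\bbP^{(1)}(\alpha_\bbP^{-1}(V))\subset\alpha_\bbP^{-1}(V)$ you need no saturation property of the source set, because the target is itself a preimage --- $\alpha_\bbP(s)\in V$ means $s\in\alpha_\bbP^{-1}(V)$ by definition. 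So the final step is immediate, and the extra appeal to Lemmas~\ref{lemma:shuffle-fac-4} and \ref{lemma:shuffle-fac-5} there is superfluous (they have already done their work inside Corollary~\ref{cor:shuffle-fac-4}).
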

Because of $Z_\bbP^{-1}(0) \subset \alpha_\bbP^{-1}(P^\shuffle)$, it holds
\begin{align}\label{eq:shuffle-fac-17}
\alpha_\bbP^{-1}(P^\shuffle \cap V) \cap Z_\bbP^{-1}(0) 
& = \alpha_\bbP^{-1}(P^\shuffle) \cap \alpha_\bbP^{-1}(V) \cap Z_\bbP^{-1}(0) \nonumber \\
& = \alpha_\bbP^{-1}(V) \cap Z_\bbP^{-1}(0).
\end{align}
\eqref{eq:shuffle-fac-16} implies
\begin{equation}\label{eq:shuffle-fac-18}
\SRF_\bbP^{(1)}(\alpha_\bbP^{-1}(V) \cap Z_\bbP^{-1}(0)) \subset Z_\bbP^{-1}(0).
\end{equation}
By \eqref{eq:shuffle-fac-17} and \eqref{eq:shuffle-fac-18} it holds 
\begin{align}\label{eq:shuffle-fac-19}
& \SRF_\bbP^{(1)}(\alpha_\bbP^{-1}(P^\shuffle \cap V) \cap Z_\bbP^{-1}(0)) 
\subset \alpha_\bbP^{-1}(P^\shuffle \cap V) \mbox{ iff } \nonumber \\
& \SRF_\bbP^{(1)}(\alpha_\bbP^{-1}(V) \cap Z_\bbP^{-1}(0)) 
\subset \alpha_\bbP^{-1}(V) \cap Z_\bbP^{-1}(0).
\end{align}
Now, because of \eqref{eq:shuffle-fac-19}, 
Corollary~\ref{cor:shuffle-fac-2} and Corollary~\ref{cor:shuffle-fac-4} imply
\begin{corollary}\label{cor:shuffle-fac-6}\ \\
$\SP(P \cup \{\varepsilon\},V)$  iff 
$\SRF_\bbP^{(1)}(\alpha_\bbP^{-1}(V) \cap Z_\bbP^{-1}(0)) \subset \alpha_\bbP^{-1}(V) \cap Z_\bbP^{-1}(0)$.
\end{corollary}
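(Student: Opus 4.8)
The plan is to derive Corollary~\ref{cor:shuffle-fac-6} by exactly the same bookkeeping that produced Corollary~\ref{cor:shuffle-fac-5} from Corollary~\ref{cor:shuffle-fac-4}, but now tracking the extra constraint $Z_\bbP = 0$ throughout. The two ingredients already assembled just before the statement are \eqref{eq:shuffle-fac-19}, which says that $\SRF_\bbP^{(1)}(\alpha_\bbP^{-1}(P^\shuffle \cap V) \cap Z_\bbP^{-1}(0)) \subset \alpha_\bbP^{-1}(P^\shuffle \cap V)$ is equivalent to $\SRF_\bbP^{(1)}(\alpha_\bbP^{-1}(V) \cap Z_\bbP^{-1}(0)) \subset \alpha_\bbP^{-1}(V) \cap Z_\bbP^{-1}(0)$, and Corollary~\ref{cor:shuffle-fac-4}, which says $\SWF_P^{(1)}(M) = \alpha_\bbP(\SRF_\bbP^{(1)}(\alpha_\bbP^{-1}(M) \cap Z_\bbP^{-1}(0)))$ for $M \subset P^\shuffle$. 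So the natural route is: start from Corollary~\ref{cor:shuffle-fac-2}, $\SP(P \cup \{\varepsilon\},V)$ iff $\SWF_P^{(1)}(P^\shuffle \cap V) \subset P^\shuffle \cap V$, rewrite the left-hand side via Corollary~\ref{cor:shuffle-fac-4} applied to $M := P^\shuffle \cap V$, and then translate the resulting inclusion of $\alpha_\bbP$-images back into an inclusion of preimages.

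Concretely, the first step is to observe that $P^\shuffle \cap V \subset P^\shuffle$, so Corollary~\ref{cor:shuffle-fac-4} gives $\SWF_P^{(1)}(P^\shuffle \cap V) = \alpha_\bbP(\SRF_\bbP^{(1)}(\alpha_\bbP^{-1}(P^\shuffle \cap V) \cap Z_\bbP^{-1}(0)))$. The second step is to feed this into Corollary~\ref{cor:shuffle-fac-2}: $\SP(P \cup \{\varepsilon\},V)$ holds iff $\alpha_\bbP(\SRF_\bbP^{(1)}(\alpha_\bbP^{-1}(P^\shuffle \cap V) \cap Z_\bbP^{-1}(0))) \subset P^\shuffle \cap V$. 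The third step is the one place where a small argument is needed: I must pass from "$\alpha_\bbP(X) \subset P^\shuffle \cap V$" to "$X \subset \alpha_\bbP^{-1}(P^\shuffle \cap V)$", which is immediate since $\alpha_\bbP(X) \subset Y$ always implies $X \subset \alpha_\bbP^{-1}(Y)$, and for the converse direction I need that the $X$ in question, namely $\SRF_\bbP^{(1)}(\alpha_\bbP^{-1}(P^\shuffle \cap V) \cap Z_\bbP^{-1}(0))$, lies inside the domain $A_\bbP$ on which $\alpha_\bbP$ is defined, which it does by Definition~\ref{def:shuffle-factors-bbP}. Hence $\SP(P \cup \{\varepsilon\},V)$ iff $\SRF_\bbP^{(1)}(\alpha_\bbP^{-1}(P^\shuffle \cap V) \cap Z_\bbP^{-1}(0)) \subset \alpha_\bbP^{-1}(P^\shuffle \cap V)$. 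The final step is to invoke \eqref{eq:shuffle-fac-19} to replace this by the cleaner condition $\SRF_\bbP^{(1)}(\alpha_\bbP^{-1}(V) \cap Z_\bbP^{-1}(0)) \subset \alpha_\bbP^{-1}(V) \cap Z_\bbP^{-1}(0)$, which is the claim.

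I do not expect any real obstacle here; the corollary is essentially a repackaging, and everything hard (Lemmas~\ref{lemma:shuffle-fac-4} and~\ref{lemma:shuffle-fac-5}, the translation Corollaries~\ref{cor:shuffle-fac-3} and~\ref{cor:shuffle-fac-4}, and the intersection identities \eqref{eq:shuffle-fac-17}, \eqref{eq:shuffle-fac-18} feeding \eqref{eq:shuffle-fac-19}) has already been done. The only point that deserves a word of care is the direction of the $\alpha_\bbP$/$\alpha_\bbP^{-1}$ adjunction in the third step: because $\alpha_\bbP$ need not be injective, "$\alpha_\bbP(X) \subset Y$" and "$X \subset \alpha_\bbP^{-1}(Y)$" are genuinely equivalent only when $X$ is a subset of the domain of $\alpha_\bbP$, so I would make explicit that $\SRF_\bbP^{(1)}(\cdot) \subset A_\bbP$. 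Given that, the proof is a two-line chain of equivalences citing Corollary~\ref{cor:shuffle-fac-2}, Corollary~\ref{cor:shuffle-fac-4}, and \eqref{eq:shuffle-fac-19}; I would write it as:

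\begin{proof}
By Corollary~\ref{cor:shuffle-fac-2}, $\SP(P \cup \{\varepsilon\},V)$ iff
$\SWF_P^{(1)}(P^\shuffle \cap V) \subset P^\shuffle \cap V$.
Since $P^\shuffle \cap V \subset P^\shuffle$, Corollary~\ref{cor:shuffle-fac-4} gives
$\SWF_P^{(1)}(P^\shuffle \cap V) = \alpha_\bbP(\SRF_\bbP^{(1)}(\alpha_\bbP^{-1}(P^\shuffle \cap V) \cap Z_\bbP^{-1}(0)))$.
As $\SRF_\bbP^{(1)}(\alpha_\bbP^{-1}(P^\shuffle \cap V) \cap Z_\bbP^{-1}(0)) \subset A_\bbP$
by Definition~\ref{def:shuffle-factors-bbP},
the inclusion $\alpha_\bbP(\SRF_\bbP^{(1)}(\alpha_\bbP^{-1}(P^\shuffle \cap V) \cap Z_\bbP^{-1}(0))) \subset P^\shuffle \cap V$
is equivalent to
$\SRF_\bbP^{(1)}(\alpha_\bbP^{-1}(P^\shuffle \cap V) \cap Z_\bbP^{-1}(0)) \subset \alpha_\bbP^{-1}(P^\shuffle \cap V)$.
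By \eqref{eq:shuffle-fac-19} this is equivalent to
$\SRF_\bbP^{(1)}(\alpha_\bbP^{-1}(V) \cap Z_\bbP^{-1}(0)) \subset \alpha_\bbP^{-1}(V) \cap Z_\bbP^{-1}(0)$,
which completes the proof of Corollary~\ref{cor:shuffle-fac-6}.
\end{proof}
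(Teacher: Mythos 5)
Your proposal is correct and follows exactly the route the paper takes: Corollary~\ref{cor:shuffle-fac-2} combined with Corollary~\ref{cor:shuffle-fac-4} applied to $M = P^\shuffle \cap V$, then \eqref{eq:shuffle-fac-19} to clean up the condition. You merely make explicit the image/preimage adjunction step that the paper leaves implicit, which is a harmless (indeed welcome) addition.
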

To show that Corollary~\ref{cor:shuffle-fac-5} and Corollary~\ref{cor:shuffle-fac-6} 
are equivalent to Corollary~\ref{cor:eq-rel-RP'} and Corollary~\ref{cor:eq-rel-R0P'}, we prove
\begin{theorem}\label{thm:shuffle-fac-3}\  
$\SRF_\bbP^{(1)} = \mcR'_\bbP$.
\end{theorem}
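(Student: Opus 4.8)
The plan is to prove the set equality $\SRF_\bbP^{(1)} = \mcR'_\bbP$ by showing that both functions $2^{A_\bbP} \to 2^{A_\bbP}$ agree on singletons, which by \eqref{eq:fkt-R'-1} and \eqref{eq:shuffle-fac-13} suffices for arbitrary subsets of $A_\bbP$. So it is enough to show, for each $c \in A_\bbP$, that
\[\SRF_\bbP^{(1)}(\{c\}) = \mcR'_\bbP(\{c\}).\]
Unwinding the definitions, $d \in \SRF_\bbP^{(1)}(\{c\})$ means there exists $e \in E_\bbP$ with $c \in \{d\} \Sha^{(\bbP)} \{e\}$, while $d \in \mcR'_\bbP(\{c\})$ means (via Definition~\ref{def:rel-RP} and Definition~\ref{def:def-R'}) that $(c,d) \in \mcR_\bbP$, i.e. there exists $e' \in E_{\check{\bbP}}$ and a shuffled representation $b \in \pi_{\shuffle_{\bbP}}^{-1}(A_{\bbP}) \cap \pi_{\shuffle_{\check{\bbP}}}^{-1}(E_{\check{\bbP}})$ of $c$ by $d$ and $e'$. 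The bridge between the two descriptions is the isomorphism $\check{\iota}_{\shuffle_{\check{\bbP}}} : \shuffle_{\check{\bbP}}^* \to \shuffle_{\bbP}^*$ together with \eqref{subeq:iota-shuff-2-b}, which identifies $E_{\check{\bbP}}$ with $E_\bbP$.

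The core of the argument is a bijective correspondence between shuffled representations $b$ of $c$ by $d$ and $e'$ on one side, and elements $x \in \{d\} \Sha^{(\bbP)} \{\check{\iota}_{\shuffle_{\check{\bbP}}}(e')\}$ with $x = c$ on the other. I would establish this by induction on $|c| = |b|$, matching the two-case recursive clause in Definition~\ref{def:shuffle-prod-P} against the two possibilities ``the last letter of $b$ lies in $\shuffle_\bbP$'' versus ``the last letter of $b$ lies in $\shuffle_{\check{\bbP}}$''. Concretely, given a shuffled representation $b = b''z$ with $z$ a single letter, Lemma~\ref{lemma:RT-2} (with $|x|=1$) says $b''$ is again a shuffled representation of a prefix $c''$ of $c$ by a prefix $d''$ of $d$ and a prefix $e''$ of $e'$; conditions \eqref{subeq:shuff-rep-a} and \eqref{subeq:shuff-rep-d} then pin down exactly which $\shuffle_\bbP$-transition gets appended to form $c$ from $c''$, and this is precisely the transition prescribed by the recursive clause of $\Sha^{(\bbP)}$ (using $Z_\bbP(c') = Z_\bbP(\pi_{\shuffle_{\bbP}}(b')) + Z_{\check{\bbP}}(\pi_{\shuffle_{\check{\bbP}}}(b'))$, which is \eqref{subeq:shuff-rep-d}, to verify that the state components $f+g'$, $f'+g'$, etc., match). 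Conversely, unwinding one step of $\Sha^{(\bbP)}$ reconstructs the last letter of $b$ and the membership $b \in \pi_{\shuffle_{\bbP}}^{-1}(A_{\bbP}) \cap \pi_{\shuffle_{\check{\bbP}}}^{-1}(E_{\check{\bbP}})$. The base case is immediate: $c = \varepsilon$ corresponds to $d = \varepsilon$, $e' = \varepsilon$, $b = \varepsilon$.

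The main obstacle I expect is the bookkeeping of state components in the inductive step: the definition of $\Sha^{(\bbP)}$ writes the appended transition as $(f+g',a,f'+g')$ where $f = Z_\bbP(c')$, $f' = $ its successor, and $g'$ comes from the \emph{other} computation, whereas the shuffled-representation conditions \eqref{subeq:shuff-rep-d} express the same data additively in terms of $Z_\bbP(\pi_{\shuffle_{\bbP}}(b'))$ and $Z_{\check{\bbP}}(\pi_{\shuffle_{\check{\bbP}}}(b'))$. Reconciling these two formulations — and checking that the transition genuinely lies in $\shuffle_\bbP$, which relies on \eqref{eq:def-cp^3}/\eqref{eq:grave-P-2} (closure of $\shuffle_\bbP$ under adding a constant $h \in \dsN_0^Q$ to both state components) — is where the real verification happens. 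Once the singleton identity is established, \eqref{eq:fkt-R'-1} and \eqref{eq:shuffle-fac-13} extend it to all of $2^{A_\bbP}$, completing the proof that $\SRF_\bbP^{(1)} = \mcR'_\bbP$, and hence (via Corollary~\ref{cor:shuffle-fac-5}, Corollary~\ref{cor:shuffle-fac-6}, Corollary~\ref{cor:eq-rel-RP'} and Corollary~\ref{cor:eq-rel-R0P'}) the announced equivalence of the two formulations.
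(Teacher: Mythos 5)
Your proposal is correct and follows essentially the same route as the paper: reduce to singletons, then establish by induction on length a correspondence between shuffled representations $b$ of $c$ by $d$ and $e' \in E_{\check{\bbP}}$ (viewed via the isomorphism $\check{\iota}_{\shuffle_{\check{\bbP}}}$) and elements of $\{d\}\Sha^{(\bbP)}\{e\}$, splitting the induction step on whether the last letter of $b$ comes from $\shuffle_\bbP$ or $\shuffle_{\check{\bbP}}$ and reconciling the state components via \eqref{subeq:shuff-rep-d} and the additivity property \eqref{eq:grave-P-2}. The paper packages the induction through the observation that a shuffled representation with given projections is exactly an element of the ordinary shuffle $\{u\}\Sha\{\check{e}\}$ over the alphabet $\shuffle_\bbP\dotcup\shuffle_{\check{\bbP}}$, but this is the same argument you describe.
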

\begin{proof} \ \\
Since $\SRF_\bbP^{(1)}(M) = \bigcup\limits_{x \in M} \SRF_\bbP^{(1)}(\{x\})$ and 
$\mcR'_\bbP(M) = \bigcup\limits_{x \in M} \mcR'_\bbP(\{x\})$ it is sufficient to prove 
$\SRF_\bbP^{(1)}(\{x\}) = \mcR'_\bbP(\{x\})$ for each $x \in A_\bbP$. 
For this purpose we show the following:
\begin{align}\label{eq:shuffle-fac-20}
& \mbox{For } x,u \in A_\bbP \mbox{ and } e \in E_\bbP \mbox{ holds }
  x \in \{u\} \Sha^{(\bbP)} \{e\} \mbox{ iff there exists a shuffled } \nonumber \\
& \mbox{representation } b \in 
  \pi_{\shuffle_{\bbP}}^{-1}(A_{\bbP}) \cap \pi_{\shuffle_{\check{\bbP}}}^{-1}(E_{\check{\bbP}}) 
  \mbox{ of } x \mbox{ by } u \mbox{ and } \check e := \check{\iota}_{\shuffle_{\check{\bbP}}}^{-1}(e).
\end{align}
Because of $\shuffle_{\check{\bbP}} \cap \shuffle_{\bbP} = \emptyset$ it holds:
\begin{align}\label{eq:shuffle-fac-21}
& b \in 
  \pi_{\shuffle_{\bbP}}^{-1}(A_{\bbP}) \cap \pi_{\shuffle_{\check{\bbP}}}^{-1}(E_{\check{\bbP}}) 
  \mbox{ with } \pi_{\shuffle_{\check{\bbP}}}(b) = \check e \mbox{ and } \pi_{\shuffle_{\bbP}}(b) = u \nonumber \\
& \mbox{iff } b  \in \{u\} \Sha \{\check e\}.
\end{align}
Now \eqref{eq:shuffle-fac-21} allows to prove \eqref{eq:shuffle-fac-20} inductively using the 
inductive definitions of $\Sha^{(\bbP)}$ and $\Sha$.\\
\emph{Induction base}  \\
Let $u=\varepsilon$ or $e=\varepsilon$. We only consider $e=\varepsilon$, because $u=\varepsilon$ 
can be treated analogously. 
Then $x \in \{u\} \Sha^{(\bbP)} \{e\}$, iff $x = u$, iff there exists a shuffled representation $b \in 
\pi_{\shuffle_{\bbP}}^{-1}(A_{\bbP}) \cap \pi_{\shuffle_{\check{\bbP}}}^{-1}(E_{\check{\bbP}})$ 
  of $x$ by $u$ and $\check e = \varepsilon$.\\ 
\emph{Induction step}  \\
$u \neq \varepsilon \neq e$ implies $u = u'(f,a,f')$ and $e = e'(g,b,g')$ 
with $u' \in A_\bbP$, $e' \in E_\bbP$, and $(f,a,f'),(g,b,g') \in \shuffle_\bbP$. Therefore, 
$x \in \{u\} \Sha^{(\bbP)} \{e\}$ implies 
$x \in (\{u'\} \Sha^{(\bbP)} \{e'(g,b,g')\})(f+g',a,f'+g') \cup 
(\{u'(f,a,f')\} \Sha^{(\bbP)} \{e'\})(f'+g,b,f'+g')$. 
On account of symmetry it is sufficient to prove the induction step for 
$x \in (\{u'\} \Sha^{(\bbP)} \{e'(g,b,g')\})(f+g',a,f'+g')$, which implies $x = x'(f+g',a,f'+g')$ 
with $x' \in \{u'\} \Sha^{(\bbP)} \{e'(g,b,g')\}$. 
Now by the induction hypothesis $x' \in \{u'\} \Sha^{(\bbP)} \{e'(g,b,g')\}$ implies the existence of 
a shuffled representation $b' \in 
\pi_{\shuffle_{\bbP}}^{-1}(A_{\bbP}) \cap \pi_{\shuffle_{\check{\bbP}}}^{-1}(E_{\check{\bbP}})$ 
of $x'$ by $u'$ and $\check e = \check{\iota}_{\shuffle_{\check{\bbP}}}^{-1}(e')(g,\check b,g')$. 
But then $b := b'(f,a,f')$  is a shuffled representation of $x = x'(f+g',a,f'+g')$ by 
$u = u'(f,a,f')$ and $\check e = \check{\iota}_{\shuffle_{\check{\bbP}}}^{-1}(e')(g,\check b,g')$. \\
Let now $b$ be a shuffled representation of $x$ by 
$u = u'(f,a,f')$ and $\check e = \check{\iota}_{\shuffle_{\check{\bbP}}}^{-1}(e')(g,\check b,g')$. 
Then $b \in \{u\} \Sha \{\check e\} = (\{u'\} \Sha 
\{\check{\iota}_{\shuffle_{\check{\bbP}}}^{-1}(e')(g,\check b,g')\})(f,a,f') \cup 
(\{u'(f,a,f')\} \Sha^{(\bbP)} \{\check{\iota}_{\shuffle_{\check{\bbP}}}^{-1}(e')\})(g,\check b,g')$. 
On account of symmetry it is sufficient to prove the induction step for 
$b \in (\{u'\} \Sha \{\check{\iota}_{\shuffle_{\check{\bbP}}}^{-1}(e')(g,\check b,g')\})(f,a,f')$,
which implies $b = b'(f,a,f')$ with 
$b' \in (\{u'\} \Sha \{\check{\iota}_{\shuffle_{\check{\bbP}}}^{-1}(e')(g,\check b,g')\})$. Additionally
$b'$ is a shuffled representation of $x'$ by 
$u'$ and $\check e = \check{\iota}_{\shuffle_{\check{\bbP}}}^{-1}(e')(g,\check b,g')$ with 
$x = x'(f+g',a,f'+g')$. Now by the induction hypothesis $x' \in \{u'\} \Sha^{(\bbP)} \{e'(g,b,g')\}$, 
which implies $x = x'(f+g',a,f'+g') \in \{u'(f,a,f'\} \Sha^{(\bbP)} \{e'(g,b,g')\}$.
This completes the induction step and the proof of Theorem~\ref{thm:shuffle-fac-3}. 
\end{proof}
Analogously to the proofs of Theorem~\ref{thm:shuffle-fac-3} and Theorem~\ref{thm:RT} a
representation of $\SRF_\bbP$ can be constructed like such in Theorem~\ref{thm:RT}.
%
%
\bibliographystyle{plain}
\bibliography{shuffle-projection}

\begin{thebibliography}{10}

\bibitem{berstel79}
Jean Berstel.
\newblock {\em Transductions and Context-Free Languages}.
\newblock B.G. Teubner Stuttgart, 1979.

\bibitem{BjorklundB07}
Henrik Bj{\"o}rklund and Mikolaj Bojanczyk.
\newblock {Shuffle Expressions and Words with Nested Data}.
\newblock In {\em Mathematical Foundations of Computer Science 2007}, pages
  750--761, 2007.

\bibitem{DP02}
B.A. Davey and H.A. Priestley.
\newblock {\em Introduction to Lattices and Order}.
\newblock Cambridge University Press, 2002.

\bibitem{Haines69}
L.H. Haines.
\newblock On free monoids partially ordered by embedding.
\newblock {\em Journal of Combinatorial Theory}, 6:94--98, 1969.

\bibitem{Jantzen85}
Matthias Jantzen.
\newblock {Extending Regular Expressions with Iterated Shuffle}.
\newblock {\em Theor. Comput. Sci.}, 38:223--247, 1985.

\bibitem{Jedrzejowicz99}
Joanna Jedrzejowicz.
\newblock {Structural Properties of Shuffle Automata}.
\newblock {\em Grammars}, 2(1):35--51, 1999.

\bibitem{JedrzejowiczS01a}
Joanna Jedrzejowicz and Andrzej Szepietowski.
\newblock {Shuffle languages are in P}.
\newblock {\em Theor. Comput. Sci.}, 250(1-2):31--53, 2001.

\bibitem{lothaire83}
M.~Lothaire.
\newblock {\em {Combinatorics on Words}}.
\newblock Cambridge University Press, 1983.

\bibitem{OR2010t}
Peter Ochsenschl\"{a}ger and Roland Rieke.
\newblock {Uniform Parameterisation of Phase Based Cooperations}.
\newblock Technical Report SIT-TR-2010/1, Fraunhofer SIT, 2010.

\bibitem{OR2011}
Peter Ochsenschl\"{a}ger and Roland Rieke.
\newblock Security properties of self-similar uniformly parameterised systems
  of cooperations.
\newblock In {\em Proceedings of the 19th Euromicro International Conference on
  Parallel, Distributed and Network-Based Computing (PDP)}. IEEE Computer
  Society, February 2011.

\bibitem{SysMea14}
Peter Ochsenschl\"ager and Roland Rieke.
\newblock Safety by construction: Well-behaved scalable systems.
\newblock {\em International Journal On Advances in Systems and Measurements},
  7(3 \& 4):239 -- 257, 2014.

\bibitem{reutenauer90}
Christophe Reutenauer.
\newblock {\em The Mathematics of Petri Nets}.
\newblock Masson and Prentice Hall International, 1990.

\bibitem{Wimmel08}
Harro Wimmel.
\newblock {\em {Entscheidbarkeit bei Petri Netzen -- {\"U}berblick und
  Kompendium}}.
\newblock Springer-Verlag, 2008.

\end{thebibliography}
\end{document}